\newtheorem*{theorem*}{Theorem}
\newtheorem*{corollary*}{Corollary}
\newtheorem*{lemma*}{Lemma}
\newtheorem{assumption}{Assumption}
\newtheorem*{assumption*}{Assumption}
\DeclareMathOperator*{\argmin}{\arg\!\min}
\def\diag{\mathop{\mathrm{diag}}}
\def\Tr{\mathop{\mathrm{tr}}}
\def\tr{\mathop{\mathrm{tr}}}
\def\SE{\mathop{\mathrm{SE}}}
\def\SG{\mathop{\mathrm{SG}}}
\def\SGV{\mathop{\mathrm{SGV}}}
\def\E{\mathop{\mathbb{E}}}
\def\R{\mathop{\mathbb{R}}}
\def\qm{\mathop{\mathbf{1}\{q>c_0m\}}}
\def\vec{\mathop{\mathrm{vec}}}
\def\N{\mathop{\mathrm{N}}}
\def\Var{\mathop{\mathrm{Var}}}
\renewcommand{\theenumi}{\arabic{enumi}}
\renewcommand{\theenumii}{\arabic{enumii}}
\renewcommand{\p@enumii}{\theenumi.}
\renewcommand{\p@enumiii}{\theenumi.\theenumii.}
\begin{document}

\title{Inference for Heterogeneous Graphical Models using Doubly High-Dimensional Linear-Mixed Models}

\author{\name Kun Yue  \email yuek@uw.edu \\
       \addr Department of Biostatistics\\
       University of Washington\\
       Seattle, WA 98195-4322, USA
       \AND
       \name Eardi Lila \email elila@uw.edu \\
       \addr Department of Biostatistics\\
       University of Washington\\
       Seattle, WA 98195-4322, USA
       \AND
       \name Ali Shojaie \email ashojaie@uw.edu \\
       \addr Department of Biostatistics\\
       University of Washington\\
       Seattle, WA 98195-4322, USA
       }

\editor{ }

\maketitle

\begin{abstract}
Motivated by the problem of inferring the graph structure of functional connectivity networks from multi-level functional magnetic resonance imaging data, we develop a valid inference framework for high-dimensional graphical models that accounts for group-level heterogeneity. We introduce a neighborhood-based method to learn the graph structure and reframe the problem as that of inferring fixed effect parameters in a doubly high-dimensional linear mixed model. Specifically, we propose a LASSO-based estimator and a de-biased LASSO-based inference framework for the fixed effect parameters in the doubly high-dimensional linear mixed model, leveraging random matrix theory to deal with challenges induced by the identical fixed and random effect design matrices arising in our setting. Moreover, we introduce consistent estimators for the variance components to identify subject-specific edges in the inferred graph. To illustrate the generality of the proposed approach, we also adapt our method to account for serial correlation by learning heterogeneous graphs in the setting of a vector autoregressive model. We demonstrate the performance of the proposed framework using real data and benchmark simulation studies.

\end{abstract}

\begin{keywords}
  high-dimensional random effect, 
heterogeneous network,
neighborhood selection,
functional connectivity network,
de-biased LASSO inference
\end{keywords}

\newpage

\section{Introduction}
\label{section:intro}

Gaussian graphical models (GGMs) capture conditional dependence relations among a set of variables, $\{Y_1, Y_2, \dots, Y_p\}$ via a graph $G=(V, E)$ with node set $V = \{ 1, 2, \dots, p\}$ and edge set $E \subset V \times V$. 
For a mean zero multivariate normal vector $Y =\{Y_j: j\in V\}$ with covariance matrix $\Sigma$, the conditional dependence structure, and correspondingly, the edge set $E$, can be characterized by the nonzero entries of the inverse covariance matrix $\Omega = \Sigma^{-1}$. 
Specifically, two random variables $Y_j$ and $Y_k$ are conditionally independent if and only if $\Omega_{j,k}=0$. 
The value of $\Omega_{j,k}$ can be viewed as the weight of the edge $(j,k)$. Therefore, the problem of inferring the graph structure is effectively an (inverse-)covariance selection problem and has been extensively studied in high-dimensional settings, with applications in neuroscience \cite{ng2013novel, monti2017learning} and genomics \cite{ krumsiek2011gaussian, zhao2019cancer}, among other fields. Two of the most popular approaches for independent observations are the graphical lasso \cite{yuan2007model, friedman2008sparse} and neighborhood selection \cite{meinshausen2006high}. Other graph structure learning methods include 
greedy search \cite{ ray2015improved, bresler2015efficiently}, structured regularization \cite{cai2011constrained, defazio2012convex} and regularized score matching \cite{lin2016}. Recent developments in high-dimensional graphical modeling have also considered non-Gaussian observations \cite{liu2012high, voorman2014graph, yu2019generalized} and functional data \cite{solea2020copula, qiao2019functional}. 

Estimates of graphical models provide valuable information about the strength of connectivity among variables. However, the uncertainty in these estimates needs to be quantified in order to answer scientific questions of interest --- for instance, in brain functional connectivity studies, whether the estimated non-zero dependency between two brain regions indicates a real connection, or if the observed difference in the brain connectivity structures between two patient groups indicates a true population-level difference \citep{shojaie2020differential}. 
As a result, inference for graphical models has received increasing attention in recent years. Examples include multiple testing with asymptotic control of false discovery rates \cite{liu2013gaussian}, and direct testing of edge weights based on the asymptotic normality of different (de-biased) $\ell_1$-regularized estimators \cite{jankova2017honest, ren2015asymptotic}. See \cite{jankova2018inference} for a detailed review.

This paper is motivated by the problem of inferring the graph structure of functional connectivity networks from multi-level functional magnetic resonance imaging (fMRI) data \cite{smith2011network}. A prime example is the resting-state fMRI data from the Human Connectome Project (HCP); one of HCP's main goals is to characterize the functional neural connections in healthy individuals \cite{van2013wu}, and reliable inference for such connections is paramount to understanding the brain physiology \cite{Sporns:2007}. 
Figure~\ref{fig:example} illustrates our application setting: For each subject (i.e., level) $i= 1,\ldots,n$, resting-state whole-brain fMRI signals give an indirect measure of the neuronal activation levels at multiple brain locations over time. Standard pre-processing leads to spatially distributed maps that define a set of brain regions $V = \{j: j\in 1,\ldots,p\}$, with details to be described in Section~\ref{section:data}. Each brain region has an associated fMRI signal describing its activation pattern over time, denoted by $Y^i_j$. Without loss of generality, we center the observations for each brain region, $Y^i_j$, at zero.

Learning the functional connectivity graph structure from ${Y^i_j}$ presents two primary challenges: (i) fMRI observations over time for a single brain node typically exhibit serial correlation; and (ii) the data have a \emph{clustered or multi-level structure}, where each cluster, or level, corresponds to the observations of a specific subject. While the serial correlation can be mitigated through various whitening procedures including model-based pre-whitening procedures \cite{olszowy2019accurate, woolrich2001temporal} or simple down-sampling approaches, the complications due to the clustered structure of the data have not been extensively studied in this setting. Many neuroscience studies ignore the \emph{heterogeneity} inherent in multi-level data and simply infer a single graphical model for all subjects \cite{dyrba2020gaussian}. This assumes a fixed dependence structure for all the subjects, which is contrary to a growing body of evidence that points to considerable subject-level heterogeneity in functional connectivity networks \cite{ monti2017learning, mumford2006modeling}. Such heterogeneity cannot be easily addressed with resampling techniques e.g., \citet{narayan2016mixed}, which lack theoretical guarantees for type-I error control and are computationally demanding when the number of brain regions is large. Another popular approach is to employ a two-stage strategy: in the first stage, separate graphical models are inferred for each subject; in the second stage, individual-level summary statistics are used for group-level analysis \cite{narayan2016mixed,  deshpande2009multivariate, morgan2011cross}. P-value aggregation via Fisher's method \cite{deshpande2009multivariate} and t-test based on individual-level statistics \cite{morgan2011cross} are two typical examples. While straightforward, such methods ignore any shared brain network information across subjects, which can lead to inefficient estimation and inference. More importantly, they can lead to conflicting conclusions from different second-stage aggregation choices, as well as erroneous conclusions due to not properly accounting for the uncertainty in first-stage estimates \cite{chiang2017bayesian}. 
We illustrate the limitations of these two-stage approaches through a simple toy example depicted in Figure~\ref{fig:example}. In the toy example, we infer the connectivity between a given node and six other nodes using the neighborhood selection approach from a marginal model perspective (see Figure~\ref{fig:example} for details). Specifically, as shown in Table~\ref{tab:toy}, the fixed GGM method that ignores the heterogeneity can result in false discoveries, even when the population-level average network is of interest. Two-stage approaches may lead to conflicting conclusions, and may result in both inflated type-I errors and/or reduced power.

The analysis of resting-state fMRI signals introduces additional challenges to brain network inference. In contrast to task-based fMRI data, where a shared task pattern enables the alignment of observations across subjects, resting-state fMRI data lacks a clear correspondence between time points across subjects. This absence of alignment renders methods such as functional graphical model approaches \cite{solea2020copula, qiao2019functional} impractical, as these methods rely on the assumption of aligned underlying signals or functions. Therefore, to bridge the gap in existing approaches for inferring population-level brain connectivity networks while accounting for subject-level heterogeneity, in Section~\ref{section:method} we propose a \textit{mixed effect Gaussian graphical model}. Utilizing a neighborhood-based estimation strategy similar to \cite{narayan2016mixed}, for each edge, the proposed approach models the subject-level coefficients as random realizations centered around a population mean. The key difference is the estimation and inference approach: we recast the resulting model as a \emph{doubly high-dimensional linear mixed model}, where the number of fixed and random effects parameters can be larger than the sample size. In addition to the doubly high-dimensional structure, the fixed and random design matrices in the corresponding linear mixed models also have considerable overlap. These factors significantly complicate the theoretical analysis, rendering existing approaches inadequate. We overcome these challenges by utilizing penalized estimation and inference strategies, as well as tools from random matrix theory. We obtain consistent estimators and establish a valid inference framework for the corresponding parameters in Section~\ref{section:theory}. We also provide consistent estimation of the mixed effect variance components in Section~\ref{section:extension.vc} and demonstrate the performance of the proposed approach via extensive simulations in Section~\ref{section:simulation} and analysis of HCP fMRI data in Section~\ref{section:data}.

The model proposed in Section~\ref{section:method} naturally accounts for the heterogeneity in individual-level connectomes. However, heterogeneity also arises in many other applications, including data harmonization \cite{yu2018statistical} and integration of multiple batches of genomic data \cite{zhang2020combat}. The proposed estimation and inference framework for doubly high-dimensional linear mixed models can also be utilized in such problems. 
Moreover, while in this paper we focus on estimation of undirected GGMs using data without serial correlation within each level, we show in Section~\ref{section:extension.var} that our proposed method can be extended to inferring graphs based on a first-order Vector Autoregressive (VAR) model, and is thus able to account for (weak) serial correlations. 


\begin{figure}[t]
    \centering
    \includegraphics[width=0.9\textwidth]{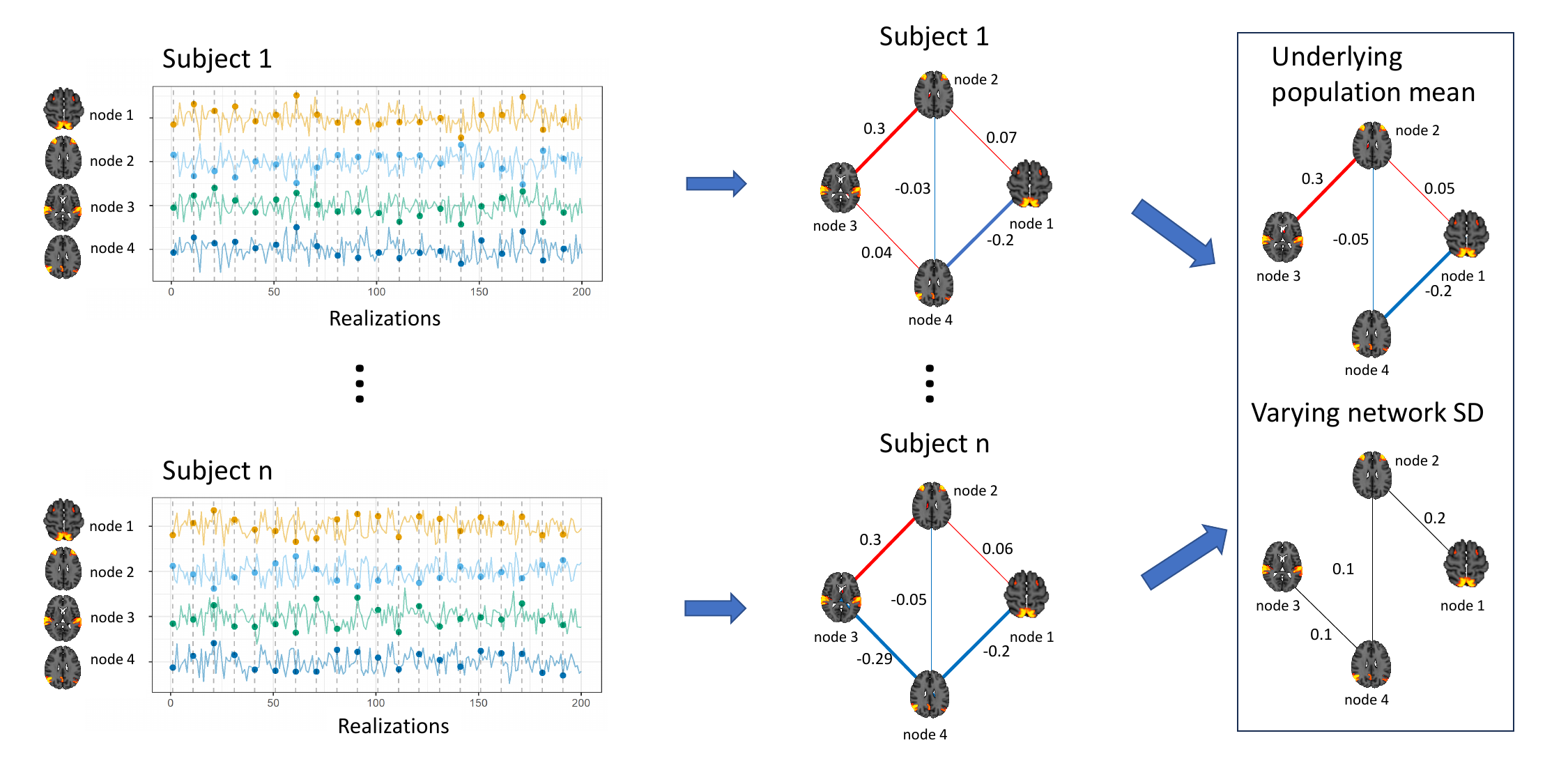}
    \caption{Illustration of the proposed functional connectivity brain network analysis. \textbf{Left panel}: fMRI signals measuring the activation level at each brain node for individual subjects. \textbf{Middle panel}: Subject-level brain networks inferred from the associated fMRI data. The lines between the brain nodes depict connectivity, with the numbers denoting the connectivity strength, red lines representing positive connectivity, and blue lines representing negative connectivity. \textbf{Right panel}: The output of our proposed model. This is the underlying population-level brain network and the associated network of standard deviations (SD) summarizing inter-subject variation around the population-level brain network.}
    \label{fig:example}
\end{figure}

\begin{table}[t]
    \centering
    \begin{tabular}{c c c c c c c}

    & \multicolumn{6}{c}{Edge}\\
    \cmidrule(lr){2-7}
      & 1---2   &   1---3&  1---4&  1---5 &  1---6 &  1---7\\
      \midrule
      fixed effect coefficients $\beta$ & 0.50 & -0.40 &  0.20 &  0.40 &  0.00 &  0.00 \\
      SD of random effect coefficients  & 1.50 & 0.00 & 0.50 & 0.75 & 0.00 & 0.50 \\
      \midrule \\
         Approach &  \multicolumn{4}{c}{Power} & \multicolumn{2}{c}{Type-I error}\\
      \cmidrule(lr){1-1} \cmidrule(lr){2-5} \cmidrule(lr){6-7}
       Two-stage t-test & 0.255 & 0.795 & 0.165 & 0.460 & 0.050 & 0.040 \\
      Two-stage Fisher's method & 1.000 & 0.380 & 0.660 & 0.925 & 0.075 & 0.500 \\
    Fixed GGM & 0.750 & 0.755 & 0.310 & 0.605 & 0.035 & 0.150 \\
    Mixed effect GGM & 0.285 & 0.980 & 0.240 & 0.505 & 0.025 & 0.055 \\
      \bottomrule
      \end{tabular}
    \caption{Toy example where we infer the group-level conditional dependence using a neighborhood selection approach, and target the connection between node 1 and the rest of the six nodes. We infer using two-stage approaches, the fixed GGM approach and the mixed effect GGM approach. We generate data from a marginal model following model \eqref{model.def.2}, where the fixed effect coefficients and the standard deviation (SD) of the corresponding random effect coefficients are shown in the top table. The noise terms independently and identically follow the standard normal distribution. For each of the 20 subjects, we simulate 10 observations. We then use the neighborhood selection approach to infer the network edges. For the fixed GGM approach, we concatenate observations from all subjects and fit a single GGM using a simple linear regression model. For two-stage approaches, we use simple linear regression in the first stage for each subject, then in the second stage use either Fisher's method to aggregate p-values \citet{deshpande2009multivariate} or use t-test on each subject's coefficient estimates \cite{morgan2011cross}. For the mixed effect GGM approach, we use a linear mixed-effect model with each subject as a cluster (see Section~\ref{section:method} for model details). Power/Type-I errors are computed based on 500 replications. The simulations show that when there is high subject-level heterogeneity, two-stage approaches may have highly-inflated type-I error (Fisher's method for edge 1---7), or have lower power to detect dependence (t-test method) compared to the mixed effect GGM approach. The fixed GGM approach shows an inflated type-I error for edge 1---7. 
    }
    \label{tab:toy}
\end{table}

\section{Method}
\label{section:method}

\subsection{Notations}
\label{subsection:notation}
We denote an $m \times m$ identity matrix by $I_m$. For a matrix $A$, we denote by $A_{j,k}$ the $(j,k)$ entry of $A$, by $A_j$ the $j$th column of $A$, and by $A_{-j}$ the sub-matrix of $A$ obtained by dropping the $j$th column. Similarly, we use index sets $S$, $\{j,k\}$ and $1:j$ to denote multiple columns/entries in a matrix/vector, and use $-S$, $-\{j,k\}$ and $-\{1:j\}$ to denote a sub-matrix/sub-vector obtained by removing the indicated columns/entries. We denote by $\|A\|_2$ the matrix norm of $A$, which is the maximum singular value of $A$. The Frobenius norm of $A$ is denoted by $\|A\|_F$, which is equal to $\sqrt{\tr(A A^\top)}$ with $\tr(A)$ denoting the trace of $A$. We use $\sigma(A)$, $\sigma_{\min}(A)$ and $\sigma_{\max}(A)$ to represent the singular values, the minimum singular value, and the maximum singular value of $A$. For a set of values $\{u_l\}_{l=1}^K$ and a set of matrices $\{A_l\}_{l=1}^K$, we let $\diag \left(\{u_l\}_{l=1}^K\right)$ be the diagonal matrix with $(l,l)$ entry $u_l$, and let $\diag\left(\{A_l\}_{l=1}^K\right)$ be the block-diagonal matrix with the $l$th block $A_l$.

A random variable $U$ is sub-Gaussian with parameter $u$ if $\forall\ t \in \R$, $\E\left(e^{tU}\right) \leq \exp(ut^2/2)$. We define $\SG(u)$ as the class of all sub-Gaussian random variables with mean 0 and parameter $u$. A random vector $V$ is sub-Gaussian with parameter $v$ if for any vector $x$ with $\|x\|_2=1$, we have $x^\top V \in \SG(v)$. We denote the class of such sub-Gaussian random vectors by $\SGV(v)$.

For two scalars $a$ and $b$, we write $a \asymp b$ if $c_1|b| \leq |a| \leq c_2 |b|$ for some positive constants $c_1$, $c_2$. We write $a \vee b = \max(a, b)$, and $a \wedge b = \min(a, b)$. We use $a = O(b)$ to indicate that $a \leq c_1 b$ for some constant $c_1>0$, and use $a=o(b)$ to mean that $a/b \xrightarrow{}0$. 
Throughout the paper, we use $c,\ c_0,\ c_1, \ldots$ to represent positive constants, whose values may vary from line to line.

\subsection{Problem Setup}
Suppose that for each subject $i=1, \dots, n$ the observed data matrix, $Y^i\in \R^{m \times p}$, includes $m$ observations for each of the $p$ nodes. Without loss of generality, we assume the observations for each node are centered at zero. The assumption of equal number of observations per subject is made for simplicity and our results continue to hold if the $i$th subject has $m_i$ observations, as long as $c_1\max_i(m_i) \leq \min_i(m_i) \leq c_2 \max_i(m_i)$ for some constants $c_1, c_2>0$. The population-level connectivity network is characterized by the inverse covariance matrix $\Omega$, and subject-specific conditional independent networks are denoted by $\Omega^i$.

We would like to infer the edges in the population-level network, while accounting for subject-level heterogeneity. To this end, we propose a neighborhood-based method where we model the subject-specific edge weight $\Omega^i_{j,k}$ as \emph{random variables} centered at the population-level edge weight $\Omega_{j,k}$. 
Specifically, we assume the following neighborhood-based model for a node $j \in V$:
\begin{align}
    \label{model.def}
    Y^i_j = \sum_{k=1, k\neq j}^p Y^i_{k}  b^i_{j,k} + \epsilon^i_{j}, \quad i=1, \dots,n,
\end{align}
where given $Y^i_{k}, k \ne j$, the subject-level connectivity coefficients, $b^i_{j,k}$, have mean $b_{j,k}$ and variance $\sigma^2_{j,k}$. The coefficients $b^i_{j,k}$ and their mean $b_{j,k}$ are proportional to the true edge weights $\Omega^i_{j,k}$ and $\Omega_{j,k}$, respectively. The randomness in $b^i_{j,k}$ captures the subject-level heterogeneity, while its mean $b_{j,k}$ captures the shared dependence structure across subjects. Our main goal is to test whether a pair of nodes $(j,k)$ are functionally connected at a population level, i.e., $H_0: b_{j,k}=0$.

The model in equation \eqref{model.def} can be seen as a linear mixed model (LMM), formulated as:
\begin{align}
\label{model.def.2}
    Y^i_j = Y^i_{-j} \beta_j + Y^i_{-j} \gamma^i_j + \epsilon^i_j, \quad i=1, \dots, n.
\end{align}
Here, we treat node $j$ as the outcome variable and the associated vectors $Y^i_j \in \R^{m}$ as the outcome vector of observations. We treat the rest of the $p-1$ nodes as covariates. The matrix $Y^i_{-j} \in \R^{m \times (p-1)}$ serves as both the fixed and random effect design matrices. The fixed effect coefficients $\beta_j \in \R^{p-1}$ correspond to $\left\{b_{j,k}\right\}_{k=1, k\neq j}^p$, which represent the (scaled) edge weights $\Omega_{j,-j}$. The random effect coefficients $\gamma^i_j = \left\{b^i_{j,k}-b_{j,k}\right\}_{k=1, k\neq j}^p$ represent the subject-level variation of these $p-1$ (scaled) weights. The hypothesis $H_0: \ b_{j,k} =0$ is thus equivalent to $H_0^\prime: \ \beta_{j,k}=0$, allowing us to recast the graphical selection problem as that of estimating and inferring the fixed effect coefficients in an LMM. Since the number of brain nodes $p$ is typically large in brain connectivity studies, the resulting LMM is \emph{doubly high-dimensional}, i.e., both fixed and random effects are high-dimensional. 

Despite its increasing relevance in applications, rigorous estimation and inference procedures for doubly high-dimensional LMMs are lacking. Methods for LMMs with high-dimensional fixed effects and low-dimensional random effects \cite{lin2020statistical, bradic2020fixed} are not readily extendable to this setting. 
 In addition, while Expectation-Maximization has been proposed for estimation \cite{monti2017learning}, the consistency of the resulting estimator has not been thoroughly examined. Likewise, while consistent, valid inference for the estimator of \cite{li2018doubly} has not been explored. Most related to our setting is the recent work by Li, Cai and Li \cite{li2021inference}, denoted \emph{LCL} hereafter. The work of \emph{LCL} proposes a de-biased LASSO-based inference framework for doubly high-dimensional linear mixed models; however, as common in the LMM literature, it assumes a form of independence between the fixed and random effect design matrices---the fixed effect design matrix is assumed to have zero mean conditional on the random effect design matrix. This restrictive assumption constitutes a key shortcoming in our graphical modeling application, where the fixed and random effect design matrices are identical, leading to a violation of the zero conditional mean assumption by \emph{LCL}. This assumption can also be restrictive in many other applications, whenever the fixed effects and the random effects share a non-empty set of covariates \cite{li2018doubly}. Moreover, the \emph{LCL} framework requires the number of random effects to be no larger than the number of observations per subject, which further limits its applicability.

Motivated by the neighborhood-based model in \eqref{model.def}, in this paper, we develop a new estimation and inference framework for doubly high-dimensional LMMs in \eqref{model.def.2}. To this end, we propose a penalized estimation and inference framework for the fixed effect coefficients. We make two key extensions to the work of \cite{li2021inference} that enable valid inference of heterogeneous GGMs: (i) our approach accommodates a larger number of random effects than the number of observations per subject; and (ii) it does not require conditional independence and only imposes minimal assumptions on the relationship between the fixed and the random effects. Through these extensions, our model provides the first mixed-model inference framework for learning functional connectivity networks in multi-level settings.

\subsection{The Proposed Approach}
\label{subsection:model_notation}

To achieve consistent estimation for the doubly high-dimensional LMM in equation \eqref{model.def.2}, we assume the true $\beta_j^*$ coefficients are sparse, with support $S_j$ and cardinality $s_j = |S_j|\ll p$. Let $\Sigma^i$ be the $p\times p$ subject-specific covariance matrix for subject $i$. Throughout this paper, we assume that, conditional on the covariance matrices $\Sigma^i$, the matrices $Y^i$ are independent and follow a matrix normal distribution $Y^i \mid \Sigma^i \sim \mathrm{MN}_{m \times p}(0,I_m,\Sigma^i)$, for $i=1, \dots, n$. This implies that within-subject observations are independent conditional on the subject-specific covariance matrix. 
To allow for subject-level heterogeneity, we assume the subject-specific covariance matrices $\Sigma^i$ are random, and we denote by $\Sigma$ the corresponding population-level covariance matrix. Conditional on the matrix $Y^i_{-j}$, the random effect coefficients $\gamma^i_j$ and the noise vectors $\epsilon^i_j$ are independent with mean zero and covariance $\Psi_j \in \R^{(p-1)\times (p-1)}$ and $R^i_j \in \R^{m \times m}$, respectively. We assume $\gamma^i_j \in \SG(c_1 \|\Psi_j\|_2)$ and $\epsilon^i_j \in \SG(c_2 \|R^i_j\|_2)$, which implies that $Y^i_j - Y^i_{-j}\beta_j | Y^i_{-j} \in \SG(c_1 \|\Psi_j\|_2 + c_2\|R^i_j\|_2)$. 

Similar to other specifications of graphical models \citep[see, e.g.,][]{voorman2014graph, chen2015selection}, the model in equation \eqref{model.def.2} specifies the conditional distribution of $Y_j$ as a function of other variables, $Y_{-j}$. In this model, the population-level network edge $(j,k)$ is characterized by the $\beta_{j,k}$ coefficient, such that $(j,k) \in E$ if and only if $\beta_{j,k}\neq 0$, providing a convenient specification of the conditional dependence structure while accounting for subject-level variability of the edges. In this work, we focus on developing an inference framework for the model in equation \eqref{model.def.2} and leave the characterization of joint probability distributions that are consistent with the conditionally-specified models \cite{wang2008conditionally} to future research. We let $p$ grow with the sample size $n$ and the number of observations per subject $m$. In the main text, we focus on the setting where $p > c_0m$ for some constant $c_0>1$, which is most relevant to our application, and defer the case of $p<c_0m$ for some constant $0 < c_0 < 1$ to the Appendix.

The unknown random effect covariance matrices $\Psi_j$ pose challenges to the estimation and inference of the fixed effect coefficients $\beta_j$. Following the quasi-likelihood approach in \cite{fan2012variable}, we use proxy matrices in place of the unknown covariance matrix. Specifically, we use the proxy matrix $\Sigma_a^{i, (j)} = a Y_{-j}^i (Y_{-j}^i)^\top + I_{m}$, with a fixed positive constant $a$, to approximate the covariance matrix of $Y^i_j$, which is $\Sigma^{i,(j)}_{\theta} = Y_{-j}^i \Psi_j (Y^i_{-j})^\top + R^i_j$, for $i=1, \dots, n$. We then use a LASSO estimator for the fixed effect coefficients $\beta_j$ that leverages the proxy matrix $\Sigma_a^{i,(j)}$ to ``decorrelate'' the observations. In addition, we propose an inference framework based on the de-biased LASSO method. As we will discuss later, our estimation and inference procedures for the coefficients $\beta_j$ do not depend on the estimates of the variance components $\Psi_j$ and $R_j^i$ and are, hence, well-defined.

\subsubsection{LASSO estimator for $\beta_j$}
\label{Sec:method_beta_est}
We propose a LASSO estimator for the fixed effect coefficients $\beta_j$ based on the de-correlated observations. To this end, let $\Sigma_a^{(j)} = \diag\left(\{\Sigma_a^{i,(j)}\}_{i=1}^n\right)$, and $Y$ be the $nm \times p$ matrix obtained by vertically stacking $\{Y^i\}_{i=1}^n$. Our proposed estimator $\hat\beta_j$ is given by
\begin{align}
    \label{eq:fixed_effect_est_definition}
    \hat\beta_j = \argmin_{\beta_j \in \mathbb{R}^{p-1}} \frac{1}{2\Tr\left((\Sigma_a^{(j)})^{-1}\right)}\left\| (\Sigma_a^{(j)})^{-1/2} ( Y_{j} - Y_{-j} \beta_j) \right\|_2^2 + \lambda_{a,j} \|\beta_j\|_1,
\end{align}
with tuning parameter $\lambda_{a,j}>0$. In Section~\ref{section:theory}, we show that under mild assumptions $\hat\beta_j$ is a consistent estimator of $\beta_j$, for a suitable choice of $\lambda_{a,j}$ and for any choice of constant $a$.

Taking $\Sigma_a^{(j)} = I_{nm}$ in \eqref{eq:fixed_effect_est_definition} leads to the classical LASSO estimator. While it is easy to show that this estimator is also consistent, it is known that due to the correlation across observations, this misspecified estimator has sub-optimal rate of convergence \cite{li2021inference}.

\subsubsection{Inference based on de-biased LASSO}
We next propose an inference framework for the coefficient $\beta_{j,k}$, $k\in V\backslash \{j\}$ based on the asymptotic normality of the de-biased LASSO estimator. The de-biasing procedure builds on the idea in \cite{zhang2014confidence}, which uses regularized regression to estimate the bias term of the LASSO estimator. 

For $k\in V\backslash \{j\}$, our de-biased estimator $\hat\beta_{j,k}^{(\mathrm{db})}$ is defined as
\begin{align}
\label{eq:dblasso}
    \hat\beta_{j,k}^{(\mathrm{db})} = \hat\beta_{j,k} +\frac{ \sum_{i=1}^n \left(\hat w^i_{j,k}\right)^\top \left(\Sigma_b^{i,(j,k)}\right)^{-1/2} \left(Y_j^i-Y^i_{-j}\hat\beta_{j}\right)}{ \sum_{i=1}^n \left(\hat w^i_{j,k}\right)^\top \left(\Sigma_b^{i,(j,k)}\right)^{-1/2} Y^i_j},
\end{align}
where the proxy covariance matrices $\Sigma_b^{i,(j,k)}$ are defined as
\begin{align*}
 & \Sigma_b^{i,(j,k)} = a Y^i_{-\{j,k\}}(Y^i_{-\{j,k\}})^\top + I_{m}, \quad  \Sigma_b^{(j,k)} = \diag\left(\{ \Sigma_b^{i,(j,k)} \}_{i=1}^n \right),
\end{align*}
with the same constant $a$ used in $\Sigma_a^{(j)}$; and the projection related terms $\hat\kappa_{j,k}$, $\hat w^i_{j,k}$ are defined as
\begin{align*}
    & \hat\kappa_{j,k} = \argmin_{\kappa_{j,k} \in \mathbb{R}^{p-2}}\frac{1}{2 \Tr\left(\left(\Sigma_b^{(j,k)}\right)^{-1}\right)}\left\| \left( \Sigma_b^{(j,k)} \right)^{-1/2} \left(Y_{k} - Y_{-\{j,k\}} \kappa_{j,k}\right)\right\|_2^2 + \lambda_{j,k} \|\kappa_{j,k}\|_1 \\
    & \hat w^i_{j,k} = \left(\Sigma_b^{(j,k)} \right)^{-1/2} \left(Y_{k} - Y_{-\{j,k\}} \hat\kappa_{j,k}\right),
\end{align*}
with tuning parameter $\lambda_{j,k}>0$. 

Here, the vector $\hat w^i_j$ is approximately the orthogonal complement of the projection of the vector $\left(\Sigma_b^{(j,k)} \right)^{-1/2} Y_j$ onto the space spanned by the columns of $\left(\Sigma_b^{(j,k)} \right)^{-1/2} Y_{-j}$, where the projection vector $\hat\kappa_{j,k}$ is computed via LASSO regression. We use the proxy matrix $\Sigma_b^{(j,k)}$ to ``decorrelate'' observations $Y_k$ and $Y_{-\{j,k\}}$ in the LASSO regression. Note that we define the proxy matrix $\Sigma_b^{(j,k)}$ differently from \emph{LCL} \cite{li2021inference}. This modification is crucial to successfully establishing the asymptotic normality of the de-biased estimator $\hat\beta_{j,k}^{(\mathrm{db})}$, especially in the setting where the fixed effect design matrix has overlapping columns with the random effect design matrix.

Denoting by $z_{\alpha}$ the $\alpha$th quantile of a standard normal distribution, we can construct a two-sided $(1-\alpha)\times 100$\% confidence interval for the coefficient $\beta_{j,k}$ as 
%
$    \hat\beta_{j,k}^{(\mathrm{db})} \pm z_{\alpha/2}  \sqrt{\hat{V}_{j,k}}$,
where $\hat V_{j,k}$ is a sandwich-type estimator of the variance of $\hat\beta^{(\mathrm{db})}_j$, defined as:
\begin{align}
\label{df.hat.v}
    \hat V_{j,k} = \frac{\sum_{i=1}^n \left|(\hat w_{j,k}^i)^\top \left(\Sigma_b^{i,(j,k)}\right)^{ -1/2}\left(Y_j^i-Y_{-j}^i\hat\beta_j\right)\right|^2}{\left|\sum_{i=1}^n (\hat w_{j,k}^i)^\top \left(\Sigma_b^{i,(j,k)}\right)^{ -1/2} Y_{j}^i \right|^2}. 
\end{align}

\section{Theoretical Analysis}
\label{section:theory}
In this section, we show that, under mild conditions, the proposed LASSO estimator $\hat\beta_j$ in \eqref{eq:fixed_effect_est_definition} is consistent, and the proposed de-biased LASSO estimator $\hat\beta^{(\mathrm{db})}_{j,k}$ in \eqref{eq:dblasso} is asymptotically normal.
We first state the assumptions under which we establish the consistency of $\hat\beta_j$ defined in \eqref{eq:fixed_effect_est_definition}. Recall that we use $c, c_0, c_1, \dots$ to denote generic positive constants whose values may vary line by line.

\begin{assumption}
\label{main.as.A}
\begin{enumerate}
    \item[]
    \item \label{main.as.A.1} The number of observations per subject $m$ and the number of covariates $p$ satisfy $p>c_0m$ for some constant $c_0>1$, and $p > c_1$ for some suitably large constant $c_1>0$. Moreover, $p\log(p)/(mn) = o(1)$.
    \item \label{main.as.A.2} $\forall \ i, j$, $\sigma(\Sigma) \asymp \sigma(R^i_j) \asymp \|\Psi_j \|_2 \asymp 1$, and $\|\Sigma^i - \Sigma\|_2 \leq  \sigma_{\min}(\Sigma) -c_2$ for some constant $c_2 >0$. 
\end{enumerate}
\end{assumption}

In Assumption~\ref{main.as.A}.\ref{main.as.A.1}, we restrict the growth rate of $p$ relative to $m$ and $n$. Moreover, we assume $p/m$ is no smaller than a positive constant $c_0>1$, which is the situation most relevant to our application. In the Appendix, we discuss the case when $p/m \leq 1/c_0$ for some constant $c_0>1$. 

By Weyl's theorem, Assumption~\ref{main.as.A}.\ref{main.as.A.2} implies $\sigma(\Sigma^i) \asymp 1$ for all $i=1, \dots, n$ \cite{bhatia2007perturbation}, requiring the singular values of the covariance matrices $\Sigma$, $\Sigma^i$, and $R^i_j$ to be both lower and upper bounded by positive constants. Note that we do not require independent and identically distributed noise terms for each observation. Consequently, our model is applicable across various settings, including those involving time series outcomes. Specifically, it accommodates scenarios where the noise terms conform to an autoregressive model of order 1, since the covariance matrices $R^i_j$ in such cases are Kac-Murdock-Szego matrices with all eigenvalues of order $O(1)$ \cite{trench1999asymptotic}.

To establish the consistency of $\hat\beta_j$, we only require an upper bound on the singular values
of the covariance matrices $\Psi_j$ in Assumption~\ref{main.as.A}.\ref{main.as.A.2}. However, later we also require one of the following assumptions on $\Psi_j$ to hold in order to establish the asymptotic normality of the de-biased estimator; we state these assumptions here for convenience.

\begin{assumption}
\label{main.as.A.add}
\begin{enumerate}
    \item[]
    \item \label{main.as.A.3} (Diagonal structure): $\forall \ j$, $\Psi_j = \diag(\psi_j)$ for a vector $\psi_j \in \R^{p-1}$. The support of $\psi_j$ is $S_{\psi_j}$ with cardinality $s_{\psi_j} < c_1 m \wedge n$ for some constant $c_1>0$. Moreover, $\min(\psi_{S_{\psi_j}}) \asymp \max(\psi_{S_{\psi_j}}) \asymp 1$.
    \item \label{main.as.A.3.2} (Bounded eigenvalues): $\forall \ j$, $\sigma_{\min}(\Psi_j) \asymp 1$.
\end{enumerate}
\end{assumption}

Assumption~\ref{main.as.A.add}.\ref{main.as.A.3} allows us to cover the settings when the minimum singular values of the covariance matrices $\Psi_j$ are not bounded away from zero. In such a case, we consider a sparse diagonal structure for the matrices $\Psi_j$, under which our results hold with slightly different sample size assumptions.

The next result establishes the theoretical properties of the proposed estimator $\hat \beta_j$ in \eqref{eq:fixed_effect_est_definition}. 

\begin{theorem}[Fixed effect estimator consistency]
\label{main.thm:1}
Suppose  Assumption~\ref{main.as.A}.\ref{main.as.A.1} and Assumption~\ref{main.as.A}.\ref{main.as.A.2} hold and that $\lambda_{a,j} = c_1\sqrt{p\log(p)/(nm)}$ for a suitably large $c_1>0$. Then, with probability at least $1-4\exp\{-cn\} -12\exp\{-c\log(n)\} - 3\exp\{-cmnp^{-1}\}$,
     \begin{align*}
         &\left\|\hat \beta_j -\beta_j^*\right\|_\delta = O_p\left(s_j^{1/\delta}\sqrt{\frac{p\log(p)}{mn}}\right), \quad \delta \in \{1,2\}, \\
         & \left\| (\Sigma_a^{(j)})^{-1/2} Y (\hat \beta_j -\beta_j^*)\right\|_2^2 = O_p\left(s_j\log(p)\right).
     \end{align*}
If, in addition, Assumption~\ref{main.as.A.add}.\ref{main.as.A.3} holds, taking $\lambda_{a,j} = c_2\sqrt{\log(p)\log^2(n)/n}$ for a suitably large $c_2>0$, we have with probability at least $1-4\exp\{-cn\} -12\exp\{-c\log(n)\} - 3\exp\{-cmnp^{-1}\}$ that
     \begin{align*}
         &\left\|\hat \beta_j -\beta_j^*\right\|_\delta = O_p\left(s_j^{1/\delta}\sqrt{\frac{\log^2(n)\log(p)}{n}}\right),\quad \delta\in\{1,2\},\\
         &\|(\Sigma_a^{(j)})^{-1/2} Y (\hat \beta_j -\beta_j^*)\|_2^2 = O_p\left(\frac{s_jm\log^2(n)\log(p)}{p}\right).
     \end{align*}

\end{theorem}

Theorem~\ref{main.thm:1} shows that the de-correlated LASSO estimator $\hat\beta_j$ achieves $\ell_1$, $\ell_2$ and prediction consistency. The proof of Theorem~\ref{main.thm:1} is based on the classical proof of the consistency of the LASSO estimator in regression settings \cite{buhlmann2011statistics} with multiple key modifications to extend it to our setting. One crucial step is to show that the restricted eigenvalue condition \cite{buhlmann2011statistics} holds for the matrix product $X \left(aZZ^\top+I\right)^{-1} X^\top$, where $X$ is the fixed effect design matrix and $Z$ is the random effect design matrix. Since in our setting the fixed and random effect design matrices are identical, we cannot rely on techniques such as those adopted in \cite{li2021inference} where the conditional mean independence assumption between $X$ and $Z$ is used to remove the dependence on the matrix $Z$. Instead, we jointly study the contribution of both the fixed and random effect design matrices, and use results from random matrix theory \cite{tropp2015introduction} to directly characterize the eigenvalue distribution of the relevant quantities. In particular, we obtain a set of tight bounds for functions of the non-zero singular values of the matrices $Y^i$ under Assumption~\ref{main.as.A}. This key result, which may be of independent interest, is presented in Lemma~\ref{lemma:A.1} in Appendix~\ref{S:A}. We state the other lemmas necessary to prove Theorem~\ref{main.thm:1} in Appendix~\ref{S:A}. 

Next, we state a simplified version of the assumptions under which we establish the asymptotic normality of the de-biased LASSO estimator $\hat\beta^{(\mathrm{db})}_{j,k}$ defined in \eqref{eq:dblasso}. In order to allow for such a simplification, we have made some additional mild assumptions such as $\sigma\left(G^{(j)}_k\right) \asymp p$ under Assumption \ref{main.as.A.add}.\ref{main.as.A.3.2}, $ c_1 \leq \sigma_{\min}\left(G^{(j)}_k\right) \leq \sigma_{\max}\left(G^{(j)}_k\right) \leq c_2 m$ under Assumption~\ref{main.as.A.add}.\ref{main.as.A.3}, and $|H_k^{(j)}| \asymp s_j \asymp 1$. Recall that model \eqref{model.def.2} implies that given $Y^i_{-j}$, the vector $Y^i_j - Y^i_{-j}\beta_j$ is sub-Gaussian with covariance matrix $\Sigma_\theta^{i,(j)} = Y^i_{-j} \Psi_j (Y^i_{-j})^\top + R^i_j$. If we were to assume that given $Y^i_{-\{j,k\}}$, the covariance matrix of $Y^i_k$ has a ``sandwich'' form akin to $\Sigma_\theta^{i,(j)}$, namely $G_k^{(j)} = Y^i_{-\{j,k\}} \Psi_{j,k} (Y^i_{-\{j,k\}})^\top + R^i_{j,k}$ for some matrix $\Psi_{j,k} \in \R^{(p-2)\times (p-2)}$ and $R^i_{j,k} \in \R^{m\times m}$, we could then characterize the rates of $\|G_k^{(j)}\|_2$ and $\sigma_{\min}\left(G_k^{(j)}\right)$ with the above assumed rates (see Lemma~\ref{lemma:sigma-theta}). 

\begin{assumption}
\label{main.as.B}
\begin{enumerate}
    \item[]
    \item \label{main.as.B.1} $\forall \ j,k$, conditioning on the matrix $Y^i_{-\{j,k\}}$, the vector $Y^i_k - Y^i_{-\{j,k\}} \kappa_{k}^{(j),*}$ has mean zero and variance $G^{(j)}_k$, and belongs to the class $\SGV( c_1\|G^{(j)}_k\|_2)$.
    The support of $\kappa_{k}^{(j),*}$,  $H^{(j)}_k$, has cardinality $|H^{(j)}_k|$ satisfying $\|\kappa_{k}^{(j),*}\|_1 \leq c_1|H^{(j)}_k|$. 

    \item \label{main.as.B.2} \label{main.as.B.3}
    If Assumption~\ref{main.as.A.add}.\ref{main.as.A.3} holds, then $p^2\log(p) \ll nm^3$, $m\log^7(n) \ll n$, $\log(n)\log(p) \ll n$. 
    If Assumption~\ref{main.as.A.add}.\ref{main.as.A.3.2} holds, then $p\log(n)\log(p) \leq c_3mn$.
    \end{enumerate}
    \end{assumption}

In Assumption~\ref{main.as.B}.\ref{main.as.B.1}, we specify an upper bound for $\|\kappa_k^{(j),*}\|_1$. This is not too restrictive, because given that the variance of each node is bounded (implied by Assumption~\ref{main.as.A}.\ref{main.as.A.2}), it is reasonable to expect that the coefficients $\kappa_k^{(j),*}$ are not too large in absolute value. In the case of no subject-level heterogeneity, that is $Y^i \sim \mathrm{N}(0, \Sigma)$ and $\kappa_k^{(j),*} = \left(\Sigma_{-\{j,k\}, -\{j,k\}}\right)^{-1} \Sigma_{-\{j,k\}, k}$, 
it is easy to show that $\sum_{l=1, l\neq k}^{p-1} \left| (\kappa_k^{(j),*})_l \sqrt{(\Omega_j)_{k,k}} /\sqrt{(\Omega_j)_{l,l}} \right| \leq |H_k^{(j)}|$, and $\|\kappa_k^{(j),*}\|_1 \leq c_1 |H_k^{(j)}|$ is satisfied.

In Assumption~\ref{main.as.B}.\ref{main.as.B.1}, we also specify the conditional distribution of $Y^i_k$ given $Y^i_{-\{j,k\}}$. We do not assume a specific structure for the covariance matrix $G_k^{(j)}$, but only require mild bounds on the minimum and maximum eigenvalues (Appendix~\ref{S:B}). 

Assumption~\ref{main.as.B} indicates different sample size requirements according to different structures of $\Psi_j$ in Assumption~\ref{main.as.A.add}. 
Note that Assumption~\ref{main.as.A.add} can be relaxed at the cost of stricter sample size requirements: if we only assume $\|\Psi_j\|_2\leq c_1$ and drop Assumption~\ref{main.as.A.add}, the asymptotic normality property of $\hat\beta^{(\mathrm{db})}_{j,k}$ will hold with some additional sample size assumptions (Appendix~\ref{S:B.1}, Remark \ref{Ssec:B.remark.relax.assumption}), which would restrict the growth rate of $p$ to be slower than $n$. 

The next result establishes the asymptotic normality of the estimator $\hat\beta^{(\mathrm{db})}_{j,k}$ in \eqref{eq:dblasso}. 

\begin{theorem}
\label{main.thm:2}
Under Assumptions~\ref{main.as.A} and \ref{main.as.B}, we have with probability at least $1-c_1\exp\{-cn\} - c_2\exp\{-c\log(n)\} - c_3\exp\{-cmn/p\} -c_4\exp\{-c\log(p)\} - c_5\exp\{-cmn\}$, $\left(V_{j,k}\right)^{-1/2}\left(\hat\beta_{j,k}^{(\mathrm{db})} - \beta_{j,k}^*\right) = R_{j,k} + o_p(1)$, where $R_{j,k} \xrightarrow[]{d} \mathrm{N}(0,1)$ and
the variance $V_{j,k}$ is given by
\begin{equation*}
    V_{j,k} = \frac{ \sum_{i=1}^n (\hat w_{j,k}^i)^\top \left(\Sigma_b^{i,(j)}\right)^{-1/2} \Sigma^{i,(j)}_{\theta} \left(\Sigma_b^{i,(j)}\right)^{-1/2} \hat w^i_{j,k}}{\left| \sum_{i=1}^n (\hat w_{j,k}^i)^\top \left(\Sigma_b^{i,(j)}\right)^{-1/2} Y_{j} \right|^2}.
\end{equation*}
\end{theorem}

The proof of Theorem~\ref{main.thm:2} builds on the properties of the projection vectors $\hat\kappa_{j,k}$ and the orthogonal complement of the projection $\hat w^i_{j,k}$. Specifically, we show that $\left(V_{j,k}\right)^{-1/2}\left(\hat\beta_{j,k}^{(\mathrm{db})} - \beta_{j,k}^*\right)$ can be divided into two terms, where one term is $o_p(1)$ and the other term can be shown to be asymptotically normal, thanks to the Lyapunov central limit theorem. As in the proof of Theorem~\ref{main.thm:1}, we extensively use the core Lemma~\ref{lemma:A.1} to bound quantities involving both the fixed effect and the random effect design matrices.

A novel feature of our results, compared to those in the literature, including \emph{LCL} \cite{li2021inference}, is that we establish the unconditional asymptotic normality of $\hat\beta^{(\mathrm{db})}_{j,k}$. In contrast, other approaches establish the asymptotic normality only conditionally on the random effect design matrices, even when the design matrices are assumed to be random. This unconditional normality is crucial for our application to brain connectivity network inference, and to the best of our knowledge, this is the first attempt to characterize the unconditional asymptotic properties of the fixed effect coefficients estimator with random design matrices in high-dimensional LMMs.

The lemmas required to prove Theorem~\ref{main.thm:2} are presented in Appendix~\ref{S:B}. Lemma~\ref{lemma:b.3} gathers several intermediate results necessary to prove Theorem~\ref{main.thm:2}.


Finally, we show that the sandwich estimator $\hat V_{j,k}$, defined in \eqref{df.hat.v}, is a consistent estimator of $V_{j,k}$ under Assumption~\ref{main.as.C}. We apply the same simplifications as we did for Assumption~\ref{main.as.B}. 

 \begin{assumption}
 \label{main.as.C}
 \begin{enumerate}
    \item[]
     \item If Assumption~\ref{main.as.A.add}.\ref{main.as.A.3} holds, then $m^2\log^5(n)\log^2(p) \ll n$.

          \item If Assumption~\ref{main.as.A.add}.\ref{main.as.A.3.2}, then 
     $\log^2(p)\log^4(n) \ll n$.
 \end{enumerate}
 \end{assumption}

\begin{theorem}
\label{main.lemma:Vj}
Under Assumptions~\ref{main.as.A}, \ref{main.as.B}, and \ref{main.as.C}, with probability at least $1-c_1\exp\{-cn\} -c_2\exp\{-c\log(n)\} - c_3\exp\{-cmn/p\}  -c_4\exp\{-c\log(p)\} - c_5\exp\{-cmn\}$ we have $\hat V_{j,k}/V_{j,k}  = 1 + o_p(1).$
\end{theorem}

\section{Extensions}
\subsection{Variance Component Estimation}
\label{section:extension.vc}

An appealing property of the proposed estimation and inference framework for the fixed effects is that we do not need to estimate the variance components $\theta = (\Psi, \{R^i\}_{i=1}^n)$. This is convenient when only the fixed effects are of interest. However, variance components also contain important information and should not be ignored. In our application of brain network analysis, a non-zero random effect variance component indicates the presence of subject-level heterogeneity in the connectivity between two brain nodes. In other applications, such as heritability analysis \cite{sofer2017confidence} and genome-wide association studies \citep{aulchenko2007genomewide}, the variance component estimates are necessary for downstream analysis, or may be of independent interest. 

Unfortunately, estimating the variance components in doubly high-dimensional LMM settings introduces unique challenges that have not been addressed by existing approaches. In particular, the method of \citep{li2018doubly} assumes bounded $m$, in order to use a Cholesky decomposition for estimating the random effect covariance matrix. The sample-splitting approach of \citep{li2021inference} requires $m>q$ for $q$ random effect covariates, which restricts its applicability in high dimensions. We extend the sample-splitting approach to doubly high-dimensional LMM settings, and propose a penalized moment-based estimator for selecting and estimating the variance components. In particular, we allow for $m$ to be smaller than $q$ and to grow with $n$. To simplify the problem, in the following, we assume that the noise terms are independent and identically distributed within each subject's observations, i.e., $R^i = \sigma_e^2 I_m$. Moreover, we assume $\Psi$ is a diagonal matrix satisfying Assumption~\ref{main.as.A.add}.\ref{main.as.A.3}, such that $\Psi = \diag(\psi)$. To simplify the notation and to broaden the scope of the framework, we will present the proposed estimators and the theoretical results under a more general LMM formulation:
\begin{align}
    y^i = X^i \beta + Z^i \gamma_i + \epsilon_i, \quad i=1, \dots, n, \label{main.model:general_LMM}
\end{align}
where each $y^i \in \R^{m}$ is the observation vector, $X^i \in \R^{m\times p}$ and $Z^i \in \R^{m \times q}$ are the design matrices with $ X^i \mid \Sigma_X^i \sim \mathrm{MN}_{m \times p}(0,I_m, \Sigma_X^i)$ and $Z^i \mid \Sigma_Z^i \sim \mathrm{MN}_{m \times q}(0, I_m, \Sigma_Z^i)$. Conditional on $X^i$, the random effect coefficients $\gamma_i \in \R^{q}$ and the noise term $\epsilon_i \in \R^{m}$ are independent with variance $\Psi$ and $R^i$, and satisfy $\gamma_i \in \SGV(c_1 \|\Psi\|_2)$, $\epsilon_i \in \SGV(c_2\|R^i\|_2)$, respectively. The fixed effect coefficient vector $\beta$ has support $S$ with cardinality $s$. See the Appendix for additional details on the model definition.

It is known that the variance components become non-identifiable if the random effect covariance matrix is proportional to the noise covariance \cite{wang2013identifiability}. In our case, this would happen if $Z^i \Psi (Z^i)^\top = c \sigma_e^2 I_m$, for some constant $c>0$. However, given that we assume the diagonal of $\Psi$ is sparse, we have $\sigma_{\min}\left(Z^i \Psi (Z^i)^\top\right) = 0$ whereas $\sigma_{\min}(\sigma_e^2 I_m) = \sigma_e^2 >0$, ensuring identifiability. 

Let $\theta = (\psi, \sigma_e^2)$ denote the variance components. To estimate $\theta$, we adopt a sample-splitting approach, and partition the $n$ subjects into three sub-samples of similar sizes with index set $S_k$, $k=1, 2, 3$, such that $n_k = |S_k|$ and $n_1 \asymp n_2 \asymp n_3$. We start by using the first $n_1$ subjects to estimate the fixed effect parameters $\beta$, denoting the estimates as $\hat\beta$. We then use the second sub-sample with $n_2$ subjects to estimate the vector $\psi$. Denoting $\hat r_i = y^i - X^i\hat\beta$, $i \in S_2$, we define the penalized moment-based estimator for $\psi$ with tuning parameter $\lambda_\theta$, as:
\begin{align}
\hat{\psi} & = \argmin_{\psi \in \R^{q}}\sum_{i \in S_2} \left\|\hat r_i \hat r_i^\top - \diag(\hat r_i)\diag(\hat r_i) -  Z^i \diag(\psi)  (Z^{i})^\top + \sum_{l=1}^{q} \psi_l \diag(Z_l^i)\diag(Z_l^i)\right\|_F^2 \nonumber \\
& \quad + \lambda_\theta\|\psi\|_1. \label{main.hat.psi.def}
\end{align}
The estimator for $\psi$ is constructed from the second moment of the residuals $r_i = y^i-X^i\beta^*$ by noticing that $\mathbb{E}\left( \left(r_i r_i^\top \right)_{l,k}\right) = \left(Z^i \Psi (Z^i)^\top\right)_{l,k}$. In the high-dimensional setting here considered, we adopt a LASSO regularization to guarantee the consistency of the estimator and simultaneously perform variable selection.

Finally, we use the third sub-sample with $n_3$ subjects to estimate the noise variance $\sigma_{e}^2$. To this end, we propose a simple moment-based estimator defined as 
\begin{align}
    \hat\sigma_{e}^2 = \frac{1}{n_3m} \sum_{i \in S_3}\Tr\left(\hat r_i \hat r_i^\top - Z^i \diag(\hat\psi) (Z^i)^\top\right), \label{main.noise.vc.def}
\end{align}
where $\hat r_i$ is computed based on the the first $n_1$ samples, and $\hat\psi$ is computed based on the second $n_2$ samples. 

We gather the assumptions needed to prove the consistency of the proposed variance component estimator $(\hat\psi, \hat\sigma_e^2)$ in Assumption~\ref{main.as.D} below. The true values are denoted by $\theta^* = (\psi^*, \sigma_e^{2, *})$.

\begin{assumption}
 \label{main.as.D}
 \begin{enumerate}
     \item[]
     \item \label{main.as.D.1} The vectors $\gamma_i$ and $\epsilon_i$ are normally distributed with mean zero and variance matrices $\Psi$, $\sigma_e^2 I_m$, respectively. The covariance matrix $\Psi$ satisfies Assumption~\ref{main.as.A.add}.\ref{main.as.A.3}.
     \item \label{main.as.D.2}\label{main.as.D.3} Letting $s_Z = \max_i\max_j \sum_{k=1, k\neq j}^{q} \bm{1}\left\{(\Sigma_Z^i)_{j,k} \gg \log(nq^2)/\sqrt{m} \right\}$, we have      $\sqrt{m} \gg \log(nq)$,
     $nm \gg \max \left\{ q^{3/2}\log(q) \log^2(n), \log(q) \log(n)\log(nq^2) \left(s_Z\sqrt{m} + q\log(nq^2)\right) \right\}$,
     $nm^3 \gg q^2\log(q)\log^2(n)$.
     \item \label{main.as.D.4} $\sqrt{n}m^2 \gg s s_\psi q \log(p) \log(q) \log(n)\log(nmq)$.
 \end{enumerate}
 
\end{assumption}

\begin{theorem}
\label{main.thm:Svc}
Under Assumption~\ref{main.as.A} and \ref{main.as.D}.\ref{main.as.D.1}--\ref{main.as.D}.\ref{main.as.D.3}, with probability at least $1 -c_1\exp\{-c\log(nq)\}-c_2\exp\{-cn\}-c_3\exp\{-c\log(n)\}- c_4\exp\{-cmn/q\} - c_5\exp\{-c\log(q)\}$, we have 
\begin{align*}
    \|\hat\psi- \psi^*\|_\delta & \leq s_{\psi}^{1/\delta}  \frac{q\log(n)\log(p)\log(nmq)}{\sqrt{n}m}  , \quad \delta=1, 2.
\end{align*}
\end{theorem}

\begin{theorem}
\label{main.thm:Svc.e}
Under Assumption~\ref{main.as.A} and \ref{main.as.D}, we have $|\hat\sigma_e^2 - \sigma_e^{2, *}| = o_p(1)$ with probability at least $1- c_1s_\psi q\log^3(n)/(nm)  - c_2\exp\{-cnm\} - c_3\exp\{-cn\} -c_4\exp\{-c\log(n)\} - c_5\exp\{-cmn/q\} - c_6\exp\left\{- c {n^2 m^2}/({s q^2 \log^4(n)\log(p)})\right\}$.
\end{theorem}
Similar to the proof of Theorem~\ref{main.thm:1}, we leverage random matrix theory \cite{tropp2015introduction} to bound the eigenvalues of matrix products and follow the classical proof of the consistency of the LASSO estimator \cite{buhlmann2011statistics}. As previously mentioned, our estimator allows $q$ to be larger than $m$, and also applies to the case where $q$ is smaller than $m$ (Appendix~\ref{Ssec:VC}). The related lemmas are collected in Appendix~\ref{Ssec:VC}. 

\subsection{High-Dimensional Heterogeneous Vector Autoregressive Models}
\label{section:extension.var}
In this section, we show that the proposed estimation and inference framework can be also extended to heterogeneous high-dimensional first-order VAR models. Different from GGMs describing an unconditional functional connectivity brain network, VAR modeling has been a popular approach for inferring the joint \emph{effective connectivity} network \cite{friston2011functional} among multiple brain regions \cite{chen2011vector}. Specifically, the VAR coefficients of the fitted model reveal Granger causal relations among brain regions \cite{granger1969investigating, shojaie2022}. A subject-specific first-order VAR model is formulated as
\begin{equation}
\label{eqn:simpleVAR}
    Y^i(t) = \Phi^i Y^i(t-1) + \epsilon^i(t),
\end{equation}
for observations $\{Y^i(t)\}_{t=1}^T$, VAR coefficient matrix $\Phi^i$ and error term $\{\epsilon^i(t)\}_{t=1}^T$ for the $i$-th subject. 

Recent developments have extended the VAR model to high-dimensional settings for stationary \cite{shojaie2010discovering, basu2015regularized, han2015direct} and non-stationary \cite{safikhani2022joint} time series, and for nonlinear \cite{zhang2022penalized, liu2020threshold}, sub-Gaussian \cite{zheng2019testing} and non-Gaussian \cite{tank2021convex} time series. However, these extensions are limited to modeling a single subject's network. Two-stage approaches are often adopted by neuroscientists for inference in multi-subject settings: in the first stage, separate VAR models are fitted for each subject; in the second stage, individual-level summary statistics are used for group-level analysis \cite{deshpande2009multivariate, morgan2011cross, narayan2016mixed}. This includes aggregating each subject's p-values for entries of $\Phi$ via Fisher's method \cite{deshpande2009multivariate}, and conducting a two-sample t-test based on individual-level summary statistics \cite{morgan2011cross}. However, two-stage approaches have significant limitations. Firstly, they often fail to adequately address the uncertainty associated with estimated individual-level statistics in the second-stage analysis \cite{chiang2017bayesian}, which potentially results in inaccurate group-level conclusions. Secondly, subject-level analyses overlook shared structural information among individuals, leading to less efficient estimates for the group-level structure. Lastly, choosing different methodologies for the second stage can lead to different conclusions. These limitations can be problematic when making inferences in the presence of subject-level heterogeneity. 

To overcome the above issues, mixed-effect VAR (MEVAR) models have been proposed for multi-subject brain signal analyses \cite{gorrostieta2012investigating, gorrostieta2013hierarchical}. In MEVAR models, the VAR coefficients $\Phi^i$ in \eqref{eqn:simpleVAR} are \emph{random variables} centered at the population-level matrix $\Phi$ such that $\Phi^i=\Phi + \Gamma^i$, where the matrix $\Phi$ represents the population-level effective connectivity brain network. However, current applications of this model are limited to low-dimensional fMRI observations \cite{gorrostieta2012investigating, gorrostieta2013hierarchical, brose2015emotional, wang2012investigating}. Even though there is an extensive body of work on incorporating mixed effects in VAR models, the majority focus on random coefficient AR models from a Bayesian perspective \cite{liu1980random, nandram1997bayesian}, while the rest are concerned with low-dimensional MEVAR models \cite{nicholls1981estimation, vanvevcek2008estimators} (see \citet{regis2022random} for a detailed overview). Theoretical results on multivariate MEVAR models are rare \cite{regis2022random}, and, to the best of our knowledge, valid frequentist inference for high-dimensional MEVAR models has not been investigated. 

Our proposed doubly high-dimensional LMM framework can be adopted to infer the population structure $\Phi$ in a high-dimensional first-order MEVAR model. Let the vector $\vec(A)$ denote the vectorized matrix A through vertical stacking of its columns, and let $A \otimes B$ denote the Kronecker product between two matrices $A$ and $B$. We can rewrite the model in equation \eqref{eqn:simpleVAR} as:

\begin{align}
    \label{eqn:VARmod}
\tilde Y^i = \tilde X^i \vec(\Phi) + \tilde X^i \vec(\Gamma^i) + \tilde \epsilon^i, \quad i=1, \dots, n,
\end{align}
where
\begin{align*}
\tilde Y^i = \left( \begin{array}{c}
Y^i(2) \\
\ldots \\
Y^i(T)
\end{array}
\right), \quad 
\tilde X^i = \left( \begin{array}{c}
\left(Y^i(1)\right)^\top \otimes I_p \\
\ldots \\
\left(Y^i(T-1)\right)^\top \otimes I_p
\end{array}
\right), \quad 
\tilde \epsilon^i = \left( \begin{array}{c}
\epsilon^i(2) \\
\ldots \\
\epsilon^i(T)
\end{array}
\right).
\end{align*}

In equation \eqref{eqn:VARmod}, $\tilde X^i$ is the design matrix, $\vec(\Phi)$ is the fixed effect coefficient vector, and $\vec(\Gamma^i)$ is the random effect coefficient vector. We can thus recast the problem of inferring $\Phi$ in a first-order MEVAR as the problem of inferring fixed effect coefficients in a doubly high-dimensional LMM, for which our proposed LMM framework applies. We can also show that inferring the whole matrix $\Phi$ is equivalent to inferring each row of $\Phi$ separately, which drastically accelerates computation.

We can show that the resulting penalized estimate for $\Phi$ is consistent, and the inference framework yields valid confidence intervals. Most of the proof follows the techniques used for the theoretical results in Section~\ref{section:theory}. However, due to the presence of correlated rows in the design matrix $\tilde X^i$, we introduce a pivotal lemma that extends the theoretical results in Section~\ref{section:theory} to the case of a first-order MEVAR mode (Appendix~\ref{S:VAR.theory}, Lemma~\ref{Slemma.a0}). Using the properties of a stationary first-order VAR process, this lemma connects the singular values of $\tilde X^i$ to the singular values of a standard Gaussian matrix and facilitates the remainder of the proof.

We demonstrate through a simulation study in Appendix~\ref{S:VAR.sim} that our proposed framework works well for inferring the matrix $\Phi$ for high-dimensional first-order MEVAR models. 

\section{Simulation Studies}
\label{section:simulation}
\subsection{Simulation Setting}
For ease of exposition, we generate data from a doubly high-dimensional LMM formulated as: 
\begin{equation}
y^i = X^i\beta + X^i\gamma_i + \epsilon_i, \quad i=1, \dots, n \label{sim.model},
\end{equation}
with $X^i \mid \Sigma_X^i \sim \mathrm{MN}_{m \times p}(0, I_m, \Sigma_X^i)$, $\gamma_i \sim \mathrm{N}(0, \diag(\psi))$ and $\epsilon_i \sim \mathrm{N}(0, \sigma_e^2 I_m)$. This is equivalent to analyzing the edges connected to one single node in a GGM.

We compare our approach with two competing approaches: (i) the \emph{LCL} method of \cite{li2021inference}, and (ii) the de-biased LASSO method \cite{zhang2014confidence} (referred to as \emph{dblasso}) as a benchmark approach which ignores the subject-level heterogeneity among observations. 
We compare the methods in terms of the total mean squared error (total MSE) for the estimates of all $\beta_l$ coefficients, the power of testing non-zero coefficients, the type-I error for testing zero coefficients at 5\% significance level, and the coverage of 95\% confidence intervals. We also compare the method proposed in Section~\ref{section:extension.vc} for estimation of the variance components, $\theta=(\psi, \sigma_e^2)$, with the method of \emph{LCL} by assessing the MSE of the noise variance $\sigma_e^2$, the total MSE for estimates of all $\psi_l$, and the selection consistency of the non-zero random effect variance components $\psi_{S_\psi}$. The selection consistency performance is evaluated via the Matthews correlation coefficient (MCC) \cite{matthews1975comparison}; MCC summarizes true positive and false positive rates, with higher values indicating more accurate identification of the non-zero variance components. 

We generated data from the doubly high-dimensional linear mixed model specified in \eqref{sim.model} with $n\in\{30,50,80,100\}$ and $m \in\{15, 30, 50, 70\}$. For each combination of $(n, m)$, we set $p \in \{20,60\}$ and replicate 200 independent Monte Carlo simulations. We set $(\beta_1^*, \beta^*_2, \beta^*_6, \beta^*_7, \beta^*_9) = (1,0.5,0.2,0.1,0.05)$ and $\beta^*_l = 0$ for the remaining $l$'s. The random effects $\gamma_i$ and the noise terms $\epsilon_i$ were independent realizations of two multivariate normal distributions $\mathrm{N}(0, \diag(\psi))$ and $\mathrm{N}(0, \sigma_e^2 I_m)$, respectively. The true values of the variance components $(\psi^*, \sigma_e^{2,*})$ are set as follows: $(\psi^*_1, \psi^*_4, \psi^*_7, \psi^*_9, \psi^*_{10},$ $\psi^*_{12}, \psi^*_{16}, \psi^*_{20})= (2, 2, 0.1, 0.1, 4, 0.1, 2, 0.1)$, with the remaining $\psi^*_l$'s set to 0, and $\sigma_e^{2,*} =1$. The fixed effect design matrices $X^i$ were independent realizations of a matrix normal distribution $X^i \mid \Sigma_X^i \sim \mathrm{MN}_{m \times p}(0, I_m, \Sigma_X^i)$. To generate $\Sigma_X^i$, we first generated a population-level covariance matrix $\Sigma_X$, which was set as a sparse random matrix with diagonal entries $1$ and off-diagonal entries drawn independently from a mixture distribution: each entry was either set to zero with probability $0.8$, or was drawn from a uniform distribution $\mathrm{Unif}(-0.5, 0.5)$. This choice was motivated by the nature of sparsely correlated brain networks. Each $\Sigma_X^i$ was then generated as a perturbed version of $\Sigma_X$ by (i) determining varying off-diagonal entries of $\Sigma_X^i$ by drawing a Bernoulli random variable with success probability 0.2; and (ii) generating variations by adding a mean zero normal perturbation with standard deviation 0.1. To ensure symmetry, only the entries in the upper-diagonal of $\Sigma_X^i$ were considered as candidates for perturbation. We repeated the above two steps if the generated $\Sigma_X^i$ was not positive definite. The resulting $\Sigma^i_X$ represent subject-level heterogeneity in the brain networks.

We used cross-validation with MSE as the error criteria to select the tuning parameters $\lambda_a$ for $\hat\beta$, $\lambda_j$'s for $\hat\kappa_j$ and $\lambda_\theta$ for variance components. 
We used the R function \texttt{cv.glmnet} from the R package \texttt{glmnet} (\textit{v4.1-3}, \citep{glmnet}) to implement the cross-validation algorithm. 
The constant $a$ in the proxy matrix is also viewed as a tuning parameter. We followed the approach described in \cite{li2021inference} to select an optimum $a$ via cross-validation: for each candidate value of $a$, we let the algorithm select the values for the rest of the tuning parameters, and used cross-validation based on MSE to select an optimal $a$. We present the results based on the optimal $a$.
We used the authors' publicly available R code for \emph{LCL} \cite{li2021inference}, and used the \texttt{hdi} R-package \cite{hdi} for \emph{dblasso}.

\subsection{Simulation Results}
Results for inference on fixed effect parameters $\beta$ are summarized in Figure~\ref{fig:1}. The proposed method controls the type-I error rate at the nominal level, whereas \emph{LCL} and \emph{dblasso} show inflated type-I errors in various settings (Figure~\ref{fig:1}a, \ref{fig:1}b): when the random effect variance is large ($\psi_{10}=4$ for $\beta_{10}$), tests by \emph{dblasso} have type-I errors as high as 0.74; \emph{LCL}'s type-I error reaches 0.28 when the random effect variance is zero (for $\beta_{11}$); both methods show higher type-I errors with increasing $m$. Results for $p=60$ are presented in Appendix~\ref{S:sim.main.p} Table~\ref{Stable:typeI}.
When $p=60$, the type-I error of \emph{LCL} improves for small $m$, but is still as high as 0.23 for large $m$ values; \emph{dblasso} has high type-I error regardless of $p$, which is not surprising. Interestingly, \emph{dblasso} often fails when testing covariates with non-zero random effect variances, and \emph{LCL} often fails when testing covariates with zero random effect variances, especially when $m$ is large.

The confidence intervals constructed by the proposed method provide good coverage, while those constructed by \emph{LCL} and \emph{dblasso} show poor coverage in some settings (Figure~\ref{fig:1}c, \ref{fig:1}d). The pattern of the confidence interval coverage is similar to the pattern of the type-I error: \emph{LCL} has coverage lower than 0.81 for $\beta_{11}$ at $p=20$, $m=70$, and has insufficient coverage for covariates with zero random effect variance when $m$ is large (Appendix~\ref{S:sim.main.p} Table~\ref{Stable:CI}); \emph{dblasso}'s coverage is always lower than 0.8 for some $\beta_l$'s (Figure~\ref{fig:1}c), and is as low as 0.24 for $p=60$ (Appendix~\ref{S:sim.main.p} Table~\ref{Stable:CI}); both methods have worse coverage with increasing $m$.

Since \emph{dblasso} has poor confidence interval coverage and highly inflated type-I error, we do not include it in the power comparison for testing $\beta_l$'s. The proposed method in general has comparable power against \emph{LCL} (Figure~\ref{fig:1}e, \ref{fig:1}f and Appendix~\ref{S:sim.main.p} Table~\ref{Stable:power}).

\begin{figure}[htp]
    \centering
    \includegraphics[width=0.85\textwidth]{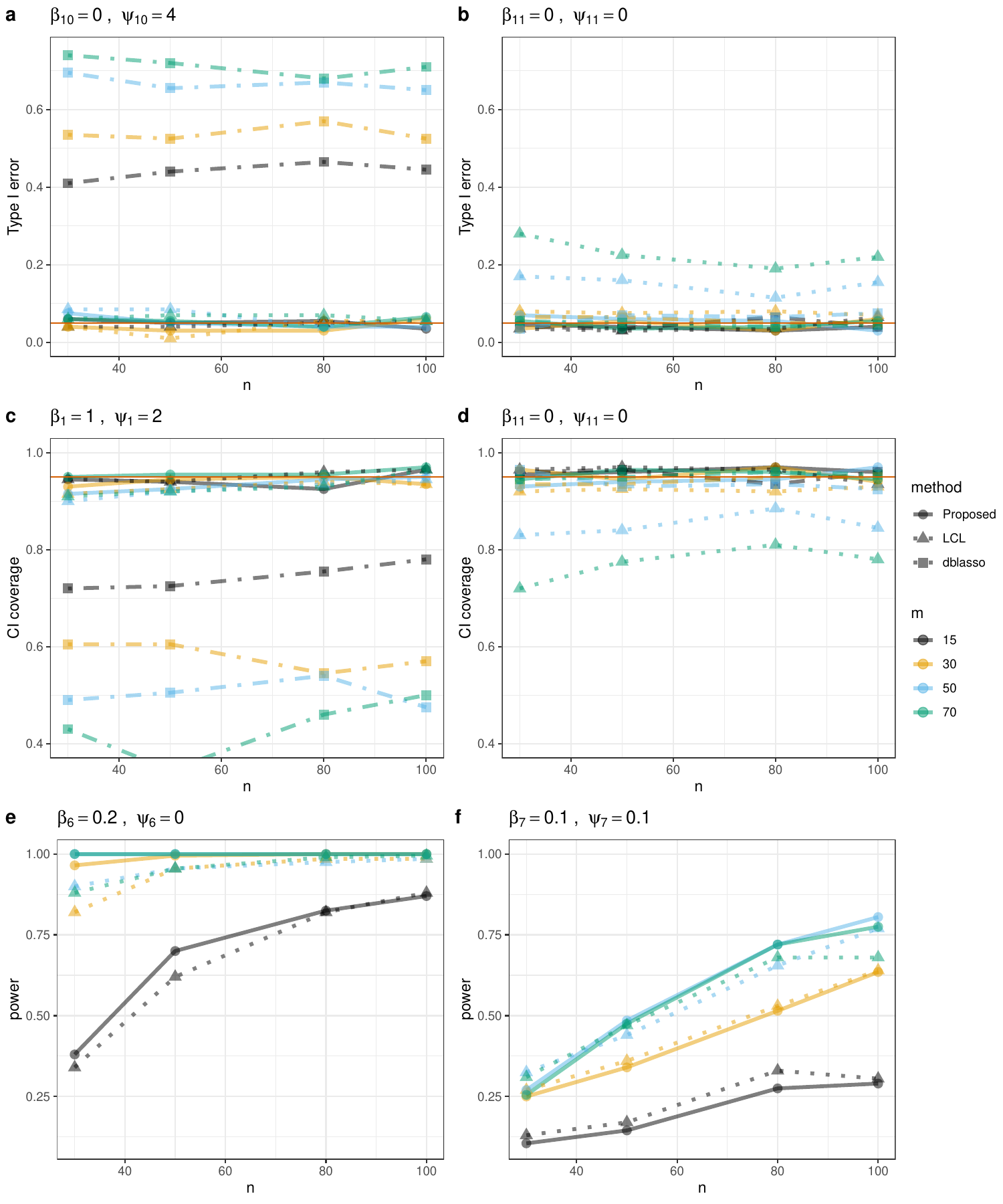}
    \caption{\footnotesize \textbf{a},\textbf{b}: Type-I error for testing $\beta_l$'s at the 0.05 significance level (0.05 marked by red solid line), for the proposed method, \emph{LCL} and \emph{dblasso}. 
    \textbf{c},\textbf{d}: 95\% confidence interval coverage (0.95 marked by red solid line) for fixed effect coefficients, for the proposed method, \emph{LCL} and \emph{dblasso}. 
    \textbf{e},\textbf{f}: Power for testing fixed effect coefficients at the 0.05 significance level, for the proposed method and \emph{LCL}. All results are computed at $p=20$ based on 200 Monte Carlo simulations, and are plotted separately for each method, each $m$, and against increasing $n$. The title of each subplot shows the true values of the targeted fixed effect coefficient $\beta_l$, and the corresponding random effect variance component $\psi_l$. 
    }
    \label{fig:1}
\end{figure}

In terms of estimating the fixed effect coefficients, the proposed method always has smaller total MSE than \emph{LCL} (Figure~\ref{fig:2}a and Appendix~\ref{S:sim.main.p} Table~\ref{Stable:MSE}). Fixed effect coefficient estimates by \emph{dblasso} always have the smallest total MSE. 

Since \emph{dblasso} does not provide variance component estimates, it is not included for comparison in Figure~\ref{fig:2}.
When $m < q$, \emph{LCL} is not able to provide variance component estimates, while our method provides sensible estimates in all settings (Figure~\ref{fig:2}b, \ref{fig:2}c). When $m>q$, our method's estimate of $\sigma_\epsilon^2$ are not as good as \emph{LCL}'s (Figure~\ref{fig:2}b), but its total MSEs for the random effect variance components are comparable with \emph{LCL}  (Figure~\ref{fig:2}c and Appendix~\ref{S:sim.main.p} Table~\ref{Stable:MSE}). 

We also examined the selection consistency of the random effect variance components evaluated by MCC. The proposed method selects the variance components much more accurately than \emph{LCL}, and the accuracy improves with increasing $n$ and $m$ (Figure~\ref{fig:2}d). \emph{LCL} has poor selection accuracy, and is less accurate with increasing $n$. Examining the proportion of simulations that yield non-zero estimates for each variance component reveals that \emph{LCL} over selects zero variance components (Appendix~\ref{S:sim.main.p} Figure~\ref{Sfig:vc_selection}).

\begin{figure}[t]
    \centering
    \includegraphics[width=0.85\textwidth]{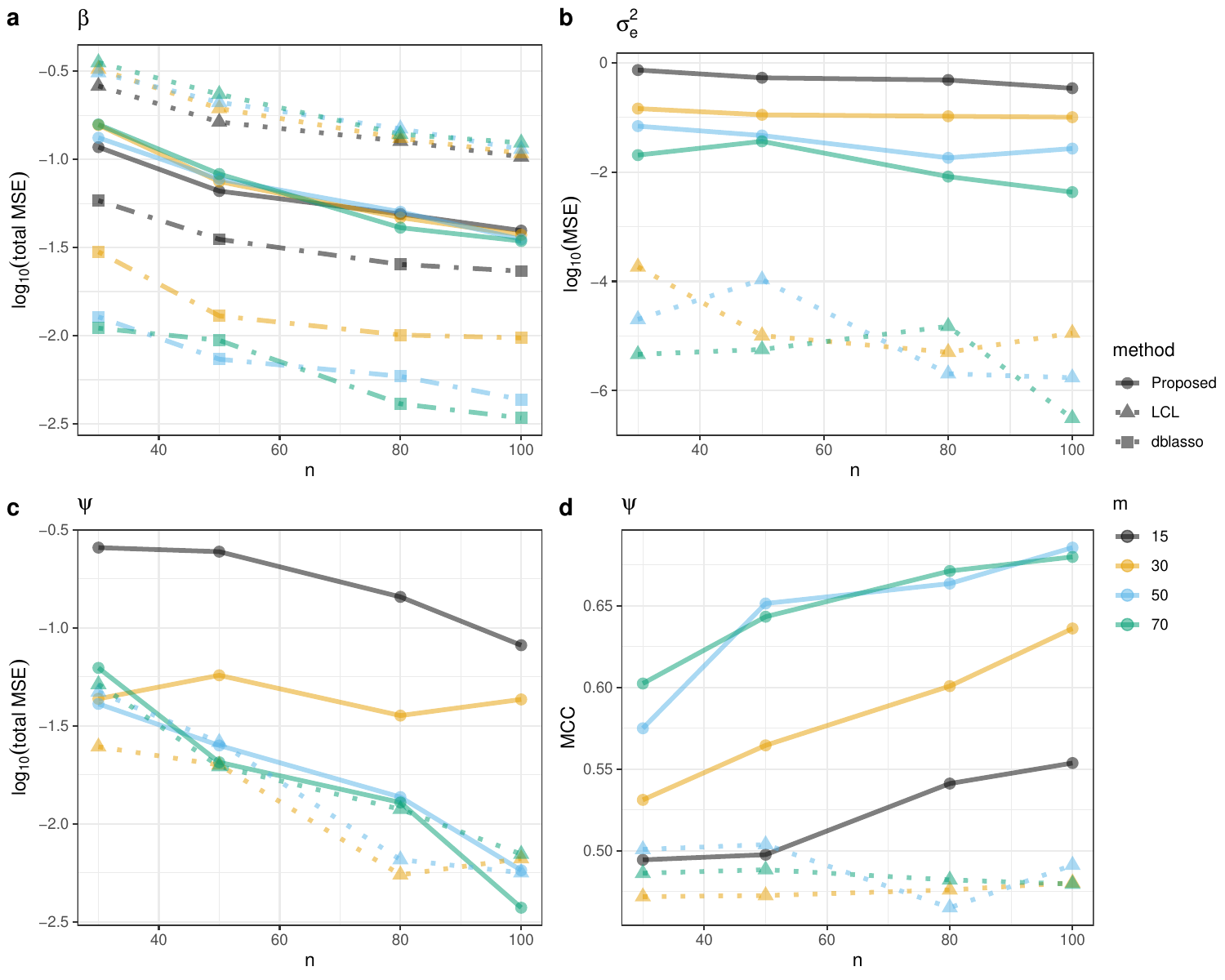}
    \caption{\footnotesize \textbf{a}: Total MSE on the $\log_{10}$ scale for estimating the fixed effect coefficients $\beta$, for the proposed method, \emph{LCL} and \emph{dblasso}. \textbf{b}: MSE on the $\log_{10}$ scale for estimating the noise variance $\sigma^2_e$, for the proposed method and \emph{LCL}. \textbf{c}: Total MSE on the $\log_{10}$ scale for estimating the random effect variance components $\psi$, for the proposed method and \emph{LCL}. \textbf{d}: Average MCC for identifying non-zero variance components for the proposed method and \emph{LCL}, where we use non-zero estimates to select non-zero variance components. Values are computed at $p=20$ based on 200 Monte Carlo simulations, are plotted separately for each method, each $m$, and are against increasing $n$.}
    \label{fig:2}
\end{figure}

Finally, in terms of computational time, the proposed method is slightly slower than \emph{LCL} for fixed effect estimation and inference, but is much faster for variance component estimation. The differences become more pronounced for large $p$ (Appendix~\ref{S:sim.main.p} Figure~\ref{Sfig:main.time}).

\section{Data Application}
\label{section:data}

We employ the proposed method to learn brain connectivity networks using the Human Connectome Project (HCP) resting-state fMRI data \cite{van2013wu}. HCP has acquired high-quality functional and structural imaging data from healthy adult subjects in an effort to enhance the understanding of neural connectivity. Here, we focus on analyzing the resting-state fMRI data of 160 unrelated subjects, among which 80 are recreational drug users, and the rest are non-users. Each fMRI scan provides a signal with a temporal resolution of 0.73 seconds and a spatial resolution of 2-mm isotropic \cite{smith2015positive}. Standard pre-processing steps were applied to the fMRI signals \cite{smith2013resting}, including spatial normalization \cite{glasser2013minimal} and artifacts removal \cite{griffanti2014ica,smith2015positive}. 
Then a Group Independent Component Analysis was applied to generate $200$ spatially-distributed brain nodes \cite{smith2014group}. Using a dual-regression, $200$ associated time series of length $1,200$ were obtained, each one representing the activation pattern of a brain node over time \cite[see][for more details]{smith2015positive}. To remove the temporal correlation, we down-sampled each time-series to one of length $120$, which we assume consists of independent observations.

We apply the proposed method to the user and the non-user groups separately, estimating and testing the significance of the functional connectivity (network edges) between every pair of brain nodes in each group, and estimating the variance component for each network edge. For the latter, we assume a sparse diagonal covariance matrix for the random effects and we split the samples into three equal-sized sub-samples for estimation. To account for the numerical asymmetry of the neighborhood selection approach, in each group we set the edge $E_{j,k}$ based on $\hat \beta_{j \leftrightarrow k} = (\hat\beta_{j,k} + \hat\beta_{k,j})/2$. The corresponding variance of $\hat \beta_{j \leftrightarrow k}$ is computed as $\hat V_{j \leftrightarrow k} = (\hat V_{j,k} +\hat V_{k,j})/2$. The p-value for testing the null hypothesis $H_0: \beta_{j \leftrightarrow k}=0$ is then computed based on these averaged quantities.  

For comparison, we also apply the \emph{LCL}, the \emph{dblasso} methods and the two-stage approaches using t-test (\emph{two-stage t-test}) or Fisher's method (\emph{two-stage Fisher}) to estimate and test the statistical significance of $\beta_{j \leftrightarrow k}$ for the non-user group. The \emph{dblasso} approach ignores the within-subject correlations and subject-level variations, yielding network strength estimates that are equivalent to applying our model with $a=0$ in the proxy matrices $\Sigma_a^{(j)}$. 

\begin{figure}[htp]
    \centering
    \includegraphics[width=0.8\textwidth]{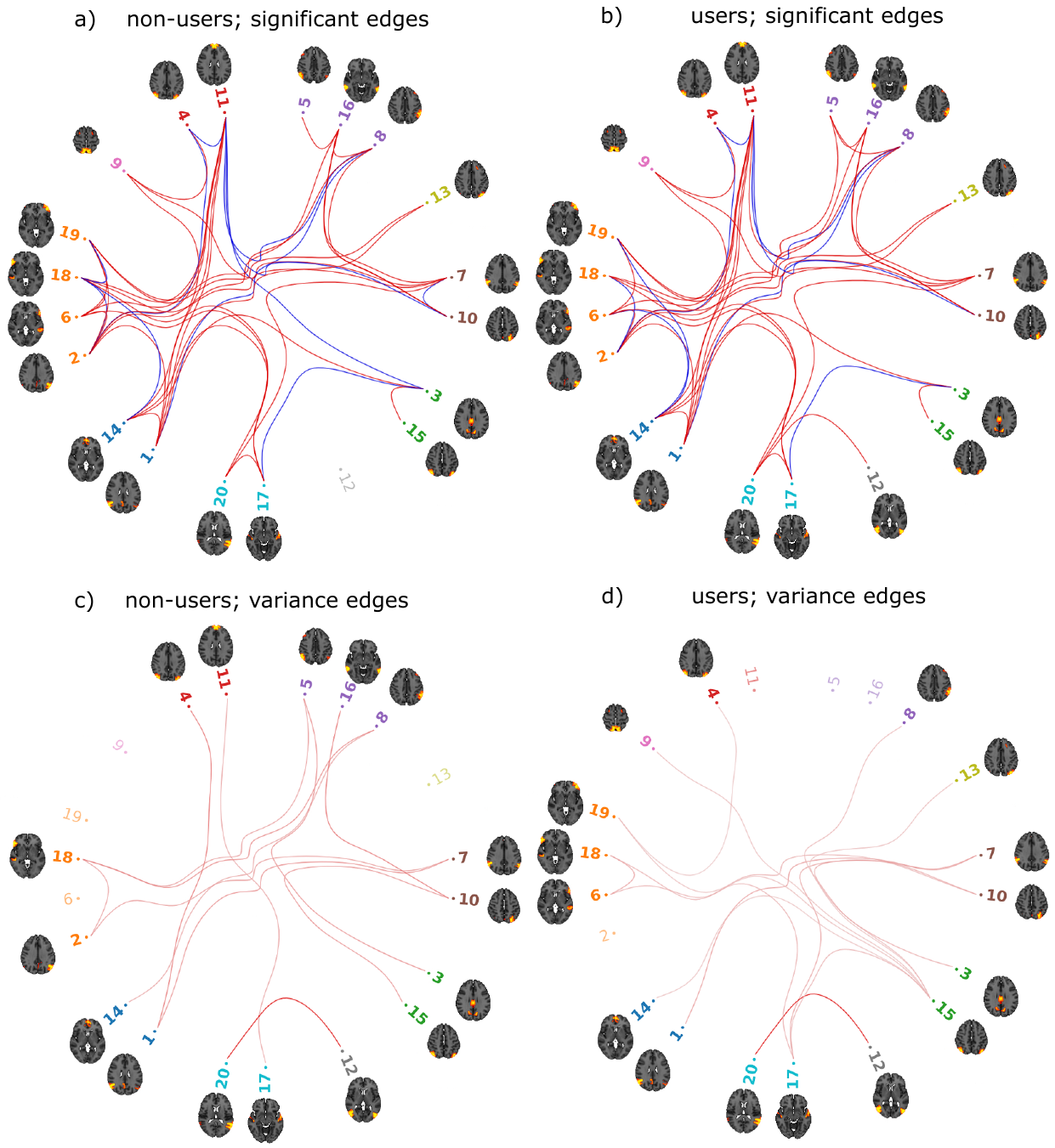}
    \caption{Estimated brain sub-networks, and associated variances, in recreational drug users and non-users as estimated by the proposed method. We present the results only for a subset of the 200 nodes that are associated with the default mode network, i.e., the brain regions that are active during passive rest. These were selected based on visual examination of the brain nodes. The node clusters in the networks are defined through hierarchical clustering based on the fixed effect estimates from the proposed method. 
    \textbf{a,b}: The most significant edges (adjusted p-values $< 1\times 10^{-13}$) in each group. Red edges represent positive edge strength, and blue edges represent negative edge strength.
    \textbf{c,d}: The edges in the top 25\% based on their variances, selected from those depicted in sub-plots \textbf{a} and \textbf{b}. The edge width is proportional to the estimate of edge variance, with thicker edges representing higher subject-level heterogeneity.}
    \label{fig:data_beta}
\end{figure}

Controlling for the family-wise error rate at 0.05 using Holm's procedure \cite{holm1979simple}, the proposed method detects $1,472$ edges (7.4\%) in the non-user group and $1,459$ edges (7.3\%) in the recreational drug user group as significantly different from zero. The estimated brain networks in the two groups have $1,028$ edges in common. For better visualization, we only show the results for a sub-network with nodes that are associated with the default mode network, i.e., the brain regions that are active during passive rest and therefore most relevant to resting-state experiments. Also, we only plot the most significant edges (p-values $<1\times 10^{-13}$). The estimated sub-networks are shown in Figures~\ref{fig:data_beta}a-b. The proposed method also provides estimates of the variability of each edge. In Figures~\ref{fig:data_beta}c-d, we present the top 25\% of edges, selected based on the highest estimated variances, from those depicted in Figures~\ref{fig:data_beta}a-b.
The plots show that, for example, the connectivity between node 12 and node 20 presents high subject-level heterogeneity in both groups.

Interestingly, the link between node 3, the posterior cingulate cortex, and node 11, the medial prefrontal cortex, is deemed significant in the non-user group, but not so in the user group. These two areas serve as major hubs within the default mode network \cite{deshpande2011instantaneous}. The disconnection of these regions has been linked to working memory deficits \cite{whitfield2012default}. Even though the primary aim of this work is not to explore this research question, and considering that non-significant p-values merely suggest that the data do not provide sufficient evidence to reject the null hypothesis, these findings corroborate with research that has linked acute cannabis usage to weakening abilities to maintain, manipulate, and recall information \cite{heishman1997comparative}.

\begin{figure}[t]
    \centering
    \includegraphics[width=0.8\textwidth]{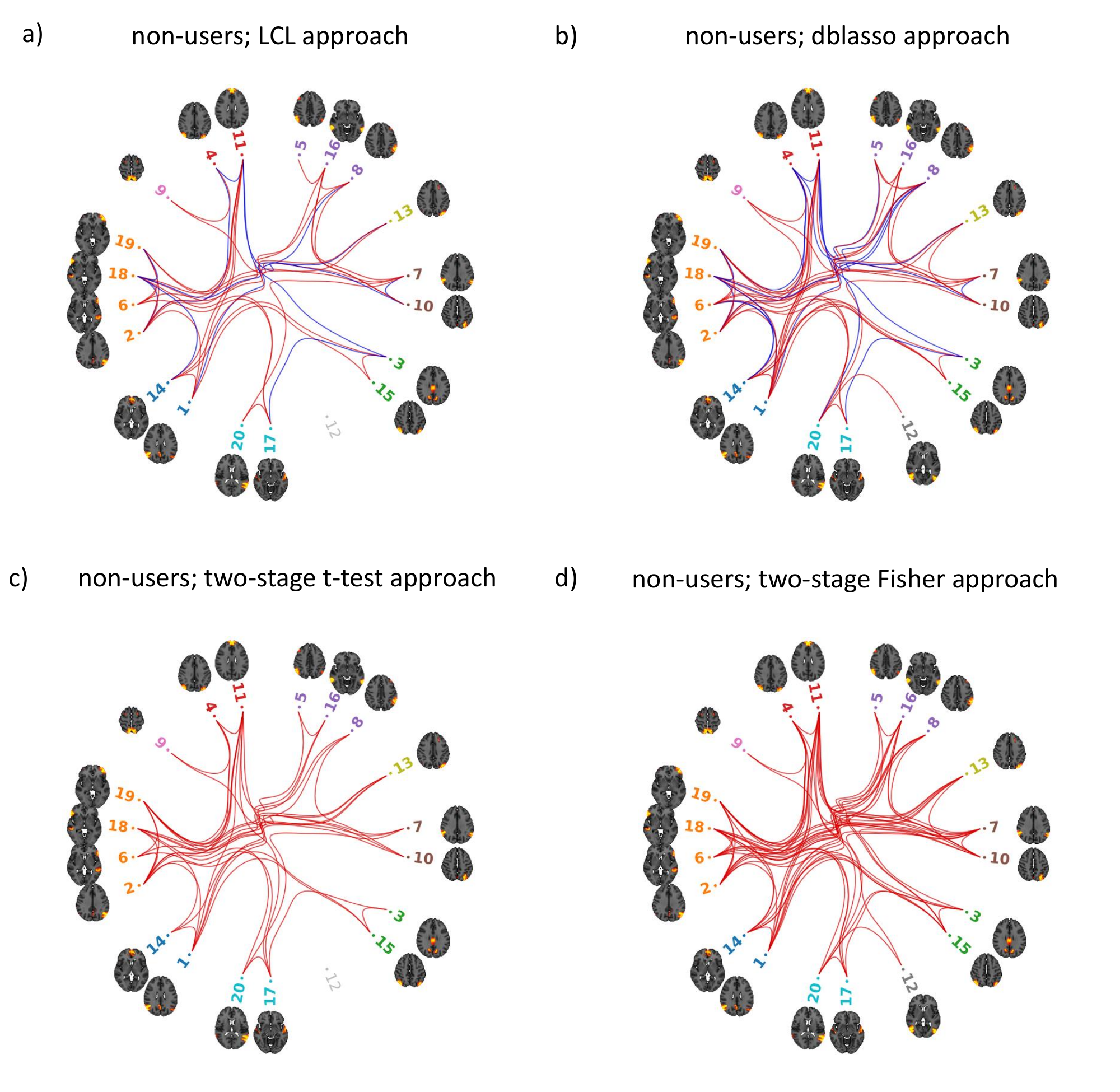}
    \caption{Brain sub-networks in non-users as estimated with the \textbf{a}: \emph{LCL} approach; \textbf{b}: \emph{dblasso} approach; \textbf{c} \emph{two-stage t-test} approach; \textbf{d} \emph{two-stage Fisher} approach. We present the results only for the nodes we selected in Figure~\ref{fig:data_beta}. Only edges with p-value $<1\times 10^{-13}$ are presented. Red edges represent positive edge strength, and blue edges represent negative edge strength. For two-stage approaches, the edges are tested without computing population-level edge estimates, so we plot all the edges in red.}
    \label{fig:data_compare}
\end{figure}

From an estimation accuracy perspective, for the non-user group, \emph{dblasso} estimated 1754 edges (8.8\%) as significant connections among 200 brain nodes and \emph{LCL} detects 1326 edges (6.7\%); 2648 edges (13.3\%) and 1414 edges (7.1\%) are detected by \emph{two-stage t-test} and \emph{two-stage Fisher} approaches, respectively. The larger number of significant edges detected by \emph{dblasso} and \emph{two-stage t-test} is likely due to the inflated type-I error, as illustrated in our simulations when subject-specific heterogeneity is present in the data. We illustrate in Figure~\ref{fig:data_compare} the most significant edges (p-values $<1\times 10^{-13}$) detected by \emph{LCL}, \emph{dblasso}, \emph{two-stage t-test} and \emph{two-stage Fisher} among the previously selected brain nodes. While the overall patterns of connectivity are similar among these methods when compared to the proposed method, the stark difference in the number of detected edges by \emph{dblasso} and \emph{two-stage t-test} highlights the importance of modeling the heterogeneity in functional connectivity studies even when the population-level connectivity is of interest. We note that with $200$ nodes and $120$ observations per subject, \emph{LCL} is not able to estimate the variances of the network edges.

\section{Discussion}
\label{section:discussion}
Motivated by the problem of inferring population-level edges in subject-varying GGMs, we proposed an estimation and inference framework for fixed effect parameters in doubly high-dimensional LMMs. As shown in our numerical studies, ignoring the subject-level variations in GGMs can result in highly-inflated type-I errors and false discoveries, even when the population-level network is of main interest.
However, because of correlation and overlap between fixed and random effect covariates, previous works on high-dimensional LMMs, including \cite{li2021inference}, do not apply to the GGM setting and may suffer from inflated type-I error and insufficient coverage of confidence intervals. Our estimation and inference framework for doubly high-dimensional LMMs addresses these challenges, while also allowing a larger number of random effects than the number of observations per subject. 
We also proposed a new moment-based penalized variance component estimator, which addresses the challenges of estimating variance components in doubly high-dimensional LMMs.

Similar to \cite{li2021inference}, we treat the parameter $a$ in the proxy covariance matrix as a tuning parameter and use a cross-validation procedure to select $a$. However, our theoretical results hold for any positive value of the constant $a$, and, in practice, we can simply set $a$ to an arbitrary constant. Our additional simulation studies in Appendix~\ref{S:sim.main.a} explore the effect of $a$ on the finite sample performance of the proposed method and show that, for any constant $a>0$, the proposed method has correct confidence interval coverage and controls the type-I error. However, the value of $a$ may impact the power of detecting non-zero $\beta_l$ and the estimation accuracy of $\beta$ and the variance components.

A novel aspect of the proposed method, compared with the existing procedures, is that it accommodates subject-varying covariance matrices. This is a crucial feature for handling graphical modeling with subject-level heterogeneity. Our theoretical analysis avoids making specific distributional assumptions, allowing for a broad spectrum of potential covariance matrix distributions to be considered. Exploring these diverse choices could be an interesting direction for future research. 

Two other extensions of our framework could be of interest. First, motivated by our GGM problem, we focused on the setting where the fixed and random effect design matrices are identical. We also assumed $p/m > c_0 >1$ for some constant $c_0$. In the Appendix~\ref{S:A}--\ref{Ssec:VC}, we extend our framework to generic doubly high-dimensional LMMs, where the random effect covariates are a subset of the fixed effect covariates. Assuming fewer random than fixed effect covariates ($q\leq p$), we show that our framework works if $\lim_{q, m \xrightarrow{}\infty} q/m \neq 1$. However, these results are based on unified proof techniques for $q/m >c_0>1$ and $q/m< c_1<1$, with techniques specifically designed for the case $q/m>c_0>1$. Therefore, our assumptions for the setting $q/m<c_1<1$ could be relaxed and our rates may not be optimal. Secondly, our framework extends beyond independent and identically distributed noise terms. The proposed method readily accommodates observations featuring correlated noise terms and seamlessly extends to mixed-effect vector autoregressive models. This makes the mode a versatile tool that can be directly employed in the analysis of time series observations in certain settings.

In summary, our framework provides rigorous inference of brain connectivity networks in the presence of subject heterogeneity during multi-subject experiments. It facilitates a systematic exploration of population-level brain network structures, enhancing our understanding of the functionalities associated with various brain regions. It aids in identifying crucial brain regions, such as those highly connected with many others, providing targeted avenues for further investigations. Additionally, it helps detecting AD-induced alterations in brain connectivity patterns \cite{lee2016default}. Beyond its utility in resting-state fMRI data, our framework is directly applicable to task-based fMRI signals. This adaptability is particularly beneficial in scenarios where substantial differences in brain functional connectivity are expected between distinct groups \cite{zhao2023task}.



\newpage
\bibliography{references.bib}

\newpage
\appendix
\section{Outline}
In the main discussion, we have established an estimation and inference framework for doubly high-dimensional linear mixed models (LMMs) in the context of neighborhood-based graphical modeling of heterogeneous data. In that specific setting, the fixed and the random effects had identical design matrices. However, doubly high-dimensional LMMs may also arise from settings where the two matrices are not identical, but yet share a set of variables or have highly correlated columns, therefore still violating assumptions on the conditional independence of the fixed and random design matrices. A typical example is in longitudinal data analysis with random slope models --- if we want to allow random slopes for a large number of explanatory variables, we end up with a doubly high-dimensional LMM. With advancing technology allowing us to collect more variables than ever, such examples will only become more ubiquitous. Thus, we extend our framework to a generic doubly high-dimensional LMM, formulated as:
\begin{align}
    y^i = X^i \beta + Z^i \gamma_i + \epsilon_i, \quad i=1, \dots, n. \label{model:general_LMM}
\end{align}
Here, each $y^i \in \R^{m}$ is the observation vector, $X^i \in \R^{m\times p}$ and $Z^i \in \R^{m \times q}$ are the design matrices, where $ X^i \mid \Sigma_X^i \sim MN_{m \times p}(0,I_m, \Sigma_X^i)$ and $Z^i \mid \Sigma_Z^i \sim MN_{m \times q}(0, I_m, \Sigma_Z^i)$. The subject-level covariance matrices $\Sigma_X^i$ and $\Sigma_Z^i$ are centered at the population-level covariance matrices $\Sigma_X$, $\Sigma_Z$, respectively. We assume the $q$ random effect covariates are a subset of the $p$ fixed effect covariates. Moreover, we assume that conditional on $X^i$, the random effect coefficients $\gamma_i \in \R^{q}$ and the noise term $\epsilon_i \in \R^{m}$ are independent with variance $\Psi$ and $\R^i$, and satisfy $\gamma_i \in \SGV(c_1 \|\Psi\|_2)$, $\epsilon_i \in \SGV(c_2\|R^i\|_2)$, respectively. We allow for either $q>c_0m$ or $m>c_0q$, for some constant $c_0>1$. The fixed effect coefficients $\beta$ has support $S$ with cardinality $s$. 

We present the estimators, the inference framework and their theoretical properties in the Appendix~\ref{S:A}--\ref{Ssec:VC}, with the proofs available upon request. All results are directly applicable to the doubly high-dimensional LMM in \eqref{model.def.2} of the main paper in the context of graphical model selection, where we simply need to set $X^i = Z^i$.

We define some relevant quantities to facilitate the discussion. We use $\mathbf{1}\{\cdot\}$ to represent the indicator function. The proxy covariance matrix is denoted by $\Sigma_a^i = a Z^i (Z^i)^\top + I_m$ and $\Sigma_a = \diag\left(\{\Sigma_a^i\}_{i=1}^n\right)$. The vectors/matrices $y$, $\gamma$, $\epsilon$ and $X$ are formed by vertically stacking the vectors/matrices $y^i$'s, $\gamma_i$'s, $\epsilon_i$'s and $X^i$'s respectively. A random variable $X$ is sub-exponential with parameters $(a, b)$ if $\forall\  |t| \leq 1/b$, $\E(\exp(tX)) \leq \exp(at^2/2)$. We define $\SE(a, b)$ as the class of all sub-exponential random variables with mean zero and parameters $(a, b)$. Denote $e_j$ as a length $q$ unite vector with the $j$th entry taking value $1$.

\section{Fixed Effect Estimator $\hat\beta$}
\label{S:A}

We define the estimator $\hat\beta$ for the fixed effect coefficients $\beta$ in model \eqref{model:general_LMM} as follows:
\begin{align*}
    \hat\beta = \argmin_{\beta \in \mathbb{R}^p} {\left(2\Tr(\Sigma_a^{-1})\right)}^{-1}\|\Sigma_a^{-1/2} (y - X \beta) \|_2^2 + \lambda_a \|\beta\|_1.
\end{align*}

We state a generalized version of Theorem \ref{main.thm:1} of the main paper and its related assumptions for the generic doubly high-dimensional LMM in \eqref{model:general_LMM}:

\begin{assumption}
\label{as.A}
\begin{enumerate}
    \item \label{as.A.1} Let $q>c_0m$ or $m>c_0q$ for some constant $c_0>1$ and let $m \vee q > c_1$ for some suitably large constant $c_1>0$. Moreover, let $\log(q) \left(q/m\right)^{\qm}/n = o(1)$.
    \item \label{as.A.2} $\forall \ i$, $\sigma(\Sigma_X) \asymp \sigma(R^i) \asymp \|\Psi \|_2 \asymp 1$, $\|\Sigma_X^i - \Sigma_X\|_2 \leq  \sigma_{\min}(\Sigma_X) - c_2$, for some constant $c_2 >0$. 
\end{enumerate}
\end{assumption}

Assumption \ref{as.A}.\ref{as.A.2} implies $\sigma(\Sigma_X^i) \asymp 1$ for all $i=1, \dots, n$ (Weyl's theorem, \cite{bhatia2007perturbation}).

\begin{assumption}
\label{as.A.add}
\begin{enumerate}
    \item \label{as.A.3} $\Psi = \diag(\psi)$ for a vector $\psi \in \R^q$. The support of $\psi$ is $S_\psi$ with cardinality $s_\psi < c_2m \wedge n$ for some constant $c_2>0$, and $\min(\psi_{S_\psi}) \asymp \max(\psi_{S_\psi}) \asymp 1$.
    \item \label{as.A.3.2} $\sigma_{\min}(\Psi) \asymp 1$.
\end{enumerate}

\end{assumption}

\begin{theorem}[Fixed effect estimator consistency]
\label{thm:S1}
Under Assumption \ref{as.A}.\ref{as.A.1} and Assumption \ref{as.A}.\ref{as.A.2}, with probability at least $1-4\exp\{-cn\} -12\exp\{-c\log(n)\} - 2\exp\{-cmnq^{-\qm}\} - \exp\{-cn(m/q)^{\qm}\}$, we have the following results:
\begin{enumerate}
    \item When $q>c_0m$: Taking $\lambda_a = c_1\sqrt{q\log(p)/(nm)}$ for a suitably large $c_1>0$, we have that:
     \begin{align*}
         &\|\hat \beta -\beta^*\|_2 = O_p\left(\sqrt{\frac{sq\log(p)}{mn}}\right),\\
         &\|\hat \beta -\beta^*\|_1 = O_p\left(s\sqrt{\frac{q\log(p)}{mn}}\right),\\
         & \left\|\Sigma_a^{-1/2} X (\hat \beta -\beta^*)\right\|_2^2 = O_p\left(s\log(p)\right).
     \end{align*}
     
     \item  When $q>c_0m$ and Assumption~\ref{as.A.add}.\ref{as.A.3} also holds:
    Taking $\lambda_a = c_2\sqrt{\log(p)\log^2(n)/n}$ for suitably large $c_2>0$, we have that:
     \begin{align*}
         &\|\hat \beta -\beta^*\|_2 = O_p\left(\sqrt{\frac{s\log^2(n)\log(p)}{n}}\right),\\
         &\|\hat \beta -\beta^*\|_1 = O_p\left(s\sqrt{\frac{\log^2(n)\log(p)}{n}}\right),\\
         &\|\Sigma_a^{-1/2} X (\hat \beta -\beta^*)\|_2^2 = O_p\left(\frac{sm\log^2(n)\log(p)}{q}\right).
     \end{align*}
     
     \item When $m>c_0q$ and $p=q$:
    Taking $\lambda_a = c_3\sqrt{\log(p)/(nm^2)}$ for suitably large $c_3>0$, we have that:
     \begin{align*}
         &\|\hat \beta -\beta^*\|_2 = O_p\left(\sqrt{\frac{s\log(p)}{n}}\right),\\
         &\|\hat \beta -\beta^*\|_1 = O_p\left(s\sqrt{\frac{\log(p)}{n}}\right),\\
         &\|\Sigma_a^{-1/2} X (\hat \beta -\beta^*)\|_2^2 = O_p\left(s\log(p)\right).
     \end{align*}
     
    \item When $m>c_0q$ and $p>q$:
    Taking $\lambda_a = c_4\sqrt{{\log(p)}/{(nm)}}$ for suitably large $c_4>0$, we have that:
     \begin{align*}
         &\|\hat \beta -\beta^*\|_2 = O_p\left(\sqrt{\frac{sm\log(p)}{n}}\right),\\
         &\|\hat \beta -\beta^*\|_1 = O_p\left(s\sqrt{\frac{m\log(p)}{n}}\right),\\
         &\|\Sigma_a^{-1/2} X (\hat \beta -\beta^*)\|_2^2 = O_p\left(sm\log(p)\right).
     \end{align*}
    
\end{enumerate}
\end{theorem}

\subsection{Related lemmas for Theorem \ref{thm:S1}}

\begin{lemma}[Core Lemma]
\label{lemma:A.1}
Assume $q>c_0m$ or $m > c_0q$ for some constant $c_0>1$. $Z$ is a $m \times q$ matrix with entries independently following the $N(0, 1)$ distribution, and $Z^i$, $i=1, \dots, n$ are identical copies of $Z$. Then the following properties hold for the non-zero singular values $\sigma(Z)$ of $Z$ and $\sigma(Z^i)$ of $Z^i$'s:
\begin{enumerate}
    \item \label{lemma:A.1(1)}
    $|\sqrt{q} - \sqrt{m}| \leq \E(\sigma(Z)) \leq \sqrt{m} + \sqrt{q}$, $\E(\sigma(Z)) \asymp \sqrt{m} \vee \sqrt{q}$.
    \item \label{lemma:A.1(2)}
    $\sigma(Z) - \E(\sigma(Z)) \in \SG(1)$. 
    \item \label{lemma:A.1(3)}
    $\E(\sigma^2(Z)) \in [\E(\sigma(Z))^2, \E(\sigma(Z))^2+1]$, $\E(\sigma^2(Z)) \asymp m \vee q$.
    \item \label{lemma:A.1(4)}
    $\sigma^2(Z) - \E(\sigma^2(Z)) \in \SE(32, 4)$. $\sum_{i=1}^n \sigma^2(Z^i) \asymp n(m \vee q)$ with probability at least $1-2\exp\{-c_1n(m\vee q)\}$, for some $c_1>0$.
    \end{enumerate}
    Further assume $m \vee q > c_2 >0$ for some suitably large constant $c_2$. Denote $\Sigma_a^i = aZ^i(Z^i)^\top + I_m$, where $a$ is a positive constant. Then we have the following properties hold for any constant $c>0$, for positive constants $c_3, c_4, \dots$:
    \begin{enumerate}
    \setcounter{enumi}{4}
    \item \label{lemma:A.1(5)}
        $\frac{1}{\E(\sigma^2(Z)) + c} \leq \E\left( \frac{1}{\sigma^2(Z) +c}\right) \leq \frac{4}{\E(\sigma^2(Z))+c}$.
    \item  \label{lemma:A.1(6.1)}
    $ \frac{1}{\sigma^2(Z) +c} - \E\left( \frac{1}{\sigma^2(Z) +c}\right) \in \SE\left(c_3 \E\left( \frac{1}{\sigma^2(Z) +c}\right)^2, c_3 \E\left( \frac{1}{\sigma^2(Z) +c}\right)\right)$.
    \item \label{lemma:A.1(6.2)}\label{lemma:A.1(6.3)} \label{lemma:A.1(6.4)}
    $\sum_{i=1}^n  \frac{1}{\sigma^2(Z^i) +c} \asymp \frac{n}{m\vee q}$ with probability at least $1-2\exp\{-c_4n\}$. $\max_{1 \leq i \leq n} \frac{1}{\sigma^2(Z^i) +c} \leq \frac{1}{c} \wedge \frac{c_5\log(n)}{m \vee q}$ with probability at least $1-2\exp\{-c_6\log(n)\}$, $\min_{1 \leq i \leq n} \frac{1}{\sigma^2(Z^i) +c} \geq \frac{c_7}{\log(n) + m \vee q}$ with probability at least $1-2\exp\{-c_6\log(n)\}$.
    \item \label{lemma:A.1(6.5)}
    $\sum_{i=1}^n \Tr\left((\Sigma_a^i)^{-1}\right) \asymp mnq^{-\qm}$ with probability at least $1-4\exp\{-c_8n\}$.\\
    When $q>c_0m$: $ \frac{c_9m}{\log(n) + q} \leq \min_i\Tr\left((\Sigma_a^i)^{-1}\right) \leq \max_i\Tr\left((\Sigma_a^i)^{-1}\right) \leq c_{10} m \left(1 \wedge \frac{\log(n)}{q}\right)$ with probability at least $1-4\exp\{-c_{11}\log(n)\}$.\\
    When $m>c_0q$: $c_{12} \left(m + \frac{q}{m + \log(n)}\right)\leq \min_i\Tr\left((\Sigma_a^i)^{-1}\right) \leq \max_i\Tr\left((\Sigma_a^i)^{-1}\right) $ \\ $ \leq c_{13} \left(m + q\left(1 \wedge \frac{\log(n)}{m}\right) \right)$ with probability at least $1-4\exp\{-c_{14}\log(n)\}$.
    \item \label{lemma:A.1(8)}
         $\E\left(Z^\top (aZZ^\top +I_m)^{-1}Z\right) = k I_q$, $k \asymp \left({m}/{q}\right)^{\qm}$.
$\sigma\left(\sum_{i=1}^n (Z^i)^\top (\Sigma_a^i)^{-1} Z^i\right) \asymp n \left({m}/{q} \right)^{\qm}$ with probability at least $1-\exp\left\{\log(q) - c_{16}n\left({m}/{q}\right)^{\qm}\right\}$.
    \item \label{lemma:A.1(9)} Assume ${\log(q)}\left(q/m\right)^{\qm}/n = o(1)$. 
    Then with probability at least $1-2\exp\{-c_{18}n\} - 2\exp\{-c_{18}\log(n)\} - \exp\left\{-c_{18}n\left(m/q\right)^{\qm}\right\}$: 
    \begin{align*}
        & \sigma\left( \sum_{i=1}^n (Z^i)^\top (\Sigma_a^i)^{-2} Z^i \right) \leq
        \begin{cases}
            c_{17} \frac{n}{q}\left( 1 \wedge \frac{m\log(n)}{q}\right)& , q>c_0m, \\
            c_{17} \frac{n}{m}& , m>c_0q.
        \end{cases}
    \end{align*}
\end{enumerate}
\end{lemma}

\begin{lemma}
\label{lemma:A.3}
\begin{enumerate}
    \item \label{lemma:A.3.1} Under Assumption \ref{as.A}.\ref{as.A.1} and Assumption \ref{as.A}.\ref{as.A.2}, with probability at least $1-4\exp\{-cn\} -2\exp\{-c\log(n)\} - 2\exp\{-cmnq^{-\qm}\} -\exp\left\{-cn\left(m/q\right)^{\qm} \right\}$, we have:
    \begin{align*}
        \begin{cases}
            \sigma(X^\top \Sigma_a^{-1} X) \asymp \frac{mn}{q} &, q>c_0m,\\
            \sigma(X^\top \Sigma_a^{-1} X) \asymp n &, m>c_0q \text{ and } p=q,\\
            c_1n \leq \sigma(X^\top \Sigma_a^{-1} X) \leq  c_2mn &, m>c_0q \text{ and } q<p.
        \end{cases}
    \end{align*}
    \item \label{lemma:A.3.2}  Under Assumption \ref{as.A}.\ref{as.A.1} and Assumption \ref{as.A}.\ref{as.A.2}, $\max_i\sigma((X^i)^\top (\Sigma^i_a)^{-1} X^i) \leq c_1$ when $p=q$; when $p>q$, with probability at least $1-8\exp\{-c_2\log(n)\}$ we have:
    \begin{align*}
        \max_i\sigma((X^i)^\top (\Sigma^i_a)^{-1} X^i) \leq \begin{cases}
            c_3 \left(1 \vee \frac{m\log^2(n)}{q}\right) &, q>c_0m,\\
            c_3 m \log^2(n) &, m>c_0q.
        \end{cases}
    \end{align*}
\end{enumerate}
\end{lemma}

\begin{lemma}
\label{lemma:A.2}
\begin{enumerate}
    \item \label{lemma:A.2.1} 
    Under Assumption \ref{as.A}.\ref{as.A.1} and Assumption \ref{as.A}.\ref{as.A.2}, we have 
    \begin{align*}
        \max_{1\leq j\leq p} \sum_{i=1}^n \left\|(Z^i)^\top \left(\Sigma_a^i\right)^{-1} X_j^i \right\|_2^2 \left\|\Psi\right\|_2 + \left\|\left(\Sigma_a^i\right)^{-1} X_j^i\right\|_2^2 \|R^i\|_2
        \leq \begin{cases}
            c_1\frac{nm}{q} &, q>c_0m,\\
            c_1n &, m>c_0q \text{ and } q=p,\\
            c_1mn &, m>c_0q \text{ and } q<p,
        \end{cases}
    \end{align*}
    with probability at least $1-4\exp\{-cn\} - 12\exp\{-c\log(n)\} - 2\exp\{-cmnq^{-\qm}\} - \exp\{-cn(m/q)^{\qm}\}$.
    
    If we additionally assume Assumption \ref{as.A.add}.\ref{as.A.3}, we have 
    \begin{align*}
    \max_{1\leq j\leq p} \sum_{i=1}^n \left\|(Z^i_{S_\psi})^\top \left(\Sigma_a^i\right)^{-1} X_j^i \right\|_2^2 \left\|\Psi\right\|_2 + \left\|\left(\Sigma_a^i\right)^{-1} X_j^i\right\|_2^2 \|R^i\|_2 \leq \begin{cases}
            \frac{c_1mn}{q}\left(1 \wedge \frac{m\log^2(n)}{q}\right),  &\\ \quad \quad q>c_0m, & \\
            c_1n, & \\ \quad \quad m>c_0q \text{ and } p=q, &\\
            c_1mn, & \\ \quad \quad m>c_0q \text{ and } q<p, &
        \end{cases}
\end{align*}
with probability at least $1-4\exp\{-cn\} -6\exp\{-c\log(n)\} - 2\exp\{-cmnq^{-\qm}\} -\exp\left\{-cn\left(m/q\right)^{\qm} \right\}$.
    
    \item \label{lemma:A.2.0} Define 
    \begin{align*}
        z_0^* = \max_{1\leq j \leq p} \left|\frac{1}{\tr(\Sigma_a^{-1})} X^\top_j \Sigma_a^{-1} (y-X\beta^*) \right|.
    \end{align*}
    Under Assumption \ref{as.A}.\ref{as.A.1} and Assumption \ref{as.A}.\ref{as.A.2}, we have
    \begin{align*}
        z_0^* \leq \begin{cases}
        c_1 \sqrt{\frac{q\log(p)}{nm}} &, q>c_0m ,\\
        c_1 \sqrt{\frac{\log(p)\log^2(n)}{n}} &, q>c_0m \text{ and Assumption~\ref{as.A.add}.\ref{as.A.3}} also holds,\\
        c_1 \sqrt{\frac{\log(p)}{nm^2}} &, m>c_0q \text{ and } p=q ,\\
        c_1 \sqrt{\frac{\log(p)}{nm}} &, m>c_0q \text{ and } p>q .\\
        \end{cases}
    \end{align*}
    with probability at least $1-4\exp\{-cn\} -12\exp\{-c\log(n)\} - 2\exp\{-cmnq^{-\qm}\} -\exp\left\{-cn\left(m/q\right)^{\qm} \right\}$.
\end{enumerate}
\end{lemma}

\section{Inference Framework for $\beta$}
\label{S:B}

Inference for $\beta$ is based on the de-biased LASSO framework. We follow the same procedure as introduced in Section \ref{section:method} of the main paper, with slight modification in the context of a generic doubly high-dimensional LMM. To infer $\beta_j$, we first define the de-biased estimator $\hat \beta^{(db)}_j$:

\begin{align}
    \hat\beta_j^{(db)} = \hat\beta_j +\frac{ \sum_{i=1}^n (\hat w_j^i)^\top (\Sigma_b^i)^{-1/2} (y^i -X^i\hat\beta)}{\sum_{i=1}^n (\hat w_j^i)^\top (\Sigma_b^i)^{-1/2} X^i_j}. 
\end{align}
Here, the modified proxy matrices $(\Sigma_b^i)^{-1/2}$ are defined as:
\begin{align*}
    & \Sigma_b^{i} = a Z^{i}_{-j}(Z^{i}_{-j})^\top + I_{m},\\
    & \Sigma_b = \diag\left(\left\{\Sigma_b^i
    \right\}_{i=1}^n\right),
\end{align*}
where with some abuse of notation, we define $Z^{i}_{-j}$, $j=1, \dots, q$ as the sub-matrix of $Z^i$ obtained by dropping the $j$th column of $Z^i$, and $Z^{i}_{-j} = Z^i$ for $j > q$. The projection orthogonal vector $w^i_j$ is defined as
\begin{align*}
        & \hat w^i_j  = (\Sigma_b^i)^{-1/2} \left( X^i_j - X^i_{-j} \hat\kappa_j\right)
\end{align*}
and the projection vector $\hat\kappa_j$ is defined as 
\begin{align*}
        & \hat\kappa_j = \argmin_{\kappa_j \in \mathbb{R}^{p-1}} \left({2\Tr\left(\Sigma_b^{-1}\right)}\right)^{-1}\| \Sigma_b^{-1/2} \left(X_{j} - X_{-j}\kappa_j\right)\|_2^2 + \lambda_j \|\kappa_j\|_1. 
\end{align*}

We state here the generalized version of Theorem \ref{main.thm:2} of the main paper and its related assumptions for the generic doubly high-dimensional LMM in \eqref{model:general_LMM}:

\begin{assumption}
\label{as.B}
\begin{enumerate}
    \item \label{as.B.1} $\log(p) = o(mn)$. $\forall \ j =1, \dots, p$, conditional on $X^i_{-j}$, the random vector $X^i_j$ has mean $X^i_{-j} \kappa_j^*$ for $\kappa_j^* \in \R^{p-1}$ and variance $G_j$, and $X^i_j - X^i_{-j} \kappa_j^* \mid X^i_{-j} \in \SGV( c_1\|G_j\|_2)$. The support for $\kappa_j^*$ is $H_j$ and its cardinality is $|H_j|$. Assume $\|\kappa_j^*\|_1 \leq c_1|H_j|$. 
    \item \label{as.B.2}  $ \frac{\|G_j\|_2}{\sigma_{\min}(G_j)} \log(n) \sqrt{1\wedge \frac{\log(n)}{m \vee q}} \leq c_1  \sqrt{mnq^{-\qm}}$ 
    
    \item \label{as.B.3}
    \begin{enumerate}
        \item When $q>c_0m$, $p=q$:
        \begin{align}
        &\frac{|H_j|q^2\log(p)}{m^3n} \ll \|G_j\|_2 \leq c_1 \frac{mn}{\log(n)\log(p)} \label{as.b.3.e1} \\
        &\frac{\|G_j\|_2}{\sigma_{\min}^2(G_j)} \ll \frac{mn}{|H_j|^3\log(n)\log(p)} \wedge \frac{m^2n^2}{s^2|H_j|q\log(n)\log^2(p)} \label{as.b.3.e2} \\
        &\frac{\|G_j\|_2}{\sigma_{\min}(G_j)} \ll \frac{mn}{|H_j|\log(n)\log(p)} \label{as.b.3.e3}
        \end{align}

        \item when $q>c_0m$, $p>q$: assume the conditions \eqref{as.b.3.e1}--\eqref{as.b.3.e3}, and additionally assume $\|G_j\|_2 \gg \frac{|H_j|\log(p)\log^3(n)}{mn}$    
        \item When $m>c_0q$:
        \begin{align*}
        &\frac{|H_j|\log(p)}{m^3n} \left({ m^3\log^5(n) }\right)^{\mathbf{1}\{p>q\}} \ll \|G_j\|_2 \leq c_1 \frac{mn}{\log(p)} \left(\frac{1}{m\log(n)}\right)^{\mathbf{1}\{p>q\}}\\
        &\frac{\|G_j\|_2}{\sigma_{\min}^2(G_j)} \ll \left(\frac{m^3n}{|H_j|^3\log(p)} \left(\frac{1}{m\log^2(n)}\right)^{\mathbf{1}\{p>q\}} \right) \wedge \left(\frac{m^3n^2}{s^2|H_j|\log^2(p)} \left(\frac{1}{m^2\log^2(n) }\right)^{\mathbf{1}\{p>q\}} \right)\\
        &\frac{\|G_j\|_2}{\sigma_{\min}(G_j)} \ll \frac{m^2n}{|H_j|\log(p)} \left(\frac{1}{m\log(n)}\right)^{\mathbf{1}\{p>q\}}
        \end{align*}
        
    \end{enumerate}
\end{enumerate}
\end{assumption}

\begin{assumption}
    \label{as.B.4}
    \begin{enumerate}
        \item \label{cond1} Condition 1: When Assumption \ref{as.A.add}.\ref{as.A.3.2} holds:
    \begin{align*}
        \begin{cases}
          \frac{\|G_j\|_2}{\sigma_{\min}(G_j)}  \ll \frac{n}{\log^6(n)}  &,\ q>c_0m\\
          \frac{\|G_j\|_2}{\sigma_{\min}(G_j)} \ll \frac{n}{\log^5(n)}  &,\ m>c_0q
        \end{cases}
    \end{align*}
        \item \label{cond2} Condition 2: When Assumption \ref{as.A.add}.\ref{as.A.3} holds and $j \in S_\psi$:
        \begin{align*}
        \begin{cases}
          \frac{\|G_j\|_2}{\sigma^2_{\min}(G_j)} \ll \frac{n}{\log^6(n)}  &,\ q>c_0m\\
          \frac{\|G_j\|_2}{\sigma^2_{\min}(G_j)} \ll \frac{mn}{\log^5(n)}  &,\ m>c_0q
        \end{cases}
    \end{align*}
        \item \label{cond3} Condition 3: When Assumption \ref{as.A.add}.\ref{as.A.3} holds and $j \not\in S_\psi$:
        \begin{align*}
        \begin{cases}
         \frac{\|G_j\|_2}{\sigma_{\min}(G_j)} \ll \frac{n}{s_\psi\log^7(n)} \wedge \frac{n^2}{s_\psi|H_j|^2\log(p)\log^8(n)} \wedge \frac{n}{\log^6(n)}   &,\ q>c_0m\\
         \frac{\|G_j\|_2}{\sigma_{\min}(G_j)} \ll \frac{mn^2}{s_\psi|H_j|^2\log(p)\log^7(n)} \wedge \frac{n}{\log^5(n)}   &,\ m>c_0q,\ p=q\\
         \frac{\|G_j\|_2}{\sigma_{\min}(G_j)} \ll \frac{n^2}{s_\psi|H_j|^2 \log(p)\log^8(n)} \wedge \frac{n}{\log^5(n)}   &,\ m>c_0q,\ p>q
        \end{cases}
    \end{align*}

\end{enumerate}
\end{assumption}

As discussed in the main paper, the bound $\|\kappa_j^*\|_1 \leq c_1 |H_j|$ is satisfied when $\Sigma_X^i = \Sigma_X$. The variance $G_j$ may take any form as long as its singular values satisfy Assumption \ref{as.B} and Assumption \ref{as.B.4}. \textcolor{black}{If we were to assume that $G_j$ takes the same ``sandwich'' form as $\Sigma_\theta^i$ such that $G_j = Z_{-j} \Psi^j (Z_{-j})^\top + R^{i,j}$ for some matrix $\Psi^j \in \R^{(p-1) \times (p-1)}$ and $R^{i,j} \in \R^{m\times m}$, we could bound $\sigma(G_j)$ based on the rates of $\sigma\left(\Sigma_\theta^i\right)$: assuming that $\sigma\left(R^{i,j}\right) \asymp 1$, and $\Psi^j$ takes the same form as $\Psi$ (i.e., $\sigma\left(\Psi^j\right) \asymp 1$ as in Assumption~\ref{as.A.add}.\ref{as.A.3.2}, or $\Psi^j$ is a sparse diagonal matrix as in Assumption~\ref{as.A.add}.\ref{as.A.3}), we would have $\sigma_{\min}(G_j) \asymp \sigma_{\min}(\Sigma_{\theta}^i)$ and $\|G_j\|_2\asymp \|\Sigma_{\theta}^i\|_2$, and the following results bound $\sigma\left(\Sigma_{\theta}^i\right)$:}
\begin{lemma}
\label{lemma:sigma-theta}
For $\Sigma_{\theta}^i = Z^i \Psi (Z^i)^\top + R^i$, $i \in \{1, \dots, n\}$, under Assumption~\ref{as.A}, with probability at least $1-c_1\exp\{-c_2m\}$, we have:
\begin{enumerate}
    \item When $m>c_0q$: Under either Assumption \ref{as.A.add}.\ref{as.A.3} or Assumption \ref{as.A.add}.\ref{as.A.3.2}, 
    \begin{align*}
        c_3 \leq \sigma_{\min}(\Sigma_{\theta}^i) \leq \sigma_{\max}(\Sigma_{\theta}^i) \leq c_4 m.
    \end{align*}
    
    \item When $q>c_0m$: Under Assumption \ref{as.A.add}.\ref{as.A.3.2},  
    \begin{align*}
    \sigma_{\min}(\Sigma_{\theta}^i) \asymp \sigma_{\max}(\Sigma_{\theta}^i) \asymp q.
    \end{align*}
    Under Assumption \ref{as.A.add}.\ref{as.A.3},
    \begin{align*}
    c_5 \leq \sigma_{\min}(\Sigma_{\theta}) \leq \sigma_{\max}(\Sigma_{\theta}) \leq c_6 m.
    \end{align*}
\end{enumerate}

\end{lemma}

\begin{theorem}
\label{thm:S2}
Under Assumption \ref{as.A}, Assumption \ref{as.B} and Assumption~\ref{as.B.4}, with probability at least $1-c_1\exp\{-cn\} -c_2\exp\{-c\log(n)\} - c_3\exp\{-cmnq^{-\qm}\} -c_4\exp\left\{-cn\left(m/q\right)^{\qm} \right\} -c_5\exp\{-c\log(p)\} - c_6\exp\{-cmn\}$, we have that
\begin{align*}
    &\frac{1}{\sqrt{V_j}}\left(\hat\beta_j^{(db)} - \beta_j^*\right) = R_j + o_p(1), \quad \text{where } R_j \xrightarrow[]{d} N(0,1),
\end{align*}
where the variance $V_j$ is given by
\begin{align*}
     V_j = \frac{ \sum_{i=1}^n (\hat w^i_j)^\top (\Sigma_b^i)^{-1/2}\Sigma_{\theta^*}^i (\Sigma_b^i)^{-1/2} \hat w^i_j}{ \left| \sum_{i=1}^n (\hat w^i_j)^\top (\Sigma_b^i)^{-1/2} X^i_j \right|^2}.
\end{align*}
\end{theorem}

\subsection{Related lemmas for Theorem \ref{thm:S2}}
\label{S:B.1}


\begin{lemma}
\label{lemma:b.3}
$\left.\right.$
\begin{enumerate}
    \item \label{lemma:b.3.2}
    Define 
    \begin{align}
    z_j^* :=  \frac{1}{\tr(\Sigma_b^{-1})}\left\| X^\top_{-j} \Sigma_b^{-1}(X_j-X_{-j}\kappa_j^*)\right\|_\infty. \label{def:zj*}
    \end{align}
    Under Assumption \ref{as.A} and Assumption \ref{as.B}.\ref{as.B.1}, with probability at least $1-2\exp\{-cmnq^{-\qm}\}-4\exp\{-c\log(n)\}-4\exp\{-cn\}-2\exp\{-c\log(p)\}-\exp\{-cn(m/q)^{\qm}\}$, we have that:
    \begin{align*}
        z_j^* \leq \begin{cases}
        c_1\sqrt{\frac{q \log(p) \|G_j\|_2}{m^2n}\left(1 \wedge  \frac{m\log(n)}{q}\right)} &, q>c_0m \text{ and } q=p,\\
        c_1  \sqrt{\frac{\log(p)\|G_j\|_2}{m^3n}}&, m>c_0q \text{ and }q=p,\\
        c_1 \sqrt{\frac{q\log(p) \|G_j\|_2}{mn}\left(1 \wedge  \frac{\log(n)}{q}\right)} &, q>c_0m \text{ and } q<p,\\
        c_1\sqrt{\frac{\log(p) \|G_j\|_2}{mn}\left(1 \wedge  \frac{\log(n)}{m}\right)} &, m>c_0q \text{ and }q<p.
        \end{cases}
    \end{align*}

    \item \label{lemma:b.2}
    Under Assumption \ref{as.A} and Assumption \ref{as.B}.\ref{as.B.1}, we have the following results with probability at least $1-c_1\exp\{-cn(m/q)^{\qm}\} - c_2\exp\{-cn\} - c_3\exp\{-c\log(n)\} - c_4 \exp\{-cmnq^{-\qm}\} - c_5 \exp\{-c\log(p)\}$:
    \begin{enumerate}
    \item When $q=p$ and $q>c_0m$: 
    $\lambda_j = c_6\sqrt{\frac{q\log(p) \|G_j\|_2}{m^2n}\left(1 \wedge  \frac{m\log(n)}{q}\right)}$ with suitably large $c_6>0$, and 
    \begin{align*}
        &\|\hat \kappa_j - \kappa_j^*\|_2  \leq c_7 \sqrt{\frac{|H_j|q\log(p)\|G_j\|_2}{m^2n}\left(1 \wedge  \frac{m\log(n)}{q}\right)},\\
        &\|\hat \kappa_j - \kappa_j^*\|_1  \leq c_7 |H_j| \sqrt{\frac{q \log(p) \|G_j\|_2}{m^2n}\left(1 \wedge  \frac{m\log(n)}{q}\right)},\\
        &\|\Sigma_b^{-1/2}X_{-j}(\hat \kappa_j - \kappa_j^*)\|_2^2  \leq c_7 |H_j| {\frac{\log(p)\|G_j\|_2}{m}\left(1 \wedge  \frac{m\log(n)}{q}\right)}.
    \end{align*}
    
    \item When $q=p$ and $m>c_0q$: 
        $\lambda_j = c_6\sqrt{\frac{\log(p)\|G_j\|_2}{m^3n}}$ with suitably large $c_6>0$, and 
        \begin{align*}
        &\|\hat \kappa_j - \kappa_j^*\|_2  \leq c_7 \sqrt{\frac{|H_j| \log(p)\|G_j\|_2}{mn}},\\
        &\|\hat \kappa_j - \kappa_j^*\|_1  \leq c_7 |H_j| \sqrt{\frac{ \log(p)\|G_j\|_2}{mn}},\\
        &\|\Sigma_b^{-1/2}X_{-j}(\hat \kappa_j - \kappa_j^*)\|_2^2  \leq c_7 |H_j| {\frac{\log(p)\|G_j\|_2}{m}}.
    \end{align*}
    
    \item When $q<p$ and $q>c_0m$: 
        $\lambda_j = c_6\sqrt{\frac{q\log(p)\|G_j\|_2}{mn}\left(1 \wedge  \frac{\log(n)}{q}\right)}$ with suitably large $c_6>0$, and 
        \begin{align*}
        &\|\hat \kappa_j - \kappa_j^*\|_2  \leq c_7 \sqrt{\frac{|H_j|q \log(p) \|G_j\|_2}{mn}\left(1 \wedge  \frac{\log(n)}{q}\right)},\\
        &\|\hat \kappa_j - \kappa_j^*\|_1  \leq c_7 |H_j| \sqrt{\frac{q \log(p) \|G_j\|_2}{mn}\left(1 \wedge  \frac{\log(n)}{q}\right)},\\
        &\|\Sigma_b^{-1/2}X_{-j}(\hat \kappa_j - \kappa_j^*)\|_2^2  \leq c_7 |H_j| {\log(p) \|G_j\|_2\left(1 \wedge  \frac{\log(n)}{q}\right)}.
    \end{align*}
    
    \item When $q<p$ and $m>c_0q$: 
        $\lambda_j = c_6\sqrt{\frac{\log(p)\|G_j\|_2}{mn}\left(1 \wedge  \frac{\log(n)}{m}\right)}$ with suitably large $c_6>0$, and 
        \begin{align*}
        &\|\hat \kappa_j - \kappa_j^*\|_2  \leq c_7 \sqrt{\frac{|H_j|m\log(p)\|G_j\|_2}{n}\left(1 \wedge  \frac{\log(n)}{m}\right)},\\
        &\|\hat \kappa_j - \kappa_j^*\|_1  \leq c_7|H_j| \sqrt{\frac{m \log(p) \|G_j\|_2}{n}\left(1 \wedge  \frac{\log(n)}{m}\right)},\\
        &\|\Sigma_b^{-1/2}X_{-j}(\hat \kappa_j - \kappa_j^*)\|_2^2  \leq c_7 {{m\log(p)\|G_j\|_2}\left(1 \wedge  \frac{\log(n)}{m}\right)}.
    \end{align*}
\end{enumerate}

    \item \label{lemma:b.3.1}
        Under Assumption \ref{as.A} and Assumption \ref{as.B}.\ref{as.B.1}, with probability at least $1-c_1\exp\{-cmnq^{-\qm}\}-c_2\exp\{-c\log(n)\}-c_3\exp\{-cn\}-c_4\exp\{-c\log(p)\}-c_5\exp\{-cn(m/q)^{\qm}\} - c_6\exp\{-cmn\}$, we have that:
        \begin{align*}
            \| X_{-j}^\top \Sigma_b^{-1} X_{-j} (\hat\kappa_j - \kappa_j^*) \|_\infty \leq \begin{cases}
        c_7 \sqrt{|H_j| \frac{n\log(p) \|G_j\|_2}{q} \left(1 \wedge \frac{m\log(n)}{q}\right)} &, q>c_0m \text{ and } q=p,\\
        c_7 \sqrt{|H_j| \frac{n\log(p)\|G_j\|_2}{m} } &, m>c_0q \text{ and } q=p,\\  
        c_7 \sqrt{|H_j| \frac{mn \log(p) \|G_j\|_2}{q} \left(1 \wedge \frac{\log(n)}{q}\right)} &, q>c_0m \text{ and } q<p,\\
        c_7 \sqrt{|H_j| n\log(p)\|G_j\|_2 \left(m \wedge \log(n)\right)^2} &, m>c_0q \text{ and } q<p,\\ 
            \end{cases}
        \end{align*}

    \item \label{lemma:b.3.3}
     Define 
     \begin{align*}
         w_j & = \Sigma_b^{-1/2}(X_j - X_{-j}\kappa_j^*)\\
         w^i_j & = (\Sigma_b^i){-\frac{1}{2}}(X_j^i - X_{-j}^i\kappa_j^*).
     \end{align*}
     Under Assumption \ref{as.A}, Assumption \ref{as.B}.\ref{as.B.1} and Assumption \ref{as.B}.\ref{as.B.2}, we have $c_1 \sigma_{\min}(G_j)nmq^{-\qm} \leq \|w_j\|_2^2 \leq c_2 \sigma_{\max}(G_j)nmq^{-\qm}$ with probability at least $1-4\exp\{-cn\}-2\exp\{-c\log(n)\}$, and 
    \begin{align*}
\max_i\|w^i_j\|_2^2 \leq \begin{cases}
c_1(m + \sqrt{m}\log(n)) \left( 1\wedge \frac{\log(n)}{q}\right) \|G_j\|_2 &, q>c_0m ,\\
c_1 \left( m + \log(n)\sqrt{  m \wedge {\log(n)}{} }\right) \|G_j\|_2 &, m>c_0q
\end{cases}
\end{align*}
     with probability at least $1-6\exp\{-c\log(n)\}$.
     
    \item \label{lemma:b.3.4}
      Under Assumption \ref{as.A}, Assumption \ref{as.B}.\ref{as.B.1}, Assumption \ref{as.B}.\ref{as.B.2}, and Assumption \ref{as.B}.\ref{as.B.3}, we have with probability at least $1-c_1\exp\{-cmnq^{-\qm}\}-c_2\exp\{-c\log(n)\}-c_3\exp\{-cn\}-c_4\exp\{-c\log(p)\}-c_5\exp\{-cn(m/q)^{\qm}\} - c_6\exp\{-cmn\}$ that 
      \begin{align*}
          |\hat w_j^\top \Sigma_b^{-1/2} X_j| \geq c_7\sigma_{\min}(G_j) nmq^{-\qm}.
      \end{align*}

    \item \label{lemma:b.3.7}    
    Under Assumption \ref{as.A} and Assumption \ref{as.B}, we have 
    \begin{align*}
        c_1 \sigma_{\min}(G_j) mn q^{-\qm} \leq \|\hat w_j\|^2_2 \leq c_2 \sigma_{\max}(G_j) nmq^{-\qm},
    \end{align*} 
    and 
    \begin{align*}
        \max_i \|\hat w_j^i\|_2^2 \leq \begin{cases}
        c_1 \left(m + \sqrt{m}\log(n)\right) \left( 1 \wedge \frac{\log(n)}{q}\right) \|G_j\|_2&, q>c_0m,\\
        c_1 \left(m+\log(n)\sqrt{m\wedge \log(n)}\right)\|G_j\|_2  &, m > c_0q,
        \end{cases}
    \end{align*}
    with probability at least $1-c_1\exp\{-cmnq^{-\qm}\}-c_2\exp\{-c\log(n)\}-c_3\exp\{-cn\}-c_4\exp\{-c\log(p)\}-c_5\exp\{-cn(m/q)^{\qm}\} - c_6\exp\{-cmn\}$.

    \item \label{lemma:b.3.6}
    Under Assumption \ref{as.A} and Assumption \ref{as.B}, with probability at least $1-c_1\exp\{-cmnq^{-\qm}\}-c_2\exp\{-c\log(n)\}-c_3\exp\{-cn\}-c_4\exp\{-c\log(p)\}-c_5\exp\{-cn(m/q)^{\qm}\} - c_6\exp\{-cmn\}$, we have the following results hold:
    \begin{enumerate}
        \item \label{lemma:b.3.6.c1} Under Condition \ref{cond1} defined in Assumption~\ref{as.B.4}, we have
        \begin{align*}
            & \max_i\left\|(\Sigma_{\theta^*}^i)^{1/2}(\Sigma_{b}^i)^{-1/2} \hat w^i_j\right\|^2_2 \leq \begin{cases}
            c_1 \left(m + \sqrt{m}\log(n)\right)\left(1 \wedge \frac{\log(n)}{q}\right)\log^2(n)\|G_j\|_2,&\\ \quad \quad \quad \quad q>c_0m,\\
            c_1 \left( m+\log(n)\sqrt{m\wedge \log(n)}\right) \log^2(n) \|G_j\|_2 ,&\\ \quad \quad \quad \quad m>c_0q,
            \end{cases}\\
            & \hat w_j^\top \Sigma_b^{-1/2} \Sigma_{\theta^*} \Sigma_b^{-1/2} \hat w_j \geq c_2 nm q^{-\qm} \sigma_{\min}(G_j).
        \end{align*}
        
        \item \label{lemma:b.3.6.c2} Under Condition \ref{cond2} defined in Assumption~\ref{as.B.4}, we have
        \begin{align*}
            & \max_i\left\|(\Sigma_{\theta^*}^i)^{1/2}(\Sigma_{b}^i)^{-1/2} \hat w^i_j\right\|^2_2 \leq \begin{cases}
            c_1 \left(m + \sqrt{m}\log(n)\right)\left(1 \wedge \frac{\log(n)}{q}\right)\log^2(n) \frac{m}{q}\|G_j\|_2 ,&\\ \quad \quad \quad \quad q>c_0m,\\
            c_1 \left( m+\log(n)\sqrt{m\wedge \log(n)}\right) \log^2(n) \|G_j\|_2 ,&\\ \quad \quad \quad \quad m>c_0q,
            \end{cases}\\
            & \hat w_j^\top \Sigma_b^{-1/2} \Sigma_{\theta^*} \Sigma_b^{-1/2} \hat w_j \geq c_2 nm^2 q^{-2\times \qm} \sigma_{\min}^2(G_j).
        \end{align*}
        
        \item \label{lemma:b.3.6.c3} Under Condition \ref{cond3} defined in Assumption~\ref{as.B.4}, we have
        \begin{align*}
            & \max_i\left\|(\Sigma_{\theta^*}^i)^{1/2}(\Sigma_{b}^i)^{-1/2} \hat w^i_j\right\|^2_2 \\ & \leq c_1\begin{cases}
          & s_\psi\frac{m\log^4(n)}{q^2}\|G_j\|_2  + s_\psi|H_j|^2\frac{m\log(p)\log^5(n)}{q^2n}\|G_j\|_2 + \frac{m\log^3(n)}{q^2}\|G_j\|_2  \\
           &\quad \quad \quad \quad \quad \quad \quad \quad \quad \quad \quad \quad \quad  \quad \quad \quad  \quad \quad \quad, q>c_0m,\\
          & s_\psi\frac{\log^2(n)}{m}\|G_j\|_2  + s_\psi|H_j|^2\frac{\log(p)\log^4(n)}{mn}\|G_j\|_2 + \log^2(n) \|G_j\|_2\\
           &\quad \quad \quad \quad \quad \quad \quad \quad \quad \quad \quad \quad \quad  \quad \quad \quad  \quad \quad \quad, m>c_0q,\ p=q,\\
         &  s_\psi\frac{\log^2(n)}{m}\|G_j\|_2  + s_\psi|H_j|^2\frac{\log(p)\log^5(n)}{n}\|G_j\|_2 + \log^2(n) \|G_j\|_2\\
           &\quad \quad \quad \quad \quad \quad \quad \quad \quad \quad \quad \quad \quad  \quad \quad \quad  \quad \quad \quad, m>c_0q,\ p>q
            \end{cases}\\
            & \hat w_j^\top \Sigma_b^{-1/2} \Sigma_{\theta^*} \Sigma_b^{-1/2} \hat w_j \geq \begin{cases}
           c_2\frac{nm\sigma_{\min}(G_j)}{q(q+\log(n))} &, q>c_0m,\\
           c_2\frac{nm\sigma_{\min}(G_j)}{m+\log(n)} &, m>c_0q,
            \end{cases}\\
        \end{align*}
        
    \end{enumerate}
    
\end{enumerate}
\end{lemma}

\begin{remark}
\label{Ssec:B.remark.relax.assumption}
If we only assume $\|\Psi\|_2 \leq c_1$ and do not restrict the structure of $\Psi$, we can still show 
\begin{align*}
& \max_i\left\|(\Sigma_{\theta^*}^i)^{1/2}(\Sigma_{b}^i)^{-1/2} \hat w^i_j\right\|^2_2 \leq c_1 m \log^3(n) \left( \log(n)/q\right) ^{\qm} \|G_j\|_2\\
& \hat w_j^\top \Sigma_b^{-1/2} \Sigma_{\theta^*} \Sigma_b^{-1/2} \hat w_j \geq c_2 nm q^{-\qm}  \sigma_{\min}(G_j) /(\log(n) + m \vee q)
\end{align*}
under Assumption \ref{as.A} and Assumption \ref{as.B}. Then Theorem \ref{thm:S2} still holds under the following additional assumption:
\begin{align*}
  m\log^6(n) \left(q\log(n)/m\right)^{\qm} \|G_j\|_2 \ll n\sigma_{\min}(G_j).
\end{align*}
\end{remark}

\section{Sandwich Estimator for $V_j$}
\label{S:C}

We propose a sandwich estimator $\hat V_j$ for the variance $V_j$:
\begin{align*}
    \hat V_j = \frac{\sum_{i=1}^n \left|(\hat w_j^i)^\top (\Sigma_b^{i})^{-1/2} (y^i-X^i\hat\beta)\right|^2}{ \left |\sum_{i=1}^n (\hat w_j^i)^\top (\Sigma_b^{i})^{-1/2} X_j^i \right|^2}. 
\end{align*}

We state the generalized version of Theorem \ref{main.lemma:Vj} of the main paper and its related assumptions for the generic doubly high-dimensional LMM \eqref{model:general_LMM}:
 
\begin{assumption}
 \label{as.C}
 For the three conditions defined in Assumption~\ref{as.B.4}:
 \begin{enumerate}
     \item Under Condition \ref{cond1}:
     \begin{align*}
         \begin{cases}
         \frac{\|G_j\|_2}{\sigma_{\min}(G_j)} \ll \frac{n}{s\log^4(n)\log(p)} \wedge \frac{n}{s^2\log^2(p)}&, q>c_0m \\
         \frac{\|G_j\|_2}{\sigma_{\min}(G_j)} \ll \frac{n}{s^2\log(p)\log^3(n)}&, m>c_0q,\ p=q\\
         \frac{\|G_j\|_2}{\sigma_{\min}(G_j)} \ll \frac{n}{s m \log(p)\log^3(n)} \wedge \frac{n\log(n)}{s^2m^2\log(p)} &, m>c_0q,\ p>q
         \end{cases}
     \end{align*}
     \item Under Condition \ref{cond2}:
          \begin{align*}
         \begin{cases}
         \frac{\|G_j\|_2}{\sigma^2_{\min}(G_j)} \ll \frac{n}{s\log^5(n)\log(p)} \wedge \frac{n}{s^2\log^2(n)\log^2(p)} &, q>c_0m \\
         \frac{\|G_j\|_2}{\sigma^2_{\min}(G_j)} \ll \frac{mn}{s^2\log(p)\log^3(n)}&, m>c_0q,\ p=q\\
         \frac{\|G_j\|_2}{\sigma^2_{\min}(G_j)} \ll \frac{n}{s \log(p)\log^3(n)} \wedge \frac{n\log(n)}{s^2m\log(p)}&, m>c_0q,\ p>q
         \end{cases}
     \end{align*}
     \item Under Condition \ref{cond3}:
          \begin{align*}
         \begin{cases}
         \frac{\|G_j\|_2}{\sigma_{\min}(G_j)} \ll \frac{n}{sm\log^5(n)\log(p)} \wedge \frac{n}{s^2m\log^3(n)\log^2(p)} &, q>c_0m \\
         \frac{\|G_j\|_2}{\sigma_{\min}(G_j)} \ll \frac{n}{s^2m\log(p)\log^4(n)}&, m>c_0q,\ p=q\\
         \frac{\|G_j\|_2}{\sigma_{\min}(G_j)} \ll \frac{n}{s m^2 \log(p)\log^4(n)} \wedge \frac{n}{s^2m^3\log(p)}&, m>c_0q,\ p>q
         \end{cases}
     \end{align*}
 \end{enumerate}
 \end{assumption}

\begin{theorem}
\label{lemma:Vj}
Under Assumption \ref{as.A}, Assumption \ref{as.B}, Assumption \ref{as.B.4} and Assumption \ref{as.C}, with probability at least $1-c_1\exp\{-cn\} -c_2\exp\{-c\log(n)\} - c_3\exp\{-cmnq^{-\qm}\} -c_4\exp\left\{-cn\left(m/q\right)^{\qm} \right\} -c_5\exp\{-c\log(p)\} - c_6\exp\{-cmn\}$ we have
\begin{align*}
    \frac{\hat V_j}{V_j}  = 1 + o_p(1).
\end{align*}
\end{theorem}

\section{Variance Component Estimator $\hat\theta$}
\label{Ssec:VC}
We propose a consistent variance component estimator $\hat\theta$ for the more general doubly high-dimensional LMM \eqref{model:general_LMM} with $\Psi = \diag(\psi)$ and $R^i = I_m$. To simplify the discussion, we assume the vector $\psi$ satisfies Assumption~\ref{as.A.add}.\ref{as.A.3}. The estimator $\hat\theta = (\hat\phi, \hat\sigma_e^2)$ is defined in \eqref{main.hat.psi.def} and \eqref{main.noise.vc.def} in the main paper. We introduce some notations to facilitate the rest of the discussion in this section. For $l=1, \ldots, q$, $i=1, \ldots, n$, we define
\begin{align}
    & A_l^i = Z_l^i (Z^i_l)^\top - \left(\diag(Z_l^i) \right)^2, \nonumber \\
    & r_i = y_i-X^i\beta^* = Z^i\gamma_i + \epsilon_i, \label{def.A.G}
\end{align}
and define the $q \times q$ matrix $B$ such that 
\begin{align}
    B_{j,k} = \sum_{i\in S_2}\tr\left(A^i_j A^i_k\right), \quad \text{for } j,k=1, \ldots, q. \label{def.B}
\end{align} 
We use $A \circ B$ to denote the Hadamard product of the matrices $A$ and $B$. We define 
\begin{align}
    s^i_Z & = \max_j \sum_{k=1, k\neq j}^{q} \bm{1}\left\{(\Sigma_Z^i)_{j,k} \gg \log(nq^2)/\sqrt{m} \right\}, \nonumber \\
    s_Z & = \max_i s^i_Z,
    \label{sdef:siZ}
\end{align}
which represent the maximum number of entries that are not too small in each row of $\Sigma_Z^i$. When $\Sigma_Z^i$ is a sparse matrix, $s^i_Z$ is small.

We first state the assumptions under which the consistency of $\hat\theta$ holds. 

\begin{assumption}
 \label{as.D}
 \begin{enumerate}
     \item \label{as.D.1} The vectors $\gamma_i$ and $\epsilon_i$ are normally distributed with mean zero and variance matrices $\Psi$, $\sigma_e^2 I_m$, respectively. The covariance matrix $\Psi$ satisfies Assumption~\ref{as.A.add}.\ref{as.A.3}. 
     \item \label{as.D.2}\label{as.D.3} The following conditions hold: 
     \begin{align*}
     & \sqrt{m} \gg \log(nq) \\
     & nm \gg \max \left\{ q^{3/2}\log(q) \log^2(n), \log(q) \log(n) \left(s_Z\sqrt{m}\log(nq^2) + q\log^2(nq^2)\right) \right\} \\
     & nm^3 \gg q^2\log(q)\log^2(n)
     \end{align*}
     \item \label{as.D.4} $\sqrt{n}m \gg s s_\psi (q/m)^{\qm} m^{\bm{1}\{p>q, m>c_0q\}}\log(q)\log(p)\log(n)\log(nmq)$.
 \end{enumerate}
 
\end{assumption}

We are now ready the state the theoretical properties of $\hat\theta = (\hat\psi, \hat\sigma_e^2)$:

\begin{theorem}
\label{thm:Svc}
Under Assumption~\ref{as.A} and Assumption~\ref{as.D}.\ref{as.D.1}--\ref{as.D}.\ref{as.D.3}, with probability at least $1 - c_1\exp\{-c\log(nq)\} -c_2\exp\{-cn\} -c_3\exp\{-c\log(n)\} - c_4\exp\{-cmnq^{-\qm}\} - c_5\exp\{-cn(m/q)^{\qm}\}  - c_6\exp\{-c\log(q)\}$, we have 
\begin{align*}
    \|\hat\psi- \psi^*\|_\delta & \leq s_{\psi}^{1/\delta}  \frac{\log(n)\log(p)\log(nmq)}{\sqrt{n}} m^{\bm{1}\{m>c_0q, p>q\}} (q/m)^{\qm}, \quad \delta=1, 2.
\end{align*}
\end{theorem}

\begin{theorem}
\label{thm:Svc.e}
Under Assumption~\ref{as.A} and Assumption~\ref{as.D}, we have $|\hat\sigma_e^2 - \sigma_e^{2, *}| = o_p(1)$ with probability at least
\begin{align*}
1- & c_1\frac{s_\psi (m\vee q)\log^3(n)}{nm}  - c_2\exp\{-cnm\} - c_3\exp\{-cn\} -c_4\exp\{-c\log(n)\} - c_5\exp\{-cmnq^{-\qm}\}  \\
& - c_6 \exp\{-cn(m/q)^{\qm}\}
 - c_7\exp\left\{- c \frac{n^2}{s\log^4(n)\log(p)} \left(\frac{m^2}{q^2}\right)^{\qm} m^{-\bm{1}\{ m>c_0q, p>q\}} \right\}
 \end{align*}
\end{theorem}

\subsection{Related lemmas for Theorem~\ref{thm:Svc} and Theorem~\ref{thm:Svc.e}}
\begin{lemma}
\label{lemma.D.1}
\begin{enumerate}
    \item \label{lemma.D.1.E1} For $1 \leq j \leq q$, define 
    \begin{align*}
        E_{1, j} & =\sum_{i\in S_2} \Tr\left( A_j^i  \left(r_i r_i^\top -  \sum_{k=1}^{q} A_k^i  \psi_k^* \right) \right) . 
    \end{align*}
    Under Assumption~\ref{as.A} and Assumption~\ref{as.D}.\ref{as.D.1}, with probability at least $1-2\exp\{-c(m\vee q)\log(n)\} - 2\exp\{-cm\log(nq)\} - 2\exp\{-c\log(nmq)\} -2\exp\{-c\log(q)\}$, we have 
    \begin{align*}
    \max_j|E_{1, j}| \leq c_1(m \vee q) \log(n)\log(nmq)\log(q) m\sqrt{n}.
\end{align*}

    \item \label{lemma.D.1.E2} Define 
    \begin{align*}
        E_{2, j} & =\sum_{i \in S_2} \Tr\left(A_j^i X^i (\beta^*-\hat\beta) (\beta^*-\hat\beta)^\top (X^i)^\top \right). 
    \end{align*}
    Under Assumption~\ref{as.A} and Assumption~\ref{as.D}.\ref{as.D.1}, with probability at least  $1-4\exp\{-cn\} -12\exp\{-c\log(n)\} - 2\exp\{-cmnq^{-\qm}\} - \exp\{-cn(m/q)^{\qm}\} - 2\exp\{-cm\log(nq)\} -2\exp\{-c\log(nmq)\} -2\exp\{-c(m\vee q)\log(n)\}$, we have
    \begin{align*}
   \max_j|E_{2, j}| \leq \begin{cases}
   c_2s\log(p)\log^3(n)\log(nq)m^2 &, q>c_0m \\
   c_2s\log(p)\log(n)\log(nq)m^2 &, m>c_0q,\ p=q \\
   c_2s\log(p)\log(n)\log(nq)m^3 &, m>c_0q,\ p>q,
   \end{cases}
\end{align*}

    \item \label{lemma.D.1.E3} Define 
    \begin{align*}
        E_{3, j} & = \sum_{i\in S_2} \Tr\left( A_j^i  r_i (\beta^*-\hat\beta)^\top (X^i)^\top \right).
    \end{align*}
    Under Assumption~\ref{as.A} and Assumption~\ref{as.D}.\ref{as.D.1}, with probability at least $1-4\exp\{-cn\} -12\exp\{-c\log(n)\} - 2\exp\{-cmnq^{-\qm}\} - \exp\{-cn(m/q)^{\qm}\} - 2\exp\{-cm\log(nq)\} -2\exp\{-c\log(nmq)\} -2\exp\{-c\log(q)\}- 2\exp\{-c(m\vee q)\log(n)\}$, we have \begin{align*}
    \max_{j}|E_{3,j}| \leq 
    \begin{cases}
    c_3\sqrt{s\log(p)\log(q)\log^4(n)\log^2(nq)m^3{q}} &, q>c_0m \\
    c_3\sqrt{s\log(p)\log(q)\log^2(n)\log^2(nq)m^4} &, m>c_0q,\ p=q \\
    c_3\sqrt{s\log(p)\log(q)\log^2(n)\log^2(nq)m^5} &, m>c_0q,\ p>q.
   \end{cases}
\end{align*}
    
\end{enumerate}
\end{lemma}

\begin{lemma}
\label{lemma.D.2}
Define $G_1^i =\left( (Z^{i})^\top Z^i\right) \circ \left( (Z^{i})^\top Z^i\right)$ and $G_2^i = \left(Z^{i} \circ Z^i \right)^\top \left(Z^{i} \circ Z^i \right)$. 
\begin{enumerate}
    \item \label{lemma.D.2.G1} Under Assumption~\ref{as.A}, Assumption~\ref{as.D}.\ref{as.D.1} and Assumption~\ref{as.D}.\ref{as.D.2}, with probability at least $1- 2\exp\{-c\log(nq)\} - 2\exp\{-c\log(nq^2)\}$, we have $\max_i\left\|G^i_1 - \E\left(G^i_1\right)\right\|_2 \leq c_1m^{3/2} \log(nq^2) + c_1s_Z m^{3/2} \log(nq^2)  + c_1 m q\log^2(nq^2)$.
    
    \item \label{lemma.D.2.G2} Under Assumption~\ref{as.A} and Assumption~\ref{as.D}.\ref{as.D.1}, we have $ \max_i \|G_2^i - \mathbb{E}(G_2^i) \|_2^2 \leq c_2 mq\log(n)$ with probability at least $1-\exp\{- c q\log(n)\}$. 

\end{enumerate}
\end{lemma}

\begin{lemma}
\label{lemma:RE_for_B}
Under Assumption~\ref{as.A} and Assumption~\ref{as.D}, with probability at least $1-\exp\{-c\log(q)\} -2\exp\{-c\log(nq)\} - \exp\{-cq\log(n)\}$, for any $v\in\R^{q}$ we have $v^\top B v \geq c_1nm^2\|v\|_2^2$, where $B$ is defined in \eqref{def.B}.
\end{lemma}

\section{Additional Simulation Results}
\label{S:simulation}

\subsection{Results for the main simulation study}
\label{S:sim.main.p}

In this section, we present additional results for the simulation study in the main paper. We include results for both $p=20$ and $p=60$ settings.

Figure~\ref{Sfig:main.time} presents separately the computation time for estimating and inferring $\beta$, and the computation time for estimating the variance components, for the proposed method, \emph{LCL} and \emph{dblasso}. Figure~\ref{Sfig:vc_selection} presents the selection consistency of the variance component corresponding to a single random effect. The numerical results of test power, type I error and confidence interval coverage for selected $\beta_j$'s under each simulation setting, and the estimation accuracy in terms of MSE are illustrated in Table~\ref{Stable:typeI}--\ref{Stable:MSE}.

\begin{figure}[htp]
    \centering
    \includegraphics[width = 0.8\textwidth]{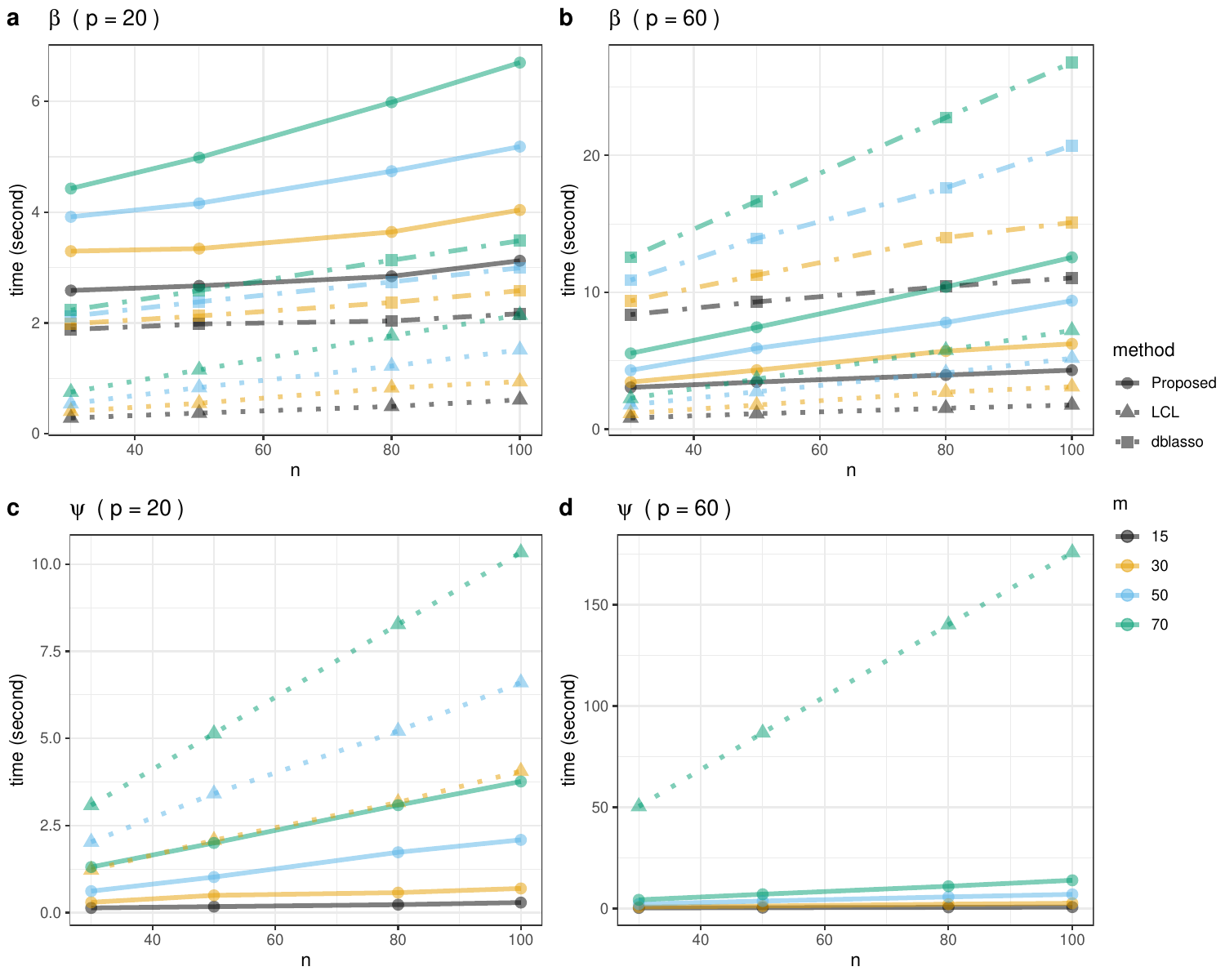}
    \caption{\textbf{a} and \textbf{b}: Computation time to estimate and infer the fixed effect coefficients $\beta$. \textbf{c} and \textbf{d}: Computation time to estimate the variance components.}
    \label{Sfig:main.time}
\end{figure}

\begin{figure}[htp]
    \centering
    \includegraphics[width = \textwidth]{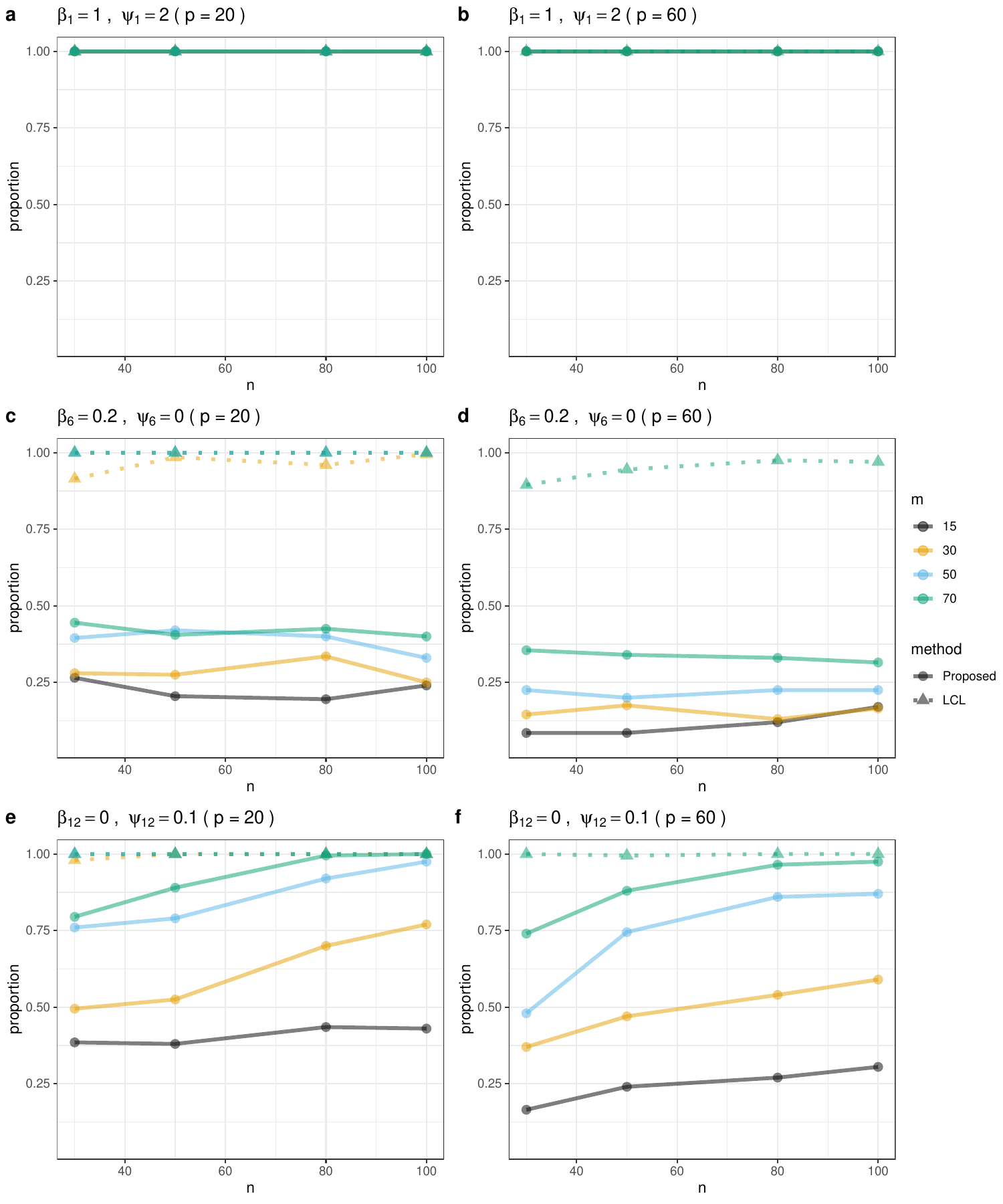}
    \caption{Proportion of 200 Monte Carlo simulations that provide non-zero estimate for the selected variance component. The title of each subplot shows the true values of the targeted fixed effect coefficient $\beta_j$, the corresponding random effect variance component $\psi_j$, and the value of $p$ in the setting. \emph{LCL} does not estimate variance components when $m<q$, and thus for those $m$ the results for \emph{LCL} are missing.}
    \label{Sfig:vc_selection}
\end{figure}

\begin{table}[htp]
\caption{Type-I error for testing zero $\beta_l$'s at 0.05 significance level under the simulation settings in the main paper. Bold face highlights type-I errors exceeding the normal range considering a Monte Carlo error of 0.03 with 200 replications.}
\label{Stable:typeI}
\centering
\resizebox{0.8\columnwidth}{!}{%
\begin{tabular}{rrrlllllllll}
\toprule
\multicolumn{3}{c}{ } & \multicolumn{3}{c}{$\beta_{10} = 0\ \psi_{10}=4$ } &
\multicolumn{3}{c}{$\beta_{11} = 0\ \psi_{11}=0$ } &
\multicolumn{3}{c}{$\beta_{12} = 0\ \psi_{12}=0.1$} \\
\cmidrule(l{3pt}r{3pt}){4-6} \cmidrule(l{3pt}r{3pt}){7-9} \cmidrule(l{3pt}r{3pt}){10-12}
p & m & n & Proposed & LCL & dblasso & Proposed & LCL & dblasso & Proposed & LCL & dblasso\\
\midrule
 &  & 30 & 0.06 & 0.04 & \textbf{0.41} & 0.045 & 0.035 & 0.04 & 0.045 & 0.055 & 0.06\\
\cmidrule{3-12}
 &  & 50 & 0.05 & 0.04 & \textbf{0.44} & 0.04 & 0.03 & 0.035 & 0.045 & 0.045 & \textbf{0.09}\\
\cmidrule{3-12}
 &  & 80 & 0.055 & 0.05 & \textbf{0.465} & 0.03 & 0.035 & 0.065 & 0.08 & 0.065 & 0.04\\
\cmidrule{3-12}
 & \multirow{-4}{*}{\raggedleft\arraybackslash 15} & 100 & 0.035 & 0.055 & \textbf{0.445} & 0.04 & 0.065 & 0.04 & 0.04 & 0.03 & 0.06\\
\cmidrule{2-12}
 &  & 30 & 0.04 & 0.04 & \textbf{0.535} & 0.035 & 0.08 & 0.065 & 0.035 & \textbf{0.095} & 0.07\\
\cmidrule{3-12}
 &  & 50 & 0.03 & 0.01 & \textbf{0.525} & 0.055 & 0.075 & 0.065 & 0.07 & \textbf{0.085} & 0.055\\
\cmidrule{3-12}
 &  & 80 & 0.03 & 0.055 & \textbf{0.57} & 0.03 & 0.08 & 0.05 & 0.03 & 0.035 & 0.075\\
\cmidrule{3-12}
 & \multirow{-4}{*}{\raggedleft\arraybackslash 30} & 100 & 0.06 & 0.055 & \textbf{0.525} & 0.06 & 0.07 & 0.05 & 0.035 & 0.06 & 0.08\\
\cmidrule{2-12}
 &  & 30 & 0.075 & \textbf{0.085} & \textbf{0.695} & 0.07 & \textbf{0.17} & 0.035 & 0.05 & \textbf{0.125} & \textbf{0.125}\\
\cmidrule{3-12}
 &  & 50 & 0.05 & \textbf{0.085} & \textbf{0.655} & 0.06 & \textbf{0.16} & 0.07 & 0.06 & \textbf{0.11} & 0.08\\
\cmidrule{3-12}
 &  & 80 & 0.04 & 0.04 & \textbf{0.67} & 0.055 & \textbf{0.115} & 0.065 & 0.03 & 0.075 & \textbf{0.105}\\
\cmidrule{3-12}
 & \multirow{-4}{*}{\raggedleft\arraybackslash 50} & 100 & 0.04 & 0.06 & \textbf{0.65} & 0.03 & \textbf{0.155} & 0.075 & 0.055 & \textbf{0.125} & \textbf{0.125}\\
\cmidrule{2-12}
 &  & 30 & 0.06 & 0.06 & \textbf{0.74} & 0.055 & \textbf{0.28} & 0.055 & 0.065 & \textbf{0.175} & \textbf{0.115}\\
\cmidrule{3-12}
 &  & 50 & 0.055 & 0.07 & \textbf{0.72} & 0.035 & \textbf{0.225} & 0.05 & 0.045 & \textbf{0.135} & \textbf{0.105}\\
\cmidrule{3-12}
 &  & 80 & 0.04 & 0.07 & \textbf{0.68} & 0.04 & \textbf{0.19} & 0.04 & 0.05 & \textbf{0.095} & \textbf{0.16}\\
\cmidrule{3-12}
 & \multirow{-4}{*}{\raggedleft\arraybackslash 70} & 100 & 0.065 & 0.055 & \textbf{0.71} & 0.055 & \textbf{0.22} & 0.045 & 0.055 & 0.075 & \textbf{0.125}\\
\cmidrule{2-12}
 &  & 30 & 0.045 & 0.07 & \textbf{0.755} & 0.035 & \textbf{0.35} & 0.03 & 0.025 & \textbf{0.215} & \textbf{0.115}\\
\cmidrule{3-12}
 &  & 50 & 0.031 & 0.066 & \textbf{0.74} & 0.041 & \textbf{0.311} & 0.066 & \textbf{0.087} & \textbf{0.199} & \textbf{0.173}\\
\cmidrule{3-12}
 &  & 80 & 0.056 & 0.061 & \textbf{0.756} & 0.03 & \textbf{0.259} & 0.046 & 0.056 & \textbf{0.112} & \textbf{0.193}\\
\cmidrule{3-12}
\multirow{-20}{*}{\raggedleft\arraybackslash 20} & \multirow{-4}{*}{\raggedleft\arraybackslash 120} & 100 & 0.061 & 0.071 & \textbf{0.77} & 0.051 & \textbf{0.27} & 0.041 & 0.036 & \textbf{0.107} & \textbf{0.179}\\
\cmidrule{1-12}
 &  & 30 & 0.08 & 0.075 & \textbf{0.425} & 0.035 & 0.045 & 0.045 & 0.05 & 0.045 & 0.065\\
\cmidrule{3-12}
 &  & 50 & 0.03 & 0.03 & \textbf{0.45} & 0.045 & 0.05 & 0.06 & 0.06 & 0.07 & 0.04\\
\cmidrule{3-12}
 &  & 80 & 0.035 & 0.045 & \textbf{0.465} & 0.065 & 0.055 & 0.06 & 0.04 & 0.045 & 0.05\\
\cmidrule{3-12}
 & \multirow{-4}{*}{\raggedleft\arraybackslash 15} & 100 & 0.045 & 0.05 & \textbf{0.435} & 0.04 & 0.05 & 0.035 & 0.065 & 0.06 & 0.08\\
\cmidrule{2-12}
 &  & 30 & \textbf{0.09} & \textbf{0.1} & \textbf{0.525} & 0.03 & 0.04 & 0.03 & 0.07 & 0.065 & \textbf{0.105}\\
\cmidrule{3-12}
 &  & 50 & 0.065 & 0.05 & \textbf{0.555} & 0.04 & 0.06 & 0.06 & 0.04 & 0.04 & 0.065\\
\cmidrule{3-12}
 &  & 80 & 0.055 & 0.055 & \textbf{0.585} & 0.035 & 0.045 & 0.045 & 0.045 & 0.045 & 0.07\\
\cmidrule{3-12}
 & \multirow{-4}{*}{\raggedleft\arraybackslash 30} & 100 & 0.08 & 0.07 & \textbf{0.575} & 0.06 & 0.055 & 0.055 & 0.055 & 0.055 & 0.08\\
\cmidrule{2-12}
 &  & 30 & 0.02 & 0.03 & \textbf{0.705} & 0.035 & 0.065 & 0.035 & 0.07 & 0.065 & 0.075\\
\cmidrule{3-12}
 &  & 50 & 0.055 & 0.055 & \textbf{0.65} & 0.055 & 0.045 & 0.045 & 0.07 & 0.055 & 0.08\\
\cmidrule{3-12}
 &  & 80 & 0.055 & 0.075 & \textbf{0.665} & 0.045 & 0.05 & 0.07 & 0.04 & 0.025 & \textbf{0.09}\\
\cmidrule{3-12}
 & \multirow{-4}{*}{\raggedleft\arraybackslash 50} & 100 & \textbf{0.09} & 0.075 & \textbf{0.64} & 0.07 & 0.05 & 0.05 & 0.03 & 0.035 & \textbf{0.085}\\
\cmidrule{2-12}
 &  & 30 & 0.045 & 0.05 & \textbf{0.665} & 0.065 & \textbf{0.09} & 0.045 & 0.05 & 0.07 & \textbf{0.11}\\
\cmidrule{3-12}
 &  & 50 & 0.06 & 0.055 & \textbf{0.68} & 0.065 & \textbf{0.09} & 0.035 & 0.035 & 0.04 & 0.08\\
\cmidrule{3-12}
 &  & 80 & 0.065 & 0.06 & \textbf{0.715} & 0.07 & 0.07 & 0.035 & 0.065 & 0.08 & \textbf{0.11}\\
\cmidrule{3-12}
 & \multirow{-4}{*}{\raggedleft\arraybackslash 70} & 100 & 0.065 & 0.06 & \textbf{0.61} & 0.065 & 0.045 & 0.045 & 0.025 & 0.03 & \textbf{0.095}\\
\cmidrule{2-12}
 &  & 30 & 0.045 & 0.055 & \textbf{0.725} & 0.055 & \textbf{0.235} & 0.05 & 0.03 & \textbf{0.145} & \textbf{0.115}\\
\cmidrule{3-12}
 &  & 50 & 0.06 & 0.06 & \textbf{0.744} & 0.04 & \textbf{0.216} & 0.05 & 0.06 & \textbf{0.156} & \textbf{0.181}\\
\cmidrule{3-12}
 &  & 80 & 0.06 & 0.08 & \textbf{0.745} & 0.06 & \textbf{0.205} & 0.055 & 0.06 & \textbf{0.13} & \textbf{0.17}\\
\cmidrule{3-12}
\multirow{-20}{*}{\raggedleft\arraybackslash 60} & \multirow{-4}{*}{\raggedleft\arraybackslash 120} & 100 & \textbf{0.09} & 0.08 & \textbf{0.765} & 0.04 & \textbf{0.12} & 0.04 & 0.025 & \textbf{0.085} & \textbf{0.145}\\
\bottomrule
\end{tabular}
}
\end{table}
\begin{landscape}
\begin{table}[htp]
\caption{Confidence interval coverage for the $\beta_l$'s under the simulation settings in the main paper. Bold face highlights type-I errors exceeding the normal range considering a Monte Carlo error of 0.03 with 200 replications.}
\label{Stable:CI}
\centering
\resizebox{\columnwidth}{!}{%
\begin{tabular}{rrrllllllllllllllllllllllll}
\toprule
\multicolumn{3}{c}{ } & 
\multicolumn{3}{c}{$\beta_{1}= 1\ \psi_{1}=2$ } & 
\multicolumn{3}{c}{$\beta_{2}= 0.5\ \psi_{2}=0$ } & 
\multicolumn{3}{c}{$\beta_{6}= 0.2\ \psi_{6}=0$ } & 
\multicolumn{3}{c}{$\beta_{7}= 0.1\ \psi_{7}=0.1$ } & 
\multicolumn{3}{c}{$\beta_{9}= 0.05\ \psi_{9}=0.1$} &
\multicolumn{3}{c}{$\beta_{10} = 0\ \psi_{10}=4$ } & \multicolumn{3}{c}{$\beta_{11} = 0\ \psi_{11}=0$ } & \multicolumn{3}{c}{$\beta_{12} = 0\ \psi_{12}=0.1$} \\
\cmidrule(l{3pt}r{3pt}){4-6} \cmidrule(l{3pt}r{3pt}){7-9} \cmidrule(l{3pt}r{3pt}){10-12} \cmidrule(l{3pt}r{3pt}){13-15} \cmidrule(l{3pt}r{3pt}){16-18} \cmidrule(l{3pt}r{3pt}){19-21} \cmidrule(l{3pt}r{3pt}){22-24} \cmidrule(l{3pt}r{3pt}){25-27}
p & m & n & Proposed & LCL & dblasso & Proposed & LCL & dblasso & Proposed & LCL & dblasso & Proposed & LCL & dblasso & Proposed & LCL & dblasso & Proposed & LCL & dblasso & Proposed & LCL & dblasso & Proposed & LCL & dblasso\\
\midrule
 &  & 30 & 0.945 & 0.945 & \textbf{0.72} & 0.92 & 0.955 & 0.935 & 0.92 & \textbf{0.915} & 0.96 & 0.935 & 0.935 & \textbf{0.91} & 0.925 & 0.95 & \textbf{0.915} & 0.94 & 0.96 & \textbf{0.59} & 0.955 & 0.965 & 0.96 & 0.955 & 0.945 & 0.94\\
\cmidrule{3-27}
 &  & 50 & 0.94 & 0.94 & \textbf{0.725} & 0.925 & 0.945 & 0.985 & 0.935 & 0.96 & \textbf{0.915} & 0.94 & 0.945 & 0.965 & 0.935 & 0.93 & 0.925 & 0.95 & 0.96 & \textbf{0.56} & 0.96 & 0.97 & 0.965 & 0.955 & 0.955 & \textbf{0.91}\\
\cmidrule{3-27}
 &  & 80 & 0.925 & 0.96 & \textbf{0.755} & 0.945 & 0.93 & 0.92 & 0.93 & 0.95 & 0.96 & 0.925 & 0.965 & 0.965 & 0.965 & 0.95 & 0.935 & 0.945 & 0.95 & \textbf{0.535} & 0.97 & 0.965 & 0.935 & 0.92 & 0.935 & 0.96\\
\cmidrule{3-27}
 & \multirow{-4}{*}{\raggedleft\arraybackslash 15} & 100 & 0.965 & 0.965 & \textbf{0.78} & 0.945 & 0.94 & 0.96 & 0.975 & 0.965 & 0.97 & 0.945 & 0.96 & 0.935 & 0.93 & 0.94 & 0.925 & 0.965 & 0.945 & \textbf{0.555} & 0.96 & 0.935 & 0.96 & 0.96 & 0.97 & 0.94\\
\cmidrule{2-27}
 &  & 30 & 0.93 & \textbf{0.915} & \textbf{0.605} & 0.985 & 0.94 & 0.955 & 0.98 & \textbf{0.905} & 0.965 & 0.935 & \textbf{0.91} & 0.955 & 0.945 & \textbf{0.875} & 0.92 & 0.96 & 0.96 & \textbf{0.465} & 0.965 & 0.92 & 0.935 & 0.965 & \textbf{0.905} & 0.93\\
\cmidrule{3-27}
 &  & 50 & 0.945 & 0.925 & \textbf{0.605} & 0.975 & 0.955 & 0.94 & 0.945 & 0.96 & 0.94 & 0.94 & \textbf{0.915} & \textbf{0.9} & 0.93 & \textbf{0.89} & \textbf{0.915} & 0.97 & 0.99 & \textbf{0.475} & 0.945 & 0.925 & 0.935 & 0.93 & \textbf{0.915} & 0.945\\
\cmidrule{3-27}
 &  & 80 & 0.95 & 0.94 & \textbf{0.545} & 0.975 & 0.95 & 0.95 & 0.975 & 0.945 & \textbf{0.915} & 0.94 & 0.935 & 0.92 & 0.965 & \textbf{0.915} & 0.92 & 0.97 & 0.945 & \textbf{0.43} & 0.97 & 0.92 & 0.95 & 0.97 & 0.965 & 0.925\\
\cmidrule{3-27}
 & \multirow{-4}{*}{\raggedleft\arraybackslash 30} & 100 & 0.935 & 0.935 & \textbf{0.57} & 0.98 & 0.96 & 0.945 & 0.94 & \textbf{0.915} & 0.95 & 0.965 & 0.92 & 0.935 & 0.93 & \textbf{0.91} & 0.935 & 0.94 & 0.945 & \textbf{0.475} & 0.94 & 0.93 & 0.95 & 0.965 & 0.94 & 0.92\\
\cmidrule{2-27}
 &  & 30 & \textbf{0.915} & \textbf{0.9} & \textbf{0.49} & 0.985 & \textbf{0.89} & 0.98 & 0.99 & \textbf{0.87} & 0.98 & 0.93 & \textbf{0.82} & 0.925 & 0.94 & \textbf{0.85} & \textbf{0.895} & 0.925 & \textbf{0.915} & \textbf{0.305} & 0.93 & \textbf{0.83} & 0.965 & 0.95 & \textbf{0.875} & \textbf{0.875}\\
\cmidrule{3-27}
 &  & 50 & 0.925 & 0.92 & \textbf{0.505} & 0.985 & 0.935 & 0.95 & 0.975 & \textbf{0.905} & 0.96 & 0.95 & \textbf{0.905} & \textbf{0.885} & 0.93 & \textbf{0.865} & \textbf{0.885} & 0.95 & \textbf{0.915} & \textbf{0.345} & 0.94 & \textbf{0.84} & 0.93 & 0.94 & \textbf{0.89} & 0.92\\
\cmidrule{3-27}
 &  & 80 & 0.945 & 0.935 & \textbf{0.54} & 0.99 & 0.92 & 0.965 & 0.985 & 0.92 & 0.94 & 0.965 & 0.92 & 0.92 & 0.93 & \textbf{0.91} & 0.935 & 0.96 & 0.96 & \textbf{0.33} & 0.945 & \textbf{0.885} & 0.935 & 0.97 & 0.925 & \textbf{0.895}\\
\cmidrule{3-27}
 & \multirow{-4}{*}{\raggedleft\arraybackslash 50} & 100 & 0.95 & 0.945 & \textbf{0.475} & 0.94 & 0.925 & 0.955 & 0.945 & \textbf{0.88} & 0.94 & 0.97 & 0.925 & \textbf{0.905} & 0.95 & 0.92 & \textbf{0.905} & 0.96 & 0.94 & \textbf{0.35} & 0.97 & \textbf{0.845} & 0.925 & 0.945 & \textbf{0.875} & \textbf{0.875}\\
\cmidrule{2-27}
 &  & 30 & 0.95 & \textbf{0.91} & \textbf{0.43} & 0.985 & \textbf{0.905} & 0.965 & 1 & \textbf{0.86} & 0.975 & 0.94 & \textbf{0.86} & \textbf{0.88} & 0.94 & \textbf{0.81} & \textbf{0.845} & 0.94 & 0.94 & \textbf{0.26} & 0.945 & \textbf{0.72} & 0.945 & 0.935 & \textbf{0.825} & \textbf{0.885}\\
\cmidrule{3-27}
 &  & 50 & 0.955 & 0.92 & \textbf{0.345} & 0.995 & 0.92 & 0.93 & 0.995 & \textbf{0.88} & 0.96 & 0.92 & \textbf{0.89} & \textbf{0.885} & 0.97 & \textbf{0.91} & \textbf{0.895} & 0.945 & 0.93 & \textbf{0.28} & 0.965 & \textbf{0.775} & 0.95 & 0.955 & \textbf{0.865} & \textbf{0.895}\\
\cmidrule{3-27}
 &  & 80 & 0.955 & 0.93 & \textbf{0.46} & 0.985 & 0.945 & 0.965 & 0.99 & \textbf{0.915} & 0.95 & 0.975 & 0.945 & 0.925 & 0.965 & 0.94 & 0.92 & 0.96 & 0.93 & \textbf{0.32} & 0.96 & \textbf{0.81} & 0.96 & 0.95 & \textbf{0.905} & \textbf{0.84}\\
\cmidrule{3-27}
 & \multirow{-4}{*}{\raggedleft\arraybackslash 70} & 100 & 0.97 & 0.965 & \textbf{0.5} & 0.995 & 0.93 & 0.93 & 0.975 & 0.94 & 0.94 & \textbf{0.915} & \textbf{0.885} & \textbf{0.86} & 0.945 & \textbf{0.905} & \textbf{0.895} & 0.935 & 0.945 & \textbf{0.29} & 0.945 & \textbf{0.78} & 0.955 & 0.945 & 0.925 & \textbf{0.875}\\
\cmidrule{2-27}
 &  & 30 & 0.97 & 0.955 & \textbf{0.305} & 1 & \textbf{0.91} & 0.93 & 1 & \textbf{0.8} & 0.92 & 0.98 & \textbf{0.835} & \textbf{0.86} & 0.96 & \textbf{0.76} & \textbf{0.81} & 0.955 & 0.93 & \textbf{0.245} & 0.965 & \textbf{0.65} & 0.97 & 0.975 & \textbf{0.785} & \textbf{0.885}\\
\cmidrule{3-27}
 &  & 50 & 0.959 & 0.923 & \textbf{0.372} & 0.995 & \textbf{0.878} & 0.949 & 1 & \textbf{0.852} & 0.939 & 0.959 & \textbf{0.852} & \textbf{0.842} & 0.959 & \textbf{0.878} & \textbf{0.842} & 0.969 & 0.934 & \textbf{0.26} & 0.959 & \textbf{0.689} & 0.934 & \textbf{0.913} & \textbf{0.801} & \textbf{0.827}\\
\cmidrule{3-27}
 &  & 80 & 0.924 & 0.924 & \textbf{0.325} & 0.99 & \textbf{0.888} & 0.964 & 0.995 & \textbf{0.868} & 0.949 & 0.98 & \textbf{0.909} & \textbf{0.843} & 0.964 & \textbf{0.909} & \textbf{0.868} & 0.944 & 0.939 & \textbf{0.244} & 0.97 & \textbf{0.741} & 0.954 & 0.944 & \textbf{0.888} & \textbf{0.807}\\
\cmidrule{3-27}
\multirow{-20}{*}{\raggedleft\arraybackslash 20} & \multirow{-4}{*}{\raggedleft\arraybackslash 120} & 100 & 0.923 & 0.934 & \textbf{0.281} & 1 & \textbf{0.872} & 0.929 & 0.995 & \textbf{0.913} & 0.934 & 0.959 & \textbf{0.918} & \textbf{0.847} & 0.923 & \textbf{0.908} & \textbf{0.842} & 0.939 & 0.929 & \textbf{0.23} & 0.949 & \textbf{0.73} & 0.959 & 0.964 & \textbf{0.893} & \textbf{0.821}\\
\cmidrule{1-27}
 &  & 30 & \textbf{0.91} & \textbf{0.915} & \textbf{0.74} & 0.955 & 0.985 & 0.99 & 0.945 & 0.97 & 0.96 & \textbf{0.91} & 0.925 & 0.92 & 0.965 & 0.97 & 0.94 & 0.92 & 0.925 & \textbf{0.575} & 0.965 & 0.955 & 0.955 & 0.95 & 0.955 & 0.935\\
\cmidrule{3-27}
 &  & 50 & 0.945 & 0.965 & \textbf{0.77} & 0.93 & 0.96 & 0.96 & 0.93 & 0.93 & 0.945 & 0.925 & 0.94 & 0.945 & 0.95 & 0.94 & \textbf{0.91} & 0.97 & 0.97 & \textbf{0.55} & 0.955 & 0.95 & 0.94 & 0.94 & 0.93 & 0.96\\
\cmidrule{3-27}
 &  & 80 & 0.94 & 0.93 & \textbf{0.735} & 0.93 & 0.95 & 0.975 & 0.97 & 0.97 & 0.965 & 0.945 & 0.965 & 0.945 & 0.955 & 0.945 & 0.955 & 0.965 & 0.955 & \textbf{0.535} & 0.935 & 0.945 & 0.94 & 0.96 & 0.955 & 0.95\\
\cmidrule{3-27}
 & \multirow{-4}{*}{\raggedleft\arraybackslash 15} & 100 & 0.96 & 0.955 & \textbf{0.78} & 0.975 & 0.955 & 0.99 & 0.95 & 0.955 & 0.925 & 0.955 & 0.92 & 0.97 & 0.94 & 0.945 & 0.95 & 0.955 & 0.95 & \textbf{0.565} & 0.96 & 0.95 & 0.965 & 0.935 & 0.94 & 0.92\\
\cmidrule{2-27}
 &  & 30 & 0.94 & 0.93 & \textbf{0.705} & 0.94 & 0.965 & 0.94 & 0.925 & 0.94 & 0.94 & 0.945 & 0.935 & 0.965 & 0.945 & 0.945 & \textbf{0.915} & \textbf{0.91} & \textbf{0.9} & \textbf{0.475} & 0.97 & 0.96 & 0.97 & 0.93 & 0.935 & \textbf{0.895}\\
\cmidrule{3-27}
 &  & 50 & 0.945 & 0.945 & \textbf{0.625} & 0.93 & 0.96 & 0.95 & 0.945 & 0.97 & 0.935 & 0.96 & 0.935 & 0.945 & 0.975 & 0.98 & 0.945 & 0.935 & 0.95 & \textbf{0.445} & 0.96 & 0.94 & 0.94 & 0.96 & 0.96 & 0.935\\
\cmidrule{3-27}
 &  & 80 & 0.945 & 0.955 & \textbf{0.63} & 0.95 & 0.95 & 0.935 & 0.98 & 0.975 & 0.975 & 0.96 & 0.965 & 0.96 & 0.955 & 0.945 & 0.95 & 0.945 & 0.945 & \textbf{0.415} & 0.965 & 0.955 & 0.955 & 0.955 & 0.955 & 0.93\\
\cmidrule{3-27}
 & \multirow{-4}{*}{\raggedleft\arraybackslash 30} & 100 & 0.965 & 0.96 & \textbf{0.66} & 0.95 & 0.97 & 0.945 & 0.94 & 0.94 & 0.92 & 0.94 & 0.95 & 0.94 & 0.94 & 0.965 & 0.975 & 0.92 & 0.93 & \textbf{0.425} & 0.94 & 0.945 & 0.945 & 0.945 & 0.945 & 0.92\\
\cmidrule{2-27}
 &  & 30 & 0.945 & 0.955 & \textbf{0.52} & 0.945 & 0.975 & 0.945 & 0.94 & 0.965 & 0.96 & 0.96 & 0.97 & 0.935 & 0.955 & 0.95 & \textbf{0.875} & 0.98 & 0.97 & \textbf{0.295} & 0.965 & 0.935 & 0.965 & 0.93 & 0.935 & 0.925\\
\cmidrule{3-27}
 &  & 50 & \textbf{0.91} & 0.935 & \textbf{0.495} & 0.955 & 0.965 & 0.96 & 0.95 & 0.97 & 0.95 & 0.95 & \textbf{0.91} & 0.93 & 0.935 & 0.95 & 0.92 & 0.945 & 0.945 & \textbf{0.35} & 0.945 & 0.955 & 0.955 & 0.93 & 0.945 & 0.92\\
\cmidrule{3-27}
 &  & 80 & 0.955 & 0.945 & \textbf{0.535} & 0.98 & 0.985 & 0.96 & 0.955 & 0.99 & 0.955 & 0.935 & 0.96 & 0.94 & 0.94 & 0.945 & 0.93 & 0.945 & 0.925 & \textbf{0.335} & 0.955 & 0.95 & 0.93 & 0.96 & 0.975 & \textbf{0.91}\\
\cmidrule{3-27}
 & \multirow{-4}{*}{\raggedleft\arraybackslash 50} & 100 & \textbf{0.915} & 0.925 & \textbf{0.525} & 0.94 & 0.96 & 0.925 & 0.965 & 0.965 & 0.94 & 0.94 & 0.935 & 0.95 & 0.94 & 0.965 & \textbf{0.9} & \textbf{0.91} & 0.925 & \textbf{0.36} & 0.93 & 0.95 & 0.95 & 0.97 & 0.965 & \textbf{0.915}\\
\cmidrule{2-27}
 &  & 30 & 0.94 & 0.925 & \textbf{0.51} & 0.975 & 0.98 & 0.955 & 0.965 & 0.96 & 0.955 & 0.95 & 0.94 & \textbf{0.915} & 0.995 & 0.96 & 0.92 & 0.955 & 0.95 & \textbf{0.335} & 0.935 & \textbf{0.91} & 0.955 & 0.95 & 0.93 & \textbf{0.89}\\
\cmidrule{3-27}
 &  & 50 & 0.975 & 0.97 & \textbf{0.535} & 0.95 & 0.965 & 0.925 & 0.975 & 0.955 & 0.96 & 0.945 & 0.935 & \textbf{0.895} & 0.95 & 0.955 & \textbf{0.905} & 0.94 & 0.945 & \textbf{0.32} & 0.935 & \textbf{0.91} & 0.965 & 0.965 & 0.96 & 0.92\\
\cmidrule{3-27}
 &  & 80 & 0.945 & 0.96 & \textbf{0.455} & 0.955 & 0.965 & 0.94 & 0.955 & 0.94 & 0.975 & 0.96 & 0.93 & \textbf{0.875} & 0.935 & 0.93 & \textbf{0.9} & 0.935 & 0.94 & \textbf{0.285} & 0.93 & 0.93 & 0.965 & 0.935 & 0.92 & \textbf{0.89}\\
\cmidrule{3-27}
 & \multirow{-4}{*}{\raggedleft\arraybackslash 70} & 100 & 0.96 & 0.97 & \textbf{0.42} & 0.935 & 0.955 & 0.935 & 0.94 & 0.965 & 0.945 & 0.93 & 0.94 & 0.93 & 0.935 & 0.94 & \textbf{0.88} & 0.935 & 0.94 & \textbf{0.39} & 0.935 & 0.955 & 0.955 & 0.975 & 0.97 & \textbf{0.905}\\
\cmidrule{2-27}
 &  & 30 & 0.935 & 0.935 & \textbf{0.365} & 0.995 & \textbf{0.91} & 0.945 & 0.995 & \textbf{0.83} & 0.94 & 0.935 & \textbf{0.87} & \textbf{0.89} & \textbf{0.895} & \textbf{0.82} & \textbf{0.89} & 0.955 & 0.945 & \textbf{0.275} & 0.945 & \textbf{0.765} & 0.95 & 0.97 & \textbf{0.855} & \textbf{0.885}\\
\cmidrule{3-27}
 &  & 50 & 0.945 & \textbf{0.915} & \textbf{0.407} & 0.995 & 0.935 & \textbf{0.915} & 0.98 & \textbf{0.884} & \textbf{0.905} & 0.935 & \textbf{0.894} & \textbf{0.894} & 0.97 & 0.92 & \textbf{0.874} & 0.94 & 0.94 & \textbf{0.256} & 0.96 & \textbf{0.784} & 0.95 & 0.94 & \textbf{0.844} & \textbf{0.819}\\
\cmidrule{3-27}
 &  & 80 & 0.95 & 0.955 & \textbf{0.395} & 0.995 & 0.935 & 0.92 & 0.985 & 0.935 & 0.935 & 0.975 & 0.935 & \textbf{0.885} & 0.935 & \textbf{0.9} & \textbf{0.87} & 0.94 & 0.92 & \textbf{0.255} & 0.94 & \textbf{0.795} & 0.945 & 0.94 & \textbf{0.87} & \textbf{0.83}\\
\cmidrule{3-27}
\multirow{-20}{*}{\raggedleft\arraybackslash 60} & \multirow{-4}{*}{\raggedleft\arraybackslash 120} & 100 & 0.945 & 0.96 & \textbf{0.345} & 0.99 & 0.985 & 0.925 & 0.995 & 0.96 & 0.94 & 0.945 & 0.935 & \textbf{0.9} & 0.96 & \textbf{0.9} & \textbf{0.885} & \textbf{0.91} & 0.92 & \textbf{0.235} & 0.96 & \textbf{0.88} & 0.96 & 0.975 & \textbf{0.915} & \textbf{0.855}\\
\bottomrule
\end{tabular}
}
\end{table}
\end{landscape}

\begin{table}[htp]
\caption{Power for testing non-zero $\beta_l$'s under the simulation settings in the main paper.}
\label{Stable:power}
\centering
\resizebox{0.8\columnwidth}{!}{%
\begin{tabular}{rrrrrrrrrrrrr}
\toprule
\multicolumn{3}{c}{ } & \multicolumn{2}{c}{$\beta_{1}= 1\ \psi_{1}=2$ 
} & \multicolumn{2}{c}{$\beta_{2}= 0.5\ \psi_{2}=0$ 
} & \multicolumn{2}{c}{$\beta_{6}= 0.2\ \psi_{6}=0$ 
} & \multicolumn{2}{c}{$\beta_{7}= 0.1\ \psi_{7}=0.1$ 
} & \multicolumn{2}{c}{$\beta_{9}= 0.05\ \psi_{9}=0.1$} \\
\cmidrule(l{3pt}r{3pt}){4-5} \cmidrule(l{3pt}r{3pt}){6-7} \cmidrule(l{3pt}r{3pt}){8-9} \cmidrule(l{3pt}r{3pt}){10-11} \cmidrule(l{3pt}r{3pt}){12-13}
p & m & n & Proposed & LCL & Proposed & LCL & Proposed & LCL & Proposed & LCL & Proposed & LCL\\
\midrule
 &  & 30 & 0.865 & 0.950 & 0.975 & 0.975 & 0.380 & 0.340 & 0.105 & 0.130 & 0.090 & 0.065\\
\cmidrule{3-13}
 &  & 50 & 0.975 & 1.000 & 1.000 & 1.000 & 0.700 & 0.620 & 0.145 & 0.170 & 0.085 & 0.085\\
\cmidrule{3-13}
 &  & 80 & 0.995 & 1.000 & 1.000 & 1.000 & 0.825 & 0.820 & 0.275 & 0.330 & 0.065 & 0.065\\
\cmidrule{3-13}
 & \multirow{-4}{*}{\raggedleft\arraybackslash 15} & 100 & 1.000 & 1.000 & 1.000 & 1.000 & 0.870 & 0.880 & 0.290 & 0.305 & 0.145 & 0.115\\
\cmidrule{2-13}
 &  & 30 & 0.920 & 0.935 & 1.000 & 0.945 & 0.965 & 0.820 & 0.250 & 0.270 & 0.105 & 0.170\\
\cmidrule{3-13}
 &  & 50 & 1.000 & 0.995 & 1.000 & 0.995 & 0.995 & 0.955 & 0.340 & 0.360 & 0.120 & 0.155\\
\cmidrule{3-13}
 &  & 80 & 1.000 & 1.000 & 1.000 & 1.000 & 1.000 & 0.985 & 0.515 & 0.530 & 0.135 & 0.135\\
\cmidrule{3-13}
 & \multirow{-4}{*}{\raggedleft\arraybackslash 30} & 100 & 1.000 & 1.000 & 1.000 & 1.000 & 1.000 & 0.985 & 0.635 & 0.640 & 0.160 & 0.175\\
\cmidrule{2-13}
 &  & 30 & 0.975 & 0.975 & 1.000 & 0.985 & 1.000 & 0.900 & 0.270 & 0.325 & 0.095 & 0.200\\
\cmidrule{3-13}
 &  & 50 & 1.000 & 0.990 & 1.000 & 0.985 & 1.000 & 0.955 & 0.485 & 0.440 & 0.125 & 0.220\\
\cmidrule{3-13}
 &  & 80 & 1.000 & 1.000 & 1.000 & 1.000 & 1.000 & 0.975 & 0.720 & 0.655 & 0.175 & 0.255\\
\cmidrule{3-13}
 & \multirow{-4}{*}{\raggedleft\arraybackslash 50} & 100 & 1.000 & 1.000 & 1.000 & 1.000 & 1.000 & 0.985 & 0.805 & 0.770 & 0.250 & 0.285\\
\cmidrule{2-13}
 &  & 30 & 0.955 & 0.910 & 1.000 & 0.975 & 1.000 & 0.880 & 0.255 & 0.310 & 0.105 & 0.220\\
\cmidrule{3-13}
 &  & 50 & 0.995 & 0.995 & 1.000 & 0.995 & 1.000 & 0.955 & 0.475 & 0.470 & 0.155 & 0.205\\
\cmidrule{3-13}
 &  & 80 & 1.000 & 1.000 & 1.000 & 1.000 & 1.000 & 0.990 & 0.720 & 0.680 & 0.210 & 0.290\\
\cmidrule{3-13}
 & \multirow{-4}{*}{\raggedleft\arraybackslash 70} & 100 & 1.000 & 1.000 & 1.000 & 1.000 & 1.000 & 0.995 & 0.775 & 0.680 & 0.230 & 0.245\\
\cmidrule{2-13}
 &  & 30 & 0.995 & 0.975 & 1.000 & 0.985 & 1.000 & 0.925 & 0.315 & 0.380 & 0.120 & 0.235\\
\cmidrule{3-13}
 &  & 50 & 0.995 & 0.995 & 1.000 & 0.980 & 1.000 & 0.944 & 0.531 & 0.500 & 0.133 & 0.235\\
\cmidrule{3-13}
 &  & 80 & 1.000 & 1.000 & 1.000 & 0.995 & 1.000 & 0.995 & 0.761 & 0.711 & 0.223 & 0.259\\
\cmidrule{3-13}
\multirow{-20}{*}{\raggedleft\arraybackslash 20} & \multirow{-4}{*}{\raggedleft\arraybackslash 120} & 100 & 1.000 & 1.000 & 1.000 & 0.995 & 1.000 & 0.985 & 0.862 & 0.735 & 0.250 & 0.311\\
\cmidrule{1-13}
 &  & 30 & 0.825 & 0.870 & 0.810 & 0.940 & 0.235 & 0.250 & 0.115 & 0.110 & 0.035 & 0.025\\
\cmidrule{3-13}
 &  & 50 & 0.990 & 1.000 & 0.930 & 0.990 & 0.405 & 0.410 & 0.115 & 0.110 & 0.100 & 0.090\\
\cmidrule{3-13}
 &  & 80 & 1.000 & 1.000 & 0.995 & 1.000 & 0.555 & 0.645 & 0.130 & 0.065 & 0.065 & 0.075\\
\cmidrule{3-13}
 & \multirow{-4}{*}{\raggedleft\arraybackslash 15} & 100 & 1.000 & 1.000 & 1.000 & 1.000 & 0.645 & 0.690 & 0.080 & 0.065 & 0.120 & 0.130\\
\cmidrule{2-13}
 &  & 30 & 0.925 & 0.950 & 0.995 & 1.000 & 0.490 & 0.525 & 0.135 & 0.100 & 0.095 & 0.090\\
\cmidrule{3-13}
 &  & 50 & 0.985 & 0.995 & 1.000 & 1.000 & 0.750 & 0.790 & 0.125 & 0.100 & 0.045 & 0.060\\
\cmidrule{3-13}
 &  & 80 & 1.000 & 1.000 & 1.000 & 1.000 & 0.910 & 0.935 & 0.210 & 0.210 & 0.155 & 0.205\\
\cmidrule{3-13}
 & \multirow{-4}{*}{\raggedleft\arraybackslash 30} & 100 & 1.000 & 1.000 & 1.000 & 1.000 & 0.950 & 0.965 & 0.200 & 0.230 & 0.090 & 0.145\\
\cmidrule{2-13}
 &  & 30 & 0.925 & 0.960 & 1.000 & 1.000 & 0.685 & 0.780 & 0.110 & 0.110 & 0.080 & 0.070\\
\cmidrule{3-13}
 &  & 50 & 0.990 & 1.000 & 1.000 & 1.000 & 0.905 & 0.970 & 0.180 & 0.195 & 0.100 & 0.125\\
\cmidrule{3-13}
 &  & 80 & 1.000 & 1.000 & 1.000 & 1.000 & 0.985 & 1.000 & 0.295 & 0.310 & 0.110 & 0.160\\
\cmidrule{3-13}
 & \multirow{-4}{*}{\raggedleft\arraybackslash 50} & 100 & 1.000 & 1.000 & 1.000 & 1.000 & 1.000 & 1.000 & 0.275 & 0.370 & 0.160 & 0.155\\
\cmidrule{2-13}
 &  & 30 & 0.935 & 0.945 & 1.000 & 0.995 & 0.920 & 0.840 & 0.140 & 0.150 & 0.060 & 0.105\\
\cmidrule{3-13}
 &  & 50 & 1.000 & 1.000 & 1.000 & 0.995 & 0.995 & 0.955 & 0.285 & 0.290 & 0.120 & 0.140\\
\cmidrule{3-13}
 &  & 80 & 1.000 & 1.000 & 1.000 & 1.000 & 1.000 & 0.990 & 0.410 & 0.405 & 0.130 & 0.165\\
\cmidrule{3-13}
 & \multirow{-4}{*}{\raggedleft\arraybackslash 70} & 100 & 1.000 & 1.000 & 1.000 & 1.000 & 1.000 & 0.995 & 0.520 & 0.455 & 0.210 & 0.255\\
\cmidrule{2-13}
 &  & 30 & 0.980 & 0.965 & 1.000 & 0.975 & 1.000 & 0.875 & 0.265 & 0.290 & 0.110 & 0.185\\
\cmidrule{3-13}
 &  & 50 & 0.995 & 1.000 & 1.000 & 0.995 & 1.000 & 0.970 & 0.482 & 0.492 & 0.146 & 0.146\\
\cmidrule{3-13}
 &  & 80 & 1.000 & 1.000 & 1.000 & 1.000 & 1.000 & 0.990 & 0.590 & 0.555 & 0.215 & 0.250\\
\cmidrule{3-13}
\multirow{-20}{*}{\raggedleft\arraybackslash 60} & \multirow{-4}{*}{\raggedleft\arraybackslash 120} & 100 & 1.000 & 1.000 & 1.000 & 1.000 & 1.000 & 0.995 & 0.700 & 0.680 & 0.270 & 0.315\\
\bottomrule
\end{tabular}
}
\end{table}

\begin{table}[htp]
\caption{Point estimation accuracy for $\beta$, $\psi$ and $\sigma_e^2$ under the simulation settings in the main paper.}
\label{Stable:MSE}
\centering
\resizebox{0.6\columnwidth}{!}{%
\begin{tabular}{rrrrrrrlrl}
\toprule
\multicolumn{3}{c}{ } & \multicolumn{3}{c}{$\beta$ total RMSE} & \multicolumn{2}{c}{$\psi$ total RMSE} & \multicolumn{2}{c}{$\sigma_e^2$ RMSE} \\
\cmidrule(l{3pt}r{3pt}){4-6} \cmidrule(l{3pt}r{3pt}){7-8} \cmidrule(l{3pt}r{3pt}){9-10}
p & m & n & Proposed & LCL & dblasso & Proposed & LCL & Proposed & LCL\\
\midrule
 &  & 30 & 0.342 & 0.510 & 0.242 & 0.506 & - & 0.858 & -\\
\cmidrule{3-10}
 &  & 50 & 0.257 & 0.403 & 0.188 & 0.494 & - & 0.729 & -\\
\cmidrule{3-10}
 &  & 80 & 0.221 & 0.356 & 0.159 & 0.379 & - & 0.696 & -\\
\cmidrule{3-10}
 & \multirow{-4}{*}{\raggedleft\arraybackslash 15} & 100 & 0.199 & 0.321 & 0.152 & 0.285 & - & 0.585 & -\\
\cmidrule{2-10}
 &  & 30 & 0.396 & 0.570 & 0.173 & 0.208 & 0.157 & 0.381 & 0.014\\
\cmidrule{3-10}
 &  & 50 & 0.274 & 0.441 & 0.114 & 0.239 & 0.141 & 0.334 & 0.003\\
\cmidrule{3-10}
 &  & 80 & 0.216 & 0.363 & 0.100 & 0.189 & 0.074 & 0.324 & 0.002\\
\cmidrule{3-10}
 & \multirow{-4}{*}{\raggedleft\arraybackslash 30} & 100 & 0.193 & 0.329 & 0.099 & 0.208 & 0.081 & 0.318 & 0.003\\
\cmidrule{2-10}
 &  & 30 & 0.364 & 0.557 & 0.113 & 0.202 & 0.217 & 0.263 & 0.004\\
\cmidrule{3-10}
 &  & 50 & 0.278 & 0.459 & 0.086 & 0.158 & 0.161 & 0.216 & 0.01\\
\cmidrule{3-10}
 &  & 80 & 0.224 & 0.386 & 0.077 & 0.117 & 0.081 & 0.135 & 0.001\\
\cmidrule{3-10}
 & \multirow{-4}{*}{\raggedleft\arraybackslash 50} & 100 & 0.188 & 0.338 & 0.066 & 0.076 & 0.075 & 0.164 & 0.001\\
\cmidrule{2-10}
 &  & 30 & 0.397 & 0.595 & 0.105 & 0.250 & 0.227 & 0.143 & 0.002\\
\cmidrule{3-10}
 &  & 50 & 0.287 & 0.484 & 0.097 & 0.143 & 0.14 & 0.191 & 0.002\\
\cmidrule{3-10}
 &  & 80 & 0.202 & 0.373 & 0.064 & 0.113 & 0.109 & 0.091 & 0.004\\
\cmidrule{3-10}
 & \multirow{-4}{*}{\raggedleft\arraybackslash 70} & 100 & 0.185 & 0.352 & 0.058 & 0.061 & 0.084 & 0.066 & 0.001\\
\cmidrule{2-10}
 &  & 30 & 0.389 & 0.577 & 0.077 & 0.137 & 0.201 & 0.061 & 0.002\\
\cmidrule{3-10}
 &  & 50 & 0.272 & 0.485 & 0.056 & 0.118 & 0.181 & 0.063 & 0.001\\
\cmidrule{3-10}
 &  & 80 & 0.204 & 0.368 & 0.044 & 0.058 & 0.08 & 0.033 & 0.002\\
\cmidrule{3-10}
\multirow{-20}{*}{\raggedleft\arraybackslash 20} & \multirow{-4}{*}{\raggedleft\arraybackslash 120} & 100 & 0.189 & 0.361 & 0.048 & 0.120 & 0.125 & 0.040 & 0.001\\
\cmidrule{1-10}
 &  & 30 & 0.431 & 0.575 & 0.402 & 0.912 & - & 1.245 & -\\
\cmidrule{3-10}
 &  & 50 & 0.333 & 0.464 & 0.309 & 0.553 & - & 0.914 & -\\
\cmidrule{3-10}
 &  & 80 & 0.272 & 0.384 & 0.251 & 0.480 & - & 0.805 & -\\
\cmidrule{3-10}
 & \multirow{-4}{*}{\raggedleft\arraybackslash 15} & 100 & 0.246 & 0.345 & 0.224 & 0.367 & - & 0.711 & -\\
\cmidrule{2-10}
 &  & 30 & 0.332 & 0.461 & 0.295 & 0.511 & - & 0.544 & -\\
\cmidrule{3-10}
 &  & 50 & 0.245 & 0.358 & 0.216 & 0.370 & - & 0.463 & -\\
\cmidrule{3-10}
 &  & 80 & 0.195 & 0.298 & 0.176 & 0.220 & - & 0.420 & -\\
\cmidrule{3-10}
 & \multirow{-4}{*}{\raggedleft\arraybackslash 30} & 100 & 0.189 & 0.281 & 0.164 & 0.224 & - & 0.421 & -\\
\cmidrule{2-10}
 &  & 30 & 0.278 & 0.396 & 0.190 & 0.297 & - & 0.380 & -\\
\cmidrule{3-10}
 &  & 50 & 0.226 & 0.337 & 0.181 & 0.161 & - & 0.361 & -\\
\cmidrule{3-10}
 &  & 80 & 0.166 & 0.264 & 0.137 & 0.176 & - & 0.268 & -\\
\cmidrule{3-10}
 & \multirow{-4}{*}{\raggedleft\arraybackslash 50} & 100 & 0.167 & 0.253 & 0.127 & 0.194 & - & 0.260 & -\\
\cmidrule{2-10}
 &  & 30 & 0.329 & 0.487 & 0.173 & 0.166 & 0.203 & 0.249 & 0.002\\
\cmidrule{3-10}
 &  & 50 & 0.264 & 0.393 & 0.146 & 0.164 & 0.081 & 0.203 & 0.006\\
\cmidrule{3-10}
 &  & 80 & 0.195 & 0.323 & 0.123 & 0.124 & 0.089 & 0.160 & 0\\
\cmidrule{3-10}
 & \multirow{-4}{*}{\raggedleft\arraybackslash 70} & 100 & 0.187 & 0.296 & 0.121 & 0.103 & 0.048 & 0.156 & 0.007\\
\cmidrule{2-10}
 &  & 30 & 0.361 & 0.568 & 0.130 & 0.164 & 0.211 & 0.133 & 0.005\\
\cmidrule{3-10}
 &  & 50 & 0.261 & 0.451 & 0.106 & 0.162 & 0.145 & 0.111 & 0.001\\
\cmidrule{3-10}
 &  & 80 & 0.215 & 0.392 & 0.098 & 0.107 & 0.083 & 0.098 & 0.003\\
\cmidrule{3-10}
\multirow{-20}{*}{\raggedleft\arraybackslash 60} & \multirow{-4}{*}{\raggedleft\arraybackslash 120} & 100 & 0.202 & 0.352 & 0.087 & 0.082 & 0.1 & 0.058 & 0.003\\
\bottomrule
\end{tabular}
}
\end{table}

\subsection{Main results, effect of $a$}
\label{S:sim.main.a}
As discussed in Section \ref{section:theory}, the framework works with any constant $a>0$. We treated $a$ as a tuning parameter in the main paper and used cross-validation to select $a$ in the main simulation study. In this section, we fix $a$ at a constant value for $a\in \{0.01, 1, 10, 50\}$ and investigate the effect of $a$ on the finite sample performance of the proposed method. The rest of the simulation settings are the same as those in the main paper.

We illustrate the performance of the proposed method for $m=30$ in Figure~\ref{Sfig:sim.a.1} and Figure~\ref{Sfig:sim.a.2}. We find that regardless of the value of the constant $a>0$, the proposed method can provide correct confidence interval coverage and control type-I error (Figure~\ref{Sfig:sim.a.1}a,b,c,d). However, different choices of $a$ may impact the power of detecting non-zero $\beta_l$'s (Figure~\ref{Sfig:sim.a.1}e,f), and the estimation accuracy of $\beta$ and the variance components (Figure~\ref{Sfig:sim.a.2}). 

\begin{figure}[htp]
    \centering
    \includegraphics[width = 0.8\textwidth]{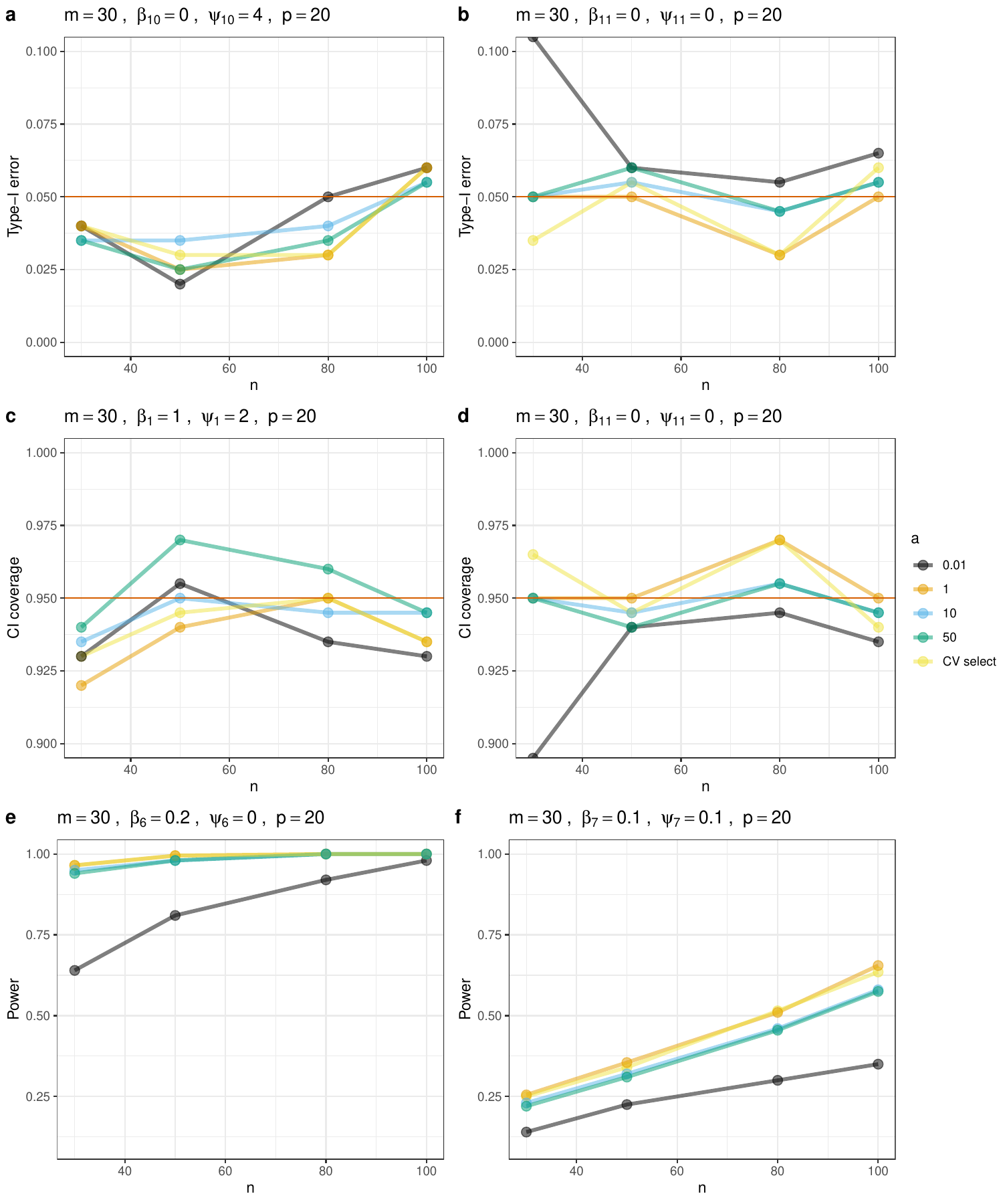}
    \caption{Simulation results for the proposed method with different values of $a$, where ``$a = \text{CV select}$'' corresponds to using cross-validation to choose $a$ for the proposed method. \textbf{a},\textbf{b}: Type-I error for testing $\beta_l$'s at the 0.05 significance level (0.05 marked by red solid line). 
    \textbf{c},\textbf{d}: 95\% confidence interval coverage (0.95 marked by red solid line) for fixed effect coefficients. 
    \textbf{e},\textbf{f}: Power for testing fixed effect coefficients at the 0.05 significance level. All results are computed for $p=20$ and $m=30$ based on 200 Monte Carlo simulations, and are plotted separately for each value of $a$, and against increasing $n$. The title of each subplot shows the true values of the targeted fixed effect coefficient $\beta_l$, and the corresponding random effect variance component $\psi_l$.}
    \label{Sfig:sim.a.1}
\end{figure}

\begin{figure}[htp]
    \centering
    \includegraphics[width = 0.8\textwidth]{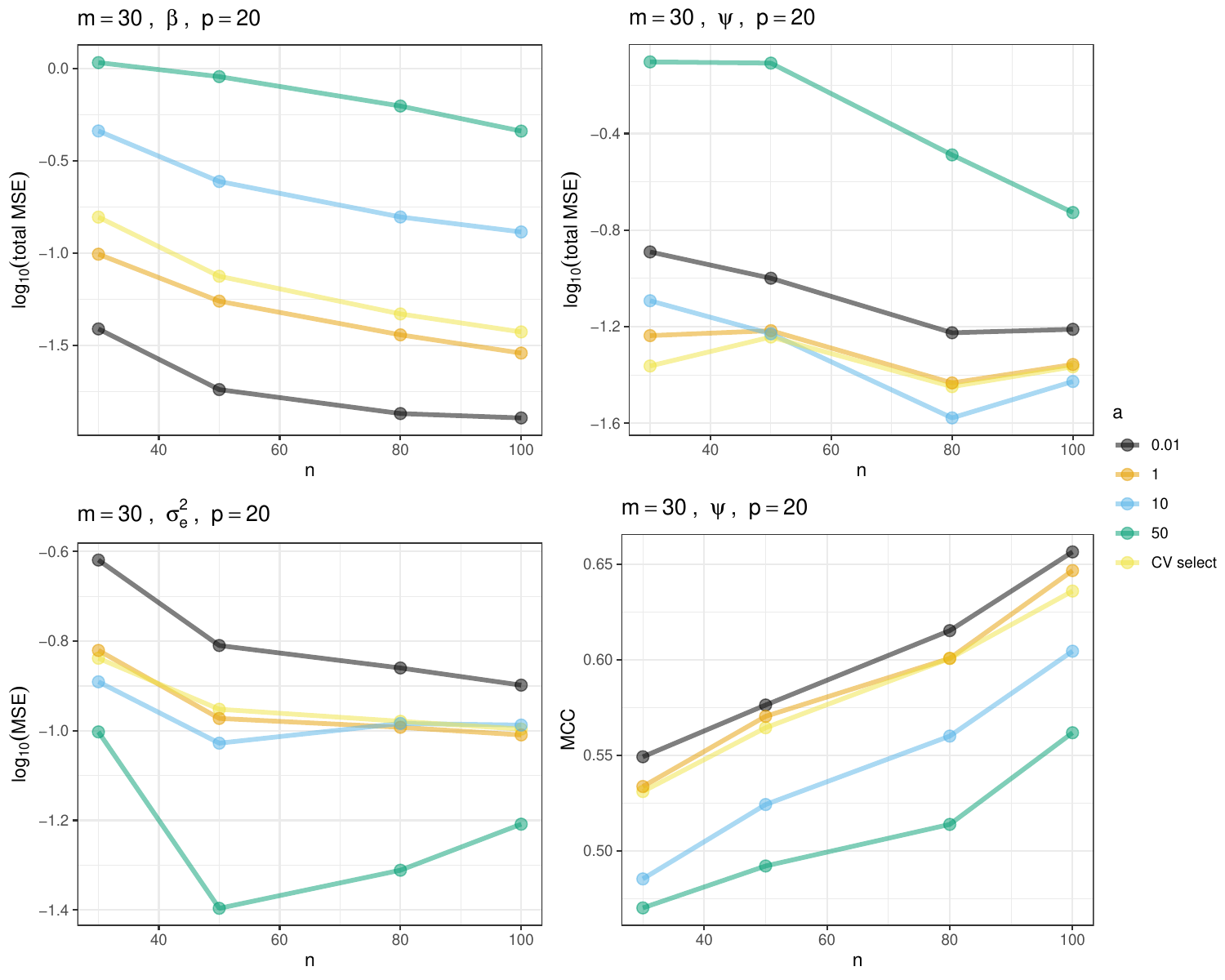}
    \caption{Simulation results for the proposed method with different values of $a$, where ``$a = \text{CV select}$'' corresponds to using cross-validation to choose $a$ for the proposed method. \textbf{a}: Total MSE on the $\log_{10}$ scale for estimating the fixed effect coefficients $\beta$. \textbf{b}: MSE on the $\log_{10}$ scale for estimating the noise variance $\sigma^2_e$. \textbf{c}: Total MSE on the $\log_{10}$ scale for estimating the random effect variance components $\psi$. \textbf{d}: Average MCC for identifying non-zero variance components, where we use non-zero estimates to select non-zero variance components. Values are computed for $p=20$ and $m=30$ based on 200 Monte Carlo simulations, are plotted separately for each $a$, and are against increasing $n$.}
    \label{Sfig:sim.a.2}
\end{figure}

\section{Extension to High-Dimensional Heterogeneous VAR models}
\label{S:VAR}

\subsection{Simulation Study}
\label{S:VAR.sim}

\subsubsection{Simulation Settings}
We conduct simulation studies to compare the proposed estimator and inference procedures (referred to as \textit{MEVAR}) with the standard lasso approach (\cite{zhang2014confidence}, referred to as \textit{db-lasso}). The standard lasso approach ignores the correlations among the observations. We compare the performance in terms of VAR coefficient estimation mean squared error (MSE), 95\% confidence interval coverage, type-I error of testing zero coefficients and power of testing non-zero coefficients at 5\% significance level. 

We simulated data from the MEVAR(1) model \eqref{eqn:simpleVAR} in the main paper. We reiterate the model formula below:
\begin{align*}
Y^i(t) = (\Phi + \Gamma^i) Y^i(t-1) + \epsilon^i(t),
\end{align*}
with $\Phi \in \R^{p\times p}$, $\vec(\Gamma^i) \sim \N(0, \Sigma_\Gamma)$, and $\epsilon^i(t) \sim \N(0, \Sigma_\epsilon)$. We generated diagonal covariance matrices $\Sigma_\Gamma = \diag(\sigma_\Gamma^2)$ and $\Sigma_\epsilon = \diag(\sigma_\epsilon^2)$. The length $p^2$ vector $\sigma_\Gamma^2$ was sparse with each entry having a 0.1 probability of being non-zero, and the non-zero values were generated independently from $\mathrm{Unif}(0.05, 0.15)$. The length $p$ vector $\sigma_\epsilon^2$ had a constant value of 0.5 for all entries. We generated a sparse group-level coefficient matrix $\Phi$, with each entry generated independently. The diagonal entries of $\Phi$ were generated from $\mathrm{Unif}(0.2, 0.8)$; the off-diagonal entries were generated from a mixture distribution: each off-diagonal entry had a 0.8 probability of taking value 0, and had a 0.2 probability of following $N(0, 0.04)$.  We set $p=30$, $n\in \{20, 40, 60, 80\}$, and $T\in \{25, 50, 100, 150\}$. For each combination of $(n, T)$, we replicated 200 independent Monte Carlo simulations. 

We used the same procedure as in the main paper to select the tuning parameters, except we followed the cross-validation procedure in \cite{safikhani2022joint} for time-series observations. The results are presented based on the optimal values of the tuning parameters. We used the hdi R-package \cite{dezeure2015high} to implement the \textit{db-lasso} approach.

\subsubsection{Simulation Results}

Without loss of generality, we focus on comparing the results for the first row of the coefficient matrix $\Phi$. Figure~\ref{fig:inf} presents the performance of the inference procedures for selected coefficients. The confidence intervals constructed by the \emph{MEVAR} approach have good coverage for all coefficients (Figure~\ref{fig:inf}a,b). In addition, the \emph{MEVAR} approach always controls the type-I error rate at the nominal level (Figure~\ref{fig:inf}c,d), and maintains reasonable power for detecting non-zero coefficients (Figure~\ref{fig:inf}e,f). The standard \emph{db-lasso} approach has inflated type-I error as high as 0.36, and poor confidence interval coverage as low as 0.19 for some of the coefficients (Figure~\ref{fig:inf}a,d). We noticed that for those coefficients with the corresponding random effect variance being non-zero, i.e., when there is subject-level heterogeneity in the specific connection, the standard \emph{db-lasso} would fail drastically. While for those coefficients that are fixed across subjects, the proposed \emph{MEVAR} approach and the \emph{db-lasso} approach provide similar results. 

In terms of estimation, the proposed \emph{MEVAR} approach consistently estimates the VAR coefficients with decreasing total MSE (Figure~\ref{fig:mse}), even though its total MSE is slightly higher than the MSE yielded by the \emph{db-lasso} approach. 

\begin{figure}[h]
    \centering
    \includegraphics[width=0.9\textwidth]{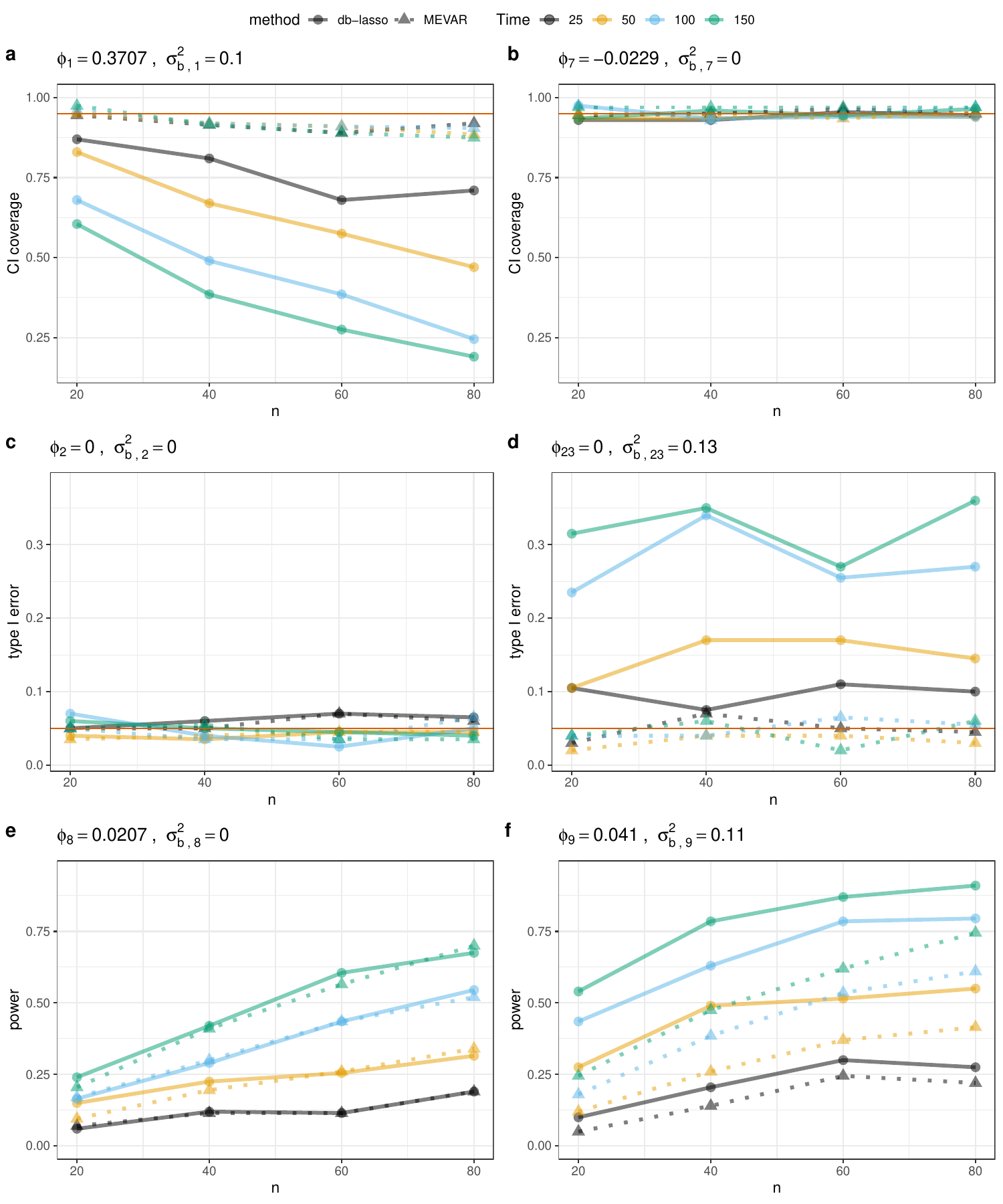}
    \caption{The performance of the inference procedures by \emph{MEVAR} and \emph{db-lasso} under different values of $n$ and $T$. \textbf{a,b}: the 95\% confidence interval coverage for each coefficient (0.95 marked by red solid line); \textbf{c,d}: the type-I error of testing the zero coefficient at 5\% significance level (0.05 marked by red solid line); \textbf{e,f}: the power of testing the non-zero coefficient at 5\% significance level. The title of each subplot indicates the true value of the selected coefficient $\phi_j$ and its corresponding random effect variance $\sigma_{b, j}^2$.}
    \label{fig:inf}
\end{figure}

\begin{figure}[h]
    \centering
    \includegraphics[width=0.6\textwidth]{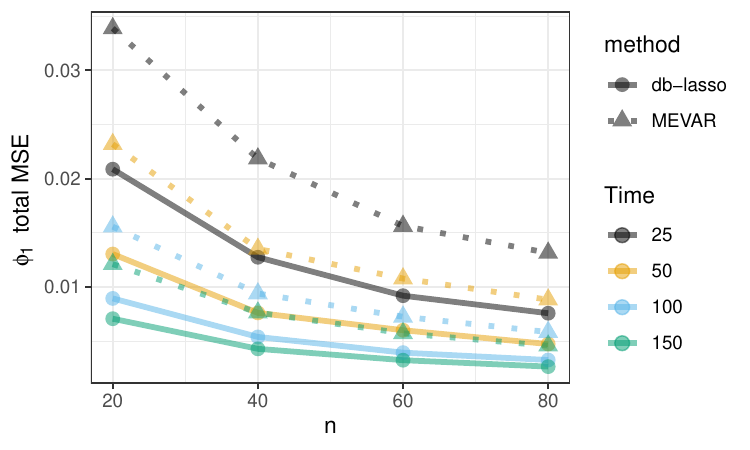}
    \caption{The total MSE of estimating the first row of $\Phi$, for the \emph{MEVAR} approach and the \emph{db-lasso} approach.}
    \label{fig:mse}
\end{figure}

\subsection{Theoretical Results}
\label{S:VAR.theory}

Without loss of generality, we establish the results for the first row of the matrix $\Phi$. For notational convenience, we omit the subscripts when the reference to quantities related to the first row of $\Phi$ is clear. Denote the first row of $\Phi$ by $\phi$. We can rewrite the model \eqref{eqn:VARmod} for inferring $\phi$ as:
\begin{align}
\label{model:VARsupp}
Y^i = X^i\phi + X^i \gamma^i + \epsilon^i, \quad i=1, \dots, n,
\end{align}
where $Y^i$, $X^i$, $\gamma^i$ are sub-matrices/sub-vectors of $\tilde Y^i$, $\tilde X^i$ and $\vec(\Gamma^i)$ that corresponds to inferring $\phi$. Here, $\gamma^i \sim \N(0, \Sigma_\gamma)$ and $\epsilon^i \sim \N(0, \Sigma_e)$, with $\Sigma_\gamma = \diag(\sigma_{\Gamma, 1:p}^2)$ and $\Sigma_e = \diag(\sigma^2_{\epsilon, 1:p})$. We allow for either $p>c_0T$ or $T>c_0p$ for some constant $c_0>1$. 

We first state the pivotal lemma that connects the singular values of $X^i$ to the singular values of a standard Gaussian matrix. Suppose $\sigma(X)$ denotes a non-zero singular value of the matrix $X$. We write $A \prec B$ if $B-A$ is a positive semi-definite matrix.

\begin{lemma}
\label{Slemma.a0}
Suppose $X\in \R^{T\times p}$ ($T\neq p$) is a Gaussian matrix with $\vec(X) \sim \N(0, \Xi)$. Define $Z\in \R^{T\times p}$ such that $\vec(Z) = \left(\Xi\right)^{-1/2}\vec(X) \sim \N(0, I_{Tp})$. We have that
\begin{align*}
    \sqrt{\sigma_{\min}(\Xi)} \sigma_{\min}(Z) \leq   \sigma(X) \leq  \sqrt{\sigma_{\max}(\Xi)} \sigma_{\max}(Z).
\end{align*}
We also have
\begin{align*}
    & X^\top (aXX^\top + I)^{-1} X \succ \sigma_{\min}(\Xi) Z^\top (a\sigma_{\max}(\Xi) ZZ^\top + I )^{-1}Z \\
    & X^\top (aXX^\top + I)^{-1} X \prec \sigma_{\max}(\Xi) Z^\top (a\sigma_{\min}(\Xi) ZZ^\top + I )^{-1}Z.
\end{align*}
\end{lemma}

Then in order to make the proposed doubly high-dimensional LMM framework work for the MEVAR(1) case, a sufficient condition is $\sigma(\Xi^i) \asymp 1$ where $\Xi^i = \Var(X^i)$. This is in fact a mild condition under the following assumption:
\begin{assumption}
\label{as.VAR.a1}
    For the $i$th subject, conditioning on $\Gamma^i$, the observations $\{Y^i(t)\}_{t=1}^T$ are realizations of a stationary Gaussian process.
\end{assumption}

Under Assumption~\ref{as.VAR.a1}, Proposition 2.3 of \cite{basu2015regularized} bounds $\sigma(\Xi^i)$ by the extreme eigenvalues of the matrix-valued spectral density function over the unit circle. According to Lemma 5.5 of \cite{zheng2019testing}, we have $\sigma(\Xi^i) \asymp 1$ when $\|\Phi + \Gamma^i\|_2 \leq 1-\Delta$ holds for some $\Delta \in (0, 1)$, which is a common condition for VAR process \cite{zheng2019testing, neykov2018unified}.

The rest of the proof directly follows the proof for inferring the fixed effect coefficients in a standard doubly high-dimensional LMM.

\end{document}


\maketitle

This document collects the proofs for Lemmas and Theorems in the main paper.

\section{Fixed Effect Estimator $\hat\beta$}
\label{S:A}

\begin{assumption}
\label{as.A}
\begin{enumerate}
    \item \label{as.A.1} Let $q>c_0m$ or $m>c_0q$ for some constant $c_0>1$ and let $m \vee q > c_1$ for some suitably large constant $c_1>0$. Moreover, let $\log(q) \left(q/m\right)^{\qm}/n = o(1)$.
    \item \label{as.A.2} $\forall \ i$, $\sigma(\Sigma_X) \asymp \sigma(R^i) \asymp \|\Psi \|_2 \asymp 1$, $\|\Sigma_X^i - \Sigma_X\|_2 \leq  \sigma_{\min}(\Sigma_X) - c_2$, for some constant $c_2 >0$. 
\end{enumerate}
\end{assumption}

\begin{assumption}
\label{as.A.add}
\begin{enumerate}
    \item \label{as.A.3} $\Psi = \diag(\psi)$ for a vector $\psi \in \R^q$. The support of $\psi$ is $S_\psi$ with cardinality $s_\psi < c_2m \wedge n$ for some constant $c_2>0$, and $\min(\psi_{S_\psi}) \asymp \max(\psi_{S_\psi}) \asymp 1$.
    \item \label{as.A.3.2} $\sigma_{\min}(\Psi) \asymp 1$.
\end{enumerate}

\end{assumption}

\begin{theorem}[Fixed effect estimator consistency]
\label{thm:S1}
Under Assumption \ref{as.A}.\ref{as.A.1} and Assumption \ref{as.A}.\ref{as.A.2}, with probability at least $1-4\exp\{-cn\} -12\exp\{-c\log(n)\} - 2\exp\{-cmnq^{-\qm}\} - \exp\{-cn(m/q)^{\qm}\}$, we have the following results:
\begin{enumerate}
    \item When $q>c_0m$: Taking $\lambda_a = c_1\sqrt{q\log(p)/(nm)}$ for a suitably large $c_1>0$, we have that:
     \begin{align*}
         &\|\hat \beta -\beta^*\|_2 = O_p\left(\sqrt{\frac{sq\log(p)}{mn}}\right),\\
         &\|\hat \beta -\beta^*\|_1 = O_p\left(s\sqrt{\frac{q\log(p)}{mn}}\right),\\
         & \left\|\Sigma_a^{-1/2} X (\hat \beta -\beta^*)\right\|_2^2 = O_p\left(s\log(p)\right).
     \end{align*}
     
     \item  When $q>c_0m$ and Assumption~\ref{as.A.add}.\ref{as.A.3} also holds:
    Taking $\lambda_a = c_2\sqrt{\log(p)\log^2(n)/n}$ for suitably large $c_2>0$, we have that:
     \begin{align*}
         &\|\hat \beta -\beta^*\|_2 = O_p\left(\sqrt{\frac{s\log^2(n)\log(p)}{n}}\right),\\
         &\|\hat \beta -\beta^*\|_1 = O_p\left(s\sqrt{\frac{\log^2(n)\log(p)}{n}}\right),\\
         &\|\Sigma_a^{-1/2} X (\hat \beta -\beta^*)\|_2^2 = O_p\left(\frac{sm\log^2(n)\log(p)}{q}\right).
     \end{align*}
     
     \item When $m>c_0q$ and $p=q$:
    Taking $\lambda_a = c_3\sqrt{\log(p)/(nm^2)}$ for suitably large $c_3>0$, we have that:
     \begin{align*}
         &\|\hat \beta -\beta^*\|_2 = O_p\left(\sqrt{\frac{s\log(p)}{n}}\right),\\
         &\|\hat \beta -\beta^*\|_1 = O_p\left(s\sqrt{\frac{\log(p)}{n}}\right),\\
         &\|\Sigma_a^{-1/2} X (\hat \beta -\beta^*)\|_2^2 = O_p\left(s\log(p)\right).
     \end{align*}
     
    \item When $m>c_0q$ and $p>q$:
    Taking $\lambda_a = c_4\sqrt{{\log(p)}/{(nm)}}$ for suitably large $c_4>0$, we have that:
     \begin{align*}
         &\|\hat \beta -\beta^*\|_2 = O_p\left(\sqrt{\frac{sm\log(p)}{n}}\right),\\
         &\|\hat \beta -\beta^*\|_1 = O_p\left(s\sqrt{\frac{m\log(p)}{n}}\right),\\
         &\|\Sigma_a^{-1/2} X (\hat \beta -\beta^*)\|_2^2 = O_p\left(sm\log(p)\right).
     \end{align*}
    
\end{enumerate}
\end{theorem}

\subsection{Related lemmas for Theorem \ref{thm:S1}}

\begin{lemma}[Core Lemma]
\label{lemma:A.1}
Assume $q>c_0m$ or $m > c_0q$ for some constant $c_0>1$. $Z$ is a $m \times q$ matrix with entries independently following the $N(0, 1)$ distribution, and $Z^i$, $i=1, \dots, n$ are identical copies of $Z$. Then the following properties hold for the non-zero singular values $\sigma(Z)$ of $Z$ and $\sigma(Z^i)$ of $Z^i$'s:
\begin{enumerate}
    \item \label{lemma:A.1(1)}
    $|\sqrt{q} - \sqrt{m}| \leq \E(\sigma(Z)) \leq \sqrt{m} + \sqrt{q}$, $\E(\sigma(Z)) \asymp \sqrt{m} \vee \sqrt{q}$.
    \item \label{lemma:A.1(2)}
    $\sigma(Z) - \E(\sigma(Z)) \in \SG(1)$. 
    \item \label{lemma:A.1(3)}
    $\E(\sigma^2(Z)) \in [\E(\sigma(Z))^2, \E(\sigma(Z))^2+1]$, $\E(\sigma^2(Z)) \asymp m \vee q$.
    \item \label{lemma:A.1(4)}
    $\sigma^2(Z) - \E(\sigma^2(Z)) \in \SE(32, 4)$. $\sum_{i=1}^n \sigma^2(Z^i) \asymp n(m \vee q)$ with probability at least $1-2\exp\{-c_1n(m\vee q)\}$, for some $c_1>0$.
    \end{enumerate}
    Further assume $m \vee q > c_2 >0$ for some suitably large constant $c_2$. Denote $\Sigma_a^i = aZ^i(Z^i)^\top + I_m$, where $a$ is a positive constant. Then we have the following properties hold for any constant $c>0$, for positive constants $c_3, c_4, \dots$:
    \begin{enumerate}
    \setcounter{enumi}{4}
    \item \label{lemma:A.1(5)}
        $\frac{1}{\E(\sigma^2(Z)) + c} \leq \E\left( \frac{1}{\sigma^2(Z) +c}\right) \leq \frac{4}{\E(\sigma^2(Z))+c}$.
    \item  \label{lemma:A.1(6.1)}
    $ \frac{1}{\sigma^2(Z) +c} - \E\left( \frac{1}{\sigma^2(Z) +c}\right) \in \SE\left(c_3 \E\left( \frac{1}{\sigma^2(Z) +c}\right)^2, c_3 \E\left( \frac{1}{\sigma^2(Z) +c}\right)\right)$.
    \item \label{lemma:A.1(6.2)}\label{lemma:A.1(6.3)} \label{lemma:A.1(6.4)}
    $\sum_{i=1}^n  \frac{1}{\sigma^2(Z^i) +c} \asymp \frac{n}{m\vee q}$ with probability at least $1-2\exp\{-c_4n\}$. $\max_{1 \leq i \leq n} \frac{1}{\sigma^2(Z^i) +c} \leq \frac{1}{c} \wedge \frac{c_5\log(n)}{m \vee q}$ with probability at least $1-2\exp\{-c_6\log(n)\}$, $\min_{1 \leq i \leq n} \frac{1}{\sigma^2(Z^i) +c} \geq \frac{c_7}{\log(n) + m \vee q}$ with probability at least $1-2\exp\{-c_6\log(n)\}$.
    \item \label{lemma:A.1(6.5)}
    $\sum_{i=1}^n \Tr\left((\Sigma_a^i)^{-1}\right) \asymp mnq^{-\qm}$ with probability at least $1-4\exp\{-c_8n\}$.\\
    When $q>c_0m$: $ \frac{c_9m}{\log(n) + q} \leq \min_i\Tr\left((\Sigma_a^i)^{-1}\right) \leq \max_i\Tr\left((\Sigma_a^i)^{-1}\right) \leq c_{10} m \left(1 \wedge \frac{\log(n)}{q}\right)$ with probability at least $1-4\exp\{-c_{11}\log(n)\}$.\\
    When $m>c_0q$: $c_{12} \left(m + \frac{q}{m + \log(n)}\right)\leq \min_i\Tr\left((\Sigma_a^i)^{-1}\right) \leq \max_i\Tr\left((\Sigma_a^i)^{-1}\right) \leq c_{13} \left(m + q\left(1 \wedge \frac{\log(n)}{m}\right) \right)$ with probability at least $1-4\exp\{-c_{14}\log(n)\}$.
    \item \label{lemma:A.1(8)}
         $\E\left(Z^\top (aZZ^\top +I_m)^{-1}Z\right) = k I_q$, $k \asymp \left({m}/{q}\right)^{\qm}$.
$\sigma\left(\sum_{i=1}^n (Z^i)^\top (\Sigma_a^i)^{-1} Z^i\right) \asymp n \left({m}/{q} \right)^{\qm}$ with probability at least $1-\exp\left\{\log(q) - c_{16}n\left({m}/{q}\right)^{\qm}\right\}$.
    \item \label{lemma:A.1(9)} Assume ${\log(q)}\left(q/m\right)^{\qm}/n = o(1)$. 
    Then with probability at least $1-2\exp\{-c_{18}n\} - 2\exp\{-c_{18}\log(n)\} - \exp\left\{-c_{18}n\left(m/q\right)^{\qm}\right\}$: 
    \begin{align*}
        & \sigma\left( \sum_{i=1}^n (Z^i)^\top (\Sigma_a^i)^{-2} Z^i \right) \leq
        \begin{cases}
            c_{17} \frac{n}{q}\left( 1 \wedge \frac{m\log(n)}{q}\right)& , q>c_0m, \\
            c_{17} \frac{n}{m}& , m>c_0q.
        \end{cases}
    \end{align*}
\end{enumerate}
\end{lemma}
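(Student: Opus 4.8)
The plan is to derive the stated bound by sandwiching the positive-semidefinite matrix $\sum_{i=1}^n (Z^i)^\top(\Sigma_a^i)^{-2}Z^i$ between two easily controlled quantities and then taking whichever estimate is smaller. The starting point is the algebraic identity: if $Z$ has reduced singular value decomposition with singular values $\sigma_1\ge\dots\ge\sigma_{m\wedge q}$ and right singular vectors $v_1,\dots,v_{m\wedge q}$, then $Z^\top(aZZ^\top+I_m)^{-1}Z=\sum_j g(\sigma_j^2)v_jv_j^\top$ and $Z^\top(aZZ^\top+I_m)^{-2}Z=\sum_j f(\sigma_j^2)v_jv_j^\top$, with $g(t)=t/(at+1)$ and $f(t)=t/(at+1)^2$. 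In particular all these matrices are PSD and $\sigma\big(Z^\top(aZZ^\top+I_m)^{-2}Z\big)=\max_j f(\sigma_j^2)$, so everything reduces to controlling the singular values of the $Z^i$, which the earlier parts of the lemma already do.

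The first (crude) estimate uses the elementary inequalities $f(t)\le\min\{(4a)^{-1},(a^2t)^{-1}\}\le C_a/(t+c)$, valid for all $t\ge0$ with $c$ the constant from Lemma~\ref{lemma:A.1}.\ref{lemma:A.1(6.2)} and $C_a$ a matching constant. Then $\sigma\big((Z^i)^\top(\Sigma_a^i)^{-2}Z^i\big)\le C_a\big(\sigma^2(Z^i)+c\big)^{-1}$, and by subadditivity of the operator norm $\sigma\big(\sum_i (Z^i)^\top(\Sigma_a^i)^{-2}Z^i\big)\le C_a\sum_i\big(\sigma^2(Z^i)+c\big)^{-1}$, which is $O\big(n/(m\vee q)\big)$ on the event of Lemma~\ref{lemma:A.1}.\ref{lemma:A.1(6.2)} (probability at least $1-2\exp\{-c_4 n\}$). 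This settles the case $m>c_0q$ completely, and in the case $q>c_0m$ it already gives the bound $n/q$.

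The second (refined) estimate, needed only when $q>c_0m$, exploits the factorization $f(t)=(at+1)^{-1}g(t)$ together with $(a\sigma_j^2+1)^{-1}\le(a\sigma^2(Z)+1)^{-1}$ (since $\sigma_j^2\ge\sigma^2(Z)$, the smallest squared singular value), giving the PSD domination $(Z^i)^\top(\Sigma_a^i)^{-2}Z^i\preceq(a\sigma^2(Z^i)+1)^{-1}(Z^i)^\top(\Sigma_a^i)^{-1}Z^i$ and hence, by monotonicity of $\sigma(\cdot)$ on PSD matrices, $\sigma\big(\sum_i (Z^i)^\top(\Sigma_a^i)^{-2}Z^i\big)\le\big(\max_i(a\sigma^2(Z^i)+1)^{-1}\big)\,\sigma\big(\sum_i (Z^i)^\top(\Sigma_a^i)^{-1}Z^i\big)$. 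On the $\max_i$ event of Lemma~\ref{lemma:A.1}.\ref{lemma:A.1(6.2)} (probability at least $1-2\exp\{-c_6\log n\}$) the first factor is $O\big(\min\{1,\log(n)/q\}\big)$; by Lemma~\ref{lemma:A.1}.\ref{lemma:A.1(8)} the second factor is $O(nm/q)$ on an event of probability at least $1-\exp\{\log q-c_{16}nm/q\}$, and the hypothesis $\log(q)(q/m)/n=o(1)$ upgrades this to probability at least $1-\exp\{-c'nm/q\}$. So this estimate reads $O\big(\min\{nm/q,\,nm\log(n)/q^2\}\big)$.

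Combining the two estimates in the case $q>c_0m$ gives $\sigma\big(\sum_i (Z^i)^\top(\Sigma_a^i)^{-2}Z^i\big)=O\big(\min\{n/q,\,nm\log(n)/q^2\}\big)=O\big(\tfrac nq(1\wedge\tfrac{m\log n}{q})\big)$, and intersecting the relevant events reproduces the stated probability; the case $m>c_0q$ was already disposed of by the first estimate alone. The computations are short, so the main thing to get right is the bookkeeping: one must notice that both inequalities are genuinely needed — the first cannot produce the $m\log(n)/q^2$ refinement, while the second neither yields the clean $n/q$ rate nor applies when $m>c_0q$ — check that their minimum collapses exactly to $\tfrac nq(1\wedge\tfrac{m\log n}{q})$, and verify that the rate assumption $\log(q)(q/m)^{\qm}/n=o(1)$ is precisely what absorbs the $\log q$ appearing in the probability from Lemma~\ref{lemma:A.1}.\ref{lemma:A.1(8)}.
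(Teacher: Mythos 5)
Your proposal addresses only item 9 of the lemma: items 1--8 are cited as inputs (you lean on items \ref{lemma:A.1(6.2)}, \ref{lemma:A.1(6.3)}, and \ref{lemma:A.1(8)} at every step) but are nowhere proved, so as a proof of the full statement this is incomplete --- the concentration results for $\sigma(Z)$, the sub-exponential bound for $1/(\sigma^2(Z)+c)$, the trace bounds for $(\Sigma_a^i)^{-1}$, and the matrix Chernoff argument behind $\sigma\bigl(\sum_i (Z^i)^\top(\Sigma_a^i)^{-1}Z^i\bigr)\asymp n(m/q)^{\qm}$ all still need to be established.

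For item 9 itself your argument is correct and is essentially the paper's: the paper likewise takes the minimum of two operator-norm bounds, one of the form $\sigma_{\max}\bigl(\sum_i (Z^i)^\top(\Sigma_a^i)^{-1}Z^i\bigr)\cdot\max_i\sigma_{\max}\bigl((\Sigma_a^i)^{-1}\bigr)\le c\,n(m/q)^{\qm}\bigl(1\wedge \log(n)/(m\vee q)\bigr)$ via items \ref{lemma:A.1(8)} and \ref{lemma:A.1(6.3)}, and one of order $n/(m\vee q)$ via item \ref{lemma:A.1(6.2)}. Your spectral-calculus phrasing --- writing the matrices as $\sum_j f(\sigma_j^2)v_jv_j^\top$ with $f(t)=t/(at+1)^2$, the domination $f(t)\le (a\sigma_{\min}^2+1)^{-1}g(t)$, and $f(t)\le\min\{(4a)^{-1},(a^2t)^{-1}\}$ --- is if anything a cleaner justification of the PSD comparisons the paper states tersely, and your accounting of the events and of the role of the rate assumption $\log(q)(q/m)^{\qm}/n=o(1)$ in absorbing the $\log q$ term from the matrix Chernoff bound matches the paper's.
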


\begin{lemma}
\label{lemma:A.3}
\begin{enumerate}
    \item \label{lemma:A.3.1} Under Assumption \ref{as.A}.\ref{as.A.1} and Assumption \ref{as.A}.\ref{as.A.2}, with probability at least $1-4\exp\{-cn\} -2\exp\{-c\log(n)\} - 2\exp\{-cmnq^{-\qm}\} -\exp\left\{-cn\left(m/q\right)^{\qm} \right\}$, we have:
    \begin{align*}
        \begin{cases}
            \sigma(X^\top \Sigma_a^{-1} X) \asymp \frac{mn}{q} &, q>c_0m,\\
            \sigma(X^\top \Sigma_a^{-1} X) \asymp n &, m>c_0q \text{ and } p=q,\\
            c_1n \leq \sigma(X^\top \Sigma_a^{-1} X) \leq  c_2mn &, m>c_0q \text{ and } q<p.
        \end{cases}
    \end{align*}
    \item \label{lemma:A.3.2}  Under Assumption \ref{as.A}.\ref{as.A.1} and Assumption \ref{as.A}.\ref{as.A.2}, $\max_i\sigma((X^i)^\top (\Sigma^i_a)^{-1} X^i) \leq c_1$ when $p=q$; when $p>q$, with probability at least $1-8\exp\{-c_2\log(n)\}$ we have:
    \begin{align*}
        \max_i\sigma((X^i)^\top (\Sigma^i_a)^{-1} X^i) \leq \begin{cases}
            c_3 \left(1 \vee \frac{m\log^2(n)}{q}\right) &, q>c_0m,\\
            c_3 m \log^2(n) &, m>c_0q.
        \end{cases}
    \end{align*}
\end{enumerate}
\end{lemma}

\begin{lemma}
\label{lemma:A.2}
\begin{enumerate}
    \item \label{lemma:A.2.1} 
    Under Assumption \ref{as.A}.\ref{as.A.1} and Assumption \ref{as.A}.\ref{as.A.2}, we have 
    \begin{align*}
        \max_{1\leq j\leq p} \sum_{i=1}^n \left\|(Z^i)^\top \left(\Sigma_a^i\right)^{-1} X_j^i \right\|_2^2 \left\|\Psi\right\|_2 + \left\|\left(\Sigma_a^i\right)^{-1} X_j^i\right\|_2^2 \|R^i\|_2
        \leq \begin{cases}
            c_1\frac{nm}{q} &, q>c_0m,\\
            c_1n &, m>c_0q \text{ and } q=p,\\
            c_1mn &, m>c_0q \text{ and } q<p,
        \end{cases}
    \end{align*}
    with probability at least $1-4\exp\{-cn\} - 12\exp\{-c\log(n)\} - 2\exp\{-cmnq^{-\qm}\} - \exp\{-cn(m/q)^{\qm}\}$.
    
    If we additionally assume Assumption \ref{as.A.add}.\ref{as.A.3}, we have 
    \begin{align*}
    \max_{1\leq j\leq p} \sum_{i=1}^n \left\|(Z^i_{S_\psi})^\top \left(\Sigma_a^i\right)^{-1} X_j^i \right\|_2^2 \left\|\Psi\right\|_2 + \left\|\left(\Sigma_a^i\right)^{-1} X_j^i\right\|_2^2 \|R^i\|_2 \leq \begin{cases}
            \frac{c_1mn}{q}\left(1 \wedge \frac{m\log^2(n)}{q}\right) &, q>c_0m,\\
            c_1n &, m>c_0q \text{ and } p=q,\\
            c_1mn &, m>c_0q \text{ and } q<p,
        \end{cases}
\end{align*}
with probability at least $1-4\exp\{-cn\} -6\exp\{-c\log(n)\} - 2\exp\{-cmnq^{-\qm}\} -\exp\left\{-cn\left(m/q\right)^{\qm} \right\}$.
    
    \item \label{lemma:A.2.0} Define 
    \begin{align*}
        z_0^* = \max_{1\leq j \leq p} \left|\frac{1}{\tr(\Sigma_a^{-1})} X^\top_j \Sigma_a^{-1} (y-X\beta^*) \right|.
    \end{align*}
    Under Assumption \ref{as.A}.\ref{as.A.1} and Assumption \ref{as.A}.\ref{as.A.2}, we have
    \begin{align*}
        z_0^* \leq \begin{cases}
        c_1 \sqrt{\frac{q\log(p)}{nm}} &, q>c_0m ,\\
        c_1 \sqrt{\frac{\log(p)\log^2(n)}{n}} &, q>c_0m \text{ and Assumption~\ref{as.A.add}.\ref{as.A.3}} also holds,\\
        c_1 \sqrt{\frac{\log(p)}{nm^2}} &, m>c_0q \text{ and } p=q ,\\
        c_1 \sqrt{\frac{\log(p)}{nm}} &, m>c_0q \text{ and } p>q .\\
        \end{cases}
    \end{align*}
    with probability at least $1-4\exp\{-cn\} -12\exp\{-c\log(n)\} - 2\exp\{-cmnq^{-\qm}\} -\exp\left\{-cn\left(m/q\right)^{\qm} \right\}$.
\end{enumerate}
\end{lemma}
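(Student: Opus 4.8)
The plan is to prove the quadratic--form bounds of part~\ref{lemma:A.2.1} first, and then obtain the bound on $z_0^*$ in part~\ref{lemma:A.2.0} as a corollary. For part~\ref{lemma:A.2.1}, fix a coordinate $j$ and condition on $Z^1,\dots,Z^n$ (hence on $\Sigma_a^1,\dots,\Sigma_a^n$). Conditionally, the $X_j^i$ are independent across $i$, sub--Gaussian, with covariance of operator norm $\asymp 1$ (Assumption~\ref{as.A}.\ref{as.A.2}), so each summand is a quadratic form $(X_j^i)^\top M^i X_j^i$ with $M^i=(\Sigma_a^i)^{-1}Z^i(Z^i)^\top(\Sigma_a^i)^{-1}$ (with $Z^i_{S_\psi}$ in place of $Z^i$ in the refined statement) for the first term, and $M^i=(\Sigma_a^i)^{-2}$ for the second. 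Applying the Hanson--Wright inequality to $\sum_{i=1}^n (X_j^i)^\top M^i X_j^i$ and then a union bound over $j\in\{1,\dots,p\}$ shows this sum is, with the displayed probability, of order $\sum_i\Tr(M^i)+\sqrt{\big(\sum_i\|M^i\|_F^2\big)\log p}+\big(\max_i\|M^i\|_2\big)\log p$.

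It then remains to control these three quantities in each regime, which is where Lemma~\ref{lemma:A.1} enters. For the un--restricted first term I would use the identity $a(\Sigma_a^i)^{-1}Z^i(Z^i)^\top(\Sigma_a^i)^{-1}=(\Sigma_a^i)^{-1}-(\Sigma_a^i)^{-2}$ (from $aZ^i(Z^i)^\top=\Sigma_a^i-I_m$), so that $\sum_i\Tr(M^i)\le a^{-1}\sum_i\Tr((\Sigma_a^i)^{-1})\asymp mnq^{-\qm}$ by Lemma~\ref{lemma:A.1}.\ref{lemma:A.1(6.5)}, and the same for $M^i=(\Sigma_a^i)^{-2}\preceq(\Sigma_a^i)^{-1}$; this already yields the rate $mn/q$ when $q>c_0m$ and $mn$ when $m>c_0q,\ p>q$. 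For the $S_\psi$--restricted term under Assumption~\ref{as.A.add}.\ref{as.A.3}, $\sum_i(Z^i_{S_\psi})^\top(\Sigma_a^i)^{-2}Z^i_{S_\psi}$ has rank $\le s_\psi$, so its trace is at most $s_\psi$ times its operator norm, which is in turn bounded by $\sigma\big(\sum_i(Z^i)^\top(\Sigma_a^i)^{-2}Z^i\big)$; Lemma~\ref{lemma:A.1}.\ref{lemma:A.1(9)} together with $s_\psi<c_2m$ then gives the refined rate $\tfrac{mn}{q}\big(1\wedge\tfrac{m\log^2 n}{q}\big)$. In the case $m>c_0q,\ p=q$, the columns of $X$ lie in the column space of $Z$, on which $\Sigma_a^i$ has eigenvalues $\asymp m$, so bounding the quadratic forms by operator norm ($\asymp m^{-1}$) times $\|X_j^i\|_2^2\asymp m$ sharpens the rate from $mn$ to $n$. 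Finally, the fluctuation terms are dominated by $\max_i\|(\Sigma_a^i)^{-1}\|_2$ (bounded, and $\lesssim q^{-1}$ w.h.p.\ when $q>c_0m$ by Lemma~\ref{lemma:A.1}.\ref{lemma:A.1(6.4)}) times the trace bounds; under Assumption~\ref{as.A}.\ref{as.A.1}'s scaling $\log(q)(q/m)^{\qm}/n=o(1)$ they are of smaller order than the main term, up to the logarithmic slack built into the statement. Intersecting the events borrowed from Lemma~\ref{lemma:A.1} with the Hanson--Wright events reproduces the stated probability.

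For part~\ref{lemma:A.2.0}, write $\eta=y-X\beta^*$; conditionally on $(X,Z)$ the vector $\eta$ is mean--zero sub--Gaussian whose $i$-th block has covariance $aZ^i\Psi(Z^i)^\top+R^i$, so $X_j^\top\Sigma_a^{-1}\eta=\sum_i(X_j^i)^\top(\Sigma_a^i)^{-1}\eta^i$ is conditionally sub--Gaussian with variance proxy
\[
V_j=\sum_{i=1}^n a\,(X_j^i)^\top(\Sigma_a^i)^{-1}Z^i\Psi(Z^i)^\top(\Sigma_a^i)^{-1}X_j^i+(X_j^i)^\top(\Sigma_a^i)^{-1}R^i(\Sigma_a^i)^{-1}X_j^i .
\]
Since $\Psi$ is supported on $S_\psi$ under Assumption~\ref{as.A.add}.\ref{as.A.3} and $\Psi,R^i$ have bounded operator norm, $V_j\le C\sum_i\big(\|(Z^i_{S_\psi})^\top(\Sigma_a^i)^{-1}X_j^i\|_2^2\|\Psi\|_2+\|(\Sigma_a^i)^{-1}X_j^i\|_2^2\|R^i\|_2\big)$, i.e.\ $V_j$ is at most a constant multiple of the quantity $B$ bounded in part~\ref{lemma:A.2.1}. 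A sub--Gaussian tail bound and a union bound over $j$ give $\max_j|X_j^\top\Sigma_a^{-1}\eta|\lesssim\sqrt{B\log p}$ on the same event, and dividing by $\Tr(\Sigma_a^{-1})=\sum_i\Tr((\Sigma_a^i)^{-1})\asymp mnq^{-\qm}$ (Lemma~\ref{lemma:A.1}.\ref{lemma:A.1(6.5)}) yields $z_0^*\lesssim\sqrt{B\log p}\,/\,(mnq^{-\qm})$; substituting the four values of $B$ from part~\ref{lemma:A.2.1} reproduces the four displayed rates, with the stated probability.

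The main obstacle I expect is the regime bookkeeping in part~\ref{lemma:A.2.1}: for each of the four cases one must verify that the Hanson--Wright fluctuation terms and the $\log p$ union--bound cost are genuinely of smaller (or at worst equal, up to logarithmic factors) order than the trace--based main term, which requires combining several of the high--probability estimates of Lemma~\ref{lemma:A.1} (parts \ref{lemma:A.1(6.5)}, \ref{lemma:A.1(6.2)}, \ref{lemma:A.1(8)}, \ref{lemma:A.1(9)}) with the scaling assumption; and, in the $p=q$ regime, one must correctly exploit that the columns of $X$ share the column space of $Z$, so that the relevant quadratic forms concentrate around a value much smaller than the generic trace bound would predict.
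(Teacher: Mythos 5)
Your part~\ref{lemma:A.2.0} is essentially the paper's argument: condition on $X$, note that $y-X\beta^*=\sum_i Z^i\gamma_i+\epsilon_i$ is sub-Gaussian with variance proxy exactly the quantity bounded in part~\ref{lemma:A.2.1}, union bound over $j$, and divide by $\tr(\Sigma_a^{-1})\asymp mnq^{-\qm}$; your rate arithmetic checks out in all four regimes. Part~\ref{lemma:A.2.1} is where you genuinely diverge. The paper never invokes Hanson--Wright there: it writes the $j$-th quantity as a quadratic form in $e_j$, factors out $\max_i\bigl(\sigma_{\max}((Z^i)^\top(\Sigma_a^i)^{-1}Z^i)\|\Psi\|_2+\sigma_{\max}((\Sigma_a^i)^{-1})\|R^i\|_2\bigr)\leq c$ (Lemma~\ref{lemma:A.3}.\ref{lemma:A.3.2}, Lemma~\ref{lemma:A.1}.\ref{lemma:A.1(6.3)}) against $\sigma_{\max}(X^\top\Sigma_a^{-1}X)$ from Lemma~\ref{lemma:A.3}.\ref{lemma:A.3.1}, so the maximum over $j$ costs nothing ($\|e_j\|_2=1$). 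Your route instead pays a $\log(p)$ union-bound cost that is not reflected in the stated probability and is not controlled by the Section~\ref{S:A} assumptions alone (one needs $\log(n)\log(p)\ll mnq^{-\qm}$ for the fluctuation terms to be lower order); on the other hand, your principal-submatrix/rank argument for the $S_\psi$-restricted trace via Lemma~\ref{lemma:A.1}.\ref{lemma:A.1(9)} is clean and even saves a factor of $\log(n)$ relative to the paper's bound.

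The genuine gap is that conditional Hanson--Wright does not apply to most of the coordinates. Since $Z^i=X^i_{1:q}$, for every $j\leq q$ the column $X^i_j=Z^i_j$ is measurable with respect to the conditioning on $(Z^1,\dots,Z^n)$: conditionally it is a constant, not a centered sub-Gaussian vector with covariance of order one, so the quadratic form has no reason to concentrate around $\Tr(M^i)$. This covers the entire $p=q$ case and the first $q$ coordinates whenever $p>q$. You supply the correct deterministic fix only for $m>c_0q$, $p=q$; for $j\leq q$ in the remaining regimes you need the same operator-norm mechanism the paper uses, e.g.\ $\sum_i e_j^\top\bigl[(Z^i)^\top(\Sigma_a^i)^{-1}Z^i\bigr]^2e_j\leq a^{-1}\,\sigma_{\max}\bigl(\sum_i(Z^i)^\top(\Sigma_a^i)^{-1}Z^i\bigr)\asymp n(m/q)^{\qm}$ by Lemma~\ref{lemma:A.1}.\ref{lemma:A.1(8)}, and analogously for the $(\Sigma_a^i)^{-2}$ and $S_\psi$-restricted terms. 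A secondary omission: for $j>q$ with general $\Sigma_X^i\neq I_p$, the conditional law of $X^i_j$ given $Z^i$ has a nonzero mean that is a linear image of the columns of $Z^i$, so your quadratic form acquires a $\mu^\top M^i\mu$ contribution; this must be removed by the reduction to $\Sigma_X^i=I_p$ used in the proof of Lemma~\ref{lemma:A.3}.\ref{lemma:A.3.1} (or bounded separately). With those two patches your part~\ref{lemma:A.2.1} would go through and deliver the stated rates.
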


\newpage

\subsection{Proof of Theorem \ref{thm:S1}}

\begin{proof}
$\left.\right.$

Based on the definition of $\hat\beta$, we can get: 
\begin{align*}
    \frac{1}{2 \tr(\Sigma_a^{-1})} \left\|\Sigma_a^{-1/2}(y-X\hat\beta) \right\|^2_2 + \lambda_a\|\hat\beta\|_1 \leq \frac{1}{2 \tr(\Sigma_a^{-1})} \left\|\Sigma_a^{-1/2}(y-X\beta^*) \right\|^2_2 + \lambda_a\|\beta^*\|_1.
\end{align*}
Denoting $\hat u =\hat\beta -\beta^*$ and rearranging the terms, we get
\begin{align}
    0 \leq \frac{1}{2 \tr(\Sigma_a^{-1})}\left\|\Sigma_a^{-1/2}X\hat u \right\|^2_2 \leq (z_0^* + \lambda_a) \|\hat u_S\|_1 + (z_0^* - \lambda_a) \|\hat u_{S^c}\|_1, \label{e.t.1.1}
\end{align}
where 
\begin{align*}
    z_0^* := \max_{1\leq j \leq p} \left|\frac{1}{ \tr(\Sigma_a^{-1})} X^\top_j \Sigma_a^{-1}(y-X\beta^*)\right|.
\end{align*}
Based on Lemma \ref{lemma:A.3}.\ref{lemma:A.3.1} and Lemma \ref{lemma:A.1}.\ref{lemma:A.1(6.5)}, we have that:
 \begin{align*}
     \frac{1}{2 \tr(\Sigma_a^{-1})}\left\|\Sigma_a^{-1/2}X\hat u \right\|^2_2 \geq \frac{\|\hat u\|^2_2 }{2 \tr(\Sigma_a^{-1})} \sigma_{\min}(X^\top \Sigma_a^{-1}X) \geq 
     \begin{cases}
     c_1    \|\hat u\|^2_2 &, q>c_0m,\\
     c_1 \frac{1}{m}\|\hat u\|^2_2 &, m>c_0q.
     \end{cases}
 \end{align*}
Based on Lemma \ref{lemma:A.2}.\ref{lemma:A.2.0}:
\begin{align*}
    z_0^* \leq \begin{cases}
        c_1 \sqrt{\frac{q\log(p)}{nm}} &, q>c_0m ,\\
        c_1 \sqrt{\frac{\log(p)\log^2(n)}{n}} &, q>c_0m \text{ and assume Assumption \ref{as.A.add}.\ref{as.A.3}},\\
        c_1 \sqrt{\frac{\log(p)}{nm^2}} &, m>c_0q \text{ and } p=q ,\\
        c_1 \sqrt{\frac{\log(p)}{nm}} &, m>c_0q \text{ and } p>q .\\
        \end{cases}
\end{align*}
Thus with probability at least $1-4\exp\{-cn\} -12\exp\{-c\log(n)\} - 2\exp\{-cmnq^{-\qm}\} -\exp\left\{-cn\left(m/q\right)^{\qm} \right\}$, we can get the following results:
\begin{enumerate}
    \item When $q>c_0m$:
     With $\lambda_a = c_2\sqrt{\frac{q\log(p)}{nm}}$ for suitably large $c_2$, we have $z_0^* \leq \lambda_a/2$ with high probability. Then based on \eqref{e.t.1.1} we have 
     \begin{align*}
         c_1\|\hat u\|_2^2 \leq 3\lambda_a\|\hat u_S\|_1 - \lambda_a\|\hat u_{S^c}\|_1 \leq 3\lambda_a \|\hat u_S\|_2^2 \leq 3\lambda_a \sqrt{s}\|\hat u\|_2^2.
     \end{align*}
     Thus we can obtain
     \begin{align*}
         &\|\hat u\|_2 \leq O_p\left(\sqrt{\frac{sq\log(p)}{mn}}\right),\\
         &\|\hat u\|_1 \leq O_p\left(s\sqrt{\frac{q\log(p)}{mn}}\right),\\
         &\|\Sigma_a^{-1/2} X \hat u\|_2^2 \leq O_p\left(s\log(p)\right).
     \end{align*}

    \item When $q>c_0m$ and additionally assume Assumption \ref{as.A.add}.\ref{as.A.3}:
    With $\lambda_a = c_2\sqrt{\frac{\log(p)\log^2(n)}{n}}$ for suitably large $c_2$, we have $z_0^* \leq \lambda_a/2$ with high probability. Then following similar arguments as before, we can obtain
     \begin{align*}
         &\|\hat u\|_2 \leq O_p\left(\sqrt{\frac{s\log^2(n)\log(p)}{n}}\right),\\
         &\|\hat u\|_1 \leq O_p\left(s\sqrt{\frac{\log^2(n)\log(p)}{n}}\right),\\
         &\|\Sigma_a^{-1/2} X \hat u\|_2^2 \leq O_p\left(\frac{sm\log^2(n)\log(p)}{q}\right).
     \end{align*}

    \item When $m>c_0q$ and $p=q$:
    With $\lambda_a = c_2\sqrt{\frac{\log(p)}{nm^2}}$ for suitably large $c_2$, we have $z_0^* \leq \lambda_a/2$ with high probability. Then following similar arguments as before, we can obtain
     \begin{align*}
         &\|\hat u\|_2 \leq O_p\left(\sqrt{\frac{s\log(p)}{n}}\right),\\
         &\|\hat u\|_1 \leq O_p\left(s\sqrt{\frac{\log(p)}{n}}\right),\\
         &\|\Sigma_a^{-1/2} X \hat u\|_2^2 \leq O_p\left(s\log(p)\right).
     \end{align*}
     
    \item When $m>c_0q$ and $p>q$:
    With $\lambda_a = c_2\sqrt{\frac{\log(p)}{nm}}$ for suitably large $c_2$, we have $z_0^* \leq \lambda_a/2$ with high probability. Then following similar arguments as before, we can obtain
     \begin{align*}
         &\|\hat u\|_2 \leq O_p\left(\sqrt{\frac{sm\log(p)}{n}}\right),\\
         &\|\hat u\|_1 \leq O_p\left(s\sqrt{\frac{m\log(p)}{n}}\right),\\
         &\|\Sigma_a^{-1/2} X \hat u\|_2^2 \leq O_p\left(sm\log(p)\right).
     \end{align*}
\end{enumerate}
\end{proof}

\subsection{Proof of related lemmas for Theorem \ref{thm:S1}}

\begin{proof}[Proof of Lemma \ref{lemma:A.1}]
$\left.\right.$

\textit{Lemma~\ref{lemma:A.1}.\ref{lemma:A.1(1)})}
{
$\left.\right.$

By Theorem 5.32 in \cite{vershynin2010introduction}, we have $|\sqrt{q} -\sqrt{m}| \leq \E(\sigma(Z)) \leq \sqrt{q} + \sqrt{m}$. With $q>c_0m$ or $m > c_0 q$, we have $(\sqrt{q} \vee \sqrt{m})\left(1-\frac{1}{\sqrt{c_0}}\right) \leq \E(\sigma(Z)) \leq (\sqrt{q} \vee \sqrt{m})\left(1+\frac{1}{\sqrt{c_0}}\right)$, and thus $\E(\sigma(Z)) \asymp \sqrt{q} \vee \sqrt{m}$.
}
\\

\textit{Lemma \ref{lemma:A.1}.\ref{lemma:A.1(2)})}
{
$\left.\right.$

By Proposition 5.34 in \cite{vershynin2010introduction}, we have 
\begin{align*}
    \P\left(|\sigma(Z) - \E(\sigma(Z))|>t \right) \leq 2\exp\{-t^2/2\}.
\end{align*}
Thus $\sigma(Z) - \E(\sigma(Z)) \in \SG(1)$. 
}
\\

\textit{Lemma \ref{lemma:A.1}.\ref{lemma:A.1(3)})}
{
$\left.\right.$

By Lemma \ref{lemma:A.1}.\ref{lemma:A.1(2)}, we have $\mathrm{Var}(\sigma(Z)) \leq 1$. Then 
\begin{align*}    
\E(\sigma^2(Z)) & = \mathrm{Var}(\sigma(Z)) + \E(\sigma(Z))^2 \leq 1 + \E(\sigma(Z))^2
\asymp m\vee q;\\
    \E(\sigma^2(Z)) & \geq \E(\sigma(Z))^2 \asymp m \vee q.
\end{align*}
}

\textit{Lemma \ref{lemma:A.1}.\ref{lemma:A.1(4)})}
{
$\left.\right.$

By Lemma \ref{lemma:A.1}.\ref{lemma:A.1(2)} and the results in the Appendix B of \cite{honorio2014tight}, we have $\sigma^2(Z) - \E(\sigma^2(Z)) \in \SE(32, 4)$. Then for independent copies $Z^i$'s, we have
\begin{align*}
    \P\left(\left|\sum_{i=1}^n \sigma^2(Z^i) - n\E(\sigma^2(Z))\right|\right) \geq 2\exp\left\{-\frac{1}{2}\min\left(\frac{t^2}{32n}, \frac{t}{4}\right)\right\}.
\end{align*}
Thus by Lemma \ref{lemma:A.1}.\ref{lemma:A.1(3)}, we have $\sum_{i=1}^n \sigma^2(Z^i) \asymp n(m\vee q)$ with probability at least $1-2\exp\{-c_1n(m\vee q)\}$.
}
\\

\textit{Lemma \ref{lemma:A.1}.\ref{lemma:A.1(5)})}
{
$\left.\right.$

By Jensen's inequality, 
\begin{align*}
    \E\left(\frac{1}{\sigma^2(Z) +c}\right) \geq \frac{1}{\E(\sigma^2(Z)) +c}.
\end{align*}

For simpler notation, denote $\sigma^2 = \sigma^2(Z)$ and $\mu = \E(\sigma^2(Z))$ in the following proof. For some $0 < k < 1$, with $f(\sigma^2)$ being the probability density function of $\sigma^2$, we have
\begin{align*}
    \E\left(\frac{1}{\sigma^2+c}\right) &  = \int_0^{k\mu} \frac{1}{\sigma^2+c} f(\sigma^2) \mathrm{d}\sigma^2+ \int_{k\mu} ^ \infty \frac{1}{\sigma^2+c} f(\sigma^2) \mathrm{d}\sigma^2 \\
    & \leq \int_0^{k\mu} \frac{1}{c} f(\sigma^2) \mathrm{d}\sigma^2+ \int_{k\mu} ^ \infty \frac{1}{k\mu+c} f(\sigma^2) \mathrm{d}\sigma^2 \\
    & = \frac{1}{c} \P(\sigma^2 < k\mu) + \frac{1}{k\mu + c} \P(\sigma^2 \geq k\mu) \\
    & \leq \frac{1}{c} 2\exp\left\{-\frac{1}{2}\min\left(\frac{(1-k)^2 \mu^2}{32}, \frac{(1-k)\mu}{4}\right)\right\} + \frac{1}{k\mu + c},
\end{align*}
using the fact that $\sigma^2 - \E(\sigma^2) \in \SE(32, 4)$ (Lemma \ref{lemma:A.1}.\ref{lemma:A.1(4)}). Then take $k=1/2$, we have 
\begin{align*}
    \E\left(\frac{1}{\sigma^2+c}\right) & \leq \frac{2}{\mu + 2c} + \frac{2}{c}\exp\left\{-\frac{1}{2}\min\left(\frac{\mu^2}{128}, \frac{\mu}{8}\right)\right\} \\
    & \leq  \frac{4}{\mu + c}
\end{align*}
for all $\mu > c_1 >0$ for some suitably large constant $c_1$. By Lemma \ref{lemma:A.1}.\ref{lemma:A.1(3)}, it suffices to assume $m\vee q > c_2 >0$ for some suitably large constant $c_2$.
}
\\

\textit{Lemma \ref{lemma:A.1}.\ref{lemma:A.1(6.1)})}
{
$\left.\right.$

For simpler notation, we denote $\sigma^2 = \sigma^2(Z)$, $\mu = \E(\sigma^2)$ and $E = \E\left(\frac{1}{\sigma^2+c}\right)$ in the following proof. We would like to show that $\forall\ t \in \R$,
\begin{align*}
    \P\left( \left|\frac{1}{\sigma^2 + c} - E \right| < t\right) \geq 1 - 2\exp\left\{-\frac{1}{2}\min \left(\frac{t^2}{c_1 E^2}, \frac{t}{c_1E}\right)\right\}.
\end{align*}
We consider three cases based on the value of $t$:
\begin{itemize}
    \item Case 1: \textit{$t \leq E$}:
    \begin{align}
        \P\left( \left|\frac{1}{\sigma^2 + c} - E \right| < t\right) & = \P\left(\frac{1}{E+t} -\mu -c \leq \sigma^2 -\mu \leq \frac{1}{E-t} -\mu -c \right). \label{a.1.6.1.1}
    \end{align}
    Denote $t_1 = \mu +c -\frac{1}{E+t}$, $t_2 = \frac{1}{E-t} -\mu -c$. Note that we can show $t_1 \geq 0, t_2\geq 0$ because of $\frac{1}{\mu+c} \leq E \leq \frac{4}{\mu+c}$ (Lemma \ref{lemma:A.1}.\ref{lemma:A.1(5)}). Then based on the sub-gaussian property of $\sigma^2$ (Lemma~\ref{lemma:A.1}.\ref{lemma:A.1(4)}) and the equation \eqref{a.1.6.1.1}, we have
    \begin{align*}
         \P\left( \left|\frac{1}{\sigma^2 + c} - E \right| < t\right) & \geq 1-\exp\left\{-\frac{1}{2}\min\left(\frac{t_1^2}{32}, \frac{t_1}{4}\right)\right\}-\exp\left\{-\frac{1}{2}\min\left(\frac{t_2^2}{32}, \frac{t_2}{4}\right)\right\}.
    \end{align*}
    We next show that $\forall\ 0 \leq t \leq E$,
    \begin{align}
        \exp\left\{-\frac{1}{2}\min\left(\frac{t_1^2}{32}, \frac{t_1}{4}\right)\right\} + \exp\left\{-\frac{1}{2}\min\left(\frac{t_2^2}{32}, \frac{t_2}{4}\right)\right\} \leq 2\exp\left\{-\frac{1}{2}\min\left(\frac{t^2}{32E^2}, \frac{t}{32E}\right)\right\}. \label{E.A.1.6}
    \end{align}
    Note that $\frac{t_1^2}{32} \leq \frac{t_1}{4}$ for $0 \leq t \leq t_1^* := \frac{1}{\mu+c-8}-E$, and $\frac{t_2^2}{32} \leq \frac{t_2}{4}$ for $0 \leq t \leq t_2^* := E-\frac{1}{\mu+c+8}$. Then:
    \begin{enumerate}
        \item On $t\in [0, t_1^* \wedge t_2^*]$:
        \begin{align*}
            \text{LHS of }\eqref{E.A.1.6}& = \exp\left\{-\frac{1}{2}\frac{t_1^2}{32}\right\} + \exp\left\{-\frac{1}{2}\frac{t_2^2}{32}\right\} \\
            & \leq 2 \exp\left\{-\frac{1}{4}\left(\frac{t_1}{8} + \frac{t_2}{8}\right)^2\right\} \quad \quad \text{(By Jensen's inequality)} \\
            & \leq 2 \exp\left\{ -\frac{1}{256}\left(\frac{2t}{E^2-t^2}\right)^2\right\} \\
            & \leq 2\exp\left\{-\frac{t^2}{64E^2}\right\} \quad \quad \text{(For $m\vee q > c_1$ for suitably large $c_1 >0$, we have $E^2 <1$)} \\
            & \leq \text{RHS of }\eqref{E.A.1.6}
        \end{align*}
        \item On $t \geq t_1^* \vee t_2^*$:
        \begin{align*}
            \text{LHS of }\eqref{E.A.1.6}& = \exp\left\{-\frac{1}{2}\frac{t_1}{4}\right\} + \exp\left\{-\frac{1}{2}\frac{t_2}{4}\right\} \\
            & \leq 2 \exp\left\{-\frac{1}{2}\left(\frac{t_1}{8} + \frac{t_2}{8}\right)\right\} \quad \quad \text{(By Jensen's inequality)} \\
            & = 2 \exp\left\{- \frac{1}{16} \left(\frac{2t}{E^2-t^2}\right)\right\} \\
            & \leq 2 \exp\left\{-\frac{t^2}{64E^2}\right\}  \quad \quad \text{(For $m\vee q > c_1$ for suitably large $c_1 >0$, we have $E <1$)} \\
            & \leq \text{RHS of }\eqref{E.A.1.6}.
        \end{align*}
        \item On $t \in [t_1^*\wedge  t_2^*, t_1^*\vee  t_2^*]$:
        \begin{enumerate}
            \item When $t_1^* \leq t_2^*$: 
            \begin{align*}
                 \text{LHS of }\eqref{E.A.1.6}& = \exp\left\{-\frac{1}{2}\frac{t_1}{4}\right\} + \exp\left\{-\frac{1}{2}\frac{t_2^2}{32}\right\} \\
                & = \exp\left\{-\frac{1}{8}\left(\mu + c -\frac{1}{E+t}\right)\right\} + \exp\left\{-\frac{1}{64}\left(\frac{1}{E-t}-\mu-c\right)^2\right\},
            \end{align*}
            where the last expression takes maximum value at $t=t_1^*$. Thus, denoting $\Delta_1 = \frac{1}{E-t_1^*}-\mu-c$, we have $\Delta_1\geq 0$, and
            \begin{align*}
                 \text{LHS of }\eqref{E.A.1.6}& \leq \exp\{-1\} + \exp\left\{-\frac{\Delta_1^2}{64}\right\} \\
                & \leq 2 \exp\left\{-\frac{1}{4}\left(1 + \frac{\Delta_1}{8}\right)^2\right\}\\
                & \leq 2 \exp\left\{ -\frac{1}{64}\left(1-\frac{1}{E(\mu +c+8)}\right)^2\right\} \quad \quad \text{(since $E \geq \frac{1}{\mu+c}$ by Lemma \ref{lemma:A.1}.\ref{lemma:A.1(5)})}\\
               & = 2 \exp\left\{-\frac{(t_2^*)^2}{64E^2}\right\}\\
               & \leq 2 \exp\left\{-\frac{t^2}{64E^2}\right\} \\
               & \leq \text{RHS of}\eqref{E.A.1.6}.
            \end{align*}
            
            \item When $t_1^* > t_2^*$: similarly, the LHS of \eqref{E.A.1.6} reaches maximum value at $t=t_2^*$. Then denoting $\Delta_2 = \mu + c - \frac{1}{E+t_2^*}$, we have
            \begin{align*}
                 \text{LHS of }\eqref{E.A.1.6}& = \exp\left\{-\frac{1}{2}\frac{t_2}{4}\right\} + \exp\left\{-\frac{1}{2}\frac{t_1^2}{32}\right\}\\
                 & \leq \exp\{-1\} + \exp\left\{-\frac{\Delta_2^2}{64}\right\} \\
                 & \leq 2 \exp\left\{-\frac{1}{4}\left(1 + \frac{\Delta_2}{8}\right)^2\right\};\\
                 \text{RHS of }\eqref{E.A.1.6} & \geq 2 \exp\left\{-\frac{(t_1^*)^2}{64E^2}\right\};
            \end{align*}
            and
            \begin{align*}
                \frac{t_1^*}{8E} &  = \frac{1}{8} \left(\frac{1}{E(\mu + c-8)}-1\right) \\
                & \leq \frac{1}{8} \left(\frac{\mu + c}{\mu + c -8}-1\right) \quad \quad \text{(since $E \geq \frac{1}{\mu+c}$ by Lemma \ref{lemma:A.1}.\ref{lemma:A.1(5)})}\\
                & \leq \frac{1}{2} + \frac{\Delta_2}{16},
            \end{align*}
            where the last inequality hold for $m\vee q >c_2$ for suitably large $c_2>0$, based on Lemma \ref{lemma:A.1}.\ref{lemma:A.1(3)}.
        \end{enumerate}
        Thus we have shown \eqref{E.A.1.6} holds for $t\leq E$.
        \end{enumerate}

        \item Case 2: \textit{$E < t < \frac{1}{c}$}: Denoting $t_1 = \frac{1}{E+t}-\mu-c$, using Lemma~\ref{lemma:A.1}.\ref{lemma:A.1(4)}, we have that:
        \begin{align*}
            \P\left(\left|\frac{1}{\sigma^2+c}-E\right|\leq t\right) & = \P\left(\sigma^2-\mu \geq \frac{1}{E+t}-c-\mu\right) \\
            & \geq 1-\exp\left\{-\frac{1}{2}\min\left(\frac{t_1^2}{32}, \frac{t_1}{4}\right)\right\}.
        \end{align*}
        Then we want to show
        \begin{align}
            1-\exp\left\{-\frac{1}{2}\min\left(\frac{t_1^2}{32}, \frac{t_1}{4}\right)\right\} \geq 1-2\exp\left\{-\frac{1}{2}\min\left(\frac{c_1^2t^2}{128E^2}, \frac{ct}{8E}\right)\right\} \label{E.A.1.6.2}.
        \end{align}
        To show \eqref{E.A.1.6.2}, note that
        \begin{align*}
            \frac{t_1^2}{32} \geq \frac{c^2 t^2}{128E^2} & \Leftrightarrow t_1 \geq \frac{ct}{2E} 
            \Leftrightarrow \mu + c \geq \frac{ct}{2E} + \frac{1}{E+t};
        \end{align*}
        and we can show
        \begin{align*}
            \frac{ct}{2E} + \frac{1}{E+t} &\leq \max\left(\frac{c}{2} + \frac{1}{2E}, \frac{1}{2E} + \frac{1}{E+ \frac{1}{c}}\right) \\
            & \leq \max\left(\frac{c}{2} + \frac{\mu + c}{2}, \frac{\mu + c}{2} + c\right) \\
            & \leq \mu + c
        \end{align*}
        using that fact that $E \geq \frac{1}{\mu + c}$ (Lemma \ref{lemma:A.1}.\ref{lemma:A.1(5)}) for $m \vee q > c_2$ for suitably large $c_2>0$. Thus, \eqref{E.A.1.6.2} holds.

    \item Case 3: \textit{$t \geq \frac{1}{c}$:} We have $\P\left(\left|\frac{1}{\sigma^2+c}-E\right|\leq t\right) =1$ since $\frac{1}{\sigma^2+c} \leq \frac{1}{c}$.
\end{itemize}

Thus, we have shown $\P\left(\left|\frac{1}{\sigma^2+c}-E\right|\leq t\right) \geq 1-2\exp\left\{-c_2\min\left(\frac{t^2}{E^2}, \frac{t}{E}\right)\right\}$, which implies $\frac{1}{\sigma^2+c}-E \in \SE(c_2E^2, c_2E)$.
}
\\

\textit{Lemma \ref{lemma:A.1}.\ref{lemma:A.1(6.2)})} 
{
$\left.\right.$

By Lemma \ref{lemma:A.1}.\ref{lemma:A.1(6.1)}, it is straightforward that
\begin{align*}
    \P&\left( \left|\sum_{i=1}^n\frac{1}{\sigma^2(Z^i)+c}-\sum_{i=1}^n\E\left(\frac{1}{\sigma^2(Z^i)+c}\right)\right|\leq t\right) \\
    & \geq 1-2\exp\left\{-c_1\min\left(\frac{t^2}{\sum_{i=1}^n\E\left(\frac{1}{\sigma^2(Z^i)+c}\right)^2}, \frac{t}{\max_i\E\left(\frac{1}{\sigma^2(Z^i)+c}\right)}\right)\right\}.
\end{align*}
Based on Lemma \ref{lemma:A.1}.\ref{lemma:A.1(5)}, we have $\sum_{i=1}^n\frac{1}{\sigma^2(Z^i)+c} \asymp \frac{n}{m \vee q}$ with probability at least $1-2\exp\{-c_2n\}$.

By Lemma \ref{lemma:A.1}.\ref{lemma:A.1(6.1)}, we have 
\begin{align*}
    \P&\left(\forall\ i, \left|\frac{1}{\sigma^2(Z^i)+c}-\E\left(\frac{1}{\sigma^2(Z^i)+c}\right)\right|\geq t\right) \\
    & \leq 2\exp\left\{\log(n)-c_1\min\left(\frac{t^2}{\max_i\E\left(\frac{1}{\sigma^2(Z^i)+c}\right)^2}, \frac{t}{\max_i\E\left(\frac{1}{\sigma^2(Z^i)+c}\right)}\right)\right\}.
\end{align*}
Thus based on Lemma \ref{lemma:A.1}.\ref{lemma:A.1(5)}, and the trivial bound $\max_i\frac{1}{\sigma^2(Z^i)+c} \leq \frac{1}{c}$, we have $\max_i\frac{1}{\sigma^2(Z^i)+c} \leq \frac{1}{c} \wedge \frac{c_2\log(n)}{m \vee q}$ with probability at least $1-2\exp\{-c_3n\}$.

By Lemma \ref{lemma:A.1}.\ref{lemma:A.1(4)}, we have 
\begin{align*}
    \P\left(\forall\ i, \left|\sigma^2(Z^i) - \E(\sigma^2(Z^i)) \right| \geq t\right) \leq 2\exp\left\{\log(n)-c_1\min\left(\frac{t^2}{32}, \frac{t}{4}\right)\right\}.
\end{align*}
Thus $\max_i\sigma^2(Z^i) \leq c_2(\log(n) + m\vee q)$ and $\min_i \frac{1}{\sigma^2(Z^i)+c} \geq \frac{c_3}{\log(n) + m \vee q}$ with probability at least $1-2\exp\{-c_4n\}$.
}
\\

\textit{Lemma \ref{lemma:A.1}.\ref{lemma:A.1(6.5)})}
{
$\left.\right.$

Note that $\Tr\left((\Sigma_a^i)^{-1}\right) = \sum_{l=1}^m \frac{1}{a\sigma_l^2(Z^i) + 1}$, where $\sigma_l^2(Z^i)$'s are the singular values of $Z^i$.

For $q>c_0m$: $\sigma_l^2(Z^i)$'s are positive. We then have
\begin{align*}
    \frac{m}{a\sigma_{\max}^2(Z^i) +1} \leq \Tr\left((\Sigma_a^i)^{-1}\right) \leq \frac{m}{a\sigma_{\min}^2(Z^i) +1};
\end{align*}
and based on Lemma \ref{lemma:A.1}.\ref{lemma:A.1(6.2)}:
\begin{align*}
    & \P\left(\max_i\Tr\left((\Sigma_a^i)^{-1}\right) \leq c_1 m\left(1\wedge \frac{\log(n)}{q}\right)\right) \geq 1-2\exp\{-c_2\log(n)\},\\
    & \P\left(\min_i\Tr\left((\Sigma_a^i)^{-1}\right) \geq \frac{c_1 m }{\log(n)+q}\right) \geq 1-2\exp\{-c_2\log(n)\},\\
    & \P\left(\sum_{i=1}^n\Tr\left((\Sigma_a^i)^{-1}\right) \asymp \frac{nm}{q}\right) \geq 1-4\exp\{-c_3n\}.
\end{align*}

For $m > c_0q$: $\sigma_l^2(Z^i)$'s are zero for $l>q$, and $\sigma_l^2(Z^i)$'s are positive for $1 \leq l \leq q$. Then $\Tr\left((\Sigma_a^i)^{-1}\right) = m-q + \sum_{l=1}^q \frac{1}{a\sigma_l^2(Z^i) +1}$. We thus have:
\begin{align*}
    & \P\left(\max_i\Tr\left((\Sigma_a^i)^{-1}\right) \leq m-q + c_1 q\left(1\wedge \frac{\log(n)}{m}\right) \leq c_1 \left(m + q\left(1\wedge \frac{\log(n)}{m}\right)\right) \right) \geq 1-2\exp\{-c_2\log(n)\},\\
    & \P\left(\min_i\Tr\left((\Sigma_a^i)^{-1}\right) \geq m-q+ \frac{c_1 q }{\log(n)+m} \geq c_1\left(m + \frac{q }{\log(n)+m}\right)\right) \geq 1-2\exp\{-c_2\log(n)\},\\
    & \P\left(\sum_{i=1}^n\Tr\left((\Sigma_a^i)^{-1}\right) \asymp n(m-q) + \frac{nq}{m} \asymp nm\right) \geq 1-4\exp\{-c_3n\}.
\end{align*}
}
\\

\textit{Lemma \ref{lemma:A.1}.\ref{lemma:A.1(8)})}
{
$\left.\right.$

For any $q\times q $ orthogonal matrix $P$, $Z$ and $ZP$ follow the same distribution. Thus
\begin{align*}
    \E\left(Z^\top (a ZZ^\top + I_m)^{-1} Z\right) & = \E\left((ZP)^\top (a ZPP^\top Z^\top + I_m)^{-1} ZP\right) \\
    & = P^\top\E\left(Z^\top (a ZZ^\top + I_m)^{-1} Z\right)P.
\end{align*}
Then by Proposition 2.14 in \cite{EatonMorrisL2007Ms}, we have $\E\left(Z^\top (a ZZ^\top + I_m)^{-1} Z\right) = kI_q$ for some $k\in \R$.

We then have
\begin{align*}
    kq & = \tr(kI_q) = \tr\left(\E\left(Z^\top (a ZZ^\top + I_m)^{-1} Z\right)\right) \\
    & = \E\left(\tr \left(I_m - (a ZZ^\top + I_m)^{-1}\right)\right) \\
    & = m - \E\left(\sum_{l=1}^m\frac{1}{a \sigma^2_l(Z) +1} \right).
\end{align*}

When $q>c_0m$, $\sigma^2_l(Z)$'s are positive. By Lemma \ref{lemma:A.1}.\ref{lemma:A.1(3)} and Lemma \ref{lemma:A.1}.\ref{lemma:A.1(5)} we have
\begin{align}
    m-\frac{c_1m}{q} \leq m - m\E\left(\frac{1}{a \sigma^2_{\min}(Z) +1} \right)\leq kq \leq m - m\E\left(\frac{1}{a \sigma^2_{\max}(Z) +1} \right) \leq m-\frac{c_2m}{q}. \label{a.1.8.1}
\end{align}
Thus $k \asymp \frac{m}{q}$ when $q>c_0m$.

When $m>c_0q$, $\sigma^2_l(Z)$'s are positive and $\sigma^2_l(Z)=0$ for $l>q$. Then similar to the proof of \eqref{a.1.8.1}, by Lemma \ref{lemma:A.1}.\ref{lemma:A.1(3)} and Lemma \ref{lemma:A.1}.\ref{lemma:A.1(5)}, we have
\begin{align*}
    q-\frac{c_1q}{m} \leq q-q \E\left( \frac{1}{a\sigma_{\min}^2(Z) +1}\right)\leq kq = q - \E\left(\sum_{l=1}^q \frac{1}{a\sigma_l^2(Z) +1}\right) \leq  q-q \E\left( \frac{1}{a\sigma_{\max}^2(Z) +1}\right) \leq q-\frac{c_2q}{m}.
\end{align*}
Thus $k \asymp 1$ when $m>c_0q$.

Then since $\|(Z^i)^\top (\Sigma_a^i)^{-1} Z^i\|_2 \leq \max_l \frac{\sigma^2_l(Z^i)}{a\sigma^2_l(Z^i) +1} \leq \frac{1}{a}$, by Matrix Chernoff bound \cite{tropp2015introduction} we have:
\begin{align*}
    & \P\left(\sigma_{\min}\left(\sum_{i=1}^n(Z^i)^\top (\Sigma_a^i)^{-1} Z^i\right) \geq \frac{1}{2} n \sigma_{\min} \left(\E\left((Z^i)^\top (\Sigma_a^i)^{-1} Z^i\right)\right)\right) \\
    & \quad \geq 1-\exp\left\{\log(q) - \left(\frac{1}{2}-\log{\sqrt{2}}\right)an\sigma_{\min} \left(\E\left((Z^i)^\top (\Sigma_a^i)^{-1} Z^i\right)\right) \right\};\\
    & \P\left(\sigma_{\max}\left(\sum_{i=1}^n(Z^i)^\top (\Sigma_a^i)^{-1} Z^i\right) \leq \frac{3}{2} n \sigma_{\max} \left(\E\left((Z^i)^\top (\Sigma_a^i)^{-1} Z^i\right)\right)\right) \\
    & \quad \geq 1-\exp\left\{\log(q) - \left(\frac{3}{2}\log{\frac{3}{2}} - \frac{1}{2}\right) an\sigma_{\min} \left(\E\left((Z^i)^\top (\Sigma_a^i)^{-1} Z^i\right)\right) \right\}.
\end{align*}
Plugging in $\E\left(Z^\top (a ZZ^\top + I_m)^{-1} Z\right) = kI_q$ with $k \asymp \left(\frac{m}{q}\right)^{\qm}$, we obtain the stated results in the lemma.
}
\\

\textit{Lemma \ref{lemma:A.1}.\ref{lemma:A.1(9)})}
{
$\left.\right.$

By Lemma \ref{lemma:A.1}.\ref{lemma:A.1(6.3)} and Lemma \ref{lemma:A.1}.\ref{lemma:A.1(8)},
\begin{align*}
    \sigma_{\max}\left(\sum_{i=1}^n(Z^i)^\top (\Sigma_a^i)^{-2} Z^i\right) & \leq \sigma_{\max}\left(\sum_{i=1}^n(Z^i)^\top (\Sigma_a^i)^{-1} Z^i\right) \sigma_{\max}\left((\Sigma_a^i)^{-1}\right) \\
    & \leq c_1 n \left(\frac{m}{q}\right)^{\qm} \left(1 \wedge \frac{\log(n)}{m \vee q}\right)
\end{align*}
with probability at least $1-\exp\left\{-c_2n\left(\frac{m}{q}\right)^{\qm}\right\} - 2\exp\{-c_2\log(n)\}$. At the same time, based on Lemma \ref{lemma:A.1}.\ref{lemma:A.1(6.2)} we have
\begin{align*}
    \sigma_{\max}\left(\sum_{i=1}^n(Z^i)^\top (\Sigma_a^i)^{-2} Z^i\right) &\leq \sigma_{\max}\left(\sum_{i=1}^n (\Sigma_a^i)^{-1}\right) \max_i \sigma_{\max}\left((Z^i)^\top (\Sigma_a^i)^{-1} Z^i\right) \\
    & \leq \frac{c_1n}{m \vee q}
\end{align*}
with probability at least $1-2\exp\{-c_2n\}$. Thus with probability at least $1-2\exp\{-c_2n\}-\exp\left\{-c_2n\left(\frac{m}{q}\right)^{\qm}\right\} - 2\exp\{-c_2\log(n)\}$ we have
\begin{align*}
    \sigma_{\max}\left(\sum_{i=1}^n(Z^i)^\top (\Sigma_a^i)^{-2} Z^i\right) \leq \begin{cases}
        c_1\frac{n}{q}\left(1 \wedge \frac{m\log(n)}{q}\right) &,  q>c_0m \\
        c_1 \frac{n}{m} &, m>c_0q
    \end{cases}
\end{align*}
}
\end{proof}

\begin{proof}[Proof of Lemma \ref{lemma:A.3}]
$\left.\right.$

\textit{Lemma \ref{lemma:A.3}.\ref{lemma:A.3.1})}
{
$\left.\right.$

Define $\check X^i = X^i (\Sigma_X^i)^{-1/2}$ and $\check Z^i = Z^i (\Sigma_Z^i)^{-1/2}$. Conditional on $\{\Sigma_X^i\}_{i=1}^n$ and $\{\Sigma_Z^i\}_{i=1}^n$, the entries of $\check X^i$'s and $\check Z^i$'s independently follow $N(0,1)$ distribution. Then we have
\begin{align*}
    X^\top \Sigma_a X  & = \sum_{i=1}^n (\Sigma_X^i)^{1/2}(\check X^i)^\top (a \check Z^i \Sigma^i_Z (\check Z^i)^\top + I_m)^{-1} \check X^i (\Sigma_X^i)^{1/2} \\
    & \preceq \max_i \sigma_{\max}(\Sigma^i_X) \sum_{i=1}^n (\check X^i)^\top (a \sigma_{\min}(\Sigma^i_Z) \check Z^i(\check Z^i)^\top + I_m)^{-1} \check X^i,
\end{align*}
and similarly, 
\begin{align*}
    X^\top \Sigma_a X & \succeq \min_i \sigma_{\min}(\Sigma^i_X) \sum_{i=1}^n (\check X^i)^\top (a \sigma_{\max}(\Sigma^i_Z) \check Z^i(\check Z^i)^\top + I_m)^{-1} \check X^i.
\end{align*}
Under Assumption \ref{as.A}.\ref{as.A.2}, we have $\sigma(\Sigma_X^i) \asymp \sigma(\Sigma_Z^i) \asymp 1$ based on \cite{johnson1981eigenvalue}, and thus $\sigma(X^\top \Sigma_a X)$ has the same rate as $\sigma\left(\sum_{i=1}^n (\check X^i)^\top ( c \check Z^i(\check Z^i)^\top + I_m)^{-1} \check X^i\right)$. Therefore, without loss of generality, we work with $\Sigma^i_X=I_p$, $\Sigma^i_Z=I_q$ for the rest of the proof.

When $p=q$: $X^i = Z^i$ for $i=1, \dots, n$. Then by Lemma \ref{lemma:A.1}.\ref{lemma:A.1(8)}, we have $\sigma(X^\top \Sigma_a X) \asymp n\left(m/q\right)^{\qm}$ with probability at least $1-\exp\left\{-cn\left(m/q\right)^{\qm} \right\}$.

When $p>q$: Recall that $Z^i = X^i_{1:q}$ and denote $Q^i = X^i_{-\{1:q\}}$. Suppose $u$ is a non-random length $q$ vector, and denote $u_Z = u_{1:q}$, $u_Q = u_{-\{1:q\}}$. We have $Q^i u_Q | Z^i \sim N(0, \|u_Q\|^2_2 I_m)$. Then we have
\begin{align}
    u^\top X^\top \Sigma_a^{-1} X u & = \sum_{i=1}^n \left(Z^i u_Z + Q^i u_Q\right)^\top \left(a Z^i (Z^i)^\top +I_m\right)^{-1} \left(Z^i u_Z + Q^i u_Q\right) \nonumber\\
    & = \sum_{i=1}^n u_Z^\top (Z^i)^\top (\Sigma_a^i)^{-1} Z^i u_Z  \label{e.a.3.1.1}\\
    & \quad + \sum_{i=1}^n u_Q^\top (Q^i)^\top (\Sigma_a^i)^{-1} Q^i u_Q \label{e.a.3.1.2}\\
    & \quad + 2\sum_{i=1}^n u_Z^\top (Z^i)^\top (\Sigma_a^i)^{-1} Q^i u_Q \label{e.a.3.1.3}.
\end{align}
\begin{enumerate}
    \item For \eqref{e.a.3.1.1}: By Lemma \ref{lemma:A.1}.\ref{lemma:A.1(8)}, $\eqref{e.a.3.1.1} \asymp \|u_Z\|_2^2 n \left(m/q\right)^{\qm}$ with probability at least $1-\exp\left\{-c_1 n \left(m/q\right)^{\qm}\right\}$. 
    
    \item For \eqref{e.a.3.1.3}: Note that entries of $Q^i$ are i.i.d. realizations of a standard normal distribution. Thus given $Z^i$, $u_Z^\top (Z^i)^\top (\Sigma_a^i)^{-1} Q^i u_Q$ is a Gaussian random variable. Then by the tail bound of Gaussian random variables and Lemma \ref{lemma:A.1}.\ref{lemma:A.1(9)}, we have 
    \begin{align*}
        \P\left( \left|2\sum_{i=1}^n u_Z^\top (Z^i)^\top (\Sigma_a^i)^{-1} Q^i u_Q\right| > 2t \bigm| \{ Z^i\}_{i=1}^n \right) & \leq 2\exp\left\{-\frac{t^2}{\sum_{i=1}^n u_Z^\top (Z^i)^\top (\Sigma_z^i)^{-2} Z^i u_Z \|u_Q\|_2^2}\right\} \\
        & \leq \begin{cases}
            2\exp\left\{-\frac{t^2}{\frac{c_1n}{q} \left(1\wedge \frac{m\log(n)}{q}\right) \|u_Z\|_2^2 \|u_Q\|_2^2}\right\} &, q>c_0m\\
            2\exp\left\{-\frac{t^2}{\frac{c_1n}{m} \|u_Z\|_2^2 \|u_Q\|_2^2}\right\} &, m>c_0q.
        \end{cases}
    \end{align*}
    Thus with probability at least $1-2\exp\{-c_{2}n\} - 4\exp\{-c_{2}\log(n)\} - \exp\left\{-c_{2}n\left(m/q\right)^{\qm}\right\}$, we have:
    \begin{align*}
        \text{\eqref{e.a.3.1.3}} \leq \begin{cases}
            c_3 \sqrt{\frac{n\log(n) \left(1 \wedge \frac{m\log(n)}{q}\right)}{q}}\|u_Z\|_2 \|u_Q\|_2 &, q>c_0m \\
            c_3 \sqrt{\frac{n\log(n)}{m}}\|u_Z\|_2 \|u_Q\|_2 &, m>c_0q. \\
        \end{cases}
    \end{align*}
 \item For \eqref{e.a.3.1.2}: With $\E\left(\sum_{i=1}^n u_Q^\top (Z^i)^\top (\Sigma_a^i)^{-1} Q^i u_Q |\{Z^i\}_{i=1}^n\right) = \tr(\Sigma_a^{-1})\|u_Q\|^2_2$, Hanson-Wright inequality gives:
 \begin{align*}
     & \P\left( \left| \sum_{i=1}^n u_Q^\top (Z^i)^\top (\Sigma_a^i)^{-1} Q^i u_Q - \tr(\Sigma_a^{-1})\|u_Q\|^2_2 \right| > t \bigm| \{Z^i\}_{i=1}^n \right) \\
     & \quad  \leq 2\exp\left\{-c_1 \min\left(\frac{t^2}{\|u_Q\|_2^4 \|\Sigma_a^{-1}\|_F^2}, \frac{t}{\|u_Q\|_2^2 \|\Sigma_a^{-1}\|_2}\right)\right\},
 \end{align*}
where based on Lemma \ref{lemma:A.1}.\ref{lemma:A.1(6.5)}:
\begin{align*}
    & \tr\left(\Sigma_a^{-1}\right) \asymp mn q^{-\qm} \\
    & \|\Sigma_a^{-1}\|_F^2 = \tr(\Sigma_a^{-2}) \leq  a^{-1}\tr(\Sigma_a^{-1}) \asymp mn q^{-\qm} \\
    & \|\Sigma_a^{-1}\|_2 \leq a^{-1}.
\end{align*}
Then \eqref{e.a.3.1.2}$\asymp \|u_Q\|^2_2 mn q^{-\qm}$ with probability at least $1-2\exp\{-c_1mnq^{-\qm}\} - 4\exp\{-c_1n\}$.
\end{enumerate}

Thus with probability at least $1-4\exp\{-cn\} -2\exp\{-c\log(n)\} - 2\exp\{-cmnq^{-\qm}\} -\exp\left\{-cn\left(m/q\right)^{\qm} \right\}$:

For $m>c_0q$: Since $\frac{\log(n)}{nm} = o(1)$, we have
\begin{align*}
& c_1n \|u\|_2^2 \leq c_2 \left( n \|u_Z\|^2_2 + mn \|u_Q\|_2^2 - \sqrt{\frac{n\log(n)}{m}}\|u_Z\|_2 \|u_Q\|_2 \right) \\
& \leq u^\top X^\top \Sigma_a^{-1} X u \\
    & \leq c_2 \left( n \|u_Z\|^2_2 + mn \|u_Q\|_2^2 + \sqrt{\frac{n\log(n)}{m}}\|u_Z\|_2 \|u_Q\|_2 \right) \leq c_3mn \|u\|_2^2,
\end{align*}
which implies $c_1n \leq \sigma(X^\top \Sigma_a^{-1} X) \leq c_2mn$. 

For $q>c_0m$: Since $\frac{\log^2(n)}{mn} = o(1)$, we have 
\begin{align*}
& c_1\frac{nm}{q} \|u\|_2^2 \leq c_2 \left( \frac{nm}{q} \|u_Z\|^2_2 + \frac{nm}{q}\|u_Q\|_2^2 - \sqrt{\frac{n\log(n)\left(1 \wedge \frac{m\log(n)}{q}\right)}{q}}\|u_Z\|_2 \|u_Q\|_2 \right) \\
& \leq u^\top X^\top \Sigma_a^{-1} X u \\
    & \leq c_2 \left( \frac{nm}{q} \|u_Z\|^2_2 + \frac{nm}{q} \|u_Q\|_2^2 + \sqrt{\frac{n\log(n)\left(1 \wedge \frac{m\log(n)}{q}\right)}{q}}\|u_Z\|_2 \|u_Q\|_2 \right) \leq c_3\frac{nm}{q} \|u\|_2^2,
\end{align*}
which implies $\sigma(X^\top \Sigma_a^{-1} X) \asymp nm/q$.
}
\\

\textit{Lemma \ref{lemma:A.3}.\ref{lemma:A.3.2})}
{
$\left.\right.$

As discussed in the proof of Lemma \ref{lemma:A.3}.\ref{lemma:A.3.1}, without loss of generality, we can let $\Sigma^i_X = I_p$, $\Sigma^i_Z = I_q$. 

When $p=q$: $X^i=Z^i$ for $i=1, \dots, n$. Then based on the singular value decomposition $Z^i = U^i D^i V^i$ with $D^i=\diag\left(\{\sigma_l(Z^i)\}_{l=1}^m\right)$, we have:
\begin{align}
    & \max_i \sigma\left((X^i)^\top (\Sigma_a^i)^{-1} X^i \right)  = \max_i \sigma\left( \left(aZ^i(Z^i)^\top +1 \right)^{-1}Z^i (Z^i)^\top \right) \nonumber \\
    & = \max_i \sigma\left( U^i (a(D^i)^2+I_m)^{-1} (U^i)^\top U^i (D^i)^2 (U^i)^\top \right) \nonumber \\
    & \leq \max_{i,l} \frac{\sigma_l^2(Z^i)}{a\sigma_l^2(Z^i) +1} \leq \frac{1}{a}. \label{a.3.2.1}
\end{align}

When $p>q$: Following the proof of Lemma \ref{lemma:A.3}.\ref{lemma:A.3.1} (the case when $p>q$), denoting $X^i = (Z^i, Q^i)$ and $u = (u_Z^\top, u_Q^\top)^\top$, we have
\begin{align}
    u^\top (X^i)^\top (\Sigma_a^i)^{-1} X^i u & = u_Z^\top (Z^i)^\top (\Sigma_a^i)^{-1} Z^i u_Z \label{e.a.3.2.1} \\
    & \quad + u_Q^\top (Z^i)^\top (\Sigma_a^i)^{-1} Q^i u_Q \label{e.a.3.2.2} \\
    & \quad + 2u_Z^\top (Z^i)^\top (\Sigma_a^i)^{-1} Q^i u_Q \label{e.a.3.2.3}.
\end{align}

\begin{enumerate}
    \item For \eqref{e.a.3.2.1}: $u_Z^\top (Z^i)^\top (\Sigma_a^i)^{-1} Z^i u_Z \leq \|u_Z\|^2_2 \sigma_{\max}\left((Z^i)^\top (\Sigma_a^i)^{-1} Z^i\right) \leq c_1 \|u_Z\|^2_2 $, where we have used the inequality $\sigma_{\max}\left((Z^i)^\top (\Sigma_a^i)^{-1} Z^i\right) \leq c_1$ based on \eqref{a.3.2.1}.
    \item For \eqref{e.a.3.2.2}: Hanson-Wright inequality gives 
    \begin{align*}
        & \P\left(\forall \ i, \left| u_Q^\top (Z^i)^\top (\Sigma_a^i)^{-1} Q^i u_Q - \tr\left((\Sigma_a^i)^{-1}\right) \|u_Q\|^2_2\right| \bigm| Z^i \right) \\
        & \leq 2\exp\left\{\log(n) - c_1\min\left(\frac{t^2}{\max_i\|u_Q\|^4_2 \|(\Sigma_a^i)^{-1}\|_F^2}, \frac{t}{\max_i \|u_Q\|^2_2 \|(\Sigma_a^i)^{-1}\|_2}\right)\right\}
    \end{align*}
    where by Lemma \ref{lemma:A.1}.\ref{lemma:A.1(6.3)} and Lemma \ref{lemma:A.1}.\ref{lemma:A.1(6.5)}:
    \begin{align*}
       & \|(\Sigma_a^i)^{-1}\|_2 \leq c_1\left(1 \wedge \frac{\log(n)}{m \vee q}\right),\\
       & \|(\Sigma_a^i)^{-1}\|_F^2 \leq \max_i \tr\left((\Sigma_a^i)^{-1}\right) \sigma_{\max}\left((\Sigma_a^i)^{-1}\right) \leq \begin{cases}
            c_1m\left(1 \wedge \frac{\log(n)}{q}\right)^2 &, q>c_0m,\\
            c_1\left(m+q\left(1 \wedge \frac{\log(n)}{m}\right)\right)\left(1 \wedge \frac{\log(n)}{m}\right) &, m>c_0q,
        \end{cases} \\
        & \tr\left((\Sigma_a^i)^{-1}\right) \leq \begin{cases}
            c_1m\left(1 \wedge \frac{\log(n)}{q}\right) &, q>c_0m,\\
            c_1\left(m+q\left(1 \wedge \frac{\log(n)}{m}\right)\right) &, m>c_0q.
        \end{cases}
    \end{align*}
    Then with probability at least $1-6\exp\{-c\log(n)\}$, 
    \begin{align*}
        u_Q^\top (Z^i)^\top (\Sigma_a^i)^{-1} Q^i u_Q \leq \begin{cases}
            c_1 \|u_Q\|_2^2m\log(n)\left(1 \wedge \frac{\log(n)}{q}\right) &, q>c_0m,\\
            c_1 \|u_Q\|_2^2 \log(n)\left(m+q\left(1 \wedge \frac{\log(n)}{m}\right)\right) &, m>c_0q.
        \end{cases}
    \end{align*}
    \item For \eqref{e.a.3.2.3}: Since entries of $Q^i$ are i.i.d. realizations of a standard normal distribution, we have $u_Z^\top (Z^i)^\top (\Sigma_a^i)^{-1} Q^i u_Q$ following a normal distribution conditional on $Z^i$. Thus based on the tail bound of a Gaussian random variable, we have that:
    \begin{align*}
        & \P\left(\max_i\left|u_Z^\top (Z^i)^\top (\Sigma_a^i)^{-1} Q^i u_Q\right| > t \bigm|  Z^i\right) \leq 2\exp\left\{\log(n) - c_1 \frac{t^2}{\|u_Q\|^2_2 \max_i \left\|(\Sigma_a^i)^{-1} (Z^i)^\top u_Z\right\|_2^2} \right\} \\
        & \leq 2\exp\left\{\log(n) - c_1 \frac{t^2}{\|u_Q\|^2_2\|u_Z\|^2_2 \max_i \sigma_{\max}\left((Z^i)^\top(\Sigma_a^i)^{-1} Z^i\right) \sigma_{\max}\left((\Sigma_a^i)^{-1}\right)} \right\} \\
        & \leq 2\exp\left\{\log(n) - c_1 \frac{t^2}{\|u_Q\|^2_2\|u_Z\|^2_2 \max_i \left(1 \wedge \frac{\log(n)}{m\vee q}\right)} \right\} \quad \quad \text{(based on Lemma \ref{lemma:A.1}.\ref{lemma:A.1(6.3)})}.
    \end{align*}
    Then $\max_i\left|u_Z^\top (Z^i)^\top (\Sigma_a^i)^{-1} Q^i u_Q\right| \leq c_1 \|u_Q\|_2\|u_Z\|_2\sqrt{\log(n)\left(1 \wedge \frac{\log(n)}{m\vee q}\right)}$ with probability at least $1-4\exp\{-c\log(n)\}$.
\end{enumerate}
Based on the bounds of \eqref{e.a.3.2.1}, \eqref{e.a.3.2.2} and \eqref{e.a.3.2.3}, with probability at least $1-8\exp\{-c\log(n)\}$, we have that:
\begin{align*}
    u^\top (X^i)^\top (\Sigma_a^i)^{-1} X^i u & \leq c_1 \|u_Z\|^2_2  + c_1 \|u_Q\|_2\|u_Z\|_2\sqrt{\log(n)\left(1 \wedge \frac{\log(n)}{m\vee q}\right)} \\
    & \quad \quad + \begin{cases}
            c_1 \|u_Q\|_2^2m\log(n)\left(1 \wedge \frac{\log(n)}{q}\right) &, q>c_0m,\\
            c_1 \|u_Q\|_2^2 \log(n)\left(m+q\left(1 \wedge \frac{\log(n)}{m}\right)\right) &, m>c_0q.
        \end{cases} \\
    & \leq \begin{cases}
            c_1 \|u\|_2^2 \left(1 \vee \frac{m\log^2(n)}{q}\right) &, q>c_0m,\\
            c_1 \|u\|_2^2 m\log^2(n) &, m>c_0q.
        \end{cases}
\end{align*}
}
\end{proof}

\begin{proof}[Proof of Lemma \ref{lemma:A.2}]
$\left.\right.$

\textit{Lemma \ref{lemma:A.2}.\ref{lemma:A.2.1})}
{
$\left.\right.$

We can show that:
\begin{align*}
& \max_{1\leq j\leq p} \sum_{i=1}^n \left\|(Z^i)^\top \left(\Sigma_a^i\right)^{-1} X_j^i \right\|_2^2 \left\|\Psi\right\|_2 + \left\|\left(\Sigma_a^i\right)^{-1} X_j^i\right\|_2^2 \|R^i\|_2 \\
& = \max_{1\leq j\leq p} \sum_{i=1}^n e_j^\top (X^i)^\top \left(\Sigma_a^i\right)^{-1} Z^i (Z^i)^\top \left(\Sigma_a^i\right)^{-1} X^i e_j \left\|\Psi\right\|_2 + e_j^\top (X^i)^\top \left(\Sigma_a^i\right)^{-1}\left(\Sigma_a^i\right)^{-1} X^i e_j \|R^i\|_2\\
& \leq \max_j \max_i \|e_j\|^2_2 \sigma_{\max}\left(X^\top \Sigma_a^{-1}X \right) \left(\sigma_{\max}\left(\Sigma_a^{-1}\right) \|R^i\|_2 + \sigma_{\max}\left((Z^i)^\top \left(\Sigma_a^i\right)^{-1}Z^i\right) \|\Psi\|_2 \right) .
\end{align*}
We can bound $\sigma_{\max}(X^\top \Sigma_a^{-1}X)$ based on Lemma \ref{lemma:A.3}.\ref{lemma:A.3.1}. The remaining terms can be bound under Assumption \ref{as.A}.\ref{as.A.2} and using the results from Lemma \ref{lemma:A.1}.\ref{lemma:A.1(6.3)} and Lemma \ref{lemma:A.3}.\ref{lemma:A.3.2}:
\begin{align*}
    &\max_i \ \sigma_{\max}(\Sigma_a^{-1}) \|R^i\|_2 + \sigma_{\max}\left((Z^i)^\top \left(\Sigma_a^i\right)^{-1}Z^i\right) \|\Psi\|_2\\
    & \leq c_1 + c_2 \left(1 \wedge \frac{\log(n)}{m \vee q}\right)\\
    & \leq c_3,
\end{align*}
with probability at least $1-10\exp\{-c_4\log(n)\}$. Thus we have
\begin{align*}
    \max_{1\leq j\leq p} \sum_{i=1}^n \left\|(Z^i)^\top \left(\Sigma_a^i\right)^{-1} X_j^i \right\|_2^2 \left\|\Psi\right\|_2 + \left\|\left(\Sigma_a^i\right)^{-1} X_j^i\right\|_2^2 \|R^i\|_2 \leq \begin{cases}
            \frac{c_5mn}{q} &, q>c_0m,\\
            c_5n &, m>c_0q \text{ and } p=q,\\
            c_5mn &, m>c_0q \text{ and } q<p,
        \end{cases}
\end{align*}
with probability at least $1-4\exp\{-cn\} -12\exp\{-c\log(n)\} - 2\exp\{-cmnq^{-\qm}\} -\exp\left\{-cn\left(m/q\right)^{\qm} \right\}$.

When $q>c_0m$ and we additionally assume Assumption \ref{as.A.add}.\ref{as.A.3}: We have
\begin{align*}
    & \max_{1\leq j\leq p} \sum_{i=1}^n \left\|(Z^i_{S_\psi})^\top \left(\Sigma_a^i\right)^{-1} X_j^i \right\|_2^2 \left\|\Psi\right\|_2 + \left\|\left(\Sigma_a^i\right)^{-1} X_j^i\right\|_2^2 \|R^i\|_2 \\
    & = \max_{1\leq j\leq p} \sum_{i=1}^n e_j^\top (X^i)^\top \left(\Sigma_a^i\right)^{-1} Z^i_{S_\psi} (Z^i_{S_\psi})^\top \left(\Sigma_a^i\right)^{-1} X^i e_j \left\|\Psi\right\|_2 + e_j^\top (X^i)^\top \left(\Sigma_a^i\right)^{-1}\left(\Sigma_a^i\right)^{-1} X^i e_j \|R^i\|_2 \\
    &  \leq  \max_j \max_i \sum_{i=1}^n \|e_j\|_2^2 \sigma_{\max}\left(X^\top \Sigma_a^{-1} X \right) \sigma_{\max}\left(\Sigma_a^{-1}\right) \left( \sigma_{\max} \left( Z^i_{S_\psi} \left(Z^i_{S_\psi}\right)^\top \right)  \left\|\Psi\right\|_2 + \|R^i\|_2\right).
\end{align*}
Here, based on Lemma \ref{lemma:A.1}.\ref{lemma:A.1(4)}, $\max_i\sigma_{\max}\left( Z^i_{S_\psi} (Z^i_{S_\psi})^\top \right) \leq c_1m\log(n)$ with probability at least $1-2\exp\{-c\log(n)\}$ assuming $s_\psi \leq c_2m$ for some constant $c_2>0$ (Assumption \ref{as.A.add}.\ref{as.A.3}). Then plugging in the upper bound for $\sigma_{\max}\left(X^\top \Sigma_a^{-1} X\right)$ from Lemma \ref{lemma:A.3}.\ref{lemma:A.3.1}, and the upper bound for $\sigma_{\max}\left(\Sigma_a^{-1}\right)$ from Lemma \ref{lemma:A.1}.\ref{lemma:A.1(6.3)}, we obtain:
\begin{align*}
    \max_{1\leq j\leq p} \sum_{i=1}^n \left\|(Z^i_{S_\psi})^\top \left(\Sigma_a^i\right)^{-1} X_j^i \right\|_2^2 \left\|\Psi\right\|_2 + \left\|\left(\Sigma_a^i\right)^{-1} X_j^i\right\|_2^2 \|R^i\|_2 \leq \begin{cases}
            \frac{c_1m^2n\log^2(n)}{q^2} &, q>c_0m,\\
            c_1n\log^2(n) &, m>c_0q \text{ and } p=q,\\
            c_1mn\log^2(n) &, m>c_0q \text{ and } q<p.
        \end{cases}
\end{align*}
Since we also have
\begin{align*}
    & \max_{1\leq j\leq p} \sum_{i=1}^n \left\|(Z^i_{S_\psi})^\top \left(\Sigma_a^i\right)^{-1} X_j^i \right\|_2^2 \left\|\Psi\right\|_2 + \left\|\left(\Sigma_a^i\right)^{-1} X_j^i\right\|_2^2 \|R^i\|_2 \\
    & \leq   \max_{1\leq j\leq p} \sum_{i=1}^n \left\|(Z^i)^\top \left(\Sigma_a^i\right)^{-1} X_j^i \right\|_2^2 \left\|\Psi\right\|_2 + \left\|\left(\Sigma_a^i\right)^{-1} X_j^i\right\|_2^2 \|R^i\|_2,
\end{align*}
we take the smaller upper bounds and obtain:
\begin{align*}
    \max_{1\leq j\leq p} \sum_{i=1}^n \left\|(Z^i_{S_\psi})^\top \left(\Sigma_a^i\right)^{-1} X_j^i \right\|_2^2 \left\|\Psi\right\|_2 + \left\|\left(\Sigma_a^i\right)^{-1} X_j^i\right\|_2^2 \|R^i\|_2 \leq \begin{cases}
            \frac{c_1mn}{q}\left(1 \wedge \frac{m\log^2(n)}{q}\right) &, q>c_0m,\\
            c_1n &, m>c_0q \text{ and } p=q,\\
            c_1mn &, m>c_0q \text{ and } q<p,
        \end{cases}
\end{align*}
with probability at least $1-4\exp\{-cn\} -6\exp\{-c\log(n)\} - 2\exp\{-cmnq^{-\qm}\} -\exp\left\{-cn\left(m/q\right)^{\qm} \right\}$.
}
\\

\textit{Lemma \ref{lemma:A.2}.\ref{lemma:A.2.0})}
{
$\left.\right.$ 

Conditional on $X$, the random effect $\gamma_i$'s and the noise $\epsilon_i$'s are independent sub-Gaussian random vectors with parameters $c_1\|\Psi\|_2$ and $c_2\|R^i\|_2$. Thus for fixed vectors $u_i$, $i=1, \dots, n$, we have $\sum_{i=1}^n u_i^\top Z^i\gamma_i + u_i^\top \epsilon_i \mid X \in \SG \left(c_3\sum_{i=1}^n  \| (Z^i)^\top u_i\|_2^2 \|\Psi\|_2 + \|u_i\|_2^2\|R^i\|_2 \right)$. Then conditional on $X$, letting $u_i = \left(\Sigma_a^i\right)^{-1} X^i_j$, we have $z_0^* = \max_j (\tr(\Sigma_a^{-1}))^{-1}\sum_{i=1}^n u_i^\top Z^i\gamma_i + u_i^\top \epsilon_i$. We thus have the following inequality based on the tail bound for sub-Gaussian random variables:
\begin{align*}
    \P\left( z_0^* > t \mid X\right) \leq 2\exp\left\{ \log(p) - c\frac{t^2 \tr^2(\Sigma_a^{-1})}{\max_j \sum_{i=1}^n \|(Z^i)^\top (\Sigma_a^i)^{-1} X^i_j \|_2^2\|\Psi \|_2 + \|(X^i_j)^\top (\Sigma_a^i)^{-1} \|_2^2 \|R^i\|_2 }\right\}.
\end{align*}
Let $t = c_1 (\tr(\Sigma_a^{-1}))^{-1}\sqrt{\log(p) \max_j \sum_{i=1}^n \|(Z^i)^\top (\Sigma_a^i)^{-1} X^i_j \|_2^2\|\Psi \|_2 + \|(X^i_j)^\top (\Sigma_a^i)^{-1} \|_2^2 \|R^i\|_2 }$. Then using Lemma \ref{lemma:A.2}.\ref{lemma:A.2.1} and the bound for $\tr(\Sigma_a^{-1})$ from Lemma \ref{lemma:A.1}.\ref{lemma:A.1(6.5)}, we obtain the following bounds for $z_0^*$ with probability at least $1-4\exp\{-cn\} -12\exp\{-c\log(n)\} - 2\exp\{-cmnq^{-\qm}\} -\exp\left\{-cn\left(m/q\right)^{\qm} \right\}$:
\begin{align*}
    z_0^* \leq \begin{cases}
        c_1 \sqrt{\frac{q\log(p)}{nm}} &, q>c_0m ,\\
        c_1 \sqrt{\frac{\log(p)}{nm^2}} &, m>c_0q \text{ and } p=q ,\\
        c_1 \sqrt{\frac{\log(p)}{nm}} &, m>c_0q \text{ and } p>q .
    \end{cases}
\end{align*}

When we additionally assume Assumption \ref{as.A.add}.\ref{as.A.3}, the structure of $\Psi$ implies $\gamma_{i, S_\psi} = 0$. Thus $y-X\beta^* = \sum_{i=1}^n Z^i_{S_\psi} \gamma_{i, S_\psi} + \epsilon_i$, where $\gamma_{i, S_\psi} \in \SGV(c_1\|\Psi\|_2)$. Then following the same arguments as above, using the tail bound for sub-Gaussian random variables, and using the results in Lemma \ref{lemma:A.2}.\ref{lemma:A.2.1}, we have $t = O_p\left(\sqrt{\frac{q\log(p)}{mn}\left(1 \wedge \frac{m\log^2(n)}{q}\right)}\right)$, with probability at least
\[1-4\exp\{-cn\} -6\exp\{-c\log(n)\} - 2\exp\{-cmnq^{-\qm}\} -\exp\left\{-cn\left(m/q\right)^{\qm} \right\}. \]
Thus under Assumption \ref{as.A.add}.\ref{as.A.3}, we obtain:
    \begin{align*}
        z_0^* \leq \begin{cases}
        c_1 \sqrt{\frac{\log(p)\log^2(n)}{n}} &, q>c_0m,\\
        c_1 \sqrt{\frac{\log(p)}{nm^2}} &, m>c_0q \text{ and } p=q ,\\
        c_1 \sqrt{\frac{\log(p)}{nm}} &, m>c_0q \text{ and } p>q .\\
        \end{cases}
    \end{align*}
}
\end{proof}

\section{Inference Framework for $\beta$}
\label{S:B}

\begin{assumption}
\label{as.B}
\begin{enumerate}
    \item \label{as.B.1} $\log(p) = o(mn)$. $\forall \ j =1, \dots, p$, conditional on $X^i_{-j}$, the random vector $X^i_j$ has mean $X^i_{-j} \kappa_j^*$ for $\kappa_j^* \in \R^{p-1}$ and variance $G_j$, and $X^i_j - X^i_{-j} \kappa_j^* \mid X^i_{-j} \in \SGV( c_1\|G_j\|_2)$. The support for $\kappa_j^*$ is $H_j$ and its cardinality is $|H_j|$. Assume $\|\kappa_j^*\|_1 \leq c_1|H_j|$. 
    \item \label{as.B.2}  $ \frac{\|G_j\|_2}{\sigma_{\min}(G_j)} \log(n) \sqrt{1\wedge \frac{\log(n)}{m \vee q}} \leq c_1  \sqrt{mnq^{-\qm}}$ 
    
    \item \label{as.B.3}
    \begin{enumerate}
        \item When $q>c_0m$, $p=q$:
        \begin{align}
        &\frac{|H_j|q^2\log(p)}{m^3n} \ll \|G_j\|_2 \leq c_1 \frac{mn}{\log(n)\log(p)} \label{as.b.3.e1} \\
        &\frac{\|G_j\|_2}{\sigma_{\min}^2(G_j)} \ll \frac{mn}{|H_j|^3\log(n)\log(p)} \wedge \frac{m^2n^2}{s^2|H_j|q\log(n)\log^2(p)} \label{as.b.3.e2} \\
        &\frac{\|G_j\|_2}{\sigma_{\min}(G_j)} \ll \frac{mn}{|H_j|\log(n)\log(p)} \label{as.b.3.e3}
        \end{align}

        \item when $q>c_0m$, $p>q$: assume the conditions \eqref{as.b.3.e1}--\eqref{as.b.3.e3}, and additionally assume $\|G_j\|_2 \gg \frac{|H_j|\log(p)\log^3(n)}{mn}$    
        \item When $m>c_0q$:
        \begin{align*}
        &\frac{|H_j|\log(p)}{m^3n} \left({ m^3\log^5(n) }\right)^{\mathbf{1}\{p>q\}} \ll \|G_j\|_2 \leq c_1 \frac{mn}{\log(p)} \left(\frac{1}{m\log(n)}\right)^{\mathbf{1}\{p>q\}}\\
        &\frac{\|G_j\|_2}{\sigma_{\min}^2(G_j)} \ll \left(\frac{m^3n}{|H_j|^3\log(p)} \left(\frac{1}{m\log^2(n)}\right)^{\mathbf{1}\{p>q\}} \right) \wedge \left(\frac{m^3n^2}{s^2|H_j|\log^2(p)} \left(\frac{1}{m^2\log^2(n) }\right)^{\mathbf{1}\{p>q\}} \right)\\
        &\frac{\|G_j\|_2}{\sigma_{\min}(G_j)} \ll \frac{m^2n}{|H_j|\log(p)} \left(\frac{1}{m\log(n)}\right)^{\mathbf{1}\{p>q\}}
        \end{align*}
        
    \end{enumerate}
\end{enumerate}
\end{assumption}

\begin{assumption}
    \label{as.B.4}
    \begin{enumerate}
        \item \label{cond1} Condition 1: When Assumption \ref{as.A.add}.\ref{as.A.3.2} holds:
    \begin{align*}
        \begin{cases}
          \frac{\|G_j\|_2}{\sigma_{\min}(G_j)}  \ll \frac{n}{\log^6(n)}  &,\ q>c_0m\\
          \frac{\|G_j\|_2}{\sigma_{\min}(G_j)} \ll \frac{n}{\log^5(n)}  &,\ m>c_0q
        \end{cases}
    \end{align*}
        \item \label{cond2} Condition 2: When Assumption \ref{as.A.add}.\ref{as.A.3} holds and $j \in S_\psi$:
        \begin{align*}
        \begin{cases}
          \frac{\|G_j\|_2}{\sigma^2_{\min}(G_j)} \ll \frac{n}{\log^6(n)}  &,\ q>c_0m\\
          \frac{\|G_j\|_2}{\sigma^2_{\min}(G_j)} \ll \frac{mn}{\log^5(n)}  &,\ m>c_0q
        \end{cases}
    \end{align*}
        \item \label{cond3} Condition 3: When Assumption \ref{as.A.add}.\ref{as.A.3} holds and $j \not\in S_\psi$:
        \begin{align*}
        \begin{cases}
         \frac{\|G_j\|_2}{\sigma_{\min}(G_j)} \ll \frac{n}{s_\psi\log^7(n)} \wedge \frac{n^2}{s_\psi|H_j|^2\log(p)\log^8(n)} \wedge \frac{n}{\log^6(n)}   &,\ q>c_0m\\
         \frac{\|G_j\|_2}{\sigma_{\min}(G_j)} \ll \frac{mn^2}{s_\psi|H_j|^2\log(p)\log^7(n)} \wedge \frac{n}{\log^5(n)}   &,\ m>c_0q,\ p=q\\
         \frac{\|G_j\|_2}{\sigma_{\min}(G_j)} \ll \frac{n^2}{s_\psi|H_j|^2 \log(p)\log^8(n)} \wedge \frac{n}{\log^5(n)}   &,\ m>c_0q,\ p>q
        \end{cases}
    \end{align*}

\end{enumerate}
\end{assumption}

\begin{lemma}
\label{lemma:sigma-theta}
For $\Sigma_{\theta}^i = Z^i \Psi (Z^i)^\top + R^i$, $i \in \{1, \dots, n\}$, under Assumption~\ref{as.A}, with probability at least $1-c_1\exp\{-c_2m\}$, we have:
\begin{enumerate}
    \item When $m>c_0q$: Under either Assumption \ref{as.A.add}.\ref{as.A.3} or Assumption \ref{as.A.add}.\ref{as.A.3.2}, 
    \begin{align*}
        c_3 \leq \sigma_{\min}(\Sigma_{\theta}^i) \leq \sigma_{\max}(\Sigma_{\theta}^i) \leq c_4 m.
    \end{align*}
    
    \item When $q>c_0m$: Under Assumption \ref{as.A.add}.\ref{as.A.3.2},  
    \begin{align*}
    \sigma_{\min}(\Sigma_{\theta}^i) \asymp \sigma_{\max}(\Sigma_{\theta}^i) \asymp q.
    \end{align*}
    Under Assumption \ref{as.A.add}.\ref{as.A.3},
    \begin{align*}
    c_5 \leq \sigma_{\min}(\Sigma_{\theta}) \leq \sigma_{\max}(\Sigma_{\theta}) \leq c_6 m.
    \end{align*}
\end{enumerate}

\end{lemma}

\begin{theorem}
\label{thm:S2}
Under Assumption \ref{as.A}, Assumption \ref{as.B} and Assumption~\ref{as.B.4}, with probability at least $1-c_1\exp\{-cn\} -c_2\exp\{-c\log(n)\} - c_3\exp\{-cmnq^{-\qm}\} -c_4\exp\left\{-cn\left(m/q\right)^{\qm} \right\} -c_5\exp\{-c\log(p)\} - c_6\exp\{-cmn\}$, we have that
\begin{align*}
    &\frac{1}{\sqrt{V_j}}\left(\hat\beta_j^{(db)} - \beta_j^*\right) = R_j + o_p(1), \quad \text{where } R_j \xrightarrow[]{d} N(0,1),
\end{align*}
where the variance $V_j$ is given by
\begin{align*}
     V_j = \frac{ \sum_{i=1}^n (\hat w^i_j)^\top (\Sigma_b^i)^{-1/2}\Sigma_{\theta^*}^i (\Sigma_b^i)^{-1/2} \hat w^i_j}{ \left| \sum_{i=1}^n (\hat w^i_j)^\top (\Sigma_b^i)^{-1/2} X^i_j \right|^2}.
\end{align*}
\end{theorem}

\subsection{Related lemmas for Theorem \ref{thm:S2}}
\label{S:B.1}

\begin{lemma}
\label{lemma:b.3}
$\left.\right.$
\begin{enumerate}
    \item \label{lemma:b.3.2}
    Define 
    \begin{align}
    z_j^* :=  \frac{1}{\tr(\Sigma_b^{-1})}\left\| X^\top_{-j} \Sigma_b^{-1}(X_j-X_{-j}\kappa_j^*)\right\|_\infty. \label{def:zj*}
    \end{align}
    Under Assumption \ref{as.A} and Assumption \ref{as.B}.\ref{as.B.1}, with probability at least $1-2\exp\{-cmnq^{-\qm}\}-4\exp\{-c\log(n)\}-4\exp\{-cn\}-2\exp\{-c\log(p)\}-\exp\{-cn(m/q)^{\qm}\}$, we have that:
    \begin{align*}
        z_j^* \leq \begin{cases}
        c_1\sqrt{\frac{q \log(p) \|G_j\|_2}{m^2n}\left(1 \wedge  \frac{m\log(n)}{q}\right)} &, q>c_0m \text{ and } q=p,\\
        c_1  \sqrt{\frac{\log(p)\|G_j\|_2}{m^3n}}&, m>c_0q \text{ and }q=p,\\
        c_1 \sqrt{\frac{q\log(p) \|G_j\|_2}{mn}\left(1 \wedge  \frac{\log(n)}{q}\right)} &, q>c_0m \text{ and } q<p,\\
        c_1\sqrt{\frac{\log(p) \|G_j\|_2}{mn}\left(1 \wedge  \frac{\log(n)}{m}\right)} &, m>c_0q \text{ and }q<p.
        \end{cases}
    \end{align*}

    \item \label{lemma:b.2}
    Under Assumption \ref{as.A} and Assumption \ref{as.B}.\ref{as.B.1}, we have the following results with probability at least $1-c_1\exp\{-cn(m/q)^{\qm}\} - c_2\exp\{-cn\} - c_3\exp\{-c\log(n)\} - c_4 \exp\{-cmnq^{-\qm}\} - c_5 \exp\{-c\log(p)\}$:
    \begin{enumerate}
    \item When $q=p$ and $q>c_0m$: 
    $\lambda_j = c_6\sqrt{\frac{q\log(p) \|G_j\|_2}{m^2n}\left(1 \wedge  \frac{m\log(n)}{q}\right)}$ with suitably large $c_6>0$, and 
    \begin{align*}
        &\|\hat \kappa_j - \kappa_j^*\|_2  \leq c_7 \sqrt{\frac{|H_j|q\log(p)\|G_j\|_2}{m^2n}\left(1 \wedge  \frac{m\log(n)}{q}\right)},\\
        &\|\hat \kappa_j - \kappa_j^*\|_1  \leq c_7 |H_j| \sqrt{\frac{q \log(p) \|G_j\|_2}{m^2n}\left(1 \wedge  \frac{m\log(n)}{q}\right)},\\
        &\|\Sigma_b^{-1/2}X_{-j}(\hat \kappa_j - \kappa_j^*)\|_2^2  \leq c_7 |H_j| {\frac{\log(p)\|G_j\|_2}{m}\left(1 \wedge  \frac{m\log(n)}{q}\right)}.
    \end{align*}
    
    \item When $q=p$ and $m>c_0q$: 
        $\lambda_j = c_6\sqrt{\frac{\log(p)\|G_j\|_2}{m^3n}}$ with suitably large $c_6>0$, and 
        \begin{align*}
        &\|\hat \kappa_j - \kappa_j^*\|_2  \leq c_7 \sqrt{\frac{|H_j| \log(p)\|G_j\|_2}{mn}},\\
        &\|\hat \kappa_j - \kappa_j^*\|_1  \leq c_7 |H_j| \sqrt{\frac{ \log(p)\|G_j\|_2}{mn}},\\
        &\|\Sigma_b^{-1/2}X_{-j}(\hat \kappa_j - \kappa_j^*)\|_2^2  \leq c_7 |H_j| {\frac{\log(p)\|G_j\|_2}{m}}.
    \end{align*}
    
    \item When $q<p$ and $q>c_0m$: 
        $\lambda_j = c_6\sqrt{\frac{q\log(p)\|G_j\|_2}{mn}\left(1 \wedge  \frac{\log(n)}{q}\right)}$ with suitably large $c_6>0$, and 
        \begin{align*}
        &\|\hat \kappa_j - \kappa_j^*\|_2  \leq c_7 \sqrt{\frac{|H_j|q \log(p) \|G_j\|_2}{mn}\left(1 \wedge  \frac{\log(n)}{q}\right)},\\
        &\|\hat \kappa_j - \kappa_j^*\|_1  \leq c_7 |H_j| \sqrt{\frac{q \log(p) \|G_j\|_2}{mn}\left(1 \wedge  \frac{\log(n)}{q}\right)},\\
        &\|\Sigma_b^{-1/2}X_{-j}(\hat \kappa_j - \kappa_j^*)\|_2^2  \leq c_7 |H_j| {\log(p) \|G_j\|_2\left(1 \wedge  \frac{\log(n)}{q}\right)}.
    \end{align*}
    
    \item When $q<p$ and $m>c_0q$: 
        $\lambda_j = c_6\sqrt{\frac{\log(p)\|G_j\|_2}{mn}\left(1 \wedge  \frac{\log(n)}{m}\right)}$ with suitably large $c_6>0$, and 
        \begin{align*}
        &\|\hat \kappa_j - \kappa_j^*\|_2  \leq c_7 \sqrt{\frac{|H_j|m\log(p)\|G_j\|_2}{n}\left(1 \wedge  \frac{\log(n)}{m}\right)},\\
        &\|\hat \kappa_j - \kappa_j^*\|_1  \leq c_7|H_j| \sqrt{\frac{m \log(p) \|G_j\|_2}{n}\left(1 \wedge  \frac{\log(n)}{m}\right)},\\
        &\|\Sigma_b^{-1/2}X_{-j}(\hat \kappa_j - \kappa_j^*)\|_2^2  \leq c_7 {{m\log(p)\|G_j\|_2}\left(1 \wedge  \frac{\log(n)}{m}\right)}.
    \end{align*}
\end{enumerate}

    \item \label{lemma:b.3.1}
        Under Assumption \ref{as.A} and Assumption \ref{as.B}.\ref{as.B.1}, with probability at least $1-c_1\exp\{-cmnq^{-\qm}\}-c_2\exp\{-c\log(n)\}-c_3\exp\{-cn\}-c_4\exp\{-c\log(p)\}-c_5\exp\{-cn(m/q)^{\qm}\} - c_6\exp\{-cmn\}$, we have that:
        \begin{align*}
            \| X_{-j}^\top \Sigma_b^{-1} X_{-j} (\hat\kappa_j - \kappa_j^*) \|_\infty \leq \begin{cases}
        c_7 \sqrt{|H_j| \frac{n\log(p) \|G_j\|_2}{q} \left(1 \wedge \frac{m\log(n)}{q}\right)} &, q>c_0m \text{ and } q=p,\\
        c_7 \sqrt{|H_j| \frac{n\log(p)\|G_j\|_2}{m} } &, m>c_0q \text{ and } q=p,\\  
        c_7 \sqrt{|H_j| \frac{mn \log(p) \|G_j\|_2}{q} \left(1 \wedge \frac{\log(n)}{q}\right)} &, q>c_0m \text{ and } q<p,\\
        c_7 \sqrt{|H_j| n\log(p)\|G_j\|_2 \left(m \wedge \log(n)\right)^2} &, m>c_0q \text{ and } q<p,\\ 
            \end{cases}
        \end{align*}

    \item \label{lemma:b.3.3}
     Define 
     \begin{align*}
         w_j & = \Sigma_b^{-1/2}(X_j - X_{-j}\kappa_j^*)\\
         w^i_j & = (\Sigma_b^i){-\frac{1}{2}}(X_j^i - X_{-j}^i\kappa_j^*).
     \end{align*}
     Under Assumption \ref{as.A}, Assumption \ref{as.B}.\ref{as.B.1} and Assumption \ref{as.B}.\ref{as.B.2}, we have $c_1 \sigma_{\min}(G_j)nmq^{-\qm} \leq \|w_j\|_2^2 \leq c_2 \sigma_{\max}(G_j)nmq^{-\qm}$ with probability at least $1-4\exp\{-cn\}-2\exp\{-c\log(n)\}$, and 
    \begin{align*}
\max_i\|w^i_j\|_2^2 \leq \begin{cases}
c_1(m + \sqrt{m}\log(n)) \left( 1\wedge \frac{\log(n)}{q}\right) \|G_j\|_2 &, q>c_0m ,\\
c_1 \left( m + \log(n)\sqrt{  m \wedge {\log(n)}{} }\right) \|G_j\|_2 &, m>c_0q
\end{cases}
\end{align*}
     with probability at least $1-6\exp\{-c\log(n)\}$.
     
    \item \label{lemma:b.3.4}
      Under Assumption \ref{as.A}, Assumption \ref{as.B}.\ref{as.B.1}, Assumption \ref{as.B}.\ref{as.B.2}, and Assumption \ref{as.B}.\ref{as.B.3}, we have with probability at least $1-c_1\exp\{-cmnq^{-\qm}\}-c_2\exp\{-c\log(n)\}-c_3\exp\{-cn\}-c_4\exp\{-c\log(p)\}-c_5\exp\{-cn(m/q)^{\qm}\} - c_6\exp\{-cmn\}$ that 
      \begin{align*}
          |\hat w_j^\top \Sigma_b^{-1/2} X_j| \geq c_7\sigma_{\min}(G_j) nmq^{-\qm}.
      \end{align*}

    \item \label{lemma:b.3.7}    
    Under Assumption \ref{as.A} and Assumption \ref{as.B}, we have 
    \begin{align*}
        c_1 \sigma_{\min}(G_j) mn q^{-\qm} \leq \|\hat w_j\|^2_2 \leq c_2 \sigma_{\max}(G_j) nmq^{-\qm},
    \end{align*} 
    and 
    \begin{align*}
        \max_i \|\hat w_j^i\|_2^2 \leq \begin{cases}
        c_1 \left(m + \sqrt{m}\log(n)\right) \left( 1 \wedge \frac{\log(n)}{q}\right) \|G_j\|_2&, q>c_0m,\\
        c_1 \left(m+\log(n)\sqrt{m\wedge \log(n)}\right)\|G_j\|_2  &, m > c_0q,
        \end{cases}
    \end{align*}
    with probability at least $1-c_1\exp\{-cmnq^{-\qm}\}-c_2\exp\{-c\log(n)\}-c_3\exp\{-cn\}-c_4\exp\{-c\log(p)\}-c_5\exp\{-cn(m/q)^{\qm}\} - c_6\exp\{-cmn\}$.

    \item \label{lemma:b.3.6}
    Under Assumption \ref{as.A} and Assumption \ref{as.B}, with probability at least $1-c_1\exp\{-cmnq^{-\qm}\}-c_2\exp\{-c\log(n)\}-c_3\exp\{-cn\}-c_4\exp\{-c\log(p)\}-c_5\exp\{-cn(m/q)^{\qm}\} - c_6\exp\{-cmn\}$, we have the following results hold:
    \begin{enumerate}
        \item \label{lemma:b.3.6.c1} Under Condition \ref{cond1} defined in Assumption~\ref{as.B.4}, we have
        \begin{align*}
            & \max_i\left\|(\Sigma_{\theta^*}^i)^{1/2}(\Sigma_{b}^i)^{-1/2} \hat w^i_j\right\|^2_2 \leq \begin{cases}
            c_1 \left(m + \sqrt{m}\log(n)\right)\left(1 \wedge \frac{\log(n)}{q}\right)\log^2(n)\|G_j\|_2 &, q>c_0m,\\
            c_1 \left( m+\log(n)\sqrt{m\wedge \log(n)}\right) \log^2(n) \|G_j\|_2 &, m>c_0q,
            \end{cases}\\
            & \hat w_j^\top \Sigma_b^{-1/2} \Sigma_{\theta^*} \Sigma_b^{-1/2} \hat w_j \geq c_2 nm q^{-\qm} \sigma_{\min}(G_j).
        \end{align*}
        
        \item \label{lemma:b.3.6.c2} Under Condition \ref{cond2} defined in Assumption~\ref{as.B.4}, we have
        \begin{align*}
            & \max_i\left\|(\Sigma_{\theta^*}^i)^{1/2}(\Sigma_{b}^i)^{-1/2} \hat w^i_j\right\|^2_2 \leq \begin{cases}
            c_1 \left(m + \sqrt{m}\log(n)\right)\left(1 \wedge \frac{\log(n)}{q}\right)\log^2(n) \frac{m}{q}\|G_j\|_2 &, q>c_0m,\\
            c_1 \left( m+\log(n)\sqrt{m\wedge \log(n)}\right) \log^2(n) \|G_j\|_2 &, m>c_0q,
            \end{cases}\\
            & \hat w_j^\top \Sigma_b^{-1/2} \Sigma_{\theta^*} \Sigma_b^{-1/2} \hat w_j \geq c_2 nm^2 q^{-2\times \qm} \sigma_{\min}^2(G_j).
        \end{align*}
        
        \item \label{lemma:b.3.6.c3} Under Condition \ref{cond3} defined in Assumption~\ref{as.B.4}, we have
        \begin{align*}
            & \max_i\left\|(\Sigma_{\theta^*}^i)^{1/2}(\Sigma_{b}^i)^{-1/2} \hat w^i_j\right\|^2_2 \leq c_1\begin{cases}
          & s_\psi\frac{m\log^4(n)}{q^2}\|G_j\|_2  + s_\psi|H_j|^2\frac{m\log(p)\log^5(n)}{q^2n}\|G_j\|_2 + \frac{m\log^3(n)}{q^2}\|G_j\|_2  \\
           &\quad \quad \quad \quad \quad \quad \quad \quad \quad \quad \quad \quad \quad  \quad \quad \quad  \quad \quad \quad, q>c_0m,\\
          & s_\psi\frac{\log^2(n)}{m}\|G_j\|_2  + s_\psi|H_j|^2\frac{\log(p)\log^4(n)}{mn}\|G_j\|_2 + \log^2(n) \|G_j\|_2\\
           &\quad \quad \quad \quad \quad \quad \quad \quad \quad \quad \quad \quad \quad  \quad \quad \quad  \quad \quad \quad, m>c_0q,\ p=q,\\
         &  s_\psi\frac{\log^2(n)}{m}\|G_j\|_2  + s_\psi|H_j|^2\frac{\log(p)\log^5(n)}{n}\|G_j\|_2 + \log^2(n) \|G_j\|_2\\
           &\quad \quad \quad \quad \quad \quad \quad \quad \quad \quad \quad \quad \quad  \quad \quad \quad  \quad \quad \quad, m>c_0q,\ p>q
            \end{cases}\\
            %
            & \hat w_j^\top \Sigma_b^{-1/2} \Sigma_{\theta^*} \Sigma_b^{-1/2} \hat w_j \geq \begin{cases}
           c_2\frac{nm\sigma_{\min}(G_j)}{q(q+\log(n))} &, q>c_0m,\\
           c_2\frac{nm\sigma_{\min}(G_j)}{m+\log(n)} &, m>c_0q,
            \end{cases}\\
        \end{align*}
        
    \end{enumerate}
    
\end{enumerate}
\end{lemma}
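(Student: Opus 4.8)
The plan is to establish the seven parts of Lemma~\ref{lemma:b.3} in the stated order, treating the node-wise regression of $X_j$ on $X_{-j}$ with whitening matrix $\Sigma_b$ as the exact analogue of the fixed-effect problem of Section~\ref{S:A}: $X_{-j}$ plays the role of the design, $\Sigma_b$ that of $\Sigma_a$, the conditionally mean-zero vector $X_j-X_{-j}\kappa_j^*\in\SGV(c\|G_j\|_2)$ (Assumption~\ref{as.B}.\ref{as.B.1}) that of the noise $y-X\beta^*$, and $\kappa_j^*$ that of $\beta^*$. Part~\ref{lemma:b.3.2} then mirrors the proof of Lemma~\ref{lemma:A.2}.\ref{lemma:A.2.0}: condition on $X_{-j}$, apply the sub-Gaussian maximal inequality to $\tr(\Sigma_b^{-1})^{-1}\max_k(\Sigma_b^{-1}X_k)^\top(X_j-X_{-j}\kappa_j^*)$ over the columns $X_k$ of $X_{-j}$, bound the variance proxy $\max_k\sum_i\|(\Sigma_b^i)^{-1}X_k^i\|_2^2$ by the $\|R^i\|_2$-term of Lemma~\ref{lemma:A.2}.\ref{lemma:A.2.1} with $\Sigma_b$ in place of $\Sigma_a$, and divide by $\tr(\Sigma_b^{-1})\asymp mnq^{-\qm}$ from Lemma~\ref{lemma:A.1}.\ref{lemma:A.1(6.5)}; the extra $\|G_j\|_2$ factor and the case-dependent polylog weights are precisely what these two ingredients produce. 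Part~\ref{lemma:b.2} is then the Lasso oracle inequality of Theorem~\ref{thm:S1} applied verbatim: the basic inequality, the restricted-eigenvalue constant given by the ratio of $\sigma_{\min}(X_{-j}^\top\Sigma_b^{-1}X_{-j})$ (Lemma~\ref{lemma:A.3}.\ref{lemma:A.3.1}) to $\tr(\Sigma_b^{-1})$ (Lemma~\ref{lemma:A.1}.\ref{lemma:A.1(6.5)}), and the choice $\lambda_j\geq 2z_j^*$ from Part~\ref{lemma:b.3.2}.

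For Part~\ref{lemma:b.3.1} I would use the Lasso KKT identity $X_{-j}^\top\Sigma_b^{-1}(X_j-X_{-j}\hat\kappa_j)=\tr(\Sigma_b^{-1})\lambda_j\hat z$ with $\|\hat z\|_\infty\leq1$, giving $X_{-j}^\top\Sigma_b^{-1}X_{-j}(\hat\kappa_j-\kappa_j^*)=X_{-j}^\top\Sigma_b^{-1}(X_j-X_{-j}\kappa_j^*)-\tr(\Sigma_b^{-1})\lambda_j\hat z$, whence the crude $\ell_\infty$ bound $\tr(\Sigma_b^{-1})(z_j^*+\lambda_j)$; the sharper stated bound instead routes through $\|\cdot\|_\infty\leq\|\cdot\|_2\leq\sigma_{\max}(X_{-j}^\top\Sigma_b^{-1}X_{-j})\|\hat\kappa_j-\kappa_j^*\|_2$ with Lemma~\ref{lemma:A.3}.\ref{lemma:A.3.1} and Part~\ref{lemma:b.2}. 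Parts~\ref{lemma:b.3.3} and~\ref{lemma:b.3.7} are norm computations: $\|w_j\|_2^2=(X_j-X_{-j}\kappa_j^*)^\top\Sigma_b^{-1}(X_j-X_{-j}\kappa_j^*)$ has conditional mean $\sum_i\tr((\Sigma_b^i)^{-1}G_j)$, lying between $\sigma_{\min}(G_j)$ and $\sigma_{\max}(G_j)$ times $\tr(\Sigma_b^{-1})$, and concentrates after conditioning on the good event of Lemma~\ref{lemma:A.1} by a Bernstein/Hanson--Wright argument in the sub-Gaussian coordinates, with Assumption~\ref{as.B}.\ref{as.B.2} controlling the anisotropy of $G_j$; $\max_i\|w_j^i\|_2^2$ and $\max_i\|\hat w_j^i\|_2^2$ then follow from the same concentration per block with a union bound over $i$, the eigenvalue bound $\sigma_{\max}((\Sigma_b^i)^{-1})\leq 1\wedge\frac{\log n}{m\vee q}$ from Lemma~\ref{lemma:A.1}.\ref{lemma:A.1(6.3)}, and the decomposition $\hat w_j^i=w_j^i-(\Sigma_b^i)^{-1/2}X_{-j}^i(\hat\kappa_j-\kappa_j^*)$ combined with $\max_i\sigma_{\max}((X_{-j}^i)^\top(\Sigma_b^i)^{-1}X_{-j}^i)$ from Lemma~\ref{lemma:A.3}.\ref{lemma:A.3.2} and the rates of Part~\ref{lemma:b.2}; Assumption~\ref{as.B}.\ref{as.B.3} is exactly what makes this last Lasso-error contribution lower order.

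For Part~\ref{lemma:b.3.4} I would write $\hat w_j^\top\Sigma_b^{-1/2}X_j=\|\hat w_j\|_2^2+(X_j-X_{-j}\hat\kappa_j)^\top\Sigma_b^{-1}X_{-j}\hat\kappa_j$, bound the second term by $\tr(\Sigma_b^{-1})\lambda_j\|\hat\kappa_j\|_1$ via KKT with $\|\hat\kappa_j\|_1\leq\|\kappa_j^*\|_1+\|\hat\kappa_j-\kappa_j^*\|_1\lesssim|H_j|$ (Assumption~\ref{as.B}.\ref{as.B.1} and Part~\ref{lemma:b.2}), and then invoke the lower bound $\|\hat w_j\|_2^2\gtrsim\sigma_{\min}(G_j)mnq^{-\qm}$ from Part~\ref{lemma:b.3.7}; Assumption~\ref{as.B}.\ref{as.B.3} guarantees the perturbation is dominated. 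For Part~\ref{lemma:b.3.6} the key preliminary is $\sigma_{\max}\big((\Sigma_b^i)^{-1/2}\Sigma_{\theta^*}^i(\Sigma_b^i)^{-1/2}\big)\lesssim 1$, obtained by splitting $\Sigma_{\theta^*}^i=Z^i\Psi(Z^i)^\top+R^i$ and using $\|(\Sigma_b^i)^{-1/2}Z^i\|_2^2=\sigma_{\max}((Z^i)^\top(\Sigma_b^i)^{-1}Z^i)\leq a^{-1}$ together with $\sigma_{\max}((\Sigma_b^i)^{-1/2}R^i(\Sigma_b^i)^{-1/2})\lesssim 1$. Under Condition~\ref{cond1} this and Part~\ref{lemma:b.3.7} give both $\max_i\|(\Sigma_{\theta^*}^i)^{1/2}(\Sigma_b^i)^{-1/2}\hat w_j^i\|_2^2\lesssim\max_i\|\hat w_j^i\|_2^2$ (the extra $\log^2(n)$ coming from routing the Lasso-error part of $\hat w_j^i$ through $\max_i\sigma_{\max}((X_{-j}^i)^\top(\Sigma_b^i)^{-1}X_{-j}^i)$, Lemma~\ref{lemma:A.3}.\ref{lemma:A.3.2}) and the matching lower bound $\hat w_j^\top\Sigma_b^{-1/2}\Sigma_{\theta^*}\Sigma_b^{-1/2}\hat w_j\gtrsim nmq^{-\qm}\sigma_{\min}(G_j)$ from $(\Sigma_b^i)^{-1/2}\Sigma_{\theta^*}^i(\Sigma_b^i)^{-1/2}\succeq\sigma_{\min}(R^i)(\Sigma_b^i)^{-1}$.

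Under Conditions~\ref{cond2} and~\ref{cond3} the random-effect covariance $\Psi$ is sparse, so $\sigma_{\min}(\Psi)$ may vanish and the spectral bounds on $\Sigma_{\theta^*}^i$ must instead be read off from Lemma~\ref{lemma:sigma-theta}; moreover, when $j\notin S_\psi$ one must re-expand $\hat w_j=w_j-\Sigma_b^{-1/2}X_{-j}(\hat\kappa_j-\kappa_j^*)$ and also decompose $w_j$ itself along the active directions $S_\psi$, using the $Z_{S_\psi}$-restricted version of Lemma~\ref{lemma:A.2}.\ref{lemma:A.2.1} to capture the part of $X_j-X_{-j}\kappa_j^*$ correlated with $Z_{S_\psi}$, and then applying Conditions~\ref{cond2} and~\ref{cond3} of Assumption~\ref{as.B.4} repeatedly to keep every term lower order. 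I expect this last accounting --- matching the $s_\psi$, $|H_j|$, $\log(p)$ and $\log(n)$ powers in Part~\ref{lemma:b.3.6.c3} while juggling Lemma~\ref{lemma:sigma-theta}, Lemma~\ref{lemma:A.3}.\ref{lemma:A.3.2}, the Lasso rates, and the three dimension sub-cases --- to be the main obstacle; the remainder is a faithful transcription of the Section~\ref{S:A} machinery. The probability statement is just the union of the failure events of all invoked lemmas, together with the extra $\exp\{-cmn\}$-type terms arising from union bounds over the $p-1$ columns of $X_{-j}$ applied to $O(mn)$-scale concentration inequalities, which are admissible since $\log p=o(mn)$ by Assumption~\ref{as.B}.\ref{as.B.1}.
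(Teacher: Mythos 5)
Your overall architecture --- transplanting the Section~\ref{S:A} machinery with $X_{-j}$, $\Sigma_b$, $X_j-X_{-j}\kappa_j^*$ and $\kappa_j^*$ in the roles of $X$, $\Sigma_a$, $y-X\beta^*$ and $\beta^*$ --- is exactly the paper's, and Parts~\ref{lemma:b.2}, \ref{lemma:b.3.3}, \ref{lemma:b.3.4} and \ref{lemma:b.3.7} go through essentially as you describe (your KKT treatment of Part~\ref{lemma:b.3.4} is a clean variant of the paper's expansion of $\hat w_j^\top\Sigma_b^{-1/2}X_j$). There are, however, concrete places where the argument as written does not recover the stated rates.

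First, in Parts~\ref{lemma:b.3.2} and \ref{lemma:b.3.1} you lose factors of $m$ by using operator-norm bounds where the paper uses sharper per-column or mixed bounds. For $z_j^*$ the paper splits the columns into $l\le q$ (where $X_l=Z_l$ and $\sum_i\|(\Sigma_b^i)^{-1}Z_l^i\|_2^2$ is controlled by the dedicated bound on $\sum_i(Z^i)^\top(\Sigma_b^i)^{-2}Z^i$ from Lemma~\ref{lemma:A.1}.\ref{lemma:A.1(9)}) and $l>q$; your single route through Lemma~\ref{lemma:A.2}.\ref{lemma:A.2.1} gives a variance proxy of order $nm/q$ instead of $\frac nq\bigl(1\wedge\frac{m\log(n)}{q}\bigr)$ when $p=q$ and $q>c_0m$, a loss of $\sqrt m$ in $z_j^*$ that would then contaminate the Lasso rates of Part~\ref{lemma:b.2}. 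Similarly, in Part~\ref{lemma:b.3.1} the bound $\sigma_{\max}(X_{-j}^\top\Sigma_b^{-1}X_{-j})\,\|\hat\kappa_j-\kappa_j^*\|_2$ fails in the case $m>c_0q$, $p>q$, where the Gram matrix has condition number of order $m$: converting prediction error to $\ell_2$ error and back yields $m^3\wedge m^2\log(n)$ in place of the stated $(m\wedge\log(n))^2$. The paper instead uses the split $\max_{l}\|\Sigma_b^{-1/2}X_l\|_2\cdot\|\Sigma_b^{-1/2}X_{-j}\hat v\|_2$, which avoids squaring the operator norm.

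Second, and more seriously, your Part~\ref{lemma:b.3.6}/Condition~\ref{cond1} argument is wrong in both directions. The claim $\sigma_{\max}\bigl((Z^i)^\top(\Sigma_b^i)^{-1}Z^i\bigr)\le a^{-1}$ fails because $\Sigma_b^i=aZ_{-j}^i(Z_{-j}^i)^\top+I_m$ omits the $j$-th column: the uncontrolled term $(Z_j^i)^\top(\Sigma_b^i)^{-1}Z_j^i$ is of order $m\log(n)\bigl(1\wedge\frac{\log(n)}{m\vee q}\bigr)\lesssim\log^2(n)$, and this leave-one-out effect --- not the Lasso error --- is the source of the $\log^2(n)$ in the statement. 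In the other direction, lower-bounding via $\Sigma_{\theta^*}^i\succeq\sigma_{\min}(R^i)I_m$ gives only $\sigma_{\min}(\Sigma_b^{-1})\|\hat w_j\|_2^2\asymp nmq^{-\qm}\sigma_{\min}(G_j)/(m\vee q+\log(n))$, off by a factor $m\vee q$ from the stated $c_2nmq^{-\qm}\sigma_{\min}(G_j)$; the correct argument uses $\sigma_{\min}(\Psi)\asymp1$ so that $\Sigma_{\theta^*}^i(\Sigma_b^i)^{-1}\succeq cI$ through the shared $Z_{-j}^i(Z_{-j}^i)^\top$ structure (the route you propose is the one the paper reserves for Condition~\ref{cond3}, whose lower bound indeed carries the extra $q+\log(n)$ or $m+\log(n)$ in the denominator). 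Since this quantity drives the Lyapunov ratio in Theorem~\ref{thm:S2}, the distinction is not cosmetic. Finally, the Condition~\ref{cond2} lower bound needs the specific step $\sum_i|(Z_j^i)^\top(\Sigma_b^i)^{-1/2}\hat w_j^i|^2\ge n^{-1}|\hat w_j^\top\Sigma_b^{-1/2}X_j|^2$ combined with Part~\ref{lemma:b.3.4}, which your sketch does not supply.
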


\begin{remark}
\label{Ssec:B.remark.relax.assumption}
If we only assume $\|\Psi\|_2 \leq c_1$ and do not restrict the structure of $\Psi$, we can still show 
\begin{align*}
& \max_i\left\|(\Sigma_{\theta^*}^i)^{1/2}(\Sigma_{b}^i)^{-1/2} \hat w^i_j\right\|^2_2 \leq c_1 m \log^3(n) \left( \log(n)/q\right) ^{\qm} \|G_j\|_2\\
& \hat w_j^\top \Sigma_b^{-1/2} \Sigma_{\theta^*} \Sigma_b^{-1/2} \hat w_j \geq c_2 nm q^{-\qm}  \sigma_{\min}(G_j) /(\log(n) + m \vee q)
\end{align*}
under Assumption \ref{as.A} and Assumption \ref{as.B}. Then Theorem \ref{thm:S2} still holds under the following additional assumption:
\begin{align*}
  m\log^6(n) \left(q\log(n)/m\right)^{\qm} \|G_j\|_2 \ll n\sigma_{\min}(G_j).
\end{align*}
\end{remark}

\newpage
 
\subsection{Proof of Theorem \ref{thm:S2}}
\begin{proof}

Denote $ \hat u = \hat \beta - \beta^*$ and $\hat v = \hat \kappa_j - \kappa_j^*$. Then we have:
\begin{align}
   \hat\beta^{(db)}_j - \beta_j^* &  = \hat\beta_j - \beta_j^* + \frac{\hat w_j^\top \Sigma_b^{-1/2} (y-X\hat\beta)}{\hat w_j^\top \Sigma_b^{-1/2}  X_{j}} \nonumber \\
    &  =  \left(e_j^\top - \frac{\hat w_j^\top \Sigma_b^{-1/2} X}{\hat w_j^\top \Sigma_b^{-1/2} X_j}\right) \hat u \label{thm2:r1} \\
    & \quad + \frac{\hat w_j^\top \Sigma_b^{-1/2} (y-X\beta^*)}{\hat w_j^\top \Sigma_b^{-1/2} X_j}. \label{thm2:r0}
\end{align}
In the following, we will show \eqref{thm2:r1} is $o_p(1)$, and \eqref{thm2:r0} is asymptotically normal. 

To show \eqref{thm2:r1} is $o_p(1)$: First note that
\begin{align*}
    \left(e_j^\top - \frac{\hat w_j^\top \Sigma_b^{-1/2} X}{\hat w_j^\top \Sigma_b^{-1/2} X_j}\right) \hat u
    & \leq \left\| e_j^\top - \frac{\hat w_j^\top \Sigma_b^{-1/2} X }{\hat w_j^\top \Sigma_b^{-1/2} X_j}\right\|_\infty \|\hat u\|_1 \\
    & = \frac{\left\| e_j^\top \hat w_j^\top \Sigma_b^{-1/2} X_j - \hat w_j^\top \Sigma_b^{-1/2} X \right\|_\infty}{ \left|\hat w_j^\top \Sigma_b^{-1/2} X_j \right|} \|\hat u\|_1 \\
    & = \frac{\left\|\hat w_j^\top \Sigma_b^{-1/2} X_{-j}\right\|_\infty\|\hat u\|_1}{|\hat w_j^\top \Sigma_b^{-1/2} X_j|},
\end{align*}
where 
\begin{align*}
\left\|\hat w_j^\top \Sigma_b^{-1/2} X_{-j}\right\|_\infty & = \| X_{-j}^\top \Sigma_b^{-1} (X_j - X_{-j}\hat\kappa_j) \|_\infty \nonumber\\
    & \leq \| X_{-j}^\top \Sigma_b^{-1} (X_j - X_{-j}\kappa_j^*) \|_\infty + \| X_{-j}^\top \Sigma_b^{-1} X_{-j}(\kappa_j^* - \hat\kappa_j) \|_\infty \nonumber\\
    & = z_j^* \tr(\Sigma_b^{-1}) + \| X_{-j}^\top \Sigma_b^{-1} X_{-j}\hat v \|_\infty 
\end{align*}
by the definition of $z_j^*$ in \eqref{def:zj*}. Then using Lemma \ref{lemma:A.1}.\ref{lemma:A.1(6.5)} to bound $\tr(\Sigma_b^{-1})$, Lemma \ref{lemma:b.3}.\ref{lemma:b.3.2} to bound $z_j^*$ and Lemma \ref{lemma:b.3}.\ref{lemma:b.3.1} to bound $\| X_{-j}^\top \Sigma_b^{-1} X_{-j}\hat v \|_\infty$, we have 
\begin{align*}
    & \left\|\hat w_j^\top \Sigma_b^{-1/2} X_{-j}\right\|_\infty \leq \begin{cases}
    c_1 \sqrt{|H_j|\frac{mn\log(p)\log(n)\|G_j\|_2}{q^2}}&, q>c_0m,\\
    c_1 \sqrt{|H_j|\frac{n\log(p)\|G_j\|_2}{m}}&, m>c_0q \text{ and } q=p,\\
    c_1 \sqrt{|H_j|n\log(p)\log^2(n)\|G_j\|_2}&, m>c_0 q \text{ and } q<p,
    \end{cases}
\end{align*}
with probability at least 
\[1-c_1\exp\{-cn\}-c_2\exp\{-c\log(n)\}-c_3\exp\{-cmnq^{-\qm}\} - c_4\exp\{-cn(m/q)^{-\qm}\} -c_5\exp\{-c\log(p)\}.\] 
Then using Lemma \ref{lemma:b.3}.\ref{lemma:b.3.4} to bound $|\hat w_j^\top \Sigma_b^{-1/2} X_j|$ and Theorem \ref{thm:S1} to bound $\|\hat u \|_1$, we can obtain the following bounds for \eqref{thm2:r1}:
\begin{align*}
     \left|\left(e_j^\top - \frac{\hat w_j^\top \Sigma_b^{-1/2} X}{\hat w_j^\top \Sigma_b^{-1/2} X_j}\right) \hat u \right| \leq \begin{cases}
    c_1 s \sqrt{|H_j|\frac{q\log^2(p)\log(n)\|G_j\|_2}{m^2n^2 \sigma_{\min}^2(G_j)}}&, q>c_0m,\\
    c_1 s \sqrt{|H_j|\frac{\log^2(p)\|G_j\|_2}{m^3n^2\sigma_{\min}^2(G_j)}}&, m>c_0q \text{ and } q=p,\\
    c_1 s \sqrt{|H_j|\frac{\log^2(p)\log^2(n)\|G_j\|_2}{mn^2\sigma_{\min}^2(G_j)}}&, m>c_0 q \text{ and } q<p.
     \end{cases},
\end{align*}
with probability at least $1-c_1\exp\{-cmnq^{-\qm}\}-c_2\exp\{-c\log(n)\}-c_3\exp\{-cn\}-c_4\exp\{-c\log(p)\}-c_5\exp\{-cn(m/q)^{\qm}\} - c_6\exp\{-cmn\}$. Moreover, under Assumption \ref{as.B}.\ref{as.B.3}, we have $\eqref{thm2:r1}=o_p(1)$.

Next we show that with random matrices $X^i$ and $Z^i$, the term \eqref{thm2:r0} is asymptotically normal. Note that conditional on $X$, the terms $\hat w_j$, $\Sigma_b$, $\Sigma_{\theta^*}$ are fixed quantities. Denote 
\begin{align*}
    \xi_i &= \frac{(\Sigma_{\theta^*}^i)^{1/2} (\Sigma_b^i)^{-1/2} \hat w^i_j}{\|\Sigma_{\theta^*}^{1/2} \Sigma_b^{-1/2} \hat w_j\|_2}, \quad 
    \tilde \epsilon_i  = (\Sigma_{\theta^*}^i)^{-1/2} (y^i-X^i\beta^*), \quad i=1, \dots, n
\end{align*}
and let $\xi$ and $\tilde \epsilon$ be the vectors formed by vertically stacking $\xi_i$'s and $\tilde \epsilon_i$'s, respectively. Then, $\|\xi\|_2^2=1$, $\E(\tilde \epsilon_i|X) = 0$, $\mathrm{Var}(\tilde \epsilon_i|X) = I_m$, and $\xi_i^\top \tilde \epsilon_i$ is independent of $\xi_j^\top \tilde \epsilon_j$ for $i\neq j$. Recalling that 
\begin{align*}
    V_j = \frac{\left\|\Sigma_{\theta^*}^{1/2} \Sigma_b^{-1/2} \hat w_j\right\|^2_2}{ \left|\hat w_j^\top \Sigma_b^{-1/2}X_j\right|^2},
\end{align*}
we have for the term \eqref{thm2:r0} that
\begin{align*}
    & \frac{1}{\sqrt{V_j}}\frac{\hat w_j^\top \Sigma_b^{-1/2} (y-X\beta^*)}{\hat w_j^\top \Sigma_b^{-1/2} X_j} = \sum_{i=1}^n \xi_i^\top \tilde \epsilon_i.
\end{align*}
In the following, we first use the Lyapunov Central Limit Theorem to show the (conditional) asymptotic normality of $\sum_{i=1}^n \xi_i^\top \tilde \epsilon_i$ given $X$, and then establish the unconditional asymptotic normality.

We need to verify the following three conditions for the Lyapunov Central Limit Theorem to hold:
\begin{enumerate}
    \item $\forall \  1 \leq i \leq n$, $\E(\xi_i \tilde \epsilon_i) < \infty$.
    \item $\forall \  1 \leq i \leq n$, $\mathrm{Var}(\xi_i \tilde \epsilon_i \mid X) < \infty$.
    \item Defining $s_n^2 = \sum_{i=1}^{n} \mathrm{Var}(\xi_i \tilde\epsilon_i \mid X)$, for some $\delta >0$, 
    \begin{align*}
        \lim_{n \xrightarrow{}\infty} \frac{1}{s_n^{2 + \delta}} \sum_{i=1}^{n} \E\left( \left| \xi_i \tilde \epsilon_i - \E(\xi_i \tilde \epsilon_i)\right|^{2+\delta}\mid X\right) =0.
    \end{align*}
\end{enumerate}
The first two conditions are trivially satisfied since $\E(\xi_i \tilde \epsilon_i) =0$ and $\mathrm{Var}(\xi^\top \tilde \epsilon \mid X)=1$. We show the third condition is satisfied at $\delta=2$. Because $\gamma_i$'s and $\epsilon_i$'s are independent sub-Gaussian vectors, the vectors $\tilde \epsilon_i$'s is also sub-Gaussian with parameter $c\asymp 1$. Then $\E(|\xi_i^\top \tilde\epsilon_i|^4) \leq 16\Gamma(2) \|\xi_i\|_2^4$ based on Proposition 3.2 in \cite{rivasplata2012subgaussian}. Thus, we have
\begin{align}
    \frac{1}{s_n^{4}} \sum_{i=1}^{n} \E\left( \left| \xi^\top_i \tilde \epsilon_i - \E(\xi^\top_i \tilde \epsilon_i)\right|^{4}\mid X\right) & = \sum_{i=1}^{n} \E\left( \left| \xi^\top_i \tilde \epsilon_i\right|^{4}\mid X\right)  \leq c_1 \sum_{i=1}^{n}\|\xi_i\|_2^4 \leq c_1 \max_i\|\xi_i\|_2^2 \times \|\xi\|_2^2 \nonumber\\
    & = c_1 \frac{\max_i\left\|(\Sigma_{\theta^*}^i)^{1/2} (\Sigma_b^i)^{-1/2} \hat w^i_j \right\|_2^2}{\left\|\Sigma_{\theta^*}^{1/2} \Sigma_b^{-1/2} \hat w_j\right\|_2^2}. \label{e.thm2.1}
\end{align}
Based on Lemma \ref{lemma:b.3}.\ref{lemma:b.3.6}, under Assumption \ref{as.B}.\ref{as.B.3} and Assumption~\ref{as.B.4}, with probability at least $1-c_1\exp\{-cmnq^{-\qm}\}-c_2\exp\{-c\log(n)\}-c_3\exp\{-cn\}-c_4\exp\{-c\log(p)\}-c_5\exp\{-cn(m/q)^{\qm}\} - c_6\exp\{-cmn\}$, we have $ \eqref{e.thm2.1} = o_p(\log^{-2}(n))$, and thus the third condition of the Lyapunov Central Limit Theorem is verified.

Notice that here we use a different approach from that in \cite{li2021inference} to show $\eqref{e.thm2.1} \ll 1$, and this difference is critical. \cite{li2021inference} directly assumes $\sigma_{\min}\left( \left(\Sigma_b^{i}\right)^{-1/2} \Sigma^{i}_{\theta} \left(\Sigma_b^{i}\right)^{-1/2}\right) \tr\left((\Sigma_a)^{-1} \right)\gg m\log(n)$ in order to bound $\eqref{e.thm2.1}$. However, under our settings, based on Lemma~\ref{lemma:A.1}, we can show that this assumption implies $n \gg p\log(n)$, which greatly restricts how fast $p$ can grow relative to $n$. Therefore, in our proof, we explicitly discuss the rate of the terms related to this assumption in Lemma~\ref{lemma:b.3}.\ref{lemma:b.3.6}. By imposing the structural assumptions on the matrix $\Psi$ in Assumption~\ref{as.A.add}, we establish the asymptotic normality of $\hat\beta^{(db)}_{j,k}$ with relatively mild sample size assumption. In particular, we allow $p$ to grow faster than $n$ and $m$.

Finally we show the unconditional asymptotic normality of the term \eqref{thm2:r0}:
Based on \cite{zahl1966bounds}, suppose we have a sequence of independent random variables $U_l$, $l=1, \dots, n, \dots$, with $\E(U_l)=\mu_l$, $\E(|U_l-\mu_l|^2)=\sigma^2_l$, $\E(|U_l-\mu_l|^3)=b_l$, $F_n(t)$ being the cumulative density function of the variable 
\begin{equation*}
    \frac{\sum_{l=1}^n(U_l-\mu_l)}{\sqrt{\sum_{l=1}^n \sigma_l^2}},
\end{equation*}
and $\Phi_0(t)$ being the cumulative density function of the standard normal distribution. Then,
\begin{align*}
    & \sup_t|F_n(t)-\Phi_0(t)| \leq c_1  \log(n) \frac{\sum_{l=1}^n b_l}{\sum_{l=1}^n \sigma_l^2\sqrt{\sum_{l=1}^n \sigma_l^2}}.
\end{align*}
In our case, for $l=1, \dots, n$, $U_l = \xi^\top_l\tilde \epsilon_l$, and we have $\mu_l=0$, $\sum_{l=1}^{n}\sigma_l^2=1$. Then we have: 
\begin{align*}
    & \sum_{l=1}^{n} \mathbb{E}(|U_l-\mu_l|^3|X)  \leq \sum_{l=1}^{n} \sqrt{\mathbb{E}(U_l^2|X) \mathbb{E}(U_l^4|X)} \quad \text{(Hoeffding's inequality)} \\
    & \leq \sqrt{\sum_{l=1}^{n}\mathbb{E}(U_l^2|X) \sum_{l=1}^{n}\mathbb{E}(U_l^4|X)} \quad \text{(Cauchy's inequality)} \\
    &  \leq \sqrt{\sum_{l=1}^{n}\|\xi_l\|_2^4}.
\end{align*}
As shown previously in \eqref{e.thm2.1}, under under Assumption \ref{as.B}.\ref{as.B.3} and Assumption~\ref{as.B.4}, we have $\sqrt{\sum_{l=1}^{n}\|\xi_l\|_2^4} = o_p\left(\log^{-1}(n)\right)$ for any $X$. Thus $\sup_t|F_n(t)-\Phi_0(t)| = o_p(1)$, indicating the unconditional asymptotic normality of $\xi^\top \tilde \epsilon$, i.e., the unconditional asymptotic normality of term \eqref{thm2:r0}.
\end{proof}

\subsection{Proof of related lemmas for Theorem \ref{thm:S2}}

\begin{proof}[Proof of Lemma \ref{lemma:sigma-theta}]
$\left.\right.$

First note that $\sigma_{\min}(\Psi)\sigma_{\min}\left( Z^i (Z^i)^\top \right) + \sigma_{\min}(R^i) \leq \sigma\left(\Sigma_{\theta}^i\right) \leq \|\Psi\|_2\sigma_{\max}\left( Z^i (Z^i)^\top \right) + \|R^i\|_2$. Using Lemma~\ref{lemma:A.1}.\ref{lemma:A.1(4)}, with probability at least $1-2\exp\{-cm\vee q\}$, we have
\begin{align*}
    &\sigma_{\min}\left( Z^i (Z^i)^\top \right) \geq \begin{cases}
    c_1 q &, \text{ if }q>c_0m, \\
    0 &, \text{ if } m>c_0q, 
    \end{cases}\\
    &\sigma_{\max}\left( Z^i (Z^i)^\top \right) \leq c_2 m \vee q.
\end{align*}
Then under Assumption~\ref{as.A} and Assumption~\ref{as.A.add}.\ref{as.A.3.2}, we have
\begin{align*}
\begin{cases}
c_3 \leq \sigma\left(\Sigma_{\theta}^i\right) \leq c_4m &, m>c_0q,\\
\sigma\left(\Sigma_{\theta}^i\right) \asymp q &, q>c_0m.
\end{cases}
\end{align*}
Under Assumption~\ref{as.A.add}.\ref{as.A.3}, note that we have $ \min(\psi_{S_\psi}) Z^i_{S_\psi} (Z^i_{S_\psi})^\top \preceq Z^i\Psi (Z^i)^\top \preceq \max(\psi_{S_\psi}) Z^i_{S_\psi} (Z^i_{S_\psi})^\top$ with $s_\psi < c_5m$. Thus, using Lemma~\ref{lemma:A.1}.\ref{lemma:A.1(4)}, with probability at least $1-2\exp\{-cm\}$, we have
\begin{align*}
    \sigma_{\max}\left( Z^i (Z^i)^\top \right) \leq c_6 m.
\end{align*}
Then under Assumption~\ref{as.A} and Assumption~\ref{as.A.add}.\ref{as.A.3}, for both cases of $q>c_0m$ and $m>c_0q$ we get
\begin{align*}
c_7 \leq \sigma\left(\Sigma_{\theta}^i\right) \leq c_8m.
\end{align*}

\end{proof}

\begin{proof}[Proof of Lemma \ref{lemma:b.3}]
$\left.\right.$

\textit{Lemma \ref{lemma:b.3}.\ref{lemma:b.3.2})}
{
$\left.\right.$

Recalling the definition of $z_j^*$ in \eqref{def:zj*}, we can rewrite the expression of $z_j^*$ as
\begin{align*}
z_j^* =  \frac{1}{\tr(\Sigma_b^{-1})} \max_{l \neq j, 1 \leq l \leq p}\left | X^\top_{l} \Sigma_b^{-1}(X_j-X_{-j}\kappa_j^*)\right |.
\end{align*}
Under Assumption \ref{as.B}.\ref{as.B.1} and using the tail bound of sub-Gaussian random variables, we can get
\begin{align}
    \P \left(z_j^* > t \mid X_{-j} \right) \leq 2 \exp \left\{\log(p) - c_1 \frac{t^2 {\tr}^2\left(\Sigma_b^{-1}\right)}{\|G_j\|_2\max_{l\neq j, 1\leq l \leq p} \|X^\top_l \Sigma_b^{-1} \|^2_2 }\right\}. \label{e.b.3.2.1}
\end{align}
For the term $\max_{l\neq j} \|X^\top_l \Sigma_b^{-1} \|^2_2$:
\begin{enumerate}
    \item When $l \leq q$, $l\neq j$: $X^i_l = Z^i_l$. Then by Lemma \ref{lemma:A.1}.\ref{lemma:A.1(9)},
    \begin{align*}
       \max_{l\neq j} \|X^\top_l \Sigma_b^{-1} \|^2_2 &\leq \sigma_{\max}\left(\sum_{i=1}^n \left(Z_{-j}^i \right)^\top \left(\Sigma_b^i\right)^{-2} Z_{-j}^i\right) \max_l\|e_l\|_2^2 \\
       & \leq \begin{cases}
       c_1 \frac{n}{q}\left(1 \wedge \frac{m\log(n)}{q}\right)&, q>c_0m,\\
       c_1\frac{n}{m}&, m>c_0q,
       \end{cases}
    \end{align*}
    with probability at least $1-2\exp\{-cn\} - 2\exp\{-c\log(n)\} - \exp\left\{-cn(m/q)^{\qm}\right\}$.
     
     \item When $l>q$, $l\neq j$: Since there exist a column $X_l$ with $l>q$, it implies that $p>q$. Then based on Lemma \ref{lemma:A.1}.\ref{lemma:A.1(6.3)} and Lemma \ref{lemma:A.3}.\ref{lemma:A.3.1}, we have
     \begin{align*}
         \max_{l\neq j} \|X^\top_l \Sigma_b^{-1} \|^2_2 &\leq \max_l\|e_l\|_2^2 \sigma_{\max}\left(X_{-j}^\top \Sigma_b^{-1} X_{-j}\right) \sigma_{\max}\left(\Sigma_b^{-1}\right) \\
         & \leq \begin{cases}
         c_2 \frac{mn}{q} \left( \frac{\log(n)}{q} \wedge 1\right) &, q>c_0m, \\
         c_2n(\log(n) \wedge m) &, m>c_0q,
         \end{cases}
     \end{align*}
     with probability at least $1-4\exp\{-cn\} - 4\exp\{-c\log(n)\} - \exp\{-cn(m/q)^{\qm}\} - 2\exp\{-cnmq^{-\qm}\}$.
\end{enumerate}
Combining the above two cases for $l \leq q$ and $l>q$, we obtain the following bounds:
\begin{align}
    \max_{l\neq j} \|X^\top_l \Sigma_b^{-1} \|^2_2 & \leq \begin{cases}
    c_1 \frac{n}{q}\left(1 \wedge \frac{m\log(n)}{q}\right)&, q>c_0m \text{ and } p=q,\\
       c_1\frac{n}{m}&, m>c_0q \text{ and } p=q, \\
        c_1 \frac{mn}{q} \left( \frac{\log(n)}{q} \wedge 1\right) &, q>c_0m \text{ and } p>q, \\
         c_1n(\log(n) \wedge m) &, m>c_0q \text{ and } p>q.
    \end{cases} \label{b.3.2.1.1}
\end{align}
Plugging \eqref{b.3.2.1.1} into \eqref{e.b.3.2.1} and taking $t = c_2  \sqrt{\log(p)\max_{l\neq j} \|X^\top_j \Sigma_b^{-1} \|^2_2 \|G_j\|_2}/{\tr(\Sigma_b^{-1})}$, we obtain the stated bounds for $z_j^*$ with probability at least $1-2\exp\{-cmnq^{-\qm}\}-4\exp\{-c\log(n)\}-4\exp\{-cn\}-2\exp\{-c\log(p)\}-\exp\{-cn(m/q)^{\qm}\}$.
} 
\\

\textit{Lemma \ref{lemma:b.3}.\ref{lemma:b.2})}
{
$\left.\right.$

Let $\hat v = \hat\kappa_j - \kappa_j^*$. Then based on the definition of $\hat\kappa_j$, we have 
\begin{align*}
    \frac{1}{2\tr\left(\Sigma_b^{-1}\right)} \left\|\Sigma_b^{-1/2} \left(X_j - X_{-j}\hat\kappa_j \right)\right\|^2_2 + \lambda_j \|\hat\kappa_j\|_1 \leq \frac{1}{2\tr\left(\Sigma_b^{-1}\right)} \left\|\Sigma_b^{-1/2} \left(X_j - X_{-j}\kappa^*_j \right)\right\|^2_2 + \lambda_j \|\kappa^*_j\|_1.
\end{align*}
Rearranging the terms, we get
\begin{align*}
    0 \leq \frac{1}{2\tr\left(\Sigma_b^{-1}\right)} \left\|\Sigma_b^{-1/2} X_{-j}\hat v \right\|^2_2 \leq (z_j^* +\lambda_j) \|\hat v_{H_j}\|_1 + (z_j^* -\lambda_j) \|\hat v_{H_j^c}\|_1 
\end{align*}
where $z_j^*$ is defined in \eqref{def:zj*}.
Based on Lemma \ref{lemma:A.3}.\ref{lemma:A.3.1}, we have $\left\|\Sigma_b^{-1/2} X_{-j}\hat v \right\|^2_2 \geq \|\hat v\|_2^2 \sigma_{\min}(X^\top_{-j} \Sigma_b^{-1} X_{-j}) \geq c_1 \|\hat v\|_2^2 n (m/q)^{\qm}$. Lemma \ref{lemma:b.3}.\ref{lemma:b.3.2} gives the upper bound for $z_j^*$. Then following the same arguments in the proof of Theorem \ref{thm:S1}, we can obtain the stated bounds with probability at least $1-c_1\exp\{-cn(m/q)^{\qm}\} - c_2\exp\{-cn\} - c_3\exp\{-c\log(n)\} - c_4 \exp\{-cmnq^{-\qm}\} - c_5 \exp\{-c\log(p)\}$.
}
\\

\textit{Lemma \ref{lemma:b.3}.\ref{lemma:b.3.1})}
{
$\left.\right.$

Let $\hat v = \hat\kappa_j  - \kappa_j^*$. Then we have:
\begin{align*}
    \max_{l\neq j}\left| X_{l}^\top \Sigma_b^{-1} X_{-j} \hat v\right| & \leq \max_{l\neq j} \|\Sigma_b^{-1/2} X_l\|_2 \|\Sigma_b^{-1/2} X_{-j} \hat v\|_2.
\end{align*}
We can first write
\begin{align*}
    \max_{l\neq j}\|\Sigma_b^{-1/2} X_l\|^2_2 & \leq \min \left( \max_{l\neq j} \|X_l\|_2^2 \sigma_{\max}\left(\Sigma_b^{-1}\right), \max_{l\neq j} \|e_l\|_2^2 \sigma_{\max}\left(X_{-j}^\top \Sigma_b^{-1}X_{-j}\right)\right) \\
    & \leq \begin{cases}
    c_1 \frac{nm}{q}&, q>c_0m,\\
    c_1 n(m \wedge \log(n)) &, m>c_0q \text{ and } q<p ,\\
    c_1 n &, m>c_0q \text{ and } q=p,
    \end{cases}
\end{align*}
where we have used Lemma \ref{lemma:A.1}.\ref{lemma:A.1(6.3)} to bound $\sigma_{\max}\left(\Sigma_b^{-1}\right)$, Lemma \ref{lemma:A.3}.\ref{lemma:A.3.1} to bound $\sigma_{\max}\left(X_{-j}^\top \Sigma_b^{-1}X_{-j}\right)$, and the tail bound for sum of square of independent Gaussian variables to bound $\max_{l\neq j}\|X_l\|_2^2$:
\begin{align*}
    \P\left(\max_l\left|\|X_l\|_2^2 - \sum_{i=1}^n m \left(\Sigma^i_X\right)_{l,l} \right| < c_2nm\right) > 1-2\exp\{ - cmn\} \quad \left(\text{under Assumption~\ref{as.B}.\ref{as.B.1} $\frac{\log(p)}{mn} =o(1)$}\right).
\end{align*} 
Then based on the bounds for $\|\Sigma_b^{-1/2} X_{-j} \hat v\|_2$ in Lemma \ref{lemma:b.3}.\ref{lemma:b.2}, we obtain the states bounds in the lemma, with probability at least $1-c_1\exp\{-cmnq^{-\qm}\}-c_2\exp\{-c\log(n)\}-c_3\exp\{-cn\}-c_4\exp\{-c\log(p)\}-c_5\exp\{-cn(m/q)^{\qm}\} - c_6\exp\{-cmn\}$.
}
\\

\textit{Lemma \ref{lemma:b.3}.\ref{lemma:b.3.3})}
{
$\left.\right.$

Conditioning on $X_{-j}$, we have $X_j - X_{-j}\kappa_j^* \in \SGV(c\|G_j\|_2)$ (Assumption~\ref{as.B}.\ref{as.B.1}). Then based on Corollary 2.8 of \cite{zajkowski2020bounds}, we can get:
\begin{align*}
    & \P\left( \left| \left\|w_j\right\|_2^2 - \E\left(\|w_j\|^2_2 \bigm| X_{-j}\right) \right| > t \mid X_{-j} \right) \leq 2\exp\left\{ -c_1\min\left( \frac{t^2}{\|\Sigma_b^{-1}\|_F^2 \|G_j\|_2^2},\frac{t}{\|\Sigma_b^{-1}\|_F \|G_j\|_2}\right) \right\}. 
\end{align*}
Here based on Lemma \ref{lemma:A.1}.\ref{lemma:A.1(6.3)} and Lemma \ref{lemma:A.1}.\ref{lemma:A.1(6.5)}, we have 
\begin{align*}
    & \E\left(\|w_j\|^2_2 \mid X_{-j}\right) =  \sum_{i=1}^n \tr\left((\Sigma_b^i)^{-1}G_j\right) \in [c_1 \sigma_{\min}(G_j) nm q^{-\qm}, c_2\sigma_{\max}(G_j) nm q^{-\qm}],\\
    & \|\Sigma_b^{-1}\|_F^2 = {\tr}^2(\Sigma_b^{-1}) \leq \|\Sigma_b^{-1}\|_2\tr(\Sigma_b^{-1}) \leq c_3 nm q^{-\qm}\left(1 \wedge \frac{\log(n)}{m \vee q}\right).
\end{align*}
Thus, under Assumption \ref{as.B}.\ref{as.B.2}, we have $\E\left(\|w_j\|^2_2 \mid X_{-j}\right) \geq c_1\log(n)\|\Sigma_b^{-1}\|_F \|G_j\|_2$ for suitably small $c_1>0$. Therefore, we get $ c_4 \sigma_{\min}(G_j) nm q^{-\qm} \leq \|w_j\|_2^2 \leq c_5\sigma_{\max}(G_j) nm q^{-\qm}$ with probability at least $1-4\exp\{-cn\}-2\exp\{-c\log(n)\}$.

For $w_j^i$, we similarly condition on $X_{-j}$ and use Corollary 2.8 of \cite{zajkowski2020bounds} to get
\begin{align}
        & \P\left(\forall \ i, \left| \left\|w^i_j\right\|_2^2 - \E\left(\left\|w^i_j\right\|^2_2 \mid X_{-j}\right) \right| > t \bigm| X_{-j}\right) \nonumber \\
    & \quad \quad  \leq 2\exp\left\{\log(n) -c_1\min\left( \frac{t^2}{\max_i\|(\Sigma_b^i)^{-1}\|_F^2 \|G_j\|_2^2},\frac{t}{\max_i\|(\Sigma_b^i)^{-1}\|_F \|G_j\|_2}\right) \right\}. \label{e.b.3.3.1}
\end{align}
Here, based on Lemma \ref{lemma:A.1}.\ref{lemma:A.1(6.3)} and Lemma \ref{lemma:A.1}.\ref{lemma:A.1(6.5)}, we have 
\begin{align*}
    & \E\left(\left\|w^i_j\right\|^2_2 \mid X_{-j}\right) = \tr\left((\Sigma_b^i)^{-1}G_j\right) \in  \begin{cases}
    c_1 m\left(1 \wedge \frac{\log(n)}{q}\right)\left[\sigma_{\min}(G_j), \sigma_{\max}(G_j)\right] &, q>c_0 m,\\
    c_1\left(m + q\left(1 \wedge \frac{\log(n)}{m}\right)\right)\left[\sigma_{\min}(G_j), \sigma_{\max}(G_j)\right] &, m>c_0q,
    \end{cases} \\
    & \|(\Sigma_b^i)^{-1}\|_F^2 = {\tr}^2((\Sigma_b^i)^{-1}) \leq \|(\Sigma_b^i)^{-1}\|_2\tr((\Sigma_b^i)^{-1})  \leq \begin{cases}
    c_2m\left(1 \wedge \frac{\log(n)}{q}\right)\left(1 \wedge \frac{\log(n)}{q}\right) &, q>c_0 m,\\
    c_2\left(m + q\left(1 \wedge \frac{\log(n)}{m}\right)\right)\left(1 \wedge \frac{\log(n)}{m}\right) &, m>c_0q,
    \end{cases}
\end{align*}


Then taking $t=c_3\log(n)\max_i\|(\Sigma_b^i)^{-1}\|_F \|G_j\|_2$ in \eqref{e.b.3.3.1}, we obtain
\begin{align*}
\max_i\|w^i_j\|_2^2 \leq \begin{cases}
c_1(m + \sqrt{m}\log(n)) \left( 1\wedge \frac{\log(n)}{q}\right) \|G_j\|_2 &, q>c_0m ,\\
c_1 \left( m + \sqrt{m \left( 1\wedge \frac{\log(n)}{m}\right)}\log(n)\right) \|G_j\|_2 &, m>c_0q
\end{cases}
\end{align*}
with probability at least $1-6\exp\{-c\log(n)\}$.
}
\\

\textit{Lemma \ref{lemma:b.3}.\ref{lemma:b.3.4})}
{
$\left.\right.$

Define $\hat v = \hat\kappa_j - \kappa_j^*$. Then $\hat w_j -w_j = -\Sigma_b^{-1/2}X_{-j}\hat v$, and by definition of $z_j^*$ we have $\|X_{-j}^\top \Sigma_b^{-1/2} w_j \|_\infty = z_j^* \tr(\Sigma_b^{-1})$. Thus we can write
\begin{align*}
    |\hat w_j^\top \Sigma_b^{-1/2} X_j|& =  |\hat w_j^\top \Sigma_b^{-1/2} (X_j - X_{-j}\kappa_j^* + X_{-j}\kappa_j^*| \\
    & \geq |\hat w_j^\top w_j| - |\hat w_j^\top \Sigma_b^{-1/2}  X_{\-j} \kappa_j^*| \\
    & \geq |w_j^\top w_j| - |(\hat w_j-w_j)^\top w_j| - |w_j^\top \Sigma_b^{-1/2}  X_{-j} \kappa_j^*| - |(\hat w_j-w_j)^\top \Sigma_b^{-1/2}  X_{-j} \kappa_j^*| \\
    &= \|w_j\|_2^2 - \|w_j^\top \Sigma_b^{-1/2} X_{-j} \hat v| -  \|w_j^\top \Sigma_b^{-1/2} X_{-j} \kappa_j^*| - \|\hat v^\top X_{-j}^\top \Sigma_b^{-1} X_{-j} \kappa_j^*| \\
    & \geq \|w_j\|_2^2 - z_j^* \tr\left(\Sigma_b^{-1}\right) \left( \|\hat v\|_1 + \|\kappa_j^*\|_1\right) - \|X_{-j}^\top \Sigma_b^{-1} X_{-j} \hat v \|_\infty \|\kappa_j^*\|_1.
\end{align*}


Then plugging in the bound for $\|\kappa_j^*\|$ from Assumption \ref{as.B}.\ref{as.B.1}, the bound for $z_j^*$ from Lemma \ref{lemma:b.3}.\ref{lemma:b.3.2}, the bound for $\|X_{-j}^\top \Sigma_b^{-1} X_{-j} \hat v \|_\infty$ from Lemma \ref{lemma:b.3}.\ref{lemma:b.3.1}, the bound for $\|w_j\|_2^2$ from Lemma \ref{lemma:b.3}.\ref{lemma:b.3.3}, the bound for $\|\hat v\|_1$ from Lemma \ref{lemma:b.3}.\ref{lemma:b.2}, and the bound for $\tr\left(\Sigma_b^{-1}\right)$ from Lemma \ref{lemma:A.1}.\ref{lemma:A.1(6.5)}, under Assumption \ref{as.B}:\ref{as.B.3}, we obtain 
\begin{align*}
    |\hat w_j^\top \Sigma_b^{-1/2} X_j| \geq c_1\sigma_{\min}(G_j) nmq^{-\qm}
\end{align*}
with probability at least $1-c_1\exp\{-cmnq^{-\qm}\}-c_2\exp\{-c\log(n)\}-c_3\exp\{-cn\}-c_4\exp\{-c\log(p)\}-c_5\exp\{-cn(m/q)^{\qm}\} - c_6\exp\{-cmn\}$.
}
\\

\textit{Lemma \ref{lemma:b.3}.\ref{lemma:b.3.7})}
{
$\left.\right.$

Define $\hat v = \hat\kappa_j - \kappa_j^*$. Then based on Lemma \ref{lemma:b.3}.\ref{lemma:b.2} and Lemma \ref{lemma:b.3}.\ref{lemma:b.3.3}, and under Assumption \ref{as.B}.\ref{as.B.3}, with probability at least $1-c_1\exp\{-cmnq^{-\qm}\}-c_2\exp\{-c\log(n)\}-c_3\exp\{-cn\}-c_4\exp\{-c\log(p)\}-c_5\exp\{-cn(m/q)^{\qm}\} - c_6\exp\{-cmn\}$ we have 
\begin{align}
    \|\hat w_j\|_2^2 & = \|w_j - \Sigma_b^{-1/2} X_{-j} \hat v\|_2^2  \geq \|w_j\|_2^2 - \|\Sigma_b^{-1/2} X_{-j} \hat v\|_2^2 \geq c_1 nmq^{-\qm} \sigma_{\min}(G_j), \label{b.3.7.1}\\
    \|\hat w_j\|_2^2 & \leq \|w_j\|_2^2 + \|\Sigma_b^{-1/2} X_{-j} \hat v\|_2^2\leq c_2 nmq^{-\qm} \sigma_{\max}(G_j) \label{b.3.7.2}.
\end{align}

For $\max_i\|\hat w_j^i\|_2^2$, we follow the similar arguments as \eqref{b.3.7.1} and \eqref{b.3.7.2} and have:
\begin{align*}
    \max_i\|\hat w_j^i\|_2^2 \leq \max_i\|w_j^i\|_2^2 + \|\hat v\|_2^2 \max_i \sigma_{\max}\left( \left(X^i_{-j}\right)^\top (\Sigma_b^i)^{-1} X^i_{-j} \right).
\end{align*}
Then, based on Lemma \ref{lemma:A.3}.\ref{lemma:A.3.2}, Lemma \ref{lemma:b.3}.\ref{lemma:b.2}, Lemma \ref{lemma:b.3}.\ref{lemma:b.3.3}, and under Assumption \ref{as.B}.\ref{as.B.3}, we have 
\begin{align*}
    \|\hat v\|_2^2 \max_i \sigma_{\max}\left( \left(X^i_{-j}\right)^\top (\Sigma_b^i)^{-1} X^i_{-j} \right) \ll \max_i\|w_j^i\|_2^2,
\end{align*}
and thus
\begin{align*}
        \max_i \|\hat w_j^i\|_2^2 \leq \begin{cases}
        c_1 (m+\sqrt{m}\log(n)) \left( 1 \wedge \frac{\log(n)}{q}\right) \|G_j\|_2 &, q>c_0m,\\
        c_1 \left(m + \log(n) \sqrt{m \wedge \log(n)}\right) \|G_j\|_2 &, m > c_0q,
        \end{cases}
    \end{align*}
    with probability at least $1-c_1\exp\{-cmnq^{-\qm}\}-c_2\exp\{-c\log(n)\}-c_3\exp\{-cn\}-c_4\exp\{-c\log(p)\}-c_5\exp\{-cn(m/q)^{\qm}\} - c_6\exp\{-cmn\}$.
}
\\

\textit{Lemma \ref{lemma:b.3}.\ref{lemma:b.3.6})}
{
$\left.\right.$

\begin{enumerate}
    \item For \ref{lemma:b.3}.\ref{lemma:b.3.6.c1}:
    Recall that $\Sigma_b^i = a Z^i_{-j} (Z^i_{-j})^\top + I_m$. We can show that with probability at least $1-c_1\log(n)$, we have 
    \begin{align*}
        \sigma_{\max}\left( \Sigma_{\theta^*}^i \left(\Sigma_b^i\right)^{-1}\right) & \leq \|\Psi\|_2 \sigma_{\max}\left((Z^i)^\top (\Sigma_b^i)^{-1} Z^i\right) + \|R^i\|_2 \sigma_{\max}\left((\Sigma_b^i)^{-1}\right) \\
        & \leq \begin{cases}
        \|\Psi\|_2 \sigma_{\max}\left((Z^i)^\top (\Sigma_a^i)^{-1} Z^i\right) + \|R^i\|_2 \sigma_{\max}\left((\Sigma_a^i)^{-1}\right) &, \text{if }j > q\\
        \|\Psi\|_2 \sigma_{\max}\left((Z_{-j}^i)^\top (\Sigma_b^i)^{-1} Z_{-j}^i\right) + \|R^i\|_2 \sigma_{\max}\left((\Sigma_b^i)^{-1}\right) \\
        \quad \quad \quad \quad \quad \quad \quad \quad + \|\Psi\|_2 (Z_{j}^i)^\top (\Sigma_b^i)^{-1} Z_{j}^i &, \text{if }j \leq q
        \end{cases} \\
         & \leq \begin{cases}
        c_2 &, \text{if }j > q\\
        c_2 +c_3 \|Z_{j}^i\|_2^2 \sigma_{\max}\left((\Sigma_b^i)^{-1}\right) &, \text{if }j \leq q
        \end{cases} \\
        & \leq \begin{cases}
       c_2 &, \text{if }j > q,\\
        c_2 + c_4m\log(n)\left(1 \wedge \frac{\log(n)}{m \vee q}\right) &, \text{if }j \leq q.
        \end{cases} \\
        & \leq c_5 \log^2(n)
        \end{align*}
where we have used Lemma \ref{lemma:A.1}.\ref{lemma:A.1(6.3)} to bound $\sigma_{\max}\left((\Sigma_b^i)^{-1}\right)$, Lemma~\ref{lemma:A.3}.\ref{lemma:A.3.2} to bound $\sigma_{\max}\left((Z^i)^\top (\Sigma_a^i)^{-1} Z^i\right)$ and $\sigma_{\max}\left((Z_{-j}^i)^\top (\Sigma_b^i)^{-1} Z_{-j}^i\right)$, with $\|\Psi\|_2 \asymp \|R^i\|_2 \asymp 1$ under Assumption~\ref{as.A}.\ref{as.A.2}, and the following tail bound for sub-Gaussian random variables: 
        \begin{align}
            \P \left(\max_i\left|\|Z_j^i\|_2^2 - m (\Sigma_Z^i)_{j,j} \right| < c_1 m\log(n) \mid \Sigma_Z^i \right) \geq 1-2\exp\{-c\log(n)\}. \label{eq.sum.x.square}
        \end{align}
        
We can also show that, under Assumption \ref{as.A.add}.\ref{as.A.3.2},
    \begin{align*}
        \sigma_{\min}\left( \Sigma_{\theta^*}^i (\Sigma_b^i)^{-1}\right) & \geq  \sigma_{\min}\left( \left(\sigma_{\min}(\Psi)(Z_{-j}^i)^\top  Z_{-j}^i + \sigma_{\min}(R^i) I_m\right)(\Sigma_b^i)^{-1}\right) \\
        & \geq c_1 \min_l \frac{\sigma_l^2(Z_{-j}^i) + c_2}{a\sigma_l^2(Z_{-j}^i) + 1} \\
        & \geq c_3.
    \end{align*}
Then, by Lemma \ref{lemma:b.3}.\ref{lemma:b.3.7}, we can show that
    \begin{align*}
        \hat w_j^\top \Sigma_b^{-1/2} \Sigma_{\theta^*} \Sigma_b^{-1/2} \hat w_j & \geq \|\hat w_j \|_2^2 \sigma_{\min}\left( \Sigma_{\theta^*}^i \left(\Sigma_b^i\right)^{-1}\right) \\
        & \geq c_1 nm q^{-\qm} \sigma_{\min}(G_j)\\
        \max_i\|(\Sigma_{\theta^*}^i)^{1/2}(\Sigma_{b}^i)^{-1/2} \hat w^i_j\|^2_2 
        & \leq \max_i\sigma_{\max}\left( \Sigma_{\theta^*}^i (\Sigma_b^i)^{-1}\right) \max_i \|\hat w_j^i \|_2^2 \\
        & \leq \begin{cases}
            c_1 (m +\sqrt{m}\log(n)) \left(1 \wedge \frac{\log(n)}{q}\right)\log^2(n) \|G_j\|_2 &, q>c_0m,\\
            c_1 (m + \log(n)\sqrt{m\wedge\log(n)})\log^2(n) \|G_j\|_2 &, m>c_0q,
            \end{cases}
    \end{align*}
     with probability at least $1-c_1\exp\{-cmnq^{-\qm}\}-c_2\exp\{-c\log(n)\}-c_3\exp\{-cn\}-c_4\exp\{-c\log(p)\}-c_5\exp\{-cn(m/q)^{\qm}\} - c_6\exp\{-cmn\}$.
    
    \item For \ref{lemma:b.3}.\ref{lemma:b.3.6.c2}:
    First note that based on the arguments in Lemma \ref{lemma:A.2}.\ref{lemma:A.2.1}: $\sigma_{\max}(Z^i_{S_\psi} (Z^i_{S_\psi})^\top) \leq c_1 m \log(n)$ with probability at least $1-2\exp\{-c\log(n)\}$. So we have
    \begin{align*}
    & \sigma_{\max}( (Z_{S_\psi}^i)^\top (\Sigma_b^i)^{-1} Z^i_{S_\psi}) \leq \sigma_{\max}(Z^i_{S_\psi} (Z^i_{S_\psi})^\top) \sigma_{\max}((\Sigma_b^i)^{-1}) \leq c_1\frac{m\log^2(n)}{m \vee q} \\
    & \sigma_{\max}\left( \Sigma_{\theta^*}^i (\Sigma_b^i)^{-1}\right) \leq \|\Psi\|_2\sigma_{\max}( (Z_{S_\psi}^i)^\top (\Sigma_b^i)^{-1} Z^i_{S_\psi}) + c_2\sigma_{\max}( (Z_{S_\psi}^i)^\top Z^i_{S_\psi}) \leq c_3\frac{m\log^2(n)}{m \vee q}.
    \end{align*}
    Then, based on Lemma \ref{lemma:b.3}.\ref{lemma:b.3.4}, we can show that
    \begin{align*}
        \max_i\|(\Sigma_{\theta^*}^i)^{1/2}(\Sigma_{b}^i)^{-1/2} \hat w^i_j\|^2_2 
        & \leq \max_i \sigma_{\max}\left( \Sigma_{\theta^*}^i (\Sigma_b^i)^{-1}\right) \max_i \|\hat w_j^i \|_2^2 \\
        & \leq \begin{cases}
            c_1 (m +\sqrt{m}\log(n)) \left(1 \wedge \frac{\log(n)}{q}\right)\log^2(n) \frac{m}{q} \|G_j\|_2 &, q>c_0m,\\
            c_1 (m + \log(n)\sqrt{m\wedge\log(n)})\log^2(n) \|G_j\|_2 &, m>c_0q.
            \end{cases}
    \end{align*}
    We can also write
    \begin{align*}
        \hat w_j^\top \Sigma_b^{-1/2} \Sigma_{\theta^*} \Sigma_b^{-1/2} \hat w_j &= \sum_{i=1}^n (\hat w_j^i)^\top (\Sigma_b^i)^{-1/2} \left( \sum_{l \in S_\psi} \psi_l Z^i_l (Z^i_l)^\top + R^i \right) (\Sigma_b^i)^{-1/2} \hat w_j^i\\
        & \geq c_1 \sum_{i=1}^n \left| (Z^i_l)^\top(\Sigma_b^i)^{-1/2} \hat w_j^i \right|^2 \\
        & \geq \frac{c_1}{n} \left|\sum_{i=1}^n (Z^i_l)^\top(\Sigma_b^i)^{-1/2} \hat w_j^i \right|^2 \quad \text{(Cauchy's inequality)}\\
        & \geq c_1 nm^2 q^{-2\times \qm} \sigma_{\min}^2(G_j). \quad \text{(Lemma~\ref{lemma:b.3}.\ref{lemma:b.3.4})}
    \end{align*}

    \item For \ref{lemma:b.3}.\ref{lemma:b.3.6.c3}:
    
    Using Lemma \ref{lemma:b.3}.\ref{lemma:b.3.7} and Lemma \ref{lemma:A.1}.\ref{lemma:A.1(6.4)}, we can show that 
    \begin{align*}
        \hat w_j^\top \Sigma_b^{-1/2} \Sigma_{\theta^*} \Sigma_b^{-1/2} \hat w_j & \geq  \sum_{i=1}^n (\hat w_j^i)^\top (\Sigma_b^i)^{-1/2}  R^i  (\Sigma_b^i)^{-1/2} \hat w_j^i\\
        & \geq c_1 \|\hat w_j\|_2^2 \sigma_{\min}\left(\Sigma_b^{-1}\right) \\
        & \geq c_1 nm q^{-\qm} \sigma_{\min}(G_j) \frac{1}{\log(n) + m \vee q}.
    \end{align*}
    Moreover, denoting $h_l^i = (\Sigma_b^i)^{-1}Z^i_l$, we can also show that 
    \begin{align}
         \max_i \| (\Sigma_{\theta^*}^i)^{1/2} (\Sigma_{b}^i)^{-1/2} \hat w_j^i\|_2^2 
         & \leq c_2 \max_i\|R^i\|_2 (\hat w_j^i)^\top (\Sigma_b^i)^{-1} \hat w_j^i + c_2 s_\psi \max_i \max_{l \in S_\psi} \left| (Z^i_l)^\top (\Sigma_b^i)^{-1/2} \hat w_j^i \right|^2 \nonumber \\
         & \leq c_3 \max_i (\hat w_j^i)^\top (\Sigma_b^i)^{-1} \hat w_j^i \label{e.b.3.6.0}\\
        &  \quad \quad + c_3s_\psi\max_i \max_{l \in S_\psi} \left| (h_l^i)^\top (X^i_j - X^i_{-j}\kappa_j^*) \right|^2  \label{e.b.3.6.1} \\ 
        & \quad \quad + c_3 s_\psi\max_i \max_{l \in S_\psi} \left|(h_l^i)^\top  X^i_{-j} \hat v \right|^2. \label{e.b.3.6.2} 
    \end{align}

To bound the term \eqref{e.b.3.6.0}: Based on Lemma \ref{lemma:A.1}.\ref{lemma:A.1(6.3)} and Lemma \ref{lemma:b.3}.\ref{lemma:b.3.7}, we can get
\begin{align*}
    \max_i (\hat w_j^i)^\top (\Sigma_b^i)^{-1} \hat w_j^i & \leq \max_i \sigma_{\max}((\Sigma_b^i)^{-1}) \|\hat w_j^i \|_2^2  \\
    & \leq \begin{cases}
    c_1 \left( 1\wedge \frac{\log(n)}{q}\right)^2 (m + \sqrt{m}\log(n)) \|G_j\|_2 &, q>c_0m, \\
    c_1 \left(1\wedge \frac{\log(n)}{m}\right) (m + \log(n)\sqrt{m\wedge \log(n)}) \|G_j\|_2 &, m>c_0q,
    \end{cases}
\end{align*}
with probability at least $1-c_1\exp\{-cmnq^{-\qm}\}-c_2\exp\{-c\log(n)\}-c_3\exp\{-cn\}-c_4\exp\{-c\log(p)\}-c_5\exp\{-cn(m/q)^{\qm}\} - c_6\exp\{-cmn\}$.

To bound the second term \eqref{e.b.3.6.1}: Conditioning on $X_{-j}$, under Assumption \ref{as.B}.\ref{as.B.1}, we have
\begin{align*}
   & (h_l^i)^\top (X_j^i -X_{-j}^i\kappa_j^*) \mid X_{-j} \in \SG(c\|h_l^i\|_2^2\|G_j\|_2)\\
    & \left|(h_l^i)^\top (X_j^i - X^i_{-j}\kappa^*_j) \right|^2 - (h_l^i)^\top G_j h^i_l \in \SE \left(c_1 \|h_l^i\|_2^4\|G_j\|^2_2, c_2\|h_l^i\|_2^2\|G_j\|_2\right).
\end{align*}
Then we have the following bound
\begin{align*}
    & \P\left( \max_i\max_{l \in S_\psi} \left|(h_l^i)^\top (X_j^i - X^i_{-j}\kappa^*_j) \right|^2 \leq t+  \max_i\max_{l \in S_\psi}\|G_j\|_2 \|h_l^i\|_2^2 \mid X_{-j}\right) \\
    & \quad \geq 1-2\exp\left\{\log(n) + \log(s_\psi) - c_1 \min \left( \frac{t^2}{\max_i\max_{  l \in S_\psi} \|G_j\|_2^2 \|h_l^i\|_2^4}, \frac{t}{\max_i\max_{ l \in S_\psi} \|G_j\|_2\|h_l^i\|_2^2}\right) \right\},
\end{align*}
where by Lemma \ref{lemma:A.1}.\ref{lemma:A.1(6.4)}
\begin{align}
    \max_i\max_{l \in S_\psi}\|h_l^i\|_2^2 & = \max_i\max_{ l \in S_\psi}(Z^i_l)^\top (\Sigma_b^i)^{-2} Z_l^i \nonumber \\
    & \leq \max_i \sigma_{\max}((\Sigma_b^i)^{-1})\sigma_{\max}\left((Z^i_{S_\psi})^\top (\Sigma_b^i)^{-1} Z_{S_\psi}^i\right) \nonumber \\
    & \leq c_1 \left(1\wedge \frac{\log(n)}{m \vee q} \left(\frac{m\log^2(n)}{q}\right)^{\qm} \right). \label{b.3.6.t1}
\end{align}
The last inequality \eqref{b.3.6.t1} holds because $\sigma_{\max}\left((Z^i_{S_\psi})^\top (\Sigma_b^i)^{-1} Z_{S_\psi}^i\right) \leq c_1 m\log^2(n)/(m\vee q)$ (shown in Lemma \ref{lemma:b.3}.\ref{lemma:b.3.6.c2}), and we also have $\sigma_{\max}\left((Z^i_{S_\psi})^\top (\Sigma_b^i)^{-1} Z_{S_\psi}^i\right) \leq \sigma_{\max}\left((Z^i_{-j})^\top (\Sigma_b^i)^{-1} Z_{-j}^i\right) \leq c$ by Lemma \ref{lemma:A.3}.\ref{lemma:A.3.2}.
Thus, we can bound the term \eqref{e.b.3.6.1} by $c_1 s_\psi \log(n) \left(1\wedge \frac{\log(n)}{m \vee q} \right) \left(\frac{m\log^2(n)}{q}\right)^{\qm}\|G_j\|_2$ with probability at least $1-4\exp\{-c\log(n)\}$.

We finally bound the term \eqref{e.b.3.6.2}. To this end, we use Lemma \ref{lemma:A.1}.\ref{lemma:A.1(6.4)}, Lemma \ref{lemma:b.3}.\ref{lemma:b.2}, and the tail bounds for $\|Z^i_{l_1}\|^2_2$, $\|X^i_{l_2}\|_2^2$ based on \eqref{eq.sum.x.square} to show that
\begin{align*}
    \max_i\max_{l \in S_\psi} \left|(h_l^i)^\top  X^i_{-j} \hat v \right|^2 & \leq \max_i\max_{l \in S_\psi} \left\| (Z^i_{l})^\top (\Sigma_b^i )^{-1} X^i_{-j} \right\|_\infty^2  \|\hat v\|_1^2 \\
    & = \max_i\max_{l_1 \in S_\psi, l_2 \in S_\psi} \left| (Z^i_{l_1})^\top (\Sigma_b^i )^{-1} X^i_{l_2} \right|^2  \|\hat v\|_1^2 \\
    & \leq \max_i\max_{l_1 \in S_\psi, l_2 \in S_\psi}  \|Z^i_{l_1}\|^2_2 \sigma_{\max}^2((\Sigma_b^i )^{-1}) \|X^i_{l_2}\|_2^2  \|\hat v\|_1^2\\
    & \leq \begin{cases}
    c_1 |H_j|^2 \frac{m\log(p)\log^5(n)}{nq^2} \|G_j\|_2&, q>c_0m,\\
    c_1 |H_j|^2 \frac{\log(p)\log^4(n)}{mn} \|G_j\|_2  &, m>c_0q \text{ and } q=p,\\
    c_1 |H_j|^2 \frac{\log(p)\log^5(n)}{n}\|G_j\|_2 &, m>c_0q \text{ and } q<p,
    \end{cases}
\end{align*}
with probability at least $1-c_1\exp\{-cmnq^{-\qm}\}-c_2\exp\{-c\log(n)\}-c_3\exp\{-cn\}-c_4\exp\{-c\log(p)\}-c_5\exp\{-cn(m/q)^{\qm}\} - c_6\exp\{-cmn\}$. 

Therefore, under Assumption \ref{as.B}, we get the stated bounds for $\max_i \| (\Sigma_{\theta^*}^i)^{1/2} (\Sigma_{b}^i)^{-1/2} \hat w_j^i\|_2^2$.
\end{enumerate}
}
\end{proof}

\section{Sandwich Estimator for $V_j$}
\label{S:C}
 
\begin{assumption}
 \label{as.C}
 For the three conditions defined in Assumption~\ref{as.B.4}:
 \begin{enumerate}
     \item Under Condition \ref{cond1}:
     \begin{align*}
         \begin{cases}
         \frac{\|G_j\|_2}{\sigma_{\min}(G_j)} \ll \frac{n}{s\log^4(n)\log(p)} \wedge \frac{n}{s^2\log^2(p)}&, q>c_0m \\
         \frac{\|G_j\|_2}{\sigma_{\min}(G_j)} \ll \frac{n}{s^2\log(p)\log^3(n)}&, m>c_0q,\ p=q\\
         \frac{\|G_j\|_2}{\sigma_{\min}(G_j)} \ll \frac{n}{s m \log(p)\log^3(n)} \wedge \frac{n\log(n)}{s^2m^2\log(p)} &, m>c_0q,\ p>q
         \end{cases}
     \end{align*}
     \item Under Condition \ref{cond2}:
          \begin{align*}
         \begin{cases}
         \frac{\|G_j\|_2}{\sigma^2_{\min}(G_j)} \ll \frac{n}{s\log^5(n)\log(p)} \wedge \frac{n}{s^2\log^2(n)\log^2(p)} &, q>c_0m \\
         \frac{\|G_j\|_2}{\sigma^2_{\min}(G_j)} \ll \frac{mn}{s^2\log(p)\log^3(n)}&, m>c_0q,\ p=q\\
         \frac{\|G_j\|_2}{\sigma^2_{\min}(G_j)} \ll \frac{n}{s \log(p)\log^3(n)} \wedge \frac{n\log(n)}{s^2m\log(p)}&, m>c_0q,\ p>q
         \end{cases}
     \end{align*}
     \item Under Condition \ref{cond3}:
          \begin{align*}
         \begin{cases}
         \frac{\|G_j\|_2}{\sigma_{\min}(G_j)} \ll \frac{n}{sm\log^5(n)\log(p)} \wedge \frac{n}{s^2m\log^3(n)\log^2(p)} &, q>c_0m \\
         \frac{\|G_j\|_2}{\sigma_{\min}(G_j)} \ll \frac{n}{s^2m\log(p)\log^4(n)}&, m>c_0q,\ p=q\\
         \frac{\|G_j\|_2}{\sigma_{\min}(G_j)} \ll \frac{n}{s m^2 \log(p)\log^4(n)} \wedge \frac{n}{s^2m^3\log(p)}&, m>c_0q,\ p>q
         \end{cases}
     \end{align*}
 \end{enumerate}
 \end{assumption}

\begin{theorem}
\label{lemma:Vj}
Under Assumption \ref{as.A}, Assumption \ref{as.B}, Assumption \ref{as.B.4} and Assumption \ref{as.C}, with probability at least $1-c_1\exp\{-cn\} -c_2\exp\{-c\log(n)\} - c_3\exp\{-cmnq^{-\qm}\} -c_4\exp\left\{-cn\left(m/q\right)^{\qm} \right\} -c_5\exp\{-c\log(p)\} - c_6\exp\{-cmn\}$ we have
\begin{align*}
    \frac{\hat V_j}{V_j}  = 1 + o_p(1).
\end{align*}
\end{theorem}

\subsection{Proof of Theorem \ref{lemma:Vj}}
\begin{proof}
According to the definitions of $\hat V_j$ and $V_j$, we can write:
\begin{align}
    \frac{\hat V_j}{V_j} & = \frac{\sum_{i=1}^n\left| (\hat w_j^i)^\top (\Sigma_b^i)^{-1/2} \left(y^i - X^i\hat\beta\right) \right|^2}{\hat w_j^\top \Sigma_b^{-1/2} \Sigma_{\theta^*}\Sigma_b^{-1/2} \hat w_j}  \nonumber \\
    & = \frac{\sum_{i=1}^n\left| (\hat w_j^i)^\top (\Sigma_b^i)^{-1/2} \left(y^i - X^i\beta^*\right) \right|^2}{\hat w_j^\top \Sigma_b^{-1/2} \Sigma_{\theta^*}\Sigma_b^{-1/2} \hat w_j} \label{e.vj.e1}\\
    & \quad + \frac{\sum_{i=1}^n\left| (\hat w_j^i)^\top (\Sigma_b^i)^{-1/2} X^i \left(\hat \beta - \beta^*\right) \right|^2}{\hat w_j^\top \Sigma_b^{-1/2} \Sigma_{\theta^*}\Sigma_b^{-1/2} \hat w_j} \label{e.vj.e2}\\
    & \quad + 2\frac{\sum_{i=1}^n (\hat w_j^i)^\top (\Sigma_b^i)^{-1/2} \left(y^i - X^i\beta^*\right) \times (\hat w_j^i)^\top (\Sigma_b^i)^{-1/2} X^i \left(\beta^* - \hat\beta\right)}{\hat w_j^\top \Sigma_b^{-1/2} \Sigma_{\theta^*}\Sigma_b^{-1/2} \hat w_j} \label{e.vj.e3}.
\end{align}

Denote $\xi_i = (\Sigma_{\theta^*}^i)^{1/2}(\Sigma_{b}^i)^{-1/2} \hat w_j^i$, $\delta_i = (\Sigma_{a}^i)^{1/2}(\Sigma_{b}^i)^{-1/2} \hat w_j^i$, $\hat u = \hat \beta - \beta^*$, and  $\tilde \epsilon_i = (\Sigma_{\theta^*}^i)^{-1/2} (y^i - X^i\beta^*)$. 

\begin{enumerate}
    \item For term \eqref{e.vj.e1}: 
    \begin{align*}
        \frac{\sum_{i=1}^n\left| (\hat w_j^i)^\top (\Sigma_b^i)^{-1/2} (y^i - X^i\beta^*) \right|^2}{\hat w_j^\top \Sigma_b^{-1/2} \Sigma_{\theta^*}\Sigma_b^{-1/2} \hat w_j} = \frac{\sum_{i=1}^n (\xi_i^\top \tilde\epsilon_i )^2}{\sum_{i=1}^n \|\xi_i \|_2^2}.
    \end{align*}
    Denote $E_1 := \frac{\sum_{i=1}^n (\xi_i^\top \tilde\epsilon_i )^2}{\sum_{i=1}^n \|\xi_i \|_2^2}$. Then conditioning on $X^i$'s, we have 
    \[
    \sum_{i=1}^n (\xi_i^\top \tilde\epsilon_i )^2 - \sum_{i=1}^n \|\xi_i \|_2^2 \in \SE\left(c_1\sum_{i=1}^n \|\xi_i \|_2^4, c_2\max_i \|\xi_i \|_2^2\right).
    \]
    By the tail bound of sub-exponential random variables, we have
    \begin{align*}
        \P(| E_1 - 1| >t \mid X) \leq 2\exp\left\{ -c_1\min \left(\frac{t^2 \left(\sum_{i=1}^n \|\xi_i \|_2^2\right)^2}{\sum_{i=1}^n \|\xi_i \|_2^4}, \frac{t \sum_{i=1}^n \|\xi_i \|_2^2}{\max_i \|\xi_i \|_2^2} \right)\right\},
    \end{align*}
    where by Lemma \ref{lemma:b.3}.\ref{lemma:b.3.6} and under the conditions of Theorem \ref{thm:S2},
    \begin{align*}
        \frac{\sum_{i=1}^n \|\xi_i \|_2^4}{\left(\sum_{i=1}^n \|\xi_i \|_2^2\right)^2} & \leq \frac{\max_i \|\xi_i \|_2^2}{\sum_{i=1}^n \|\xi_i \|_2^2} = o_p\left(\log^{-2}(n)\right).
    \end{align*}
    Thus, $|E_1-1| < \log^{-1/2}(n)$ with probability at least $1-c_1\exp\{-cmnq^{-\qm}\}-c_2\exp\{-c\log(n)\}-c_3\exp\{-cn\}-c_4\exp\{-c\log(p)\}-c_5\exp\{-cn(m/q)^{\qm}\} - c_6\exp\{-cmn\}$.
    
    \item For term \eqref{e.vj.e2}: Recall that $\Sigma_a = \diag\left(\{\Sigma_a^i\}_{i=1}^n \right)$ and $X$ is obtained by vertically stacking the $X^i$'s. We can write
    \begin{align*}
        & \frac{\sum_{i=1}^n\left| (\hat w_j^i)^\top (\Sigma_b^i)^{-1/2} X^i (\hat \beta - \beta^*) \right|^2}{\hat w_j^\top \Sigma_b^{-1/2} \Sigma_{\theta^*}\Sigma_b^{-1/2} \hat w_j} \\
        & = \frac{\sum_{i=1}^n\left| (\hat w_j^i)^\top (\Sigma_b^i)^{-1/2} (\Sigma_a^i)^{1/2}(\Sigma_a^i)^{-1/2} X^i (\hat \beta - \beta^*) \right|^2}{\hat w_j^\top \Sigma_b^{-1/2} \Sigma_{\theta^*}\Sigma_b^{-1/2} \hat w_j} \\
        & \leq \frac{\max_i \left| (\hat w_j^i)^\top (\Sigma_b^i)^{-1/2} (\Sigma_a^i)^{1/2}(\Sigma_b^i)^{-1/2} \hat w_j^i \right|  \times \left\|\Sigma_a^{-1/2} X (\hat \beta - \beta^*) \right\|_2^2}{\hat w_j^\top \Sigma_b^{-1/2} \Sigma_{\theta^*}\Sigma_b^{-1/2} \hat w_j} \\
        & \leq \frac{\max_i \left\| \hat w_j^i \right\|_2^2  \sigma_{\max}\left( (\Sigma_b^i)^{-1} \Sigma_a^i  \right)  \left\|\Sigma_a^{-1/2} X (\hat \beta - \beta^*) \right\|_2^2}{\hat w_j^\top \Sigma_b^{-1/2} \Sigma_{\theta^*}\Sigma_b^{-1/2} \hat w_j}.
    \end{align*}
    Using \eqref{eq.sum.x.square} and Lemma \ref{lemma:A.1}.\ref{lemma:A.1(6.3)}, we can bound $\sigma_{\max}\left( (\Sigma_b^i)^{-1} \Sigma_a^i \right)$ by 
    \begin{align*}
        \sigma_{\max}\left( (\Sigma_b^i)^{-1} \Sigma_a^i  \right) & = \begin{cases}
        \sigma_{\max}\left( (\Sigma_a^i)^{-1} \Sigma_a^i  \right)  &, \text{ if } j >  q,\\
        \sigma_{\max}\left(I + a Z^i_j(Z^i_j)^\top (\Sigma_b^i)^{-1}\right) &, \text{ if } j \leq q
        \end{cases}\\
        & \leq \begin{cases}
        1 &, \text{ if } j >  q,\\
        1 +  (Z^i_j)^\top (\Sigma_b^i)^{-1}Z^i_j \leq 1 + c_1 m\log(n) \left(1\wedge \frac{\log(n)}{m\vee q}\right) &, \text{ if } j \leq q
        \end{cases}\\
        & \leq c_1\log^2(n).
    \end{align*}
    Then using Lemma \ref{lemma:b.3}.\ref{lemma:b.3.7} to bound $\max_i\|\hat w_j^i\|_2^2$, Lemma \ref{lemma:b.3}.\ref{lemma:b.3.6} to bound $\hat w_j^\top \Sigma_b^{-1/2} \Sigma_{\theta^*}\Sigma_b^{-1/2} \hat w_j$ and Theorem \ref{thm:S1} to bound $\|\Sigma_a^{-1/2} X (\hat \beta - \beta^*)\|_2^2$, under Assumption \ref{as.C}, we obtain
    \begin{align*}
        \frac{\sum_{i=1}^n\left| (\hat w_j^i)^\top (\Sigma_b^i)^{-1/2} X^i (\hat \beta - \beta^*) \right|^2}{\hat w_j^\top \Sigma_b^{-1/2} \Sigma_{\theta^*}\Sigma_b^{-1/2} \hat w_j} = o_p(1).
    \end{align*}

    \item For term \eqref{e.vj.e3}:
    \begin{align}
        &\frac{\sum_{i=1}^n (\hat w_j^i)^\top (\Sigma_b^i)^{-1/2} (y^i - X^i\beta^*) \times (\hat w_j^i)^\top (\Sigma_b^i)^{-1/2} X^i(\beta^* - \hat\beta)}{\hat w_j^\top \Sigma_b^{-1/2} \Sigma_{\theta^*}\Sigma_b^{-1/2} \hat w_j} \nonumber \\ 
        & \leq  \frac{\|\hat u \|_2^2}{\sum_{i=1}^n \|\xi_i\|_2^2} \left\| \sum_{i=1}^n \xi_i^\top \tilde \epsilon_i (X^i)^\top (\Sigma_b^i)^{-1/2} \hat w_j^i \right \|_\infty \nonumber \\
        & \leq  \frac{\|\hat u \|_2^2}{\sum_{i=1}^n \|\xi_i\|_2^2} \max_{1\leq l \leq p} \left |\sum_{i=1}^n (X^i_l)^\top (\Sigma_b^i)^{-1/2} \hat w_j^i \xi_i^\top \tilde \epsilon_i  \right |. \label{vj.e3.1}
    \end{align}
Conditioning on $X^i$'s, $\tilde \epsilon_i \in \SGV(c_1)$, and we use the tail bound of sub-Gaussian random variables to get
\begin{align*}
    \P\left( \max_{1\leq l \leq p} \left| \sum_{i=1}^n (X_l^i)^\top (\Sigma_b^i)^{-1/2} \hat w_j^i \xi_i^\top \tilde \epsilon_i  \right| > t \mid X \right) \leq 2\exp\left\{ \log(p) - c_1 \frac{t^2}{\max_l \sum_{i=1}^n \| (X_l^i)^\top (\Sigma_b^i)^{-1/2} \hat w_j^i  \xi_i \|_2^2} \right\}.
\end{align*}
Then, taking $t = c_2\sqrt{\log(p)} \sqrt{\max_l \sum_{i=1}^n \| (X_l^i)^\top (\Sigma_b^i)^{-1/2} \hat w_j^i  \xi_i \|_2^2}$ and plugging the bound into \eqref{vj.e3.1}, with probability at least $1-2\exp\{-c\log(p)\}$, we have 
\begin{align*}
    &\frac{\sum_{i=1}^n (\hat w_j^i)^\top (\Sigma_b^i)^{-1/2} (y^i - X^i\beta^*) \times (\hat w_j^i)^\top (\Sigma_b^i)^{-1/2} X^i(\beta^* - \hat\beta)}{\hat w_j^\top \Sigma_b^{-1/2} \Sigma_{\theta^*}\Sigma_b^{-1/2} \hat w_j} \\ 
    & \leq c_3 \frac{\|\hat u \|_2^2}{\sum_{i=1}^n \|\xi_i\|_2^2} \sqrt{\log(p)} \sqrt{\max_l \sum_{i=1}^n \| (X_l^i)^\top (\Sigma_b^i)^{-1/2} \hat w_j^i  \xi_i \|_2^2}\\
     & \leq c_3 \frac{\|\hat u \|_2^2}{\sum_{i=1}^n \|\xi_i\|_2^2} \sqrt{\log(p)} \sqrt{\max_l \sum_{i=1}^n \|\hat w_j^i\|_2^2 \|\xi_i\|_2^2 \|  (\Sigma_b^i)^{-1/2}X_l^i \|_2^2}\\
     & \leq c_3 \frac{\|\hat u \|_2^2}{\sum_{i=1}^n \|\xi_i\|_2^2} \sqrt{\log(p)} \max_i  \|\hat w_j^i\|_2 \max_l  \|\xi_i\|_2 \sqrt{\sigma_{\max}(X^\top \Sigma_b^{-1} X)}.
\end{align*}
Then, under the conditions of Theorem \ref{thm:S2} and Assumption \ref{as.C}, we use Theorem \ref{thm:S1} to bound $\|\hat u\|_2^2$, use Lemma \ref{lemma:A.3}.\ref{lemma:A.3.1} to bound $\sigma_{\max}(X^\top \Sigma_b^{-1} X)$, use Lemma \ref{lemma:b.3}.\ref{lemma:b.3.6} to bound $\sum_{i=1}^n \|\xi_i\|_2^2$, $\max_l  \|\xi_i\|_2$ and Lemma \ref{lemma:b.3}.\ref{lemma:b.3.7} to bound $\max_i  \|\hat w_j^i\|_2$. Then it follows that \eqref{e.vj.e3} is $o_p(1)$ with probability at least $1-c_1\exp\{-cmnq^{-\qm}\}-c_2\exp\{-c\log(n)\}-c_3\exp\{-cn\}-c_4\exp\{-c\log(p)\}-c_5\exp\{-cn(m/q)^{\qm}\} - c_6\exp\{-cmn\}$.
\end{enumerate}

\end{proof}

\section{Variance Component Estimator $\hat\theta$}
\label{Ssec:VC}

\begin{assumption}
 \label{as.D}
 \begin{enumerate}
     \item \label{as.D.1} The vectors $\gamma_i$ and $\epsilon_i$ are normally distributed with mean zero and variance matrices $\Psi$, $\sigma_e^2 I_m$, respectively. The covariance matrix $\Psi$ satisfies Assumption~\ref{as.A.add}.\ref{as.A.3}. 
     \item \label{as.D.2}\label{as.D.3} The following conditions hold: 
     \begin{align*}
     & \sqrt{m} \gg \log(nq) \\
     & nm \gg \max \left\{ q^{3/2}\log(q) \log^2(n), \log(q) \log(n) \left(s_Z\sqrt{m}\log(nq^2) + q\log^2(nq^2)\right) \right\} \\
     & nm^3 \gg q^2\log(q)\log^2(n)
     \end{align*}
     \item \label{as.D.4} $\sqrt{n}m \gg s s_\psi (q/m)^{\qm} m^{\bm{1}\{p>q, m>c_0q\}}\log(q)\log(p)\log(n)\log(nmq)$.
 \end{enumerate}
 
\end{assumption}

\begin{theorem}
\label{thm:Svc}
Under Assumption~\ref{as.A} and Assumption~\ref{as.D}.\ref{as.D.1}--\ref{as.D}.\ref{as.D.3}, with probability at least $1 - c_1\exp\{-c\log(nq)\} -c_2\exp\{-cn\} -c_3\exp\{-c\log(n)\} - c_4\exp\{-cmnq^{-\qm}\} - c_5\exp\{-cn(m/q)^{\qm}\}  - c_6\exp\{-c\log(q)\}$, we have 
\begin{align*}
    \|\hat\psi- \psi^*\|_\delta & \leq s_{\psi}^{1/\delta}  \frac{\log(n)\log(p)\log(nmq)}{\sqrt{n}} m^{\bm{1}\{m>c_0q, p>q\}} (q/m)^{\qm}, \quad \delta=1, 2.
\end{align*}
\end{theorem}

\begin{theorem}
\label{thm:Svc.e}
Under Assumption~\ref{as.A} and Assumption~\ref{as.D}, we have $|\hat\sigma_e^2 - \sigma_e^{2, *}| = o_p(1)$ with probability at least
\begin{align*}
1- & c_1\frac{s_\psi (m\vee q)\log^3(n)}{nm}  - c_2\exp\{-cnm\} - c_3\exp\{-cn\} -c_4\exp\{-c\log(n)\} - c_5\exp\{-cmnq^{-\qm}\}  \\
& - c_6 \exp\{-cn(m/q)^{\qm}\}
 - c_7\exp\left\{- c \frac{n^2}{s\log^4(n)\log(p)} \left(\frac{m^2}{q^2}\right)^{\qm} m^{-\bm{1}\{ m>c_0q, p>q\}} \right\}
 \end{align*}
\end{theorem}

\subsection{Related lemmas for Theorem~\ref{thm:Svc} and Theorem~\ref{thm:Svc.e}}
\begin{lemma}
\label{lemma.D.1}
\begin{enumerate}
    \item \label{lemma.D.1.E1} For $1 \leq j \leq q$, define 
    \begin{align*}
        E_{1, j} & =\sum_{i\in S_2} \Tr\left( A_j^i  \left(r_i r_i^\top -  \sum_{k=1}^{q} A_k^i  \psi_k^* \right) \right) . 
    \end{align*}
    Under Assumption~\ref{as.A} and Assumption~\ref{as.D}.\ref{as.D.1}, with probability at least $1-2\exp\{-c(m\vee q)\log(n)\} - 2\exp\{-cm\log(nq)\} - 2\exp\{-c\log(nmq)\} -2\exp\{-c\log(q)\}$, we have 
    \begin{align*}
    \max_j|E_{1, j}| \leq c_1(m \vee q) \log(n)\log(nmq)\log(q) m\sqrt{n}.
\end{align*}

    \item \label{lemma.D.1.E2} Define 
    \begin{align*}
        E_{2, j} & =\sum_{i \in S_2} \Tr\left(A_j^i X^i (\beta^*-\hat\beta) (\beta^*-\hat\beta)^\top (X^i)^\top \right). 
    \end{align*}
    Under Assumption~\ref{as.A} and Assumption~\ref{as.D}.\ref{as.D.1}, with probability at least  $1-4\exp\{-cn\} -12\exp\{-c\log(n)\} - 2\exp\{-cmnq^{-\qm}\} - \exp\{-cn(m/q)^{\qm}\} - 2\exp\{-cm\log(nq)\} -2\exp\{-c\log(nmq)\} -2\exp\{-c(m\vee q)\log(n)\}$, we have
    \begin{align*}
   \max_j|E_{2, j}| \leq \begin{cases}
   c_2s\log(p)\log^3(n)\log(nq)m^2 &, q>c_0m \\
   c_2s\log(p)\log(n)\log(nq)m^2 &, m>c_0q,\ p=q \\
   c_2s\log(p)\log(n)\log(nq)m^3 &, m>c_0q,\ p>q,
   \end{cases}
\end{align*}

    \item \label{lemma.D.1.E3} Define 
    \begin{align*}
        E_{3, j} & = \sum_{i\in S_2} \Tr\left( A_j^i  r_i (\beta^*-\hat\beta)^\top (X^i)^\top \right).
    \end{align*}
    Under Assumption~\ref{as.A} and Assumption~\ref{as.D}.\ref{as.D.1}, with probability at least $1-4\exp\{-cn\} -12\exp\{-c\log(n)\} - 2\exp\{-cmnq^{-\qm}\} - \exp\{-cn(m/q)^{\qm}\} - 2\exp\{-cm\log(nq)\} -2\exp\{-c\log(nmq)\} -2\exp\{-c\log(q)\}- 2\exp\{-c(m\vee q)\log(n)\}$, we have \begin{align*}
    \max_{j}|E_{3,j}| \leq 
    \begin{cases}
    c_3\sqrt{s\log(p)\log(q)\log^4(n)\log^2(nq)m^3{q}} &, q>c_0m \\
    c_3\sqrt{s\log(p)\log(q)\log^2(n)\log^2(nq)m^4} &, m>c_0q,\ p=q \\
    c_3\sqrt{s\log(p)\log(q)\log^2(n)\log^2(nq)m^5} &, m>c_0q,\ p>q.
   \end{cases}
\end{align*}
    
\end{enumerate}
\end{lemma}

\begin{lemma}
\label{lemma.D.2}
Define $G_1^i =\left( (Z^{i})^\top Z^i\right) \circ \left( (Z^{i})^\top Z^i\right)$ and $G_2^i = \left(Z^{i} \circ Z^i \right)^\top \left(Z^{i} \circ Z^i \right)$. 
\begin{enumerate}
    \item \label{lemma.D.2.G1} Under Assumption~\ref{as.A}, Assumption~\ref{as.D}.\ref{as.D.1} and Assumption~\ref{as.D}.\ref{as.D.2}, with probability at least $1- 2\exp\{-c\log(nq)\} - 2\exp\{-c\log(nq^2)\}$, we have $\max_i\left\|G^i_1 - \E\left(G^i_1\right)\right\|_2 \leq c_1m^{3/2} \log(nq^2) + c_1s_Z m^{3/2} \log(nq^2)  + c_1 m q\log^2(nq^2)$.
    
    \item \label{lemma.D.2.G2} Under Assumption~\ref{as.A} and Assumption~\ref{as.D}.\ref{as.D.1}, we have $ \max_i \|G_2^i - \mathbb{E}(G_2^i) \|_2^2 \leq c_2 mq\log(n)$ with probability at least $1-\exp\{- c q\log(n)\}$. 

\end{enumerate}
\end{lemma}

\begin{lemma}
\label{lemma:RE_for_B}
Under Assumption~\ref{as.A} and Assumption~\ref{as.D}, with probability at least $1-\exp\{-c\log(q)\} -2\exp\{-c\log(nq)\} - \exp\{-cq\log(n)\}$, for any $v\in\R^{q}$ we have $v^\top B v \geq c_1nm^2\|v\|_2^2$, where $B$ is defined in Appendix equation (19). 
\end{lemma}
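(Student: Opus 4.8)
The plan is the standard ``mean minus deviation'' lower bound for a random Gram matrix. Recall from Appendix equation~(19) that $B = \sum_{i \in S_2} G_1^i$ (equivalently $B_{jk} = \sum_{i\in S_2}\langle A_j^i, A_k^i\rangle$), where $S_2$ is the moment-estimation subsample with $|S_2| \asymp n$ and $G_1^i = \big((Z^i)^\top Z^i\big)\circ\big((Z^i)^\top Z^i\big)$ as in Lemma~\ref{lemma.D.2}; in particular $B$ depends only on $\{Z^i\}_{i\in S_2}$ and not on $\gamma_i,\epsilon_i$, so it is a sum of independent positive semidefinite matrices. I would write $B = \E(B) + \big(B - \E(B)\big)$ and treat the two pieces separately: lower bound $\sigma_{\min}(\E(B))$ by $c\,nm^2$, and upper bound $\|B-\E(B)\|_2$ by $o(nm^2)$.

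For the mean: a direct computation gives $\E(G_1^i) = m^2\big(\Sigma_Z^i\circ\Sigma_Z^i\big) + m\,N^i$ with $N^i \succeq 0$, where $\sigma(\Sigma_Z^i)\asymp 1$ follows from $\sigma(\Sigma_X)\asymp 1$ in Assumption~\ref{as.A}.\ref{as.A.2} by Cauchy interlacing for principal submatrices. By the Schur product theorem, $\Sigma_Z^i\circ\Sigma_Z^i \succeq \sigma_{\min}(\Sigma_Z^i)\,\diag\big((\Sigma_Z^i)_{jj}\big) \succeq c\,I_q$, so $\E(G_1^i) \succeq c\,m^2 I_q$ and hence $\sigma_{\min}(\E(B)) \geq c\,|S_2|\,m^2 \geq c\,nm^2$.

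For the deviation I would apply the matrix Bernstein inequality to $B-\E(B) = \sum_{i\in S_2}\big(G_1^i - \E(G_1^i)\big)$, on the high-probability event of Lemma~\ref{lemma.D.2} on which $L := \max_{i\in S_2}\|G_1^i-\E(G_1^i)\|_2 \lesssim \big(s_Z m^{3/2} + mq\big)\cdot\mathrm{polylog}$ (from Lemma~\ref{lemma.D.2}.\ref{lemma.D.2.G1}), and on which the near-diagonal fourth-moment-in-$Z$ terms entering the matrix variance $\nu := \big\|\sum_{i\in S_2}\E\big((G_1^i-\E G_1^i)^2\big)\big\|_2$ are controlled using Lemma~\ref{lemma.D.2}.\ref{lemma.D.2.G2}, giving $\nu \lesssim n m^3 q\cdot\mathrm{polylog}$ (polylogarithmic in $n,m,q$). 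Matrix Bernstein then yields, with extra failure probability $\exp\{-c\log q\}$,
\[
\|B-\E(B)\|_2 \;\lesssim\; \sqrt{\nu\log q} + L\log q \;\lesssim\; \sqrt{nm^3q}\cdot\mathrm{polylog} + \big(s_Z m^{3/2}+mq\big)\cdot\mathrm{polylog}.
\]
Dividing by $nm^2$: the first term is $O\big(\sqrt{q\cdot\mathrm{polylog}/(nm)}\big) = o(1)$ under $nm\gg q^{3/2}\log q\log^2 n$ and $\sqrt m\gg\log(nq)$, while the second is $O\big((s_Z m^{3/2}+mq)\cdot\mathrm{polylog}/(nm^2)\big) = o(1)$ under the remaining conditions of Assumption~\ref{as.D}.\ref{as.D.2} (in particular $nm^3\gg q^2\log q\log^2 n$ kills the $mq$-part). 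Hence $\|B-\E(B)\|_2 = o(nm^2)$, and on the intersection of the good events (whose complement has the probability stated in the lemma) we obtain, for every $v\in\R^q$,
\[
v^\top B v \;\geq\; \big(\sigma_{\min}(\E(B)) - \|B-\E(B)\|_2\big)\|v\|_2^2 \;\geq\; \big(c\,nm^2 - o(nm^2)\big)\|v\|_2^2 \;\geq\; c_1 nm^2\|v\|_2^2 .
\]

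The main obstacle is the bound on the matrix variance $\nu$: one must expand $\E\big((G_1^i - \E G_1^i)^2\big)$ into sums of products of four entries of $Z^i$, separate the genuinely $q$-dimensional pieces from the near-diagonal ones, and route each family through Lemma~\ref{lemma.D.2}.\ref{lemma.D.2.G1} or Lemma~\ref{lemma.D.2}.\ref{lemma.D.2.G2} so that $\nu = O(nm^3q\cdot\mathrm{polylog})$ rather than something larger; it is precisely this bound, together with the boundedness constant $L$, that dictates the three displayed sample-size conditions in Assumption~\ref{as.D}.\ref{as.D.2}. A minor technical point is the unboundedness of $G_1^i - \E(G_1^i)$: this is handled by truncating at the Lemma~\ref{lemma.D.2} threshold, noting the truncation leaves $B$ unchanged on the good event and contributes negligible bias, before invoking the bounded form of matrix Bernstein.
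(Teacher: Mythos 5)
Your overall architecture (mean-minus-deviation, Schur/Kronecker minorization of the expected matrix, matrix Bernstein for the fluctuation) is the same as the paper's, but there is a concrete error at the very first step: you identify $B=\sum_{i\in S_2}G_1^i$. Since $A_j^i = Z_j^i(Z_j^i)^\top - \bigl(\diag(Z_j^i)\bigr)^2$, expanding $B_{j,k}=\sum_{i\in S_2}\Tr(A_j^iA_k^i)$ gives $\Tr(A_j^iA_k^i)=\bigl((Z_j^i)^\top Z_k^i\bigr)^2-\sum_{l}(Z_{l,j}^i)^2(Z_{l,k}^i)^2$, i.e. $B=\sum_{i\in S_2}\bigl(G_1^i-G_2^i\bigr)$ with $G_2^i=(Z^i\circ Z^i)^\top(Z^i\circ Z^i)$, which is exactly the decomposition the paper works with. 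The omission has two consequences. For the mean it is benign but changes the computation: $\E(G_1^i)-\E(G_2^i)=m(m-1)\,\Sigma_Z^i\circ\Sigma_Z^i$ (the rank-one $m\,u^i_Z(u^i_Z)^\top$ contributions cancel between $G_1^i$ and $G_2^i$), and the lower bound $\sigma_{\min}(\Sigma_Z^i\circ\Sigma_Z^i)\ge\sigma_{\min}^2(\Sigma_Z^i)\gtrsim 1$ still yields $\sigma_{\min}(\E(B))\gtrsim nm^2$, so your conclusion for the mean survives even though your formula for $\E(G_1^i)$ alone is not the relevant object.

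The real gap is in the deviation: you must separately control $\bigl\|\sum_{i\in S_2}\bigl(G_2^i-\E(G_2^i)\bigr)\bigr\|_2$, and this is precisely the role of Lemma~\ref{lemma.D.2}.\ref{lemma.D.2.G2} in the paper (combined with a second application of matrix Bernstein, with matrix variance $\nu_2\lesssim nmq^2$, where the condition $nm^3\gg q^2\log q\log^2 n$ of Assumption~\ref{as.D}.\ref{as.D.3} is used to make this term $o(nm^2)$). In your write-up Lemma~\ref{lemma.D.2}.\ref{lemma.D.2.G2} is instead invoked only as an auxiliary device inside the variance bound for the $G_1$ sum, so the $G_2$ fluctuation is never controlled, and the event probability you quote (which contains the $\exp\{-cq\log n\}$ term coming exactly from Lemma~\ref{lemma.D.2}.\ref{lemma.D.2.G2}) is not accounted for by your argument. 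A secondary, smaller issue: your claimed matrix variance $\nu\lesssim nm^3q\cdot\mathrm{polylog}$ for the $G_1$ sum is asserted rather than derived; the paper's explicit Gershgorin computation gives $\nu_1\lesssim nm^3q^{3/2}$, which is what actually ties to the condition $nm\gg q^{3/2}\log q\log^2 n$. Correcting the identity for $B$ and adding the $G_2$ concentration step would bring your proof in line with the paper's.
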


\subsection{Proof of Theorem~\ref{thm:Svc}}

\begin{proof}[Proof of Theorem~\ref{thm:Svc}]
$\left.\right.$

We first show the consistency of $\hat\psi$.

Based on the definition of $\hat\psi$ in equation (7) 
in the main paper and the definition of $A_l^i$ in Appendix equation (18) 
, we can write
\begin{align*}
 \hat\psi & = \arg\min_{\psi}\sum_{i \in S_2} \left\| \hat r_i \hat r_i^\top - \left(\diag(\hat r_i)\right)^2 - \sum_{l=1}^q \psi_l A_l^i \right\|_F^2 +\lambda_\theta\|\psi\|_1 \\
& = \arg\min_{\psi} \sum_{i \in S_2} \Tr \left( \left( \hat r_i \hat r_i^\top - \left(\diag(\hat r_i)\right)^2 - \sum_{l=1}^q \psi_l A_l^i \right)^2 \right) +\lambda_\theta\|\psi\|_1 . 
\end{align*}

Rearranging the terms, we can write
\begin{align}
    & \sum_{i \in S_2} \Tr \left( \left( \hat r_i \hat r_i^\top - \left(\diag(\hat r_i)\right)^2 - \sum_{l=1}^q \hat\psi_l A_l^i \right)^2 \right) +\lambda_\theta\|\hat\psi\|_1   \leq \sum_{i \in S_2} \Tr \left( \left( \hat r_i \hat r_i^\top - \left(\diag(\hat r_i)\right)^2 - \sum_{l=1}^q \psi_l^* A_l^i \right)^2 \right) +\lambda_\theta\|\psi^*\|_1 \nonumber \\
    \Rightarrow & \sum_{i\in S_2} \Tr\left(\sum_{l_1=1}^{q}\sum_{l_2=1}^{q}\hat\psi_{l_1}\hat\psi_{l_2}  A_{l_1}^i A_{l_2}^i - 2\sum_{l=1}^{q} \hat\psi_l  A_{l}^i  \left(\hat r_i\hat r_i^\top - \left(\diag(\hat r_i)\right)^2 \right) \right) + \lambda_\theta\|\hat\psi\|_1 \nonumber \\
    &\quad \quad \leq \sum_{i\in S_2} \Tr\left(\sum_{l_1=1}^{q}\sum_{l_2=1}^{q}\psi^*_{l_1}\psi^*_{l_2}  A_{l_1}^i A_{l_2}^i - 2\sum_{l=1}^{q} \psi^*_l  A_{l}^i  \left(\hat r_i\hat r_i^\top - \left(\diag(\hat r_i)\right)^2 \right) \right) + \lambda_\theta\|\psi^*\|_1. \label{eq.vc.1}
\end{align}

Recall the definition of $B$ in Appendix equation (19). 
Define the length $q$ vector $\delta$ with $\delta_k = \sum_{i\in S_2} \Tr\left(A_{k}^i \left(\hat r_i\hat r_i^\top - \left(\diag(\hat r_i)\right)^2 \right) \right)$ for $k=1, \dots, q$. Then the above inequality \eqref{eq.vc.1} implies:
\begin{align*}
 (\hat\psi - \psi^*)^\top B (\hat\psi - \psi^*) &\leq 2 (\hat\psi - \psi^*) (\delta - B\psi^*)+\lambda_\theta\|\psi^*\|_1 - \lambda_\theta \|\hat\psi\|_1 \\
    & \leq 2 \|\hat\psi - \psi^*\|_1 \|\delta - B\psi^*\|_\infty +\lambda_\theta\|\psi^*\|_1 - \lambda_\theta \|\hat\psi\|_1.
\end{align*}

We next show that $\|\delta - B\psi^*\|_\infty$ is upper bounded by $(\hat\psi - \psi^*)^\top B (\hat\psi - \psi^*)$, such that we can follow similar arguments as those in the proof of Theorem~\ref{thm:S1} to show the consistency of $\hat\psi$. 

For the term $\|\delta - B\psi^*\|_\infty$, we can write
\begin{align}
    |\left(\delta - B\psi^* \right)_j| 
    & = \left|\sum_{i\in S_2} \Tr\left( A_j^i \left( \hat r_i \hat r_i^\top - \left(\diag(\hat r_i)\right)^2 - \sum_{l=1}^q A_l^i \psi_l^* \right) \right) \right| \nonumber \\
    & = \left|\sum_{i\in S_2} \Tr\left( A_j^i  \left(r_i r_i^\top- \left(\diag(r_i)\right)^2 -  \sum_{l=1}^{q} A_l^i  \psi_l^*\right) \right) \right. \nonumber \\
    & \quad \quad + \sum_{i\in S_2} \Tr\left(A_j^i \left [X^i (\beta^*-\hat\beta) (\beta^*-\hat\beta)^\top (X^{i})^\top- \left(\diag\left(X^i (\beta^*-\hat\beta)\right)\right)^2\right ]\right) \nonumber \\
    & \left.\quad \quad + \ 2 \sum_{i\in S_2} \Tr\left( A_j^i \left[ r_i (\beta^*-\hat\beta)^\top (X^i)^\top - \diag\left(r_i (\beta^*-\hat\beta)^\top (X^i)^\top \right) \right] \right) \right| \nonumber \\
    & = \left|\sum_{i \in S_2} \Tr\left( A_j^i  \left(r_i r_i^\top -  \sum_{l=1}^{q} A_l^i  \psi_l^* \right) \right) \right. \label{vc.E1}\\
    & \quad \quad + \sum_{i \in S_2} \Tr\left(A_j^i X^i (\beta^*-\hat\beta) (\beta^*-\hat\beta)^\top (X^i)^\top\right)  \label{vc.E2} \\
    & \left.\quad \quad + \ 2 \sum_{i\in S_2} \Tr\left( A_j^i  r_i (\beta^*-\hat\beta)^\top (X^i)^\top \right) \right|, \label{vc.E3} 
\end{align}
where the last equality holds because $\tr\left(A_j^i \diag(u)\right) = 0$ for any vector $u\in \R^m$. Then using Lemma~\ref{lemma.D.1}, we can bound the terms \eqref{vc.E1}, \eqref{vc.E2} and \eqref{vc.E3} to get
\begin{align*}
    |\left(\delta - B\psi^*\right)_j| \leq  c_1\log(n)\log(p)\log(nmq)   m^{1+\bm{1}\{m>c_0q, p>q\}}(m\vee q)\sqrt{n} \leq \lambda_\theta/4
\end{align*}
with probability at least $1-4\exp\{-cn\} -12\exp\{-c\log(n)\} - 2\exp\{-cmnq^{-\qm}\} - \exp\{-cn(m/q)^{\qm}\} - 2\exp\{-cm\log(nq)\} -2\exp\{-c\log(nmq)\} - 2\exp\{-c\log(q)\} -2\exp\{-c(m\vee q)\log(n)\}$.

On the other hand, using Lemma \ref{lemma:RE_for_B} we can show that with probability at least $1-\exp\{-c\log(q)\} - 2\exp\{-c\log(nq)\} -\exp\{-cq\log(n)\}$, we have
\begin{align*}
    (\hat\psi - \psi^*)^\top B (\hat\psi - \psi^*) \geq c_2 n m^2 \|\hat\psi-\psi^*\|_2^2.
\end{align*}

Next, letting $v = \hat{\psi} - \psi^*$, we follow the similar arguments as those in the proof of Theorem~\ref{thm:S1} to show that
\begin{align*}
    c_2nm^2 \|v\|_2^2 \leq \frac{3}{2} \lambda_\theta \|v_{S_\psi}\|_1 \leq \frac{3}{2}\lambda_\theta \sqrt{s_\psi} \|v\|_2.
\end{align*}
Thus, with probability at least $1 - c_3\exp\{-c\log(nq)\} -c_4\exp\{-cn\} -c_5\exp\{-c\log(n)\} - c_6\exp\{-cmnq^{-\qm}\} - c_7\exp\{-cn(m/q)^{\qm}\}  - c_8\exp\{-c\log(q)\}$ we have 
\begin{align*}
    \|\hat\psi- \psi^*\|_1 & \leq s_{\psi}  \frac{\log(n)\log(p)\log(nmq)}{\sqrt{n}} m^{\bm{1}\{m>c_0q, p>q\}} (q/m)^{\qm}\\
    \|\hat\psi- \psi^*\|_2 & \leq \sqrt{s_{\psi}}  \frac{\log(n)\log(p)\log(nmq)}{\sqrt{n}} m^{\bm{1}\{m>c_0q, p>q\}} (q/m)^{\qm}.
\end{align*}
\end{proof}

\begin{proof}[Proof of Theorem~\ref{thm:Svc.e}]
$\left.\right.$
We can write the proposed estimator $\hat\sigma_e^2$ as:
\begin{align*}
    \hat\sigma_e^2 & =  \sigma_e^{2,*} + \frac{1}{n_3m} \sum_{i \in S_3}\left(\Tr(\hat r_i \hat r_i^\top  - \Sigma^i_{\theta^*}) - \Tr(Z^i \diag(\hat\psi) (Z^i)^\top - Z^i \diag(\psi) (Z^i)^\top )\right). 
\end{align*}
We can then show that
\begin{align}
    n_3m \left|\hat\sigma_e^2 - \sigma_e^{2,  *}\right| & = \sum_{i \in S_3}\|r_i\|_2^2 + \|X^i (\beta^*-\hat\beta)\|_2^2 + 2 (\beta^*-\hat\beta)^\top (X^i)^\top r_i - \Tr(\Sigma^i_{\theta^*}) - \Tr\left(Z^i \diag(\hat\psi - \psi^*) (Z^i)^\top\right) \nonumber \\
    & \leq \sum_{i \in S_3} \|r_i\|_2^2 - \Tr(\Sigma^i_{\theta^*})  \label{vc.e.1}\\
    & \quad + \|X (\beta^*-\hat\beta)\|_2^2 \label{vc.e.2} \\
    & \quad + 2 \sum_{i \in S_3} (\beta^*-\hat\beta)^\top (X^{i})^\top r_i \label{vc.e.3} \\
    & \quad + \sum_{i \in S_3} \Tr\left(Z^i \diag(\hat\psi - \psi^*) (Z^i)^\top\right) \label{vc.e.4}.
\end{align}

     For the term \eqref{vc.e.1}: 
    {
    Conditioning on the design matrices $Z^i$, we have $r_i \mid Z^i \sim N(0, \Sigma_\theta^i)$ independently for $i=1, \ldots, n$, and $\E\left(\|r_i\|_2 \mid Z^i\right) = \tr(\Sigma_{\theta^*}^i)$. Denote the $(j,k)$ entry of $\Sigma_{\theta^*}^i$ by $w_{i,j,k}$. By Markov's inequality we can show that
    \begin{align}
        & \mathbb{P}\left( \left|\sum_{i \in _3} \|r_i\|_2^2 - \Tr(\Sigma^i_{\theta^*})  \right|  \geq t \mid Z^i \right) 
        \leq \frac{\sum_{i \in S_3}\mathrm{Var}\left(\|r_i\|_2^2\right)}{t^2} \nonumber \\
        &  = \frac{ \sum_{i \in S_3} \left( \sum_{j=1}^{m} 3w_{i, j, j}^2 + \sum_{k=1}^m \sum_{j=1, j\neq k}^{m} (w_{i, j, j}w_{i, k, k} + 2 w^2_{i, j, k})  - (\sum_{j=1}^{m}w_{i, j, j})^2   \right)}{t^2} \nonumber \\
        & = \frac{\sum_{i \in S_3}\sum_{k=1}^m \sum_{j=1}^{m}2 w^2_{i, j, k}}{t^2} = \frac{\sum_{i \in S_3} \|\Sigma_{\theta^*}^i \|_F^2}{t^2}.\label{vc.e.1.p1}
    \end{align}
    
    Note that we have $\Psi = \diag(\psi)$ with $\|\psi\|_0 = s_\psi < c_1 m\wedge n$ (Assumption \ref{as.A.add}.\ref{as.A.3}), and we can thus show that
    \begin{align}
        \sum_{i\in S_3}\|\Sigma_{\theta^*}^i\|_F^2 & \leq \sum_{i\in S_3} \Tr\left(Z^i \diag(\psi) (Z^i)^\top + \sigma_e^{2}I_m\right) \sigma_{\max}\left(\Sigma_{\theta^*}^i\right) \nonumber \\
        & \leq c_1 \sum_{i\in S_3} \left(\sum_{l \in S_\psi} \psi_l \|Z^i_l\|_2^2 + m\sigma_e^2 \right) \left(\sigma^2_{\max}\left(Z^i\right) + 1\right) \nonumber \\
        & \leq c_2 \left(nm + s_{\psi} \max_{l\in S_\psi}\sum_{i\in S_3} \|Z^i_l\|_2^2 \right) \sigma^2_{\max}(Z^i). \label{vc.e.sigma.F}
    \end{align}
    Using Lemma~\ref{lemma:A.1}.\ref{lemma:A.1(4)}, we have $\sigma^2_{\max}(Z^i) \leq c_3(m \vee q)\log(n)$ with probability at least $1-2\exp\{-c\log(n)\}$. We can also use $Z^i_l \mid \Sigma_Z^i \sim N\left(0, \left(\Sigma_Z^i\right)_{l,l} I_m \right)$ to show that 
    \begin{align}
        & \P\left( \max_{l \in S_\psi} \left|\sum_{i\in S_3} \|Z^i_l\|_2^2 - m\left(\Sigma_Z^i\right)_{l,l} \right| \geq t \mid \Sigma_Z^i \right) \leq 2\exp\left\{\log(s_\psi) - c_4 \min\left\{ \frac{t^2}{4nm}, \frac{t}{4} \right\}\right\} \nonumber \\
        & \Rightarrow \P\left( \max_{l \in S_\psi} \sum_{i\in S_3} \|Z^i_l\|_2^2 \leq c_5nm  \right) \geq 1-2\exp\left\{-cnm\right\}. \label{vc.e.z2}
    \end{align}
    Plugging in the bounds into \eqref{vc.e.sigma.F}, we obtain $ \sum_{i\in S_3}\|\Sigma_{\theta^*}^i\|_F^2 \leq c_6 s_\psi nm (m \vee q)\log(n)$. Then plugging into \eqref{vc.e.1.p1} and taking $t = c_7 nm/\log(n)$, we obtain $\eqref{vc.e.1} \leq nm /\log(n)$ with probability at least $1- c\frac{s_\psi (m\vee q)\log^3(n)}{nm} - 2\exp\{-c\log(n)\} - 2\exp\{-cnm\}$.
 }
    
     For the term \eqref{vc.e.2}: 
    {
    Using Theorem \ref{thm:S1} and Lemma \ref{lemma:A.1}.\ref{lemma:A.1(4)}, we have with probability at least $1- c_1\exp\{-cn\} -c_2\exp\{-c\log(n)\} - c_3\exp\{-cmnq^{-\qm}\} - c_4 \exp\{-cn(m/q)^{\qm}\}$ that 
    \begin{align*}
        \|X (\beta^*-\hat\beta)\|_2^2 & \leq \|\Sigma_a^{-1/2} X (\beta^*-\hat\beta)\|_2^2 \sigma_{\max}(\Sigma_a) \\
        & \leq c_5 \begin{cases}
           sq\log(p)\log(n) &, q>c_0m\\
           sm\log^3(n)\log(p) &, q>c_0m \text{ and assuming Assumption \ref{as.A.add}.\ref{as.A.3}} \\
           sm\log(p)\log(n) &, m>c_0q, \ p=q \\
           sm^2\log(p)\log(n) &, m>c_0q, \ p>q.
        \end{cases}
    \end{align*}
    }
    
    For the term \eqref{vc.e.3}:
    {
    Conditioning on $\hat\beta$ and $X$ and using the tail bound of normally distributed random variables, we can write
    \begin{align}
        \mathbb{P}\left( \left| \sum_{i \in S_3} (\beta^*-\hat\beta)^\top (X^{i})^\top r_i \right| \geq t | X, \hat\beta \right) & \leq 2\exp\left\{ - \frac{t^2}{8\sum_{i \in S_3} (\beta^*-\hat\beta)^\top (X^{i})^\top \Sigma_{\theta^*}^i X^i (\beta^*-\hat\beta) } \right\} \nonumber \\
        & \leq 2\exp\left\{ - \frac{t^2}{8 \max_i\sigma_{\max}(\Sigma_{\theta^*}^i)\sigma_{\max}(\Sigma_a^i) \| \Sigma_a^{-1/2} X (\beta^*-\hat\beta) \|_2^2 }\right\}. \label{vc.e.3.p1}
    \end{align}
    Using Theorem~\ref{thm:S1} to bound $\| \Sigma_a^{-1/2} X (\beta^*-\hat\beta) \|_2^2$ and Lemma~\ref{lemma:A.1}.\ref{lemma:A.1(4)} to bound $\sigma_{\max}(\Sigma_{\theta^*}^i)\sigma_{\max}(\Sigma_a^i)$, we have with probability at least $1- c_1\exp\{-cn\} -c_2\exp\{-c\log(n)\} - c_3\exp\{-cmnq^{-\qm}\} - c_4 \exp\{-cn(m/q)^{\qm}\}$,
    \begin{align*}
       & \max_i\sigma_{\max}(\Sigma_{\theta^*}^i)\sigma_{\max}(\Sigma_a^i) \| \Sigma_a^{-1/2} X (\beta^*-\hat\beta) \|_2^2  \\
        & \leq c_5 \begin{cases}
           sq^2\log(p)\log^2(n) &, q>c_0m\\
           sm^2\log(p)\log^2(n) &, m>c_0q, \ p=q \\
           sm^3\log(p)\log^2(n) &, m>c_0q, \ p>q.
        \end{cases}
    \end{align*}
    Plugging in the bound into \eqref{vc.e.3.p1} and taking $t= c_6nm/\log(n)$, we have $\eqref{vc.e.3} \leq c_6nm/\log(n)$ with probability at least $1- c_1\exp\{-cn\} -c_2\exp\{-c\log(n)\} - c_3\exp\{-cmnq^{-\qm}\} - c_4 \exp\{-cn(m/q)^{\qm}\} - 2\exp\{- c \frac{n^2}{s\log^4(n)\log(p)} \left(\frac{m^2}{q^2}\right)^{\qm} m^{-\bm{1}\{ m>c_0q, p>q\}} \}$
    }
    
     For the term \eqref{vc.e.4}:
    {
    We can write
    \begin{align*}
        \sum_{i \in S_3} \Tr \left(Z^i \diag(\hat\psi - \psi^*) (Z^{i})^\top \right) & = \sum_{i \in S_3}\sum_{l=1}^q (\hat\psi_l - \psi_l^* ) \|Z_{l}^i \|_2^2 \\
        & \leq \|\psi -\psi^*\|_1 \max_{l=1, \ldots, q} \sum_{i \in S_3} \|Z_{ l}^i\|_2^2.
    \end{align*}
    Using similar arguments as those in \eqref{vc.e.z2}, we can show $\max_{l=1, \ldots, q} \sum_{i \in S_3} \|Z_{ l}^i\|_2^2 \leq c_1 \log(q) nm$ with probability at least $1- 2\exp\{-cnm\log(q)\}$. Then using Theorem~\ref{thm:Svc}, we have 
    \begin{align*}
        \eqref{vc.e.4} \leq s_{\psi} \sqrt{n} m\log(q) {\log(n)\log(p)\log(nmq)} m^{\bm{1}\{m>c_0q, p>q\}} (q/m)^{\qm}
    \end{align*}
    with probability at least
    $1 - c_1\exp\{-c\log(nq)\} -c_2\exp\{-cn\} -c_3\exp\{-c\log(n)\} - c_4\exp\{-cmnq^{-\qm}\} - c_5\exp\{-cn(m/q)^{\qm}\}  - c_6\exp\{-c\log(q)\} -c_7\exp\{-cnm\log(q)\}$.
    }

Therefore, plugging in the above bounds for \eqref{vc.e.1}--\eqref{vc.e.4}, under Assumption \ref{as.D}.\ref{as.D.4}, we obtain $|\hat\sigma_e^2 - \sigma_e^{2, *}| = o_p(1)$ with probability at least
\begin{align*}
1- & c_1\frac{s_\psi (m\vee q)\log^3(n)}{nm}  - c_2\exp\{-cnm\} - c_3\exp\{-cn\} -c_4\exp\{-c\log(n)\} - c_5\exp\{-cmnq^{-\qm}\}  \\
& - c_6 \exp\{-cn(m/q)^{\qm}\}
 - c_7\exp\left\{- c \frac{n^2}{s\log^4(n)\log(p)} \left(\frac{m^2}{q^2}\right)^{\qm} m^{-\bm{1}\{ m>c_0q, p>q\}} \right\}.
 \end{align*}

\end{proof}

\subsection{Proof of related lemmas for Theorem~\ref{thm:Svc} and Theorem~\ref{thm:Svc.e}}

\begin{proof}[Proof of Lemma \ref{lemma.D.1}]
$\left.\right.$

\textit{Lemma~\ref{lemma.D.1}.\ref{lemma.D.1.E1})}
{
$\left.\right.$

We can write
\begin{align*}
     E_{1,j} &= \sum_{i\in S_2} \Tr\left( A_j^i  \left(r_i r_i^\top -  \sum_{k=1}^{q} A_k^i  \psi_k^* \right) \right) \\
    & = \sum_{i\in S_2} r_i^\top  A^i_j  r_i - \sum_{i\in S_2}\Tr\left( A_j^i \sum_{k=1}^{q} A_k^i  \psi_k^*
    \right)
\end{align*}
Conditioning on the design matrices $Z^i$, we have that $r_i \mid Z^i \sim N(0, \Sigma^i_{\theta^*})$, and 
\begin{align*}
    \mathbb{E}\left(r_i^\top A_j^i r_i \mid Z^i \right) &  = \Tr(A_j^i \Sigma^i_{\theta^*}) \\
    & = \Tr\left( A_j^i \left(\sigma_e^{2, *} I_{m} + \sum_{k=1}^q \left( A_k^i + \left(\diag\left(Z_k^i\right)\right)^2\right) \psi_k^*\right)\right) \\
    & = \sum_{i \in S_2} \Tr\left( A_j^i \sum_{k=1}^{q} A_k^i \psi_k^*\right),
\end{align*}
where for the last equality we again use $\tr\left(A_j^i\diag(u)\right) =0$ for any vector $u\in R^m$. Then using the Hanson-Wright inequality \cite{rudelson2013hanson} and denoting $W_j  = \diag\left(\left\{\left(\Sigma_{\theta^*}^i \right)^{1/2} A_j^i \left(\Sigma_{\theta^*}^i \right)^{1/2}\right\}_{i\in S_2}\right)$, we have that
\begin{align}
    & \mathbb{P}\left(\max_{1\leq j\leq q}\left| E_{1,j} \right|  \geq t \mid Z \right) 
    \leq 2\exp\left\{\log(q) - c_1 \min \left\{\frac{t^2}{ \max_j\left\| W_j \right\|_F^2}, \frac{t}{\max_j\left\| W_j \right \|_2}\right\} \right\}. \label{pf.v.e1}
\end{align}

First, for the term $\|W_j \|_F^2$, we can show that
\begin{align}
    \| W_j \|_F^2 & = \tr\left(W_j^2\right) \nonumber \\
    & = \sum_{i \in S_2} \tr\left( \Sigma^{i}_{\theta^*} A^i_j  \Sigma^{i}_{\theta^*} A^i_j   \right) \nonumber\\
    & =  \sum_{i \in S_2} \tr\left( \sigma_e^{4,*} A_j^i A_j^i + 2\sigma_e^{2,*} Z^i \Psi (Z^i)^\top A_j^i A_j^i + Z^i\Psi (Z^i)^\top A_j^i  Z^i\Psi (Z^i)^\top A_j^i \right) \nonumber\\
    & = \sum_{i \in S_2}  \Tr\left( 
    \sigma_\epsilon^{4, *} Z_{j}^i (Z_{j}^{i})^\top  Z_{j}^i (Z_{ j}^{i})^\top - 
    \sigma_e^{4, *} \diag(Z_{j}^i)^2   Z_{j}^i (Z_{ j}^{i})^\top \right.\nonumber\\
    & \quad \quad \quad \left.  + 2 \sigma_e^{2, *} Z^i \Psi (Z^{i})^\top Z_{j}^i(Z_j^i)^\top  Z_{ j}^i (Z_j^i)^\top \right. \nonumber\\
    & \left. \quad \quad \quad + 2 \sigma_e^{2, *} Z^i \Psi (Z^{i})^\top \diag(Z^i_j)^4 \right. \nonumber\\
    & \left. \quad \quad \quad - 4 \sigma_e^{2, *} Z^i \Psi (Z^{i})^\top \diag(Z^i_j)^2 Z_{j}^i(Z_j^i)^\top   \right.\nonumber \\
    & \left. \quad \quad \quad +
    Z^i \Psi (Z^{i})^\top Z^i_j (Z^i_j)^\top Z^i \Psi (Z^{i})^\top Z^i_j (Z^i_j)^\top \right. \nonumber \\
    & \quad \quad \quad + \left. Z^i \Psi (Z^{i})^\top \diag(Z^i_j)^2 Z^i \Psi (Z^{i})^\top \diag(Z^i_j)^2 \right. \nonumber\\
    & \quad \quad \quad - 2 \left. Z^i \Psi (Z^{i})^\top Z^i_j (Z^i_j)^\top Z^i \Psi (Z^{i})^\top \diag(Z^i_j)^2 \right)\nonumber \\
    & \leq \left(\max_i \sigma^2_{\max}\left(Z^i \Psi (Z^i)^\top\right) + 2 \sigma_e^{2,*} \max_i \sigma_{\max}\left(Z^i \Psi (Z^i)^\top\right) + \sigma_e^{4,*}\right) \sum_{i \in S_2}  \|Z_{j}^i\|_2^4 \nonumber\\
    & \quad \quad + \left(3\max_i \sigma^2_{\max}\left(Z^i \Psi (Z^i)^\top\right) + 6 \sigma_e^{2,*} \max_i \sigma_{\max}\left(Z^i \Psi (Z^i)^\top\right) + \sigma_e^{4,*}\right) \sum_{i \in S_2}  \sum_{l=1}^m \left(\left(Z_{j}^i\right)_l\right)^4. \label{E1.e1}
\end{align}
Using Lemma~\ref{lemma:A.1}.\ref{lemma:A.1(4)} we can show that $\max_i \sigma_{\max}\left(Z^i \Psi (Z^i)^\top\right) \leq c_1(m\vee q)\log(n)$ with probability at least $1-2\exp\{-c_2(m\vee q)\log(n)\}$. Note that $\sum_{l=1}^m \left(\left(Z_{j}^i\right)_l\right)^4\leq \|Z_{j}^i\|_2^4$. Based on the distribution $Z^i_j \mid \Sigma_Z^i \sim N\left(0, \left(\Sigma_Z^i\right)_{j,j} I_m\right)$ and Assumption~\ref{as.A}.\ref{as.A.2} that $\left(\Sigma_Z^i\right)_{j,j} \asymp 1$, we can show that
\begin{align}
    \P\left( \max_j \max_i \left| \|Z^i_j\|_2^2 - c_3 m \right| \geq  t \mid \Sigma_Z^i \right) \leq 2\exp\left\{ \log(q) +\log(n_2) - c_4\min\left\{ \frac{t^2}{ 8m}, \frac{t}{8}\right\}\right\}. \label{vc.E1.z2}
\end{align}
Taking $t=c_5m \log(nq)$, we have $\max_{i,j} \|Z^i_j\|_2^2 \leq c_6 m\log(nq)$ with probability at least $1-2\exp\{-c_7m\log(nq)\}$. Thus, plugging in the bounds for $\max_{i,j} \|Z^i_j\|_2^2$ and $\max_i \sigma_{\max}\left(Z^i \Psi (Z^i)^\top\right)$ into \eqref{E1.e1}, we can show that with probability at least $1-2\exp\{-c(m\vee q)\log(n)\} - 2\exp\{-cm\log(nq)\}$, we have
\begin{align}
    \max_j\|W_j\|_F^2 \leq c(m^2 \vee q^2) \log^2(n) n m^2 \log^2(nq). \label{E1.wj.f.2}
\end{align}

Next, for the term $\| W_j \|_2$, we can write
\begin{align*}
   \max_j \| W_j \|_2 
    &  = \max_i \max_j\left\|\Sigma_{\theta^*}^i Z^i_j (Z^i_j)^\top - \Sigma_{\theta^*}^i \diag(Z^i_j)^2 \right\|_2 \\
    & \leq \max_i \max_j\|\Sigma_{\theta^*}^i\|_2  \|Z^i_j\|^2_2 + \max_i \max_j\| \Sigma_{\theta^*}^i\|_2 \left\|Z^i_j\right\|_\infty^2.
\end{align*}
Then by the normality of $Z^i_j$, we can write 
\begin{align}
\P\left(\max_i\max_j \left|\left\|Z^i_j\right\|_\infty^2 -\left(\Sigma_Z^i\right)_{j,j}\right| \geq t \mid \Sigma_Z^i \right) \leq 2\exp\{\log(m) +\log(n) +\log(q) - \min\{t^2/8, t/8\}\}, \label{vc.E1.z.inf}
\end{align}
which implies $\max_i\max_j \left\|Z^i_j\right\|_\infty^2 \leq c_1\log(nmq)$ with probability at least $1-2\exp\{-c_2 \log(nmq)\}$. Then using Lemma~\ref{lemma:A.1}.\ref{lemma:A.1(4)} to bound $\max_i \|\Sigma_{\theta^*}^i\|_2$ and inequality \eqref{vc.E1.z2} to bound $\max_i\max_j \|Z^i_j\|^2_2$, we have
\begin{align}
    \max_j\|W_j\|_2 & \leq c_3(m \vee q)\log(n) m \log(nq) + c_4 (m \vee q) \log(n) \log(nmq) \nonumber \\
    &\leq c_5 (m \vee q)\log(n) m \log(nmq) \label{E1.wj.2}
\end{align}
which holds with probability at least $1-2\exp\{-c(m\vee q)\log(n)\} - 2\exp\{-cm\log(nq)\} - 2\exp\{-c\log(nmq)\}$.

Plugging in the bounds in \eqref{E1.wj.f.2} and \eqref{E1.wj.2} into \eqref{pf.v.e1}, and taking $t= c_6 (m \vee q) m\sqrt{n} \log(n)\log(nmq)\log(q)$, we have 
\begin{align*}
    \max_j|E_{1, j}| \leq c_6(m \vee q) \log(n)\log(nmq)\log(q) m\sqrt{n}
\end{align*}
with probability at least $1-2\exp\{-c(m\vee q)\log(n)\} - 2\exp\{-cm\log(nq)\} - 2\exp\{-c\log(nmq)\} -2\exp\{-c\log(q)\}$.
}
\\

\textit{Lemma~\ref{lemma.D.1}.\ref{lemma.D.1.E2})}
{
$\left.\right.$

We can write 
\begin{align*}
     \max_j|E_{2,j}| & = \max_j \left|\sum_{i \in S_2} \Tr\left( A_j^i X^i(\beta^*-\hat\beta)(\beta^*-\hat\beta)^\top (X^i)^\top\right) \right|\\
     & \leq \max_j \left| \max_{i}\left(\sigma_{\max}\left(A_j^i \Sigma_a^i\right)\right) \|\Sigma_a^{-1/2}X(\beta^*-\hat\beta)\|_2^2  \right| \\
     & = \|\Sigma_a^{-1/2}X(\beta^*-\hat\beta)\|_2^2 \max_j\max_{i}\sigma_{\max}\left(\Sigma_a^i\right) \sigma_{\max}\left(Z_{j}^i \left(Z_{j}^{i}\right)^\top-\diag\left(Z_{j}^i\right)^2\right)   \\
    & \leq \|\Sigma_a^{-1/2}X(\beta^*-\hat\beta)\|_2^2 \max_i \sigma_{\max}(\Sigma_a^i)
    \left(\max_{i,j}\|Z^i_j\|_2^2 + \max_{i,j}\left\|Z^i_j\right\|_\infty^2 \right).
\end{align*}
Thus, using Theorem~\ref{thm:S1} to bound $\|\Sigma_a^{-1/2}X(\beta^*-\hat\beta)\|_2^2$, using Lemma~\ref{lemma:A.1}.\ref{lemma:A.1(4)} to bound $\sigma_{\max}(\Sigma_a^i)$, and using \eqref{vc.E1.z2} and \eqref{vc.E1.z.inf}, we have
\begin{align*}
   \max_j|E_{2, j}| \leq \begin{cases}
   c_1s\log(p)\log(n)\log(nq)mq &, q>c_0m \\
   c_1s\log(p)\log^3(n)\log(nq)m^2 &, q>c_0m \ \text{and assuming Assumption~\ref{as.A.add}.\ref{as.A.3}}\\
   c_1s\log(p)\log(n)\log(nq)m^2 &, m>c_0q,\ p=q \\
   c_1s\log(p)\log(n)\log(nq)m^3 &, m>c_0q,\ p>q,
   \end{cases}
\end{align*}
with probability at least  $1-4\exp\{-cn\} -12\exp\{-c\log(n)\} - 2\exp\{-cmnq^{-\qm}\} - \exp\{-cn(m/q)^{\qm}\} - 2\exp\{-cm\log(nq)\} -2\exp\{-c\log(nmq)\} -2\exp\{-c(m\vee q)\log(n)\}$.
}
\\

\textit{Lemma~\ref{lemma.D.1}.\ref{lemma.D.1.E3})}
{
$\left.\right.$

We can write 
\begin{align*}
     E_{3,j} 
    &  = \sum_{i \in S_2} \left( \left(\beta^*-\hat\beta\right)^\top (X^{i})^\top A_j^i \left(\Sigma_{\theta^*}^{i}\right)^{1/2}\right) \left(\Sigma_{\theta^*}^{i}\right)^{-1/2}r_i.  
\end{align*}
Conditioning on $X$ and $\hat\beta$, we have $\left(\Sigma_{\theta^*}^{i}\right)^{-1/2}r_i \mid X, \hat\beta \sim N(0, I_m)$. Thus,
\begin{align}
    & \mathbb{P}\left(\max_{1 \leq j \leq q}|E_{3,j}| \geq t|X,\hat\beta\right)  \leq 2\exp \left\{ \log(q) - \frac{t^2}{ 8 \max_j \sum_{i \in S_2} (\beta^*-\hat\beta)^\top (X^i)^\top  A_j^i  \Sigma_{\theta^*}^i  A_j^i  X^i(\beta^*-\hat\beta)}\right\}. \label{pf.v.lemma.1}
\end{align}
Then, with probability at least $1-4\exp\{-cn\} -12\exp\{-c\log(n)\} - 2\exp\{-cmnq^{-\qm}\} - \exp\{-cn(m/q)^{\qm}\} - 2\exp\{-cm\log(nq)\} -2\exp\{-c\log(nmq)\} - 2\exp\{-c(m\vee q)\log(n)\}$, we have
\begin{align}
   & \max_j \sum_{i \in S_2} (\beta^*-\hat\beta)^\top (X^i)^\top  A_j^i  \Sigma_{\theta^*}^i  A_j^i X^i(\beta^*-\hat\beta) \nonumber \\
   & \leq \|\Sigma_a^{-1/2} X(\beta^*-\hat\beta)\|_2^2 \max_{i,j} \sigma_{\max}\left(A^i_j \Sigma_{\theta^*}^i A^i_j \Sigma_a^i\right)\nonumber \\
   & \leq \|\Sigma_a^{-1/2} X(\beta^*-\hat\beta)\|_2^2 \max_{i,j} \sigma^2_{\max}\left(A^i_j\right) \sigma_{\max}\left(\Sigma_{\theta^*}^i\right)\sigma_{\max}\left(\Sigma_a^i\right) \nonumber \\
   & \leq c_1\|\Sigma_a^{-1/2} X(\beta^*-\hat\beta)\|_2^2  \left(\max_{i,j}\|Z^i_j\|_2^2 + \max_{i,j}\|Z^i_j\|^2_\infty\right)^2 \max_i\sigma_{\max}\left(\Sigma_{\theta^*}^i\right)\sigma_{\max}\left(\Sigma_a^i\right) \nonumber \\
   & \leq  
   c_2 \begin{cases}
   s\log(p)\log^2(n)\log^2(nq) m^2 q^2 &, q>c_0m \\
   s\log(p)\log^4(n)\log^2(nq) m^3 q &, q>c_0m \ \text{and assuming Assumption~\ref{as.A.add}.\ref{as.A.3}}\\
   s\log(p)\log^2(n)\log^2(nq) m^4 &, m>c_0q,\ p=q \\
   s\log(p)\log^2(n)\log^2(nq)m^5 &, m>c_0q,\ p>q,
   \end{cases} \label{vc.E3.1}
\end{align}
where the last inequalities \eqref{vc.E3.1} use Theorem~\ref{thm:S1}, Lemma~\ref{lemma:A.1}.\ref{lemma:A.1(4)}, and the inequalities \eqref{vc.E1.z2} and \eqref{vc.E1.z.inf}. Plugging \eqref{vc.E3.1} into \eqref{pf.v.lemma.1} and we can obtain
\begin{align*}
    \max_{j}|E_{3,j}| \leq 
    c_3 \begin{cases}
   \sqrt{s\log(p)\log(q)\log^2(n)\log^2(nq)m^2q^2} &, q>c_0m \\
   \sqrt{s\log(p)\log(q)\log^4(n)\log^2(nq){m^3}{q}} &, q>c_0m \ \text{and assuming Assumption~\ref{as.A.add}.\ref{as.A.3}}\\
   \sqrt{s\log(p)\log(q)\log^2(n)\log^2(nq)m^4} &, m>c_0q,\ p=q \\
   \sqrt{s\log(p)\log(q)\log^2(n)\log^2(nq)m^5} &, m>c_0q,\ p>q,
   \end{cases}
\end{align*}
with probability at least  $1-4\exp\{-cn\} -12\exp\{-c\log(n)\} - 2\exp\{-cmnq^{-\qm}\} - \exp\{-cn(m/q)^{\qm}\} - 2\exp\{-cm\log(nq)\} -2\exp\{-c\log(nmq)\} -2\exp\{-c\log(q)\} - 2\exp\{-c(m\vee q)\log(n)\}$.
}
\end{proof}

\begin{proof}[Proof of Lemma~\ref{lemma.D.2}]
{
$\left.\right.$

For simpler notation, we denote the $(j,k)$ entry of $\Sigma^i_Z$ by $\sigma_{i, j, k}$ for the rest of the proof. We denote the $l$th entry of the vector $Z^i_j$ by $Z^i_{l,j}$.

\textit{Lemma~\ref{lemma.D.2}.\ref{lemma.D.2.G1})}
{
$\left.\right.$

Defining matrix $H^i = (Z^{i})^\top Z^i$ for $i=1, \ldots, n$, we can write $H^i_{j,k} = (Z_{ j}^{i})^\top  Z^i_{k} =  \sum_{l=1}^{m} Z_{l, j}^i Z_{l, k}^i$. We will focus on the entry-wise concentration of the matrices $H^i$.

First, for the diagonal terms ${1 \leq j=k\leq q}$, we have $H^i_{j,j} =  \sum_{l=1}^{m}  \left(Z_{l, j}^{i}\right)^2$. Since $Z^i \mid \Sigma_Z^i \sim MN_{m\times q}(0, I_m, \Sigma_Z^i)$, we have
    $(Z^i_{l,j})^2 - \sigma_{i,j,j} \mid \Sigma_Z^i \in \SE(4\sigma^2_{i,j,j}, 4\sigma_{i,j,j})$, and thus $H^i_{j,j} - m \sigma_{i,j,j} \mid \Sigma_Z^i \in \SE(4m\sigma^2_{i,j,j}, 4\sigma_{i,j,j})$. Therefore, based on the tail bound of sub-exponential random variables,
  \begin{align*}
    \mathbb{P}&\left(\max_{i,j} \left|H_{j,j}^i - m \sigma_{i,j,j}\right|> t \right)  \leq 2\exp\left\{\log(nq)-\min\left\{\frac{t^2}{8m \sigma^2_{i,j,j}}, \frac{ t}{8{\sigma_{i,j,j}}}\right\}\right\}.
\end{align*}
    Taking $t = 16\sigma_{i,j,j} \sqrt{m} \log(nq)$, we obtain $\max_{i,j} \left|H_{j,j}^i - m \sigma_{i,j,j}\right| \leq 16\sigma_{i,j,j} \sqrt{m} \log(nq)$ with probability at least $1-2\exp\{-c\log(nq)\}$. \textcolor{black}{Under Assumption~\ref{as.D}.\ref{as.D.2}}, this implies that $\forall i,j$, 
    \begin{align}
    & m^2 + 16^2m \log^2(nq) - 32 m^{3/2}\log(nq)  \leq \frac{1}{\sigma_{i,j,j}^2}(H_{j,j}^i)^2  \leq m^2 + 16^2m \log^2(nq) + 32 m^{3/2}\log(nq). \label{vc.G1.w1}
\end{align}

Next, for the off-diagonal terms ${1 \leq j \neq k\leq q}$, we have the fact that for any two normally distributed random variables $U$, $V$ with mean $0$, variance $1$ and correlation $\rho$,
\begin{align*}
 UV = \frac{(U+V)^2}{4} - \frac{(U-V)^2}{4} = \frac{1+\rho}{2} Q_1 - \frac{1-\rho}{2}Q_2,
\end{align*}
where $Q_1$ and $Q_2$ are independent and follow a $\chi^2_1$ distribution. Thus, the random variable $UV$ is sub-exponential with mean $\rho$, and hence $UV -\rho \in \SE(2\rho^2+2, 2+2|\rho|)$. We thus have
\begin{align*}
    & Z^i_{l,k}Z^i_{l,j} -\sigma_{i,j,k} \in \SE(2\sigma_{i,j,k}^2+2\sigma_{i,j,j}\sigma_{z,k,k}, 2\sqrt{\sigma_{i,j,j}\sigma_{i,k,k}}+2{|\sigma_{i,j,k}|})\\
    \Rightarrow & H^i_{j,k} -m \sigma_{i,j,k} \in \SE( 2m\sigma_{i,j,k}^2+2m\sigma_{i,j,j}\sigma_{i,k,k}, 2\sqrt{\sigma_{i,j,j}\sigma_{i,k,k}}+2{|\sigma_{i,j,k}|}).
\end{align*}
Denoting $\rho_{i,j,k} =\frac{\sigma_{i,j,k}}{\sqrt{\sigma_{i,j,j}\sigma_{i,k,k}}}$, based on the tail bound of sub-exponential random variables, we have that
    \begin{align*}
    &\mathbb{P}\left( \max_{i,j,k} \left|H^i_{j,k} - m \sigma_{i, j, k}\right| > t \right)\\
    & \leq 2\exp\left\{\log(q^2n)-\min\left\{
    \frac{t^2}{m \sigma_{i,j,j}\sigma_{i,k,k}\left(4\rho_{i,j,k}^2 + 4\right)}, \frac{ t}{\sqrt{\sigma_{i,j,j}\sigma_{i,k,k}}\left(4+4\left|\rho_{i,j,k}\right| \right)}\right\}\right\}.
\end{align*}
Thus, taking $t = \sqrt{8 m \sigma_{i,j,j}\sigma_{i,k,k}\left(|\rho_{i,j,k}| + 1\right)} \log(nq^2)$, with probability at least $1-2\exp\{-c\log(nq^2)\}$ we have
\begin{align}
    \max_{i,j,k} \left|H^i_{j,k} - m \sigma_{i, j, k}\right| \leq \sqrt{8 m \sigma_{i,j,j}\sigma_{i,k,k}\left(|\rho_{i,j,k}| + 1\right)} \log(nq^2). \label{vc.G1.1}
\end{align}

To study the lower and upper bounds of $|H^i_{j,k}|^2$ based on \eqref{vc.G1.1}, we consider two cases:

\begin{enumerate}
    \item If for given $i, j, k$, $m|\sigma_{i, j,k}| \gg  \sqrt{ \sigma_{i,k,k}\sigma_{i,j,j}(|\rho_{j,k}| + 1) m} \log(nq^2)$, i.e., $\frac{|\rho_{i,j,k}|+1}{\rho^2_{i,j,k}} \ll \frac{m}{\log^2(nq^2)}$:
The correlation $|\rho_{i,j,k}|$ is not too small, and \eqref{vc.G1.1} implies that $H^i_{j,k}$ is bounded away from zero with high probability. Then with probability at least $1-2\exp\{ - c\log(nq^2)\}$, we have that 
\begin{align}
\frac{(H^i_{j,k})^2}{\sigma_{z, j, j}\sigma_{z, k, k}} 
    & \leq m^2 \rho_{i,j,k}^2 + 8m(|\rho_{i,j,k}|+1)\log^2(nq^2) + 2 m|\rho_{i,j,k}|\sqrt{8 m (|\rho_{i,j,k}|+1)} \log(nq^2) \nonumber \\
    & \leq m^2 \rho_{i,j,k}^2 + 16 m \log^2(nq^2) + 8 m|\rho_{i,j,k}|\sqrt{m} \log(nq^2), \label{vc.G1.w2} \\
\frac{(H^i_{j,k})^2}{\sigma_{z, j, j}\sigma_{z, k, k}} & \geq  m^2 \rho_{i,j,k}^2 + 8m(|\rho_{i,j,k}|+1)\log^2(nq^2) - 2 m|\rho_{i,j,k}|\sqrt{8 m (|\rho_{i,j,k}|+1)} \log(nq^2) \nonumber \\    
    & \geq m^2 \rho_{i,j,k}^2 + 8 m\log^2(nq^2) - 8 m|\rho_{i,j,k}|\sqrt{m} \log(nq^2). \label{vc.G1.w3}
\end{align}
Note that the condition $\frac{|\rho_{i,j,k}|+1}{\rho^2_{i,j,k}} \ll \frac{m}{\log^2(nq^2)}$ implies $\rho_{i,j,k}^2 \gg \frac{\log^2(nq^2)}{m}$, which then implies 
\begin{align}
    m^{3/2} |\rho_{i,j,k}| \log(nq^2) \gg m\log^2(nq^2). \label{vc.G1.tmp1}
\end{align}

\item If for given $i, j, k$, $m|\sigma_{i,j,k}| = O  \left(\sqrt{\sigma_{i,k,k}\sigma_{i,j,j}( |\rho_{i,j,k}|+1)m} \log(nq^2)\right)$, i.e., $\frac{\rho^2_{i,j,k}}{|\rho_{i,j,k}| +1}= O \left(\frac{\log^2(nq^2)}{m}\right)$:
The correlation $\rho_{i,j,k}$ is very small under Assumption~\ref{as.D}.\ref{as.D.2}, and \eqref{vc.G1.1} implies that $H^i_{j,k}$ is not bounded away from zero. Then with probability at least $1-2\exp\{- c\log(nq^2)\}$, we have 
\begin{align}
    0 & \leq \frac{(H^i_{j,k})^2}{\sigma_{i, j, j}\sigma_{i, k, k}}  \leq  m^2 \rho_{j,k}^2 + 16 m \log^2(nq^2) + 8m |\rho_{i,j,k}|\sqrt{m} \log(nq^2). \label{vc.G1.w4}
\end{align}
Note that the condition $\frac{\rho^2_{i,j,k}}{|\rho_{i,j,k}| +1}= O \left(\frac{\log^2(nq^2)}{m}\right)$ implies $\rho^2_{i,j,k} \leq c \log^2(nq^2)/m$ for some constant $c>0$, which then implies 
\begin{align}
    m^{3/2} |\rho_{i,j,k}| \log(nq^2) \leq c_1 m\log^2(nq^2) \label{vc.G1.tmp2}
\end{align}
for some constant $c_1>0$.

\end{enumerate}

Note that for $v \in \R^q$, we can write
\begin{align*}
 v^\top \left(G^i_1 - \E\left(G^i_1\right)\right) v & =   v^\top \left(H^i \circ H^i- \mathbb{E}\left(H^i \circ H^i\right)  \right) v  \leq \sum_{j, k =1}^q |v_j| |v_k| \left| \left(H^i_{j,k}\right)^2 - \mathbb{E}\left(\left(H^i_{j,k}\right)^2\right) \right|  \leq \|v\|_2^2 \|W^i\|_2 \\
     v^\top \left(G^i_1 - \E\left(G^i_1\right)\right) v & \geq  - \sum_{j, k =1}^q |v_j| |v_k| \left| \left(H^i_{j,k}\right)^2 - \mathbb{E}\left(\left(H^i_{j,k}\right)^2\right) \right| \geq - \|v\|_2^2 \|W^i\|_2
\end{align*}
where we define the matrix $W^i$ such that $ W^i_{j,k} = \left| \left(H^i_{j,k}\right)^2 - \mathbb{E}\left(\left(H^i_{j,k}\right)^2\right) \right|$. We compute the values of $\E\left(\left(H^i_{j,k}\right)^2\right)$ based on the distribution $Z^i \mid \Sigma_Z^i \sim MN_{m\times q}(0, I_m, \Sigma_Z^i)$:
\begin{align*}
    \E\left(\left(H^i_{j,k}\right)^2\right) = & \begin{cases}
       (m^2+2m)\sigma^2_{i,j, j} &, j=k\\
       (m^2 +m) \sigma_{i,j,k}^2 + m\sigma_{i,j,j}\sigma_{i,k,k} &, j\neq k.
    \end{cases}
\end{align*}
Thus, based on \eqref{vc.G1.w1}--\eqref{vc.G1.tmp2}, with probability at least $1- 2\exp\{-c\log(nq)\} - 2\exp\{-c\log(nq^2)\}$, for a given $i$ we have
\begin{align*}
    W^i_{j,k} 
    & \leq \begin{cases}
    c_2m^{3/2}\log(nq)& , \quad  1\leq j=k\leq q;\\
     c_3 m^{3/2} |\rho_{i,j,k}|\log(nq^2)& , \quad  1 \leq j \neq k \leq q, \ \text{and } \frac{\rho_{i,j,k}^2}{|\rho_{i,j,k}|+1} \gg \frac{\log^2(nq^2)}{m} \\
     c_4 m\log^2(nq^2) & , \quad 1 \leq j\neq k \leq q, \ \text{and } \frac{\rho_{i,j,k}^2}{|\rho_{i,j,k}|+1} = O\left( \frac{\log^2(nq^2)}{m} \right)\\
    \end{cases}.
\end{align*}
Using Gershgorin circle theorem \cite{horn2012matrix}, we can obtain the following bounds for the eigenvalues of the matrices $W^i$ with probability at least $1- 2\exp\{-c\log(nq)\} - 2\exp\{-c\log(nq^2)\}$: 
\begin{align*}
    \max_i|\sigma(W^i)| & \leq \max_{i,j}\left(|W^i_{j,j}|+\sum_{k=1,k\neq j}^{q}|W^i_{j,k}|\right) \\
    & \leq
 \max_i \left(c_5m^{3/2} \log(nq^2) + c_5s^i_Z m^{3/2} \log(nq^2)  + c_5 m (q-s^i_Z)\log^2(nq^2)\right) \\
 & \leq c_5m^{3/2} \log(nq^2) + c_5s_Z m^{3/2} \log(nq^2)  + c_5 m q\log^2(nq^2),
\end{align*}
where $s^i_Z$ and $s_Z$ are defined in equation (20). 
Thus, with probability at least $1- 2\exp\{-c\log(nq)\} - 2\exp\{-c\log(nq^2)\}$, we have $\left\|G^i_1 - \E\left(G^i_1\right)\right\|_2 \leq c_5m^{3/2} \log(nq^2) +c_5 s_Z m^{3/2} \log(nq^2)  + c_5 m q\log^2(nq^2)$.
}
\\

\textit{Lemma~\ref{lemma.D.2}.\ref{lemma.D.2.G2})}
{
$\left.\right.$

The matrix $Z^i \circ Z^i$ has independent rows, and its entries follow scaled $\chi^2_1$ distributions. Thus, by Theorem 5.44 (random matrices with heavy-tailed non-isotropic independent rows) in \cite{vershynin2010introduction}, with probability at least $1-\exp\{\log(nq) - ct^2\}$ we have 
    \begin{align*}
         \|Z^i\circ Z^i \|_2 & \leq \|2\Sigma_Z^i \circ \Sigma_Z^i + u^i_Z (u^i_Z)^\top \|_2^{1/2} \sqrt{m} + t\sqrt{m}\\
         & \leq \left(\|2\Sigma^i_Z\circ\Sigma^i_Z\|_2^{1/2} + \|{u}^i_Z\|_2\right)\sqrt{m} + t\sqrt{m},
     \end{align*}
     where ${u^i_Z} = \left( \left(\Sigma^i_Z\right)_{1, 1}, \dots, \left(\Sigma^i_Z\right)_{q, q} \right)^\top$. Note that under Assumption~\ref{as.A}.\ref{as.A.2}, we have $\|\Sigma^i_Z\circ\Sigma^i_Z\|_2 \leq \|\Sigma_Z^i\|_2^2 \asymp 1$ \cite{schacke2004kronecker, johnson1981eigenvalue}, and $\|u^i_Z\|_2 \asymp \sqrt{q}$. Thus, taking $t=c_1\sqrt{q \log(n)}$, with probability at least $1-\exp\{- c q\log(n) \}$ we have
     \begin{align}
         \|Z^i\circ Z^i \|_2 \leq c_2 \sqrt{mq\log(n)}. \label{vc.G2.tmp1}
     \end{align}
     
     Since $Z^i \mid \Sigma_Z^i \sim MN_{m \times q}(0, I_m, \Sigma_Z^i)$, we have $\mathbb{E}\left(G_2^i\right) = 2m \Sigma_Z^i \circ\Sigma_Z^i + m u^i_Z (u^i_Z)^\top$, which gives 
     \begin{align}
         \left\|\mathbb{E}\left(G_2^i\right) \right\|_2 \leq 2m\|\Sigma_Z^i\|^2_2 + m\| u^i_Z\|_2^2 \leq cmq. \label{vc.G2.tmp2}
     \end{align}
     
     Thus, by \eqref{vc.G2.tmp1} and \eqref{vc.G2.tmp2}, we have $ \|G_2^i - \mathbb{E}(G_2^i) \|_2^2 \leq \|Z^i\circ Z^i \|_2^2 + \|\mathbb{E}(G_2^i)\|_2 \leq c_3 mq\log(n)$ with probability at least $1-\exp\{- c q\log(n) \}$. 
}

}
\end{proof}

\begin{proof}[Proof of Lemma~\ref{lemma:RE_for_B}]
$\left.\right.$
Recalling the definition of $B_{j,k} = \sum_{i \in S_2} \Tr\left(A_j^iA_k^i\right)$ in (19), 
for a fixed vector $v\in \R^q$, we can write
\begin{align}
v^\top B v & = \sum_{i \in S_2} \sum_{j,k=1}^q \Tr\left( \left(Z_{j}^i (Z_{j}^i)^\top - \diag(Z_{j}^i)^2 \right) \left(Z_{k}^i (Z_{k}^i)^\top  - \diag(Z_{k}^i)^2 \right) \right) v_jv_k  \nonumber\\
& = \sum_{i \in S_2} \sum_{j,k=1}^q \Tr\left( Z_{j}^i (Z_{j}^{i})^\top   Z_{k}^i (Z_{ k}^{i})^\top  \right) v_jv_k - \Tr\left( Z_{j}^i (Z_{j}^{i})^\top \diag(Z_{ k}^i)^2 \right) v_jv_k  \nonumber\\
& = \sum_{i \in S_2} v^\top \left( (Z^i)^\top Z^i \right) \circ \left( (Z^i)^\top Z^i \right) v - v^\top \left(Z^{i} \circ Z^i \right)^\top \left(Z^{i} \circ Z^i \right) v. \label{pf.B.RE.1}
\end{align}
Then, recalling the definitions of $G_1^i = \left((Z^i)^\top Z^i \right) \circ \left((Z^i)^\top Z^i \right)$, $G_2^i = \left(Z^{i} \circ Z^i \right)^\top \left(Z^{i} \circ Z^i \right)$ in Lemma~\ref{lemma.D.2}, we can write \eqref{pf.B.RE.1} as
\begin{align}
    v^\top B v =  v^\top \left( \sum_{i \in S_2} \left(G_1^i - \mathbb{E}\left(G_1^i\right)\right) - \sum_{i \in S_2} \left(G_2^i - \mathbb{E}\left(G_2^i\right)\right) + \sum_{i \in S_2} \left(\mathbb{E}\left(G_1^i\right) - \mathbb{E}\left(G_2^i\right)\right) \right) v. \label{pf.B.RE.1.1}
\end{align}

We study each of the terms in the parenthesis in \eqref{pf.B.RE.1.1} separately.

First, since $Z^i \mid \Sigma_Z^i \sim MN_{m\times q}(0, I_m, \Sigma_Z^i)$, we have $\mathbb{E}\left(G^i_1\right) - \mathbb{E}\left(G^i_2\right) = m(m-1)\left(\Sigma_Z^i\circ\Sigma_Z^i\right)$. Since we know that for two matrices $A_1$ and $A_2$, $A_1 \circ A_2$ is a principle submatrix of the Kronecker product $A_1 \otimes A_2$, we can show that $\sigma_{\min}(A_1)\sigma_{\min}(A_2) \leq \sigma_{\min}\left(A_1 \otimes A_2\right) \leq \sigma_{\min}\left(A_1 \circ A_2\right)$ \cite{schacke2004kronecker, johnson1981eigenvalue}. Thus, since $\sigma(\Sigma_Z^i)\asymp 1$ by Assumption~\ref{as.A}.\ref{as.A.2}, we have 
 \begin{align}
     \sigma_{\min}\left(\sum_{i \in S_2} \mathbb{E}\left(G^i_1\right) - \mathbb{E}\left(G^i_2\right) \right) \geq c_1 n m^2. \label{GG.bound}
 \end{align}
 
Next, for the term $\sum_{i \in S_2} \left(G_1^i - \mathbb{E}\left(G_1^i \right)\right)$, using Lemma~\ref{lemma.D.2}.\ref{lemma.D.2.G1}, we have $\max_i\|G_1^i - \mathbb{E}\left(G_1^i \right)\|_2 \leq c_1 m^{3/2} \log(nq^2) + c_1s_Z m^{3/2} \log(nq^2)  + c_1 m q\log^2(nq^2)$ with probability at least $1- 2\exp\{-c\log(nq)\} - 2\exp\{-c\log(nq^2)\}$. Then, by matrix Bernstein inequality \cite{tropp2015introduction}, we get 
    \begin{align}
        \mathbb{P}\left( \left\|\sum_{i \in S_2} G_1^i - \mathbb{E}\left(G_1^i \right) \right\|_2 \geq t \right) \leq \exp\left\{\log(2q) - \frac{t^2}{2} \frac{1}{\nu_{1} + c_2\left(s_Z m^{3/2} \log(nq^2)  + m q\log^2(nq^2)\right)  t}\right\}, \label{G1.final}
    \end{align}
    where $\nu_1 = n_2 \left\|\mathbb{E}(G_1^i (G_1^{i})^\top) - \mathbb{E}(G_1^i)\mathbb{E}(G_1^i)^\top\right\|_2$. Denote the $(j,k)$ entry of $\Sigma_Z^i$ by $\sigma_{i,j,k}$. Based on $Z^i \mid \Sigma_Z^i \sim MN_{m\times q}(0, I_m, \Sigma_Z^i)$, with some careful calculations, we can get
     \begin{align*}
          \left(\mathbb{E}(G_1^i G_1^{i,\top}) - \mathbb{E}(G_1^i)\mathbb{E}(G_1^i)^\top\right)_{j,j} & = \sum_{l=1}^q \left( 2m(m+3) \sigma_{i,j,j}^2\sigma_{i,l,l}^2  + 4m(m^2+7m+9)\sigma_{i,j,j}\sigma_{i,l,l}\sigma_{i,j,l}^2 \right. \\
         & \quad \quad + \left. 2(2m^2+5m+3)m\sigma_{i,j,l}^4\right) \\
          \left(\mathbb{E}(G_1^i G_1^{i,\top}) - \mathbb{E}(G_1^i)\mathbb{E}(G_1^i)^\top\right)_{j,k} & = \sum_{l=1}^q \left( m(m+2)(m+3) 4 \sigma_{i,j,k}\sigma_{i,j,l}\sigma_{i,k,l}\sigma_{i,l,l} \right. \\
          & \quad \quad + 2m(2m+3)\left(\sigma_{i,j,l}^2 \sigma_{i,k,k}\sigma_{i,l,l} + \sigma_{i,j,j}\sigma_{i,k,l}^2\sigma_{i,l,l}\right) \\
         & \quad \quad + 2m(m+1)(2m+3)\sigma_{i,j,l}^2\sigma_{i, k,l}^2  + 2m(m+2)\sigma_{i, j,k}^2 \sigma_{i,l,l}^2 \\
         & \quad \quad + \left. 2m \sigma_{i, j, j}\sigma_{i, k, k}\sigma_{i, l,l}^2 \right) 
     \end{align*}
     Then, using Gershgorin circle theorem \cite{horn2012matrix}, and under Assumption~\ref{as.A}.\ref{as.A.2} which implies $\sigma_{i,j,k} = O(1)$, we can bound the maximum eigenvalue of $\mathbb{E}(G_1^i G_1^{i,\top}) - \mathbb{E}(G_1^i)\mathbb{E}(G_1^i)^\top$ by 
     \begin{align}
         & \left\|\mathbb{E}(G_1^i G_1^{i,\top}) - \mathbb{E}(G_1^i)\mathbb{E}(G_1^i)^\top\right\|_2 \nonumber \\
         & \leq \max_j \left(\left| \mathbb{E}(G_1^i G_1^{i,\top}) - \mathbb{E}(G_1^i)\mathbb{E}(G_1^i)^\top\right|_{j,j} + \sum_{k=1, k\neq j}^q \left|\mathbb{E}(G_1^i G_1^{i,\top}) - \mathbb{E}(G_1^i)\mathbb{E}(G_1^i)^\top\right|_{j,k}\right) \nonumber \\
         & \leq c_1 \max_j \left(m^3 \left( \sum_{l=1}^q \sigma^2_{i,j,l} + \sum_{k,l=1}^q|\sigma_{i,k,l}| \right) + qm^2 \left(2 + \sum_{l=1}^q \sigma_{i,j,l}^2\right) + m^2 \sum_{k,l=1}^q \sigma_{i,k,l}^2 \ \right) \nonumber \\
         & \leq c_2 m^3 q^{3/2} \label{G1.circle}
     \end{align}
     where the last inequality follows from $\sum_{k=1}^q \sigma_{i,k,l}^2 \leq c_3 \sum_{k=1}^q |\sigma_{i,k,l}| \leq c_3\|\Sigma_Z^i\|_1 \leq c_3\sqrt{q} \|\Sigma_Z^i\|_2$, with matrix 1-norm defined by $\|A\|_1 = \max_{j} \sum_{k=1}^q |A_{j,k}|$ for matrix $A\in \R^{q\times q}$.

     Therefore, taking $t = c_4 nm^2 /\log(n)$ in \eqref{G1.final} for some suitably large constant $c_4>0$, under Assumption~\ref{as.D}.\ref{as.D.3}, we have
     \begin{align}
         \left\|\sum_{i \in S_2} \left(G_1^i - \mathbb{E}(G_1^i)\right) \right\|_2 \leq c_5 nm^2 /\log(n) \label{G1.bound}
     \end{align}
     with probability at least $1-\exp\{-c\log(q)\} - 2\exp\{-c\log(nq)\}$.
     
Finally, for the term $\sum_{i \in S_2} \left(G_2^i - \mathbb{E}\left(G_2^i\right)\right)$, using Lemma~\ref{lemma.D.2}.\ref{lemma.D.2.G2}, with probability at least $1-\exp\{-cq\log(n)\}$ we have $\left\|G_2^i - \mathbb{E}\left(G_2^i\right)\right\|_2 \leq  c_1mq\log(n)$. Then by matrix Bernstein inequality \cite{tropp2015introduction}, 
     \begin{align}
         \mathbb{P}\left(\left\|\sum_{i \in S_2} G_2^i - \mathbb{E}\left(G_2^i\right) \right\|_2 \geq t \right) \leq \exp \left\{\log(2q) - \frac{t^2}{2}\frac{1}{\nu_2 + c_2mq\log(n)t} \right\}, \label{G2.p}
     \end{align}
     where $\nu_2 = n_2\left\|\mathbb{E}\left(G^i_2 (G^i_2)^\top  \right) - \mathbb{E}\left(G^i_2\right)\mathbb{E}\left(G^i_2 \right)^\top \right\|_2$. Based on $Z^i \mid \Sigma_Z^i \sim MN_{m\times q}(0, I_m, \Sigma_Z^i)$, with some careful calculation, we can get 
     \begin{align*}
         \left( \mathbb{E}\left(G^i_2 (G^{i}_2)^\top \right) - \mathbb{E}\left(G^i_2\right)\mathbb{E}\left(G^i_2\right)^\top \right)_{j,j} & = \sum_{l=1}^q m\left(20\sigma_{i,j,l}^4 + 8 \sigma_{i,j,j}^2\sigma_{i,l,l}^2 + 68\sigma_{i,j,j}\sigma_{i,j,l}^2\sigma_{i,l,l}\right) = O_p(q m)\\
         \left( \mathbb{E}\left(G^i_2 (G^{i}_2)^\top \right) - \mathbb{E}\left(G^i_2\right)\mathbb{E}\left(G^i_2\right)^\top \right)_{j,k} &  = \sum_{l=1}^q m \left(48 \sigma_{i, j, k}\sigma_{i, j,l}\sigma_{i, k,l}\sigma_{i, l,l}  + 20 \sigma_{i, j,l}^2\sigma_{i, k,l}^2 + 10 \sigma_{i, j,l}^2\sigma_{i,k,k}\sigma_{i,l,l} \right.\\
         & \quad \quad \left. + 6 \sigma_{i,j,k}^2\sigma_{i,l,l}^2 + 2 \sigma_{i,j,j}\sigma_{i, k ,k}\sigma_{i,l,l}^2 + 10\sigma_{i, j,j}\sigma_{i, l, l}\sigma_{i,k,l}^2\right)\\
         & = O_p(qm).
     \end{align*}
     Then by Gershgorin circle theorem, we can follow the similar arguments as those in \eqref{G1.circle} to obtain $\nu_2 \leq c_3 nmq^2$. Thus, taking $t=c_4nm^2/\log(n)$ in \eqref{G2.p}, under Assumption~\ref{as.D}.\ref{as.D.3} we have 
     \begin{align}
         \left\|\sum_{i \in S_2} G_2^i - \mathbb{E}\left(G_2^i\right) \right\|_2 \leq c_4 nm^2/\log(n) \label{G2.bound}
     \end{align}
     with probability at least $1-\exp\{-c\log(q)\} - \exp\{-cq\log(n)\}$.

Plugging the bounds \eqref{GG.bound}, \eqref{G1.bound}, and \eqref{G2.bound} into \eqref{pf.B.RE.1.1}, we obtain 
\begin{align*}
    v^\top B v & \geq \|v\|_2^2 \left(\sigma_{\min}\left(\sum_{i \in S_2} \mathbb{E}\left(G^i_1\right) - \mathbb{E}\left(G^i_2\right) \right) - \left\| \sum_{i \in S_2} \left(G_1^i - \mathbb{E}\left(G^i_1\right)\right) \right\|_2- \left\| \sum_{i \in S_2} \left(G_2^i - \mathbb{E}\left(G^i_2\right)\right) \right\|_2 \right)\\
    & \geq c_5 nm^2 \|v\|_2^2
\end{align*} 
with probability at least $1-\exp\{-c\log(q)\} -2\exp\{-c\log(nq)\} - \exp\{-cq\log(n)\}$.
\end{proof}

\setcounter{table}{0}
\setcounter{figure}{0}
\renewcommand{\thetable}{S\arabic{table}}
\renewcommand{\thefigure}{S\arabic{figure}}

\newpage

\section{Extension to High-Dimensional Heterogeneous VAR models}
\label{S:VAR} 



\begin{lemma}
\label{Slemma.a0}
Suppose $X\in \R^{T\times p}$ ($T\neq p$) is a Gaussian matrix with $\vec(X) \sim \N(0, \Xi)$. Define $Z\in \R^{T\times p}$ such that $\vec(Z) = \left(\Xi\right)^{-1/2}\vec(X) \sim \N(0, I_{Tp})$. We have that
\begin{align*}
    \sqrt{\sigma_{\min}(\Xi)} \sigma_{\min}(Z) \leq   \sigma(X) \leq  \sqrt{\sigma_{\max}(\Xi)} \sigma_{\max}(Z).
\end{align*}
We also have
\begin{align*}
    & X^\top (aXX^\top + I)^{-1} X \succ \sigma_{\min}(\Xi) Z^\top (a\sigma_{\max}(\Xi) ZZ^\top + I )^{-1}Z \\
    & X^\top (aXX^\top + I)^{-1} X \prec \sigma_{\max}(\Xi) Z^\top (a\sigma_{\min}(\Xi) ZZ^\top + I )^{-1}Z.
\end{align*}
\end{lemma}

\begin{assumption}
\label{as.VAR.a1}
    For the $i$th subject, conditioning on $\Gamma^i$, the observations $\{Y^i(t)\}_{t=1}^T$ are realizations of a stationary Gaussian process.
\end{assumption}

\begin{proof}[Proof for Lemma~\ref{Slemma.a0}]
We first work on the case when $T>p$:

Since the set of singular values $\{\sigma(X)\}$ is identical to the set $\left\{\sqrt{\sigma(X^\top X)}\right\}$, we have $\min_{\nu\in\R^p}\|X\nu\|_2^2 \leq \sigma^2_{\min}(X) \leq \sigma^2_{\max}(X) \leq \max_{\nu\in\R^p}\|X\nu\|_2^2$ for $\|\nu\|_2^2=1$. We can also show that:
\begin{align*}
    \|X\nu\|_2^2 & = \left\| \left(\nu^\top \otimes I_T \right) \vec(X) \right\|_2^2 \\
    & = \left\| \left(\nu^\top \otimes I_T \right) \Xi^{1/2} \vec(Z) \right\|_2^2\\
    & \in \left[ \sigma_{\min}(\Xi) \left\| \left(\nu^\top \otimes I_T \right) \vec(Z) \right\|_2^2, \sigma_{\max}(\Xi)\left\| \left(\nu^\top \otimes I_T \right) \vec(Z) \right\|_2^2 \right]
\end{align*}
Thus with $\left\| \left(\nu^\top \otimes I_T \right) \vec(Z) \right\|_2^2 = \|Z\nu\|_2^2 \in [\sigma^2_{\min}(Z), \sigma^2_{\max}(Z)]$, we obtain 
\begin{align*}
    \sigma_{\min}(\Xi)\sigma^2_{\min}(Z) \leq \sigma^2_{\min}(X) \leq \sigma^2_{\max}(X) \leq \sigma_{\max}(\Xi)\sigma^2_{\max}(Z).
\end{align*}

We next look at the case when $T<p$. Note that we now have $\{\sigma(X)\} = \{\sigma(X^\top)\} = \left\{\sqrt{\sigma(XX^\top)}\right\}$. Since the covariance matrix $\Var(\vec(X^\top))$ can be obtained by imposing the same permutations on the columns and the rows of the matrix $\Xi =\Var(\vec(X))$, the two covariance matrices are similar and have the same set of singular values. Therefore, we can just follow the same arguments used for the case of $T>p$ to establish the bounds for $\sigma(X)$ under the setting of $T<p$.

Based on the above bounds for $\sigma(X)$, we can get $XX^\top \succ \sigma_{\min}(\Xi) ZZ^\top$ since 
\begin{align*}
    \forall \nu\in {\R}^T, \quad  & \nu^\top XX^\top \nu - \nu^\top \sigma_{\min}(\Xi) ZZ^\top \nu \\
    & = \|(\nu^\top \otimes I_p) \vec(X^\top)\|_2^2 - \sigma_{\min}(\Xi) \|(\nu^\top \otimes I_p) \vec(Z^\top)\|_2^2 \\
    & = \|(\nu^\top \otimes I_p) \Xi^{1/2} \vec(Z^\top)\|_2^2 - \sigma_{\min}(\Xi) \|(\nu^\top \otimes I_p) \vec(Z^\top)\|_2^2 \\
    & \geq 0.
\end{align*}
We can similarly show that $XX^\top \prec \sigma_{\max}(\Xi) ZZ^\top$. Given two positive definite matrices $A$ and $B$, if $A \succ B$ then $B^{-1} \succ A^{-1}$. We thus have that
\begin{align}
\label{lemma.a0.1}
 (a\sigma_{\max}(\Xi) ZZ^\top + I_T)^{-1} \prec ( aXX^\top + I_T)^{-1} \prec (a\sigma_{\min}(\Xi) ZZ^\top + I_T)^{-1}.
\end{align}

We then show that for a positive definite matrix $A \in \R^{T\times T}$, $\sigma_{\min}(\Xi) Z^\top A Z \prec X^\top A X \prec \sigma_{\max}(\Xi) Z^\top A Z$: for the upper bound of $X^\top A X$, we have that
\begin{align*}
    \forall \nu \in {\R}^{T}, \quad & \nu^\top \sigma_{\max}(\Xi) Z^\top A Z \nu - \nu^\top X^\top A X \nu \\
    & = \sigma_{\max}(\Xi) \|A^{1/2} (\nu^\top \otimes I_T) \vec(Z) \|_2^2 - \|A^{1/2} (\nu^\top \otimes I_T) \Xi^{1/2} \vec(Z) \|_2^2\\
    & \geq 0.
\end{align*}
We can follow similar arguments to obtain the lower bound for $X^\top A X$.

Therefore, combining the bounds $\sigma_{\min}(\Xi) Z^\top A Z \prec X^\top A X \prec \sigma_{\max}(\Xi) Z^\top A Z$ and the bounds in \eqref{lemma.a0.1}, we obtain
\begin{align*}
 & X^\top (aXX^\top + I)^{-1} X \succ \sigma_{\min}(\Xi) Z^\top (a\sigma_{\max}(\Xi) ZZ^\top + I )^{-1}Z \\
    & X^\top (aXX^\top + I)^{-1} X \prec \sigma_{\max}(\Xi) Z^\top (a\sigma_{\min}(\Xi) ZZ^\top + I )^{-1}Z.
\end{align*}

\end{proof}

\newpage
\bibliographystyle{abbrv}
\bibliography{references.bib}